\newcommand{\lb}{\label}
\newcommand{\be}{\begin{equation}}
\newcommand{\ee}{\end{equation}}
\newcommand{\ben}{\begin{eqnarray*}}
\newcommand{\een}{\end{eqnarray*}}
\newcommand{\bea}{\begin{eqnarray}}
\newcommand{\eea}{\end{eqnarray}}
\newcommand{\md}{{\mathrm{d}}}
\newcommand{\beq}{\begin{equation}}
\newcommand{\eeq}{\end{equation}}
\newcommand{\beqn}{\begin{equation}\nonumber}
\newcommand{\bean}{\begin{eqnarray}\nonumber}
\DeclareMathAlphabet{\mathpzc}{OT1}{pzc}{m}{it}
\def\cR{{\mathcal R}}
\def\cH{{\mathcal H}}
\def\cB{{\mathcal B}}
\def\cN{{\mathcal N}}
\def\cQ{{\mathcal Q}}
\def\cD{{\mathcal D}}
\def\cL{{\mathcal L}}
\def\cM{{\mathcal M}}
\def\cC{{\mathcal C}}
\def\cI{{\mathcal I}}
\def\bbR{{\mathbb R}}
\def\bbS{{\mathbb S}}
\def\dV{{\mbox{dVol}}}
\def\schere{\text{\ding{34}}}
\newtheorem{thm}{Theorem}[section]
\newtheorem{lem}[thm]{Lemma}
\newtheorem{prop}[thm]{Proposition}
\newtheorem{cor}[thm]{Corollary}
\newtheorem{con}[thm]{Conjecture}
\renewenvironment{proof}[1][Proof]{\begin{trivlist}
\item[\hskip \labelsep {\bfseries #1:}]}{\qed\end{trivlist}}
\newcommand{\nabb}{\mbox{$\nabla \mkern-13mu /$\,}}
\def\XXint#1#2#3{{\setbox0=\hbox{$#1{#2#3}{\int}$ }
\vcenter{\hbox{$#2#3$ }}\kern-.6\wd0}}
\def\p@subsection{}
\def\p@subsubsection{}
\begin{document}

\begin{center}
{\bf {\Large 
BOUNDEDNESS OF MASSLESS SCALAR WAVES ON REISSNER-NORDSTR\"OM INTERIOR BACKGROUNDS}}\\

\bigskip
\bigskip
Anne T. Franzen\footnote{e-mail addresses: atfranzen@math.princeton.edu, a.t.franzen@uu.nl}

\bigskip
\bigskip
{\it Princeton University}\\ 
{\it Department of Mathematics, Fine Hall,
Washington Road, Princeton, NJ 08544,
United States}

\medskip
\medskip

{\it Universiteit Utrecht}\\
{\it Leuvenlaan 4, 3584 CE Utrecht, the Netherlands}

\end{center}

\medskip

\centerline{ABSTRACT}

\noindent
We consider solutions of the scalar wave equation $\Box_g\phi=0$, without symmetry, on fixed subextremal Reissner-Nordstr\"om backgrounds $(\cM, g)$ with nonvanishing charge.
Previously,
it has been shown that for $\phi$ arising from sufficiently regular data on a two ended Cauchy hypersurface, the solution and its derivatives decay suitably fast on the event horizon $\cH^+$. Using this, we show here that $\phi$ is in fact uniformly bounded, $|\phi| \leq C$, in the black hole interior up to and including the bifurcate Cauchy horizon $\cC\cH^+$, to which $\phi$ in fact extends continuously. 
The proof depends on novel weighted energy estimates in the black hole interior which, in combination with commutation by angular momentum operators and application of Sobolev embedding, yield uniform pointwise estimates. 
In a forthcoming companion paper we will extend the result to subextremal Kerr backgrounds with nonvanishing rotation.
\medskip

\tableofcontents
\section{Introduction\label{intro}}
The Reissner-Nordstr\"om spacetime $(\cM,g)$ is a fundamental 2-parameter family of solutions to the Einstein field equations coupled to electromagnetism, cf.~ Figure \ref{RN} for the conformal representation of the subextremal case, \mbox{$M>|e|\neq 0$}, with $e$ the charge and $M$ the mass of the black hole. 
The problem of analysing the scalar wave equation 
\bea
\lb{wave}
\Box_g \phi=0
\eea 
on a Reissner-Nordstr\"om background is intimately related to the stability properties of the spacetime itself and to the celebrated Strong Cosmic Censorship Conjecture. 
{\begin{figure}[ht]
\centering
\includegraphics[width=0.5\textwidth]{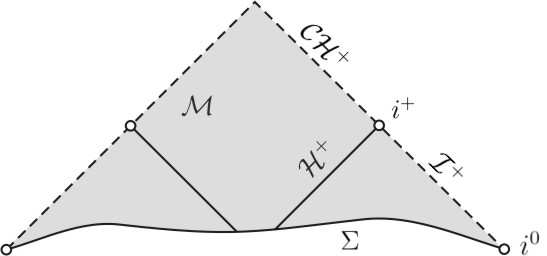}
\caption[]{Maximal development of Cauchy hypersurface $\Sigma$ in Reissner-Nordstr\"om spacetime $(\cM,g)$.}
\label{RN}\end{figure}}
The analysis of \eqref{wave} in the exterior region $J^-(\cI^+)$ has been accomplished already, cf.~ \cite{blue} and \cite{m_lec} for an overview and references therein for more details, as well as Section \ref{horizon_estimates}. The purpose of the present work is to extend the investigation to the interior of the black hole, up to and including the Cauchy horizon $\cC\cH^+$.

\subsection{Main result}

The main result of this paper can be stated as follows.
\begin{thm}
\lb{main}
On subextremal Reissner-Nordstr\"om spacetime $(\cM,g)$, with mass $M$ and charge $e$ and $M>|e|\neq 0$, let $\phi$ be 
a solution of the wave equation $\Box_g \phi=0$ arising from
sufficiently regular Cauchy data on a two-ended asymptotically flat Cauchy surface $\Sigma$. Then
\bea
\lb{maineq}
|\phi|\leq C
\eea
globally in the black hole interior, in particular up to and including the Cauchy horizon
$\cC\cH^+$, to which $\phi$ extends in fact continuously.
\end{thm}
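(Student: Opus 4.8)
The plan is to work in a double-null gauge $(u,v,\theta,\varphi)$ adapted to the black hole interior, in which the metric takes the form $g = -\Omega^2\, du\, dv + r^2\, d\sigma_{\bbS^2}$, with $r=r(u,v)$ decreasing from $r_+$ on $\cH^+$ to $r_-$ on $\cC\cH^+$. The decisive structural fact is that, as one approaches $\cC\cH^+$, the area radius satisfies $r \to r_-$ and the conformal factor decays exponentially, $\Omega^2 \sim e^{-2\kappa_- v}$ as $v\to\infty$ (with $\kappa_-$ the surface gravity of the Cauchy horizon), while the wave equation for the rescaled unknown $\psi = r\phi$ takes the schematic form
\begin{equation}
\partial_u\partial_v\psi = \frac{\Omega^2}{4}\Big(\frac{1}{r^2}\Dell\psi - V(r)\,\psi\Big),
\end{equation}
so that the angular Laplacian $\Dell$ (of the unit sphere) and the bounded potential $V$ enter multiplied by $\Omega^2$. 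First I would record precisely the asymptotics of $r$ and $\Omega^2$ and fix a foliation of the interior by hypersurfaces interpolating between $\cH^+$ and $\cC\cH^+$, together with the characteristic rectangles on which the estimates will be run.

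The core is a weighted energy estimate. Contracting the equation with a multiplier built from $\partial_u$ and $\partial_v$ and a carefully chosen weight $w$ that grows exponentially in the null coordinate toward $\cC\cH^+$, and integrating over characteristic rectangles, yields an energy identity controlling
\begin{equation}
\int \big( (\partial_u\phi)^2 + (\partial_v\phi)^2 + |\nabb\phi|^2 \big)\, w\, \md\mu
\end{equation}
on the leaves of the foliation. The weight is designed so that the blue-shift amplification near $\cC\cH^+$ is compensated and so that every error (bulk) term carries a factor of $\Omega^2$ and is therefore exponentially suppressed; these terms are then absorbed by the good terms or controlled by a Gr\"onwall argument. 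The initial energy on $\cH^+$ is finite because of the decay of $\phi$ and its derivatives established in the exterior analysis (cf.\ Section \ref{horizon_estimates}). The outcome is a uniform bound on the weighted energy up to and including $\cC\cH^+$.

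Next I would upgrade from energy to pointwise control. Since the rotation Killing fields of $(\cM,g)$ commute with $\Box_g$, differentiating the equation by them produces, for each multi-index, a wave equation for the corresponding angular derivative of $\phi$; applying the energy estimate to these and using Sobolev embedding on $\bbS^2$ controls $\sup_{\bbS^2}|\phi|$ on each sphere by the weighted energies. A fundamental-theorem-of-calculus argument along the $v$-direction, estimated by Cauchy--Schwarz against the weight, then gives
\begin{equation}
\Big|\int \partial_v\phi \,\md v\Big| \le \Big(\int w^{-1}\,\md v\Big)^{1/2}\Big(\int w\,(\partial_v\phi)^2\,\md v\Big)^{1/2},
\end{equation}
where $\int w^{-1}\,\md v<\infty$ precisely because $w$ grows exponentially. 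This simultaneously yields the uniform bound $|\phi|\le C$ throughout the interior and, since the same integral shows $\phi(u,v,\cdot)$ is Cauchy as $v\to\infty$, the continuous extension of $\phi$ to $\cC\cH^+$.

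The hard part is the energy estimate near $\cC\cH^+$: this is exactly where the blue-shift instability lives, and a naive multiplier produces fluxes that diverge. The resolution, and the novelty advertised in the abstract, is the precise choice of exponential weight together with systematic use of the decay of $\Omega^2$ to render the angular and potential terms harmless; balancing the incoming decay from $\cH^+$ against the blue-shift so that a single weighted estimate closes all the way to the Cauchy horizon is the crux of the argument.
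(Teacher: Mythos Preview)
Your overall architecture---weighted energy estimates, commutation with the angular momentum operators $\leo_i$, Sobolev embedding on $\bbS^2$, and the fundamental-theorem-of-calculus/Cauchy--Schwarz step---matches the paper's. But there is a genuine gap in the choice of weight.

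You propose a weight $w$ that grows \emph{exponentially} in $v$ toward $\cC\cH^+$. This cannot close, because the input from the exterior analysis (Theorem \ref{anfang}) gives only \emph{polynomial} decay on $\cH^+$: $\int_v^{v+1}(\partial_v\phi)^2\,r^2\,\md v\,\md\sigma_{\bbS^2}\le C_0 v^{-2-2\delta}$. Along $\cH^+_A$ the coordinate $v$ also runs to $+\infty$ (toward $i^+$), so an exponentially weighted energy $\int e^{cv}(\partial_v\phi)^2\,\md v$ on $\cH^+$ is infinite for generic data, and your ``initial energy on $\cH^+$ is finite'' step fails. The paper's resolution is precisely the polynomial multiplier $S=|u|^p\partial_u+v^p\partial_v$ with $1<p\le 1+2\delta$: the upper bound on $p$ makes the weighted flux finite on $\cH^+$, while the lower bound $p>1$ is exactly what makes $\int v^{-p}\,\md v<\infty$ in your Cauchy--Schwarz step. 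Exponential growth is neither needed for the latter nor compatible with the former.

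Two further points. First, the assertion that ``every error (bulk) term carries a factor of $\Omega^2$'' is not correct in the multiplier framework: the deformation-tensor bulk $K^S$ contains the cross term $-\tfrac{2}{r}(v^p+|u|^p)\,\partial_u\phi\,\partial_v\phi$ with no $\Omega^2$ in front; the $\Omega^2$ enters only through the spacetime volume element, and one must show that $\int \Omega^2(1+v^p/|u|^p)\,\md v$ and its $u$-analogue are \emph{small} (not merely finite) to absorb. This forces a region decomposition: the paper first propagates decay through a redshift region near $\cH^+$ (using the $N$ vector field), then a noshift region (using $-\partial_r$ and Gr\"onwall over a compact $r$-interval), and only in the blueshift region---after introducing a spacelike hypersurface $\gamma$ a logarithmic $v$-distance from $r=r_{blue}$---does one have the pointwise bound $\Omega^2\le C v^{-\beta\alpha}$ with $\beta\alpha>p+1$ needed to make the absorption work. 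A single global estimate with one weight, as you sketch, does not go through; the redshift, noshift, and blueshift mechanisms are genuinely different, and the intermediate hypersurface $\gamma$ is essential to get smallness (not just boundedness) of the $\Omega^2$-integrals.
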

The constant $C$ is explicitly computable in terms of parameters $e$ and $M$ and a suitable norm on initial data.
The above theorem will follow, after commuting \eqref{wave} with angular momentum operators and applying Sobolev embedding, from the following theorem, expressing weighted energy boundedness.
\begin{thm}
\lb{energythm}
On subextremal Reissner-Nordstr\"om spacetime $(\cM,g)$, with mass $M$ and charge $e$ and $M>|e|\neq 0$, let $\phi$ be 
a solution of the wave equation $\Box_g \phi=0$ arising from
sufficiently regular Cauchy data on a two-ended asymptotically flat Cauchy surface $\Sigma$. Then 
\bea
\lb{energy1}
\int\limits_{\bbS^2}\int\limits^{\infty}_{v_{fix}}\left[ v^p (\partial_v \phi)^2(u, v, \theta, \varphi) + |\nabb \phi|^2(u, v, \theta, \varphi) \right]r^2\md v\md \sigma_{\mathbb S^2}&\leq& E,\quad \mbox{for $v_{fix} \geq 1$, $u > -\infty$}\\
\lb{energy1u}
\int\limits_{\bbS^2}\int\limits^{\infty}_{u_{fix}}\left[ u^p (\partial_u \phi)^2 (u, v, \theta, \varphi)+ |\nabb \phi|^2(u, v, \theta, \varphi) \right]r^2\md u\md \sigma_{\mathbb S^2}&\leq& E,\quad \mbox{for $u_{fix} \geq 1$, $v > -\infty$}
\eea
where $p>1$ is an appropriately chosen constant, and $(u, v)$ denote Eddington-Finkelstein coordinates in the black hole interior, where by $\md \sigma_{\mathbb S^2}$ we denote the volume element of the unit two-sphere and \mbox{$|\nabb \phi|^2=\frac{1}{r^2}\left[(\partial_{\theta} \phi)^2+ \frac{1}{\sin^2 {\theta}}(\partial_{\varphi} \phi)^2\right]$}.  
\end{thm}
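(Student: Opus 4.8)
The plan is to establish \eqref{energy1} by a weighted vector-field multiplier estimate, using as initial data the decay of $\phi$ and its derivatives along $\cH^+$ recalled in Section \ref{horizon_estimates}, and propagating the resulting weighted flux forward through the interior toward $\cC\cH^+$; the companion estimate \eqref{energy1u} then follows by the interchange $u\leftrightarrow v$. First I would pass to $\psi=r\phi$, for which the wave equation in the double-null coordinates takes the schematic form $\partial_u\partial_v\psi=\Omega^2 V\psi+\frac{\Omega^2}{4r}\Dell\psi$, where $V$ is a bounded potential and $\Omega^2=|\Delta|/r^2$ with $\Delta=(r-r_+)(r-r_-)$. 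The decisive structural fact is that $\Omega^2\to0$ exponentially both as $r\to r_+$ (near $\cH^+$) and as $r\to r_-$ (near $\cC\cH^+$); in particular $\Omega^2$ decays exponentially in $v$ (and in $u$) as one approaches $\cC\cH^+$, with rate governed by the surface gravity $\kappa_-$, while $\Omega^2\sim1$ only in a compact middle region.

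The core is the energy identity obtained by multiplying $\partial_u\partial_v\psi=F$, with $F=\Omega^2V\psi+\frac{\Omega^2}{4r}\Dell\psi$, by the weighted current $v^p\partial_v\psi$ and integrating over a characteristic rectangle $[u_0,u]\times[v_{fix},V]$ against $\md\sigma_{\bbS^2}$. Since $v^p\partial_v\psi\,\partial_u\partial_v\psi=\tfrac12 v^p\partial_u[(\partial_v\psi)^2]$ and the weight is $u$-independent, the left-hand side collapses cleanly into the difference of the desired $v^p$-weighted $(\partial_v\psi)^2$ fluxes on the slices $u_0$ and $u$, with no bulk contribution from the weight itself. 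Taking $u_0\to-\infty$, i.e.\ letting the initial slice approach $\cH^+$, the initial flux $\int v^p(\partial_v\phi)^2 r^2\,\md v$ is finite precisely because the horizon decay, say $|\partial_v\phi|\lesssim v^{-q}$ on $\cH^+$, beats the weight: this forces $p$ to be chosen in the range $1<p<2q-1$, which is the sense in which $p>1$ is ``appropriately chosen''. Converting the resulting $\psi$-flux back to the $\phi$-flux of \eqref{energy1} costs lower-order terms in $\phi$ and $\partial_v r\,\phi$, which are harmless once the zeroth-order quantity is controlled. It then remains to estimate the bulk term $\int v^p\partial_v\psi\,F$.

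The two pieces of $F$ are handled by integration by parts. For the potential piece I would write $\psi\partial_v\psi=\tfrac12\partial_v(\psi^2)$ and integrate by parts in $v$, producing boundary terms plus the spacetime integral of $\partial_v(v^p\Omega^2 V)\,\psi^2$; since $\partial_v(v^p\Omega^2)$ is a polynomial times the exponentially small $\Omega^2$, this factor is integrable and the remaining $\psi^2=r^2\phi^2$ is absorbed by a Hardy-type inequality, or equivalently by transporting the already-controlled boundedness of $\phi$ from $\cH^+$ along $\partial_v$. For the angular piece I would integrate by parts on $\bbS^2$ to replace $\Dell\psi\,\partial_v\psi$ by $-\tfrac12\partial_v|\nabb\psi|^2$ and then by parts in $v$; the leftover bulk again carries $\partial_v(v^p\Omega^2)$ and is absorbed using the non-degenerate angular flux, which is supplied beforehand by an unweighted $T$-energy estimate, a redshift estimate near $\cH^+$, and a trivial estimate in the middle region where $\Omega^2\sim1$. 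Closing the resulting inequality by a Gr\"onwall/bootstrap argument and letting $V\to\infty$ yields \eqref{energy1} with a constant $E$ uniform over all $u>-\infty$, including in the limit $u\to\infty$ at $\cC\cH^+$.

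The step I expect to be the main obstacle is the control of these bulk error terms in the neighbourhood of $\cC\cH^+$: one must verify that the blueshift does not defeat the growing weight $v^p$, which hinges entirely on the exponential decay of $\Omega^2$ dominating the polynomial growth in $\partial_v(v^p\Omega^2)$, and, crucially, on closing the hierarchy \emph{uniformly} up to the Cauchy horizon rather than merely on a fixed slice. A secondary difficulty is the zeroth-order term $\psi^2$, which is not itself an energy flux and must be fed in through a Hardy inequality or by transport from $\cH^+$. Finally, once \eqref{energy1} and \eqref{energy1u} hold, commuting the wave equation with the Killing angular momentum operators (which annihilate $\Box_g$) upgrades them to the same estimates for all angular derivatives; Sobolev embedding on $\bbS^2$ combined with the Cauchy--Schwarz bound $\int_{v}^{\infty}|\partial_v\phi|\,\md v\le\bigl(\int_v^\infty v^{-p}\,\md v\bigr)^{1/2}\bigl(\int_v^\infty v^p(\partial_v\phi)^2\,\md v\bigr)^{1/2}$, finite and decaying exactly because $p>1$, then delivers the uniform pointwise bound and the continuous extension of $\phi$ to $\cC\cH^+$ asserted in Theorem \ref{main}.
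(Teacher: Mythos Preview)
Your final paragraph---the Cauchy--Schwarz exploitation of $p>1$, commutation with angular momentum operators, and Sobolev embedding on $\bbS^2$---matches the paper's route from the weighted energy bound to Theorem~\ref{main}. The issue is how you obtain the weighted flux bound itself, and here your scheme has two concrete gaps.

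First, the appeal to ``an unweighted $T$-energy estimate'' is not available: the Killing field $T=\partial_t$ is \emph{spacelike} throughout the black hole interior, so its energy flux is not positive definite and cannot supply the non-degenerate angular control you invoke. The paper instead decomposes the interior into a redshift region $\cR$ (handled by the Dafermos--Rodnianski redshift vector field $N$, Proposition~\ref{mi}), a noshift region $\cN$ (handled by the \emph{timelike} multiplier $-\partial_r$ together with Gr\"onwall over a strip of uniformly bounded $v$-length, Proposition~\ref{r_{red}}), and the blueshift region $\cB$. None of these preliminary steps is a $T$-estimate, and the ``trivial estimate in the middle region'' is Proposition~\ref{r_{red}}, which is not trivial.

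Second, the single null multiplier $v^p\partial_v\psi$ on the rectangle $[-\infty,u]\times[v_{fix},V]$ does not close. After your integrations by parts the zeroth-order bulk $\int\!\!\int \partial_v(v^p\Omega^2 V)\,\psi^2\,\md u\,\md v$ must be absorbed, and both remedies you propose are circular: transporting $\phi$ from $\cH^+$ along $\partial_v$ presupposes control of $\partial_v\phi$, while a Hardy inequality in $v$ would need $v^p\Omega^2\lesssim v^{p-2}$, which fails in the noshift region where $\Omega^2\sim 1$. More fundamentally, your null multiplier never produces a $(\partial_u\psi)^2$ flux, so you have no independent handle on zeroth-order terms. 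The paper's resolution is structurally different: it first introduces an intermediate spacelike hypersurface $\gamma$ lying a \emph{logarithmic} $v$-distance into $\cB$, propagates decay from $r=r_{blue}$ to $\gamma$ using $S_0=r^q(\partial_u+\partial_v)$ at the cost of one power of $v$ (Proposition~\ref{to_gamma}), and then in $J^+(\gamma)$ applies the \emph{timelike} weighted multiplier $S=|u|^p\partial_u+v^p\partial_v$. Because $S$ has both components, its fluxes on constant-$u$ and constant-$v$ hypersurfaces contain $(\partial_v\phi)^2$ and $(\partial_u\phi)^2$ respectively; the only bad bulk term, the cross term $-\tfrac{2}{r}(|u|^p+v^p)\partial_u\phi\,\partial_v\phi$, is then absorbed into the \emph{supremum} of these fluxes with a smallness factor $\int \Omega^2(1+v^p/|u|^p)\,\md v\lesssim v_*^{-\beta\alpha+p+1}\to 0$, available precisely because $\Omega^2\lesssim v^{-\beta\alpha}$ already holds on $\gamma$ (Lemma~\ref{K_spher}, Proposition~\ref{kastle}). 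This absorption argument is the analytic core of the proof, and it is exactly what a one-sided null multiplier cannot perform.
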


\subsection{A first look at the analysis}
\lb{firstlook}
The proof of Theorem \ref{main} and \ref{energythm} involves first considering a characteristic rectangle $\Xi$ within the black hole interior, whose future {\it right} boundary coincides with the Cauchy horizon $\cC\cH^+$ in the vicinity of $i^+$, cf.~ Figure \ref{alle}. 
{\begin{figure}[ht]
\centering
\includegraphics[width=0.6\textwidth]{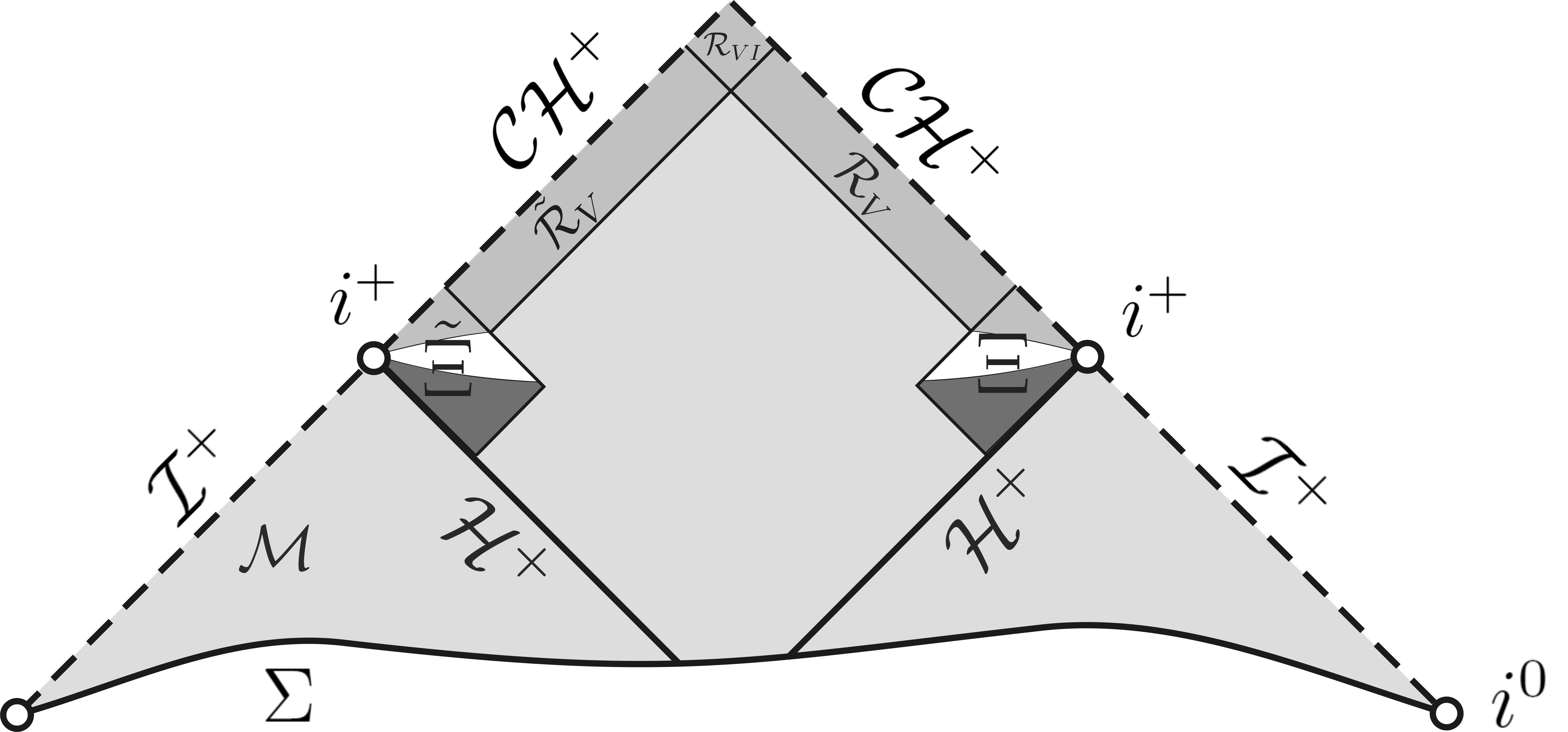}
\caption[]{Penrose diagram depicting all regions considered in the entire proof.}
\label{alle}\end{figure}}

Establishing boundedness of weighted energy norms in $\Xi$ is the crux of the entire proof. 
Once that is done, analogous results hold for a characteristic rectangle $\tilde{\Xi}$ to the {\it left} depicted in Figure \ref{alle}. Hereafter, boundedness of the energy is easily propagated to regions $\cR_V$, $\tilde{\cR}_V$ and $\cR_{VI}$ as depicted, giving Theorem \ref{energythm}. Commutation by angular momentum operators and application of Sobolev embedding then yields Theorem \ref{main}.

Let us return to the discussion of $\Xi$ since that is the most involved part of the proof. 
{\begin{figure}[ht]
\centering
\includegraphics[width=0.4\textwidth]{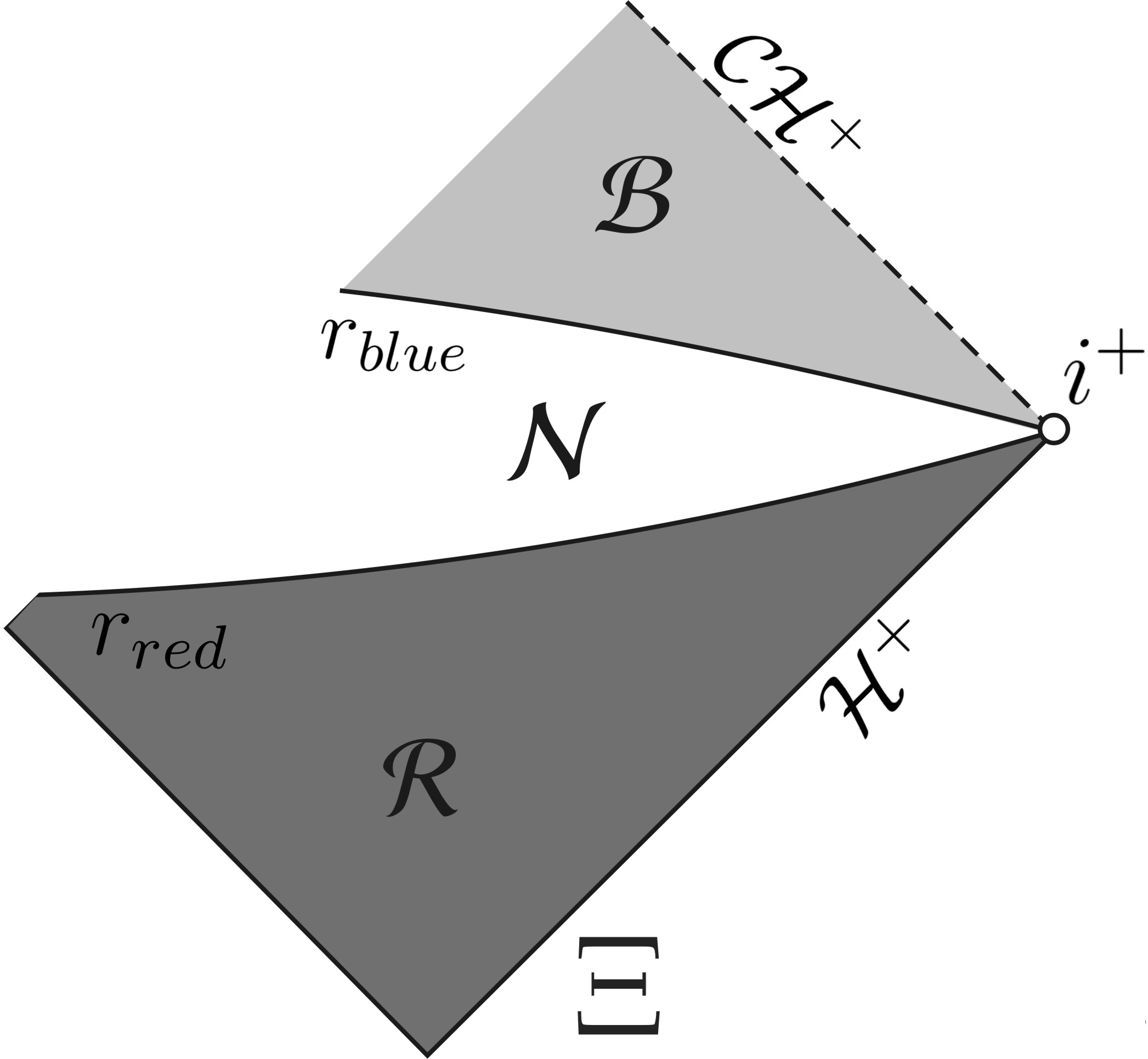}
\caption[]{Conformal representation of a characteristic regtangle $\Xi$ with redshift $\cR$, noshift $\cN$ and blueshift regions $\cB$.}
\label{bnr1}\end{figure}}

In order to prove Theorem \ref{energythm} (and hence Theorem \ref{main}) restricted to $\Xi$ we will begin with an upper decay bound for $|\phi|$ and its derivatives on the event horizon $\cH^+$, which can be deduced by putting together preceding work of Blue-Soffer, cf.~ \cite{blue}, Dafermos-Rodnianski, cf.~ \cite{m_price} and Schlue, cf.~ \cite{volker}. The precise result from previous work that we shall need will be stated in Section \ref{horizon_estimates}.

In $\Xi$ the proof involves distinguishing redshift $\cR$, noshift $\cN$ and blueshift $\cB$ regions, as shown in Figure \ref{bnr1}.

Some of these regions have appeared in previous analysis of the wave equation, especially \mbox{$\cR=\left\{r_{red}\leq r \leq r_+\right\}$}. Region \mbox{$\cN=\left\{r_{blue}\leq r\leq r_{red}\right\}$} and region \mbox{$\cB=\left\{r_{-}\leq r\leq r_{blue}\right\}$} were studied in \cite{m_interior} in the {\it spherically symmetric self-gravitating} case, but using techniques which are very special to $1+1$ dimensional  hyperbolic equations.\footnote{Let us note that the result of Theorem \ref{main} for \emph{spherically symmetric} solutions $\phi$ can be obtained by specializing \cite{m_interior, m_price, m_bh} to the uncoupled case. Restricted results for fixed spherical harmonics can be in principle also inferred from \cite{mc_b}.}
We will discuss this separation into $\cR$, $\cN$ and $\cB$ regions further in Section \ref{outline}.
One of the main analytic novelties of this paper is the introduction of a new vector field energy identity
constructed for analyses in region $\cB$.
In particular, the weighted vector field is given in Eddington-Finkelstein coordinates $(u,v)$ by
\ben
S=|u|^p\partial_u+v^p\partial_v,
\een
for $p>1$ as appearing in Theorem \ref{energythm}. This vector field associated to region $\cB$ will allow to prove uniform boundedness despite the blueshift instability.

\subsection{Outline of the paper}

The paper is organized as follows.

In the remaining Section \ref{motivation} of the introduction we will elaborate on Strong Cosmic Censorship and its relation to this work.

In Section \ref{preliminaries} we introduce the basic tools needed to derive energy estimates from the energy momentum tensor associated to \eqref{wave} and an appropriate vector field. A review of the Reissner-Nordstr\"om solution and the coordinates used in this paper will be given. Moreover, we will discuss further features of Reissner-Nordstr\"om geometry.

In Section \ref{horizon_estimates} we give a brief review of estimates obtained along $\cH^+$ from previous work, \cite{blue}, \cite{m_price} and \cite{volker}, for $\phi$ arising from sufficiently regular initial data on a Cauchy hypersurface. This is stated as Theorem \ref{anfang}.
Section \ref{outline} states our main result specialized to the rectangle $\Xi$ (see Theorem \ref{dashier}) and gives an outline of its proof. The investigation is divided into considerations within the redshift $\cR$, noshift $\cN$, and blueshift $\cB$ regions. 

The decay bound for the energy flux of $\phi$ given on the event horizon $\cH^+$, cf.~ Theorem \ref{anfang}, will be propagated through the redshift region $\cR$ up to the hypersurface $r=r_{red}$ in Section \ref{first_section}.

Thereafter, in Section \ref{zweite_region} we propagate the decay bound further into the black hole interior through the noshift region $\cN$ up to the hypersurface $r=r_{blue}$.

In Section \ref{blueshift1} a decay bound for the energy flux of $\phi$ is proven on a well chosen hypersurface $\gamma$ that separates the blueshift region into a region in the past of $\gamma$, $J^-(\gamma) \cap \cB$, and a region to the future of $\gamma$, $J^+(\gamma) \cap \cB$.
In Section \ref{innerhorizon} we will derive pointwise estimates on $\Omega^2$ to the future of $\gamma$ (in particular implying finiteness of the spacetime volume, $\operatorname{Vol}(J^+(\gamma))<C$). This will allow us to propagate our estimates into $J^+(\gamma) \cap \cB$ up to $\cC\cH^+$, yielding finally Corollary \ref{endeacht}. 

Section \ref{nineten} reveals how commutation with angular momentum operators and applying Sobolev embedding will return us pointwise boundedness for $|\phi|$. 
The necessary higher order boundedness statement is given in Theorem \ref{energythm2}. This completes the proof of Theorem \ref{dashier}.

We now must extend our result to the full interior region.

In Section \ref{leftinterior} we will state the analog of Theorem \ref{dashier} restricted to the rectangle $\tilde{\Xi}$ to the {\it left}. In Section \ref{region5_proof} and Section \ref{region_tilde5_proof} we propagate the energy estimates further along $\cC\cH^+$ in the depicted regions $\cR_V$ and $\tilde{\cR}_V$. Eventually, in Section \ref{bifurcate} we propagate the estimate to the region $\cR_{VI}$ up to the bifurcation two-sphere, and thus obtain a bound for the energy flux globally in the black hole interior (see Corollary \ref{letzteprop}) completing the proof of Theorem \ref{energythm}.

In Section \ref{global} we prove Theorem \ref{energythm3}, stating boundedness of the weighted higher order energies. Using the conclusion of this theorem, we apply again Sobolev embedding as before (using also the result of Section \ref{uni_bounded}) and thus obtain the boundedness statement of Theorem \ref{main}.
Finally, in Section \ref{continuity} we show continuous extendibility of $\phi$ to the Cauchy horizon.

An Outlook of open problems will be given in Section \ref{outlook}.
We first state an analogous result to our Theorem \ref{main} for general subextremal Kerr black holes (to appear as Theorem 1.1 of \cite{ich}). The conjectured blow up of the transverse derivatives\footnote{Note in contrast that the {\it tangential} derivatives of $\phi$ can be shown to be uniformly bounded up to $\cC\cH^+$ (away from the bifurcation sphere) from the energy estimates proven in this paper together with commuting with angular momentum operators, cf.~ Theorem \ref{energythm}.} along the Cauchy horizon for generic solutions of \eqref{wave} will also be discussed, as well as the peculiar extremal case.
Finally, we will discuss what our results suggest about the nonlinear dynamics of the Einstein equations themselves.

\subsection{Motivation and Strong Cosmic Censorship}
\lb{motivation}
Our motivation for proving Theorem \ref{main} is the Strong Cosmic Censorship Conjecture.
The mathematical formulation of this conjecture, here applied to electrovacuum, is given in \cite{christo_sing} by Christodoulou as 
\bea
\lb{sccc}
\begin{minipage}[c]{0.65\textwidth}
``\underline{Generic} asymptotically flat initial data for Einstein-Maxwell spacetimes 
 have a maximal future development which is
inextendible as a suitably regular Lorentzian manifold.''
\end{minipage}
\eea
Reissner-Nordstr\"om spacetime serves as a counterexample to the inextendibility statement since it is (in fact smoothly) extendable beyond the Cauchy horizon $\cC\cH^+$.\footnote{Outside the future maximal domain of dependence $\cD^+(\cM)$ in the future of the Cauchy horizon $J^+(\cC\cH^+)$ the spacetime shows the peculiar feature that uniqueness of the solutions of the initial value problem is lost {\it without loss of regularity}.
It is precisely the undesirability of this feature that motivates the conjecture.}
Thus, for the above conjecture to be true, this property of Reissner-Nordstr\"om must in particular be unstable.

Originally it was suggested by Penrose and Simpson that small perturbations of Reissner-Nordstr\"om would lead to a spacetime whose boundary would be a spacelike singularity as in Schwarzschild and such that the spacetime would be inextendable as a $C^0$ metric, cf.~ \cite{simpson}.  
On the other hand, a heuristic study of a spherically symmetric but fully nonlinear toy model by Israel and Poisson, cf.~ \cite{poisson}, led to an alternative scenario, which suggested
that spacetimes resulting from small perturbations would exist up to a Cauchy horizon, which however would be singular in a weaker sense, see also \cite{amos} by Ori. Considering the {\it spherically symmetric} Einstein-Maxwell-scalar field equations as a toy model, Dafermos proved that the solution indeed exists up to a Cauchy horizon and moreover is extendible as a $C^0$ metric but generically fails to be extendible as a $C^1$ metric beyond $\cC\cH^+$, cf.~ \cite{m_stab, m_interior}. For more recent extensions see \cite{kommemi, joao1, joao2, joao3}.

In this work, as a first attempt towards investigation of the stability of the Cauchy horizon under perturbations {\it without symmetry}, we employ \eqref{wave} on a fixed Reissner-Nordstr\"om background $(\cM,g)$ as a toy model for the full nonlinear Einstein field equations, cf.~ \eqref{EF}. The result of uniform pointwise boundedness of $\phi$ and continuous extension to $\cC\cH^+$ is concordant with the work of Dafermos \cite{m_stab}. This suggests that the non-spherically symmetric perturbations of the astrophysically more realistic Kerr spacetime may indeed exist up to $\cC\cH^+$.
See Section \ref{EFvacuum}.

\section{Preliminaries}
\lb{preliminaries}
\subsection{Energy currents and vector fields}
\lb{en_cur}
The essential tool used throughout this work is the so called vector field method. 
Let $(\cM, g)$ be a Lorentzian manifold. Let $\phi$ be a solution to the wave equation $\Box_g \phi=0$. 
A symmetric stress-energy tensor can be identified from variation of the massless scalar field action by
\ben
\lb{energymomentum}
T_{\mu\nu}(\phi)=\partial_\mu\phi\partial_\nu\phi-\frac12g_{\mu\nu}g^{\alpha\beta}\partial_\alpha\phi
\partial_\beta\phi,
\een
and this satisfies the energy-momentum conservation law
\begin{equation}
\label{divfree}
\nabla^\mu T_{\mu\nu}=0.
\end{equation}
By contracting the energy-momentum tensor with a vector field $V$, we define the current
\be
\lb{J}
J_\mu^V(\phi)\doteq T_{\mu\nu}(\phi) V^\nu.
\ee
In this context we call $V$ a multiplier vector field. If the vector field $V$ is timelike, then the one-form $J_\mu^V$ can be interpreted as the energy-momentum density. 
When we integrate $J_\mu^V$ contracted with the normal vector field over an associated hypersurface we will often refer to the integral as energy flux.
Note that $J^V_\mu (\phi)n^\mu_{\Sigma} \ge 0$ if $V$ is future directed timelike and $\Sigma$ spacelike, where $n^\mu_{\Sigma}$ is the future directed normal vector on the hypersurface $\Sigma$. 

Since we will frequently use versions of the divergence theorem, we are interested in the divergence of the current \eqref{J}. 
Defining
\be
\lb{K}
K^V(\phi)\doteq T(\phi)(\nabla V)=(\pi^V)^{\mu\nu}T_{\mu\nu}(\phi),
\ee
by \eqref{energymomentum} it follows that
\bea
\nabla^{\mu}J^V_{\mu}(\phi)=K^V(\phi).
\eea
Further, $(\pi^V)^{\mu\nu} \doteq \frac{1}{2} (\cL_V g)^{\mu\nu}$ is the so called deformation tensor 
of $V$.
Therefore, $\nabla^\mu J^V_\mu(\phi) =0$ if $V$ is Killing.

For a Killing vector field $W$ we have in addition the commutation relation $[\Box_g, W]=0$. In that context $W$ is called a commutation vector field.
In particular, we note already that
in Reissner-Nordstr\"om spacetime we have $\Box_g T \phi=0$ and $\Box_g \leo_i \phi=0$, where $T$ and $\leo_i$ with $i=1,2,3$ are Killing vector fields that will be defined in Section \ref{rtcoords} and \ref{angular}, respectively.\\

For a more detailed discussion see \cite{m_lec} by Dafermos and Rodnianski, \cite{sergiu} by Klainerman and \cite{christo_action} by Christodoulou.

\subsection{The Reissner-Nordstr\"om solution}
\lb{RNsection}
In the following we will briefly recall the Reissner-Nordstr\"om solution\footnote{The reader unfamiliar with this solution may for example consult \cite{haw_ellis} for a more detailed review.} which is a family of solutions to the
Einstein-Maxwell field equations 
\bea
\lb{EF}
R_{\mu \nu}-\frac12 g_{\mu \nu}R=2 T^{EM}_{\mu \nu},
\eea
with $R_{\mu \nu}$ the Ricci tensor, $R$ the Ricci scalar and the units chosen such that $\frac{8 \pi G}{c^4}=2$. The Maxwell equations are given by
\bea
\lb{maxwell}
\nabla^{\alpha} F_{\alpha \beta}=0, \qquad \nabla_{[\lambda} F_{\alpha \beta]}=0,
\eea
and the energy-momentum tensor by
\bea
\lb{tee1}
T^{EM}_{\mu \nu}&=& F^\alpha_{\mu}F_{\alpha \nu}-\frac{1}{4} g_{\mu \nu}F^{\alpha \beta}F_{\alpha \beta}.
\eea
The system \eqref{EF}-\eqref{tee1} describes the interaction of a gravitational field with a source free electromagnetic field.

The Reissner-Nordstr\"om solution represents a charged black hole as an isolated system in an asymptotically Minkowski spacetime.
The causal
structure is similar to the structure of the astrophysically more realistic axisymmetric Kerr
black holes. 
Since spherical symmetry can often simplify first investigations,
Reissner-Nordstr\"om spacetime is a popular proxy for
Kerr. 

\subsubsection{The metric and ambient differential structure}
To set the semantic convention, whenever we refer to the Reissner-Nordstr\"om solution $(\cM,g)$ we mean the maximal domain of dependence \mbox{$\cD (\Sigma)=\cM$} of complete two-ended asymptotically flat data $\Sigma$. 
The manifold $\cM$ can be expressed by \mbox{$\cM=\cQ\times \bbS^2$}, and \mbox{$\cQ=(-1, 1) \times (-1,1)$} with coordinates $U, V \in (-1,1)$ and thus
\bea
\cM=(-1, 1) \times (-1,1) \times \bbS^2. 
\eea
The metric in global double null coordinates then takes the form
\bea
\lb{metric}
g=-\Upomega^2(U,V)\md U \md V+{\mathpzc{r}}^2(U,V)\left[\md \theta^2+\sin^2\theta \md \varphi^2\right],
\eea
where $\Upomega^2$ and $\mathpzc{r}$ will be described below.

As a gauge condition we choose the hypersurface $U=0$ and $V=0$ to coincide with what will be the event horizons and we set
\bea
\lb{UVrange}
\Upomega^2(0,V)&=&\frac{1}{1-V^2},\\
\Upomega^2(U,0)&=&\frac{1}{1-U^2},
\eea
consistent with the fact that these hypersurfaces are to have infinite affine length. Fix parameters \mbox{$M>|e|\neq 0$}. The Reissner-Nordstr\"om metric \eqref{metric} in our gauge is uniquely determined from \eqref{EF}-\eqref{tee1} by setting
\bea
{\mathpzc{r}}(0,V)&=&{\mathpzc{r}}|_{{\cH_A}^+}=M+\sqrt{M^2-e^2}=r_+,\\
{\mathpzc{r}}(U,0)&=&{\mathpzc{r}}|_{{\cH_B}^+}=M+\sqrt{M^2-e^2}=r_+.
\eea
Rearranging the Einstein-Maxwell equations \eqref{EF} using \eqref{metric} we obtain the following Hessian equation 
\bea
\lb{uvr}
\partial_U \partial_V {\mathpzc{r}} &=&\frac{e^2 \Upomega^2}{4{\mathpzc{r}}^3}-\frac{\Upomega^2}{4{\mathpzc{r}}}-\frac{\partial_U {\mathpzc{r}} \partial_V {\mathpzc{r}}}{{\mathpzc{r}}},
\eea
from the $U,V$ component.
From the $\theta,\theta$ or equivalently $\phi,\phi$ component we obtain
\bea
\lb{omegalogrelation}
\partial_U \partial_V \log \Upomega^2&=&-\frac{2\partial_U \partial_V {\mathpzc{r}}}{{\mathpzc{r}}}\\
&\stackrel{\eqref{uvr}}{=}&-\frac{e^2 \Upomega^2}{2{\mathpzc{r}}^4}+\frac{\Upomega^2}{2{\mathpzc{r}}^2}+\frac{2\partial_U {\mathpzc{r}} \partial_V {\mathpzc{r}}}{{\mathpzc{r}}^2},
\eea

In fact, all relevant information about Reissner-Nordstr\"om geometry can be understood directly from \eqref{UVrange} to \eqref{omegalogrelation} without explicit expressions for $\Upomega^2(U,V)$ and ${\mathpzc{r}}(U,V)$. In particular, one can derive the Raychaudhuri equations
\bea
\lb{ray1}
\partial_U\left(\frac{\partial_U {\mathpzc{r}}}{\Upomega^2}\right)=0,\\
\lb{ray2}
\partial_V\left(\frac{\partial_V {\mathpzc{r}}}{\Upomega^2}\right)=0,
\eea
from the above.

We can illustrate the 2-dimensional quotient spacetime $\mathcal{Q}$ as a subset of an ambient $\mathbb R^{1+1}$:
{\begin{figure}[ht]
\centering
\includegraphics[width=0.6\textwidth]{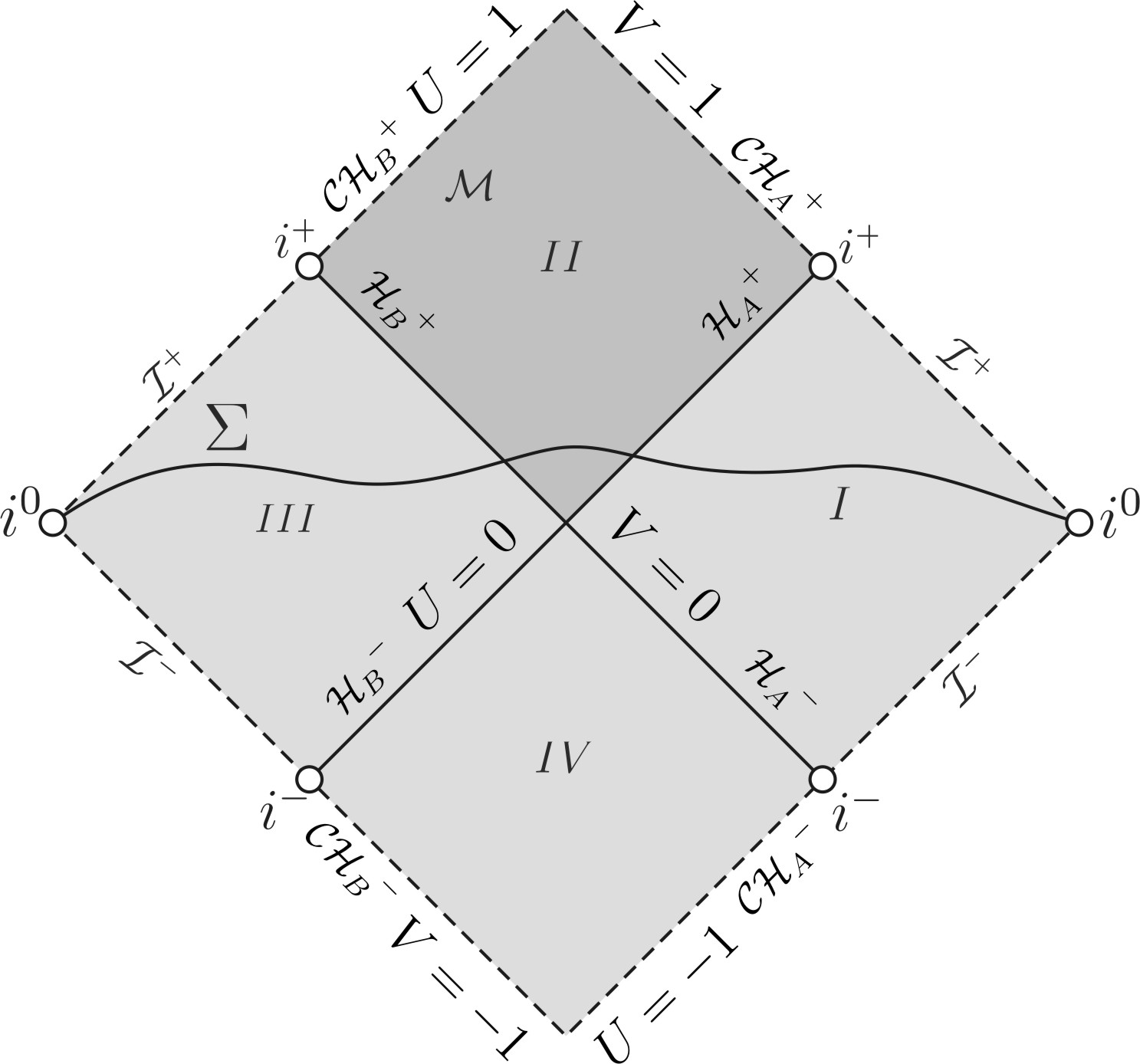}
\caption[Text der im Bilderverzeichnis auftaucht]{Conformal diagram of the maximal domain of dependence of Reissner-Nordstr\"om spacetime.}
\label{RN_ganz}\end{figure}}

Identifying $U$, $V$ with ambient null coordinates of $\mathbb R^{1+1}$, the boundary of $\cQ \subset \bbR^{1+1}$ is given by \mbox{$\pm 1\times [-1,1] \cup [-1,1]\times\pm 1$}. Let us further define the darker shaded region $II$ of Figure \ref{RN_ganz} by \mbox{$\cQ|_{II}=[0,1) \times [0,1)$}. Particularly important is \mbox{$\cC\cH^+=\cC\cH^+_A \cup \cC\cH^+_B=1\times (0,1] \cup (0,1]\times 1$}, which is the future boundary of the interior of region $II$.
We define \mbox{$\cM|_{II}=\pi^{-1}(\cQ|_{II})$}, where $\pi$ is the projection \mbox{$\pi: \cM\rightarrow \cQ$}.

\subsubsection{Eddington-Finkelstein coordinates}
It will be convenient to rescale the global double null coordinates and 
define
\bea
\lb{u_v_Edd}
u=f(U)=\frac{2r_+}{{r_+}^2-e^2}\ln\left|\ln\left|\frac{1+U}{1-U}\right|\right|,\qquad v=h(V)=\frac{2r_+}{{r_+}^2-e^2}\ln\left|\ln\left|\frac{1+V}{1-V}\right|\right|.
\eea
Note that
$u$ is the retarded and $v$ is the advanced Eddington-Finkelstein coordinate. These coordinates are both regular in the interior of $\cQ|_{II}$, cf.~ Figure \ref{RN_ganz}.
Nonetheless, we can view the whole of $\cQ|_{II}$ as
\bea
\lb{QII}
\cQ|_{II}=[-\infty, \infty) \times [-\infty,\infty), 
\eea
where we have formally parametrized by
\ben
{{\cH_A}^+}&=&\left\{-\infty\right\}\times [-\infty, \infty),\\ 
{{\cH_B}^+}&=&[-\infty, \infty)\times\left\{-\infty\right\}, 
\een 
as depicted in Figure \ref{range}, see also \eqref{u_v_Edd}.
{\begin{figure}[ht]
\centering
\includegraphics[width=0.4\textwidth]{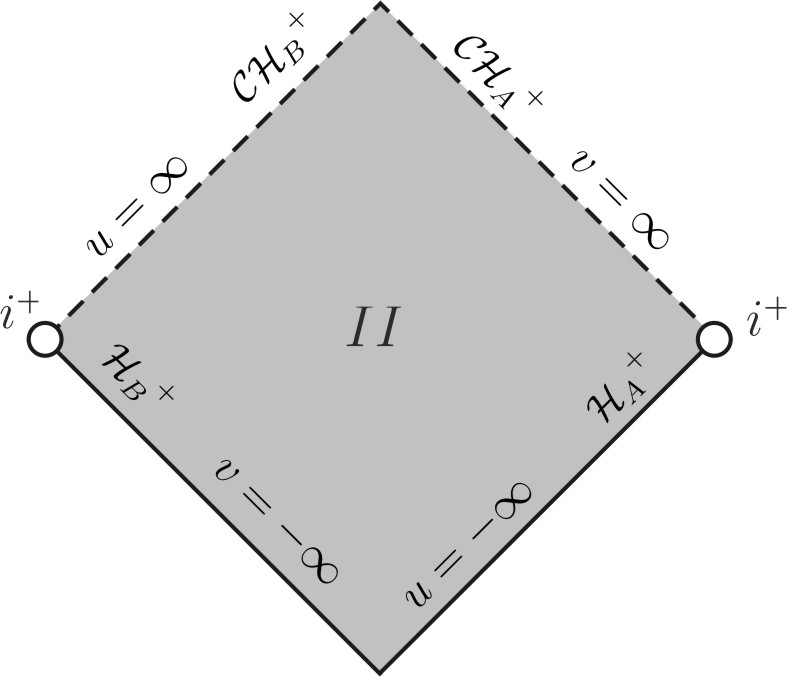}
\caption[Text der im Bilderverzeichnis auftaucht]{Conformal diagram of darker shaded region $II$, compare Figure \ref{RN_ganz}, with the ranges of $(u,v)$ depicted.}
\label{range}\end{figure}}\\

In $u$, $v$ coordinates the metric is given by 
\bea
g=-\Omega^2(u,v)\md u \md v+r^2(u,v)\left[\md \theta^2+\sin^2\theta \md \varphi^2\right],
\eea
with 
\bea
\lb{omeganull}
{\Omega}^2(u,v)=\frac{\Upomega^2(U,V)}{\partial_U f \partial_V h}=-\left(1-\frac{2M}{r}+\frac{e^2}{r^2}\right),
\eea
where the unfamiliar minus sign on the right hand side arises since all definitions have been made suitable for the interior.
We will often make use of the fact that by the choice of Eddington-Finkelstein coordinates \eqref{u_v_Edd} for the interior we have scaled our coordinates such that
\bea
\lb{def_l_n}
\frac{\partial_u r}{\Omega^2}=-\frac12, \qquad \frac{\partial_v r}{\Omega^2}=-\frac12.
\eea
(The fact that the above expressions are constants follows from the Raychaudhuri equations \eqref{ray1} and \eqref{ray2}.)
Taking the derivatives of \eqref{omeganull} with respect to $u$ and $v$ and using \eqref{def_l_n} it follows that
\bea
\lb{u-neg}
\frac{\partial_u \Omega}{\Omega}(u,v)&=&\frac{1}{2r^2}\left(M-\frac{e^2}{r}\right),\\
\lb{v-neg}
\frac{\partial_v \Omega}{\Omega}(u,v)&=&\frac{1}{2r^2}\left(M-\frac{e^2}{r}\right).
\eea

\subsection{Further properties of Reissner-Nordstr\"om geometry}
\lb{geometry}
\subsubsection{$(t,r^{\star})$ and $(t,r)$ coordinates}
\lb{rtcoords}
It is useful to define the function \mbox{$t:\mathring{\cM}|_{II} \rightarrow \bbR$} by
\bea
t(u,v)=\frac{v-u}{2},
\eea
where $\mathring{\cM}|_{II}=\cM|_{II}\setminus \partial \cM|_{II}$ is the interior of $\cM|_{II}$.
Moreover, we define the function \mbox{$r^{\star}:\mathring{\cM}|_{II}\rightarrow \bbR$} by
\bea
\lb{regge}
r^{\star}(u,v)=\frac{v+u}{2},
\eea
where $r^{\star}$ is usually referred to as the Regge-Wheeler coordinate. 
Note that for coordinates $(t, r^{\star}, \varphi, \theta)$ defined in $\mathring{\cM}|_{II}$ we have that $\frac{\partial}{
\partial t}$ is a spacelike Killing vector field which extends to the globally defined Killing vector field $T$ on $\cM$.
By $\varphi_{\tau}$ we denote a 1-parameter group of diffeomorphisms generated by the Killing field $T$. 
We can moreover relate the functions $r$ and $r^{\star}$ by
\bea
\lb{rstar1}
\md r^{\star}&=&\frac{\md r}{1-\frac{2M}{r}+\frac{e^2}{r^2}}\\ 
\lb{rstar2}
\Rightarrow r^{\star}&=&r+\frac{1}{\kappa_+}\ln{|\frac{r-r_+}{r_+}|}+\frac{1}{\kappa_-}\ln{|\frac{r-r_-}{r_-}|}+C, 
\eea
where $C$ is constant which is implicitly fixed by previous definitions,
\bea
r_-=M-\sqrt{M^2-e^2},
\eea
 and the surface gravities are given by
\bea
\lb{kappa}
\kappa_{\pm}=\frac{r_{\pm}-r_{\mp}}{2r_{\pm}^2}.
\eea
Note that $\kappa_{+}$ is the surface gravity at $\cH^+$ and $\kappa_{-}$ is the surface gravity at $\cC\cH^+$.
The function $r(u,v)$ extends continuously and is monotonically decreasing in both $u$ and $v$ towards $\cC\cH^+$ such that we have 
\bea
r(u, \infty)&=&{r}|_{{\cC\cH_A}^+}=r_-,\\
r(\infty, v)&=&{r}|_{{\cC\cH_B}^+}=r_-.
\eea

\subsubsection{Angular momentum operators}
\lb{angular}
We have already mentioned the generators of spherical symmetry $\leo_i$, $i=1,2,3$, in Section \ref{en_cur}. They are explicitly given by
\bea
\lb{angmom1}
\leo_{1}&=& \sin \varphi \partial_{\theta} + \cot \theta \cos \varphi \partial_{\varphi},\\
\leo_{2}&=& -\cos \varphi \partial_{\theta} + \cot \theta \sin \varphi \partial_{\varphi},\\
\lb{angmom2}
\leo_{3}&=& - \partial_{\varphi},
\eea
which satisfy
\bea
\lb{sum1}
\sum_{i=1}^{3} \left(\leo_i \phi\right)^2&=&r^2|\nabb \phi|^2,\\
\lb{sum2}
\sum_{i=1}^{3} \sum_{j=1}^{3}\left(\leo_i \leo_j\phi\right)^2&=&r^4|\nabb^2 \phi|^2,
\eea
where we define
\bea
\lb{nabb}
|\nabb \phi|^2=\frac{1}{r^2}\left[(\partial_{\theta} \phi)^2+ \frac{1}{\sin^2 {\theta}}(\partial_{\varphi} \phi)^2\right].
\eea

\subsubsection{The redshift, noshift and blueshift region}
\lb{bnrsection}
As we have already mentioned in the introduction, in the interior we can distinguish 
\bea
\mbox{redshift} &\mbox{$\cR=\left\{r_{red}\leq r \leq r_+\right\}$}&,\\ 
\lb{noshiftdef}
\mbox{noshift} &\mbox{$\cN=\left\{r_{blue}\leq r\leq r_{red}\right\}$}&,\\
\mbox{and blueshift} &\mbox{$\cB=\left\{r_{-}\leq r\leq r_{blue}\right\}$}&
\eea
subregions, as shown in Figure \ref{IIbnr}, for values $r_{red}$, $r_{blue}$ to be defined immediately below.
{\begin{figure}[ht]
\centering
\includegraphics[width=0.4\textwidth]{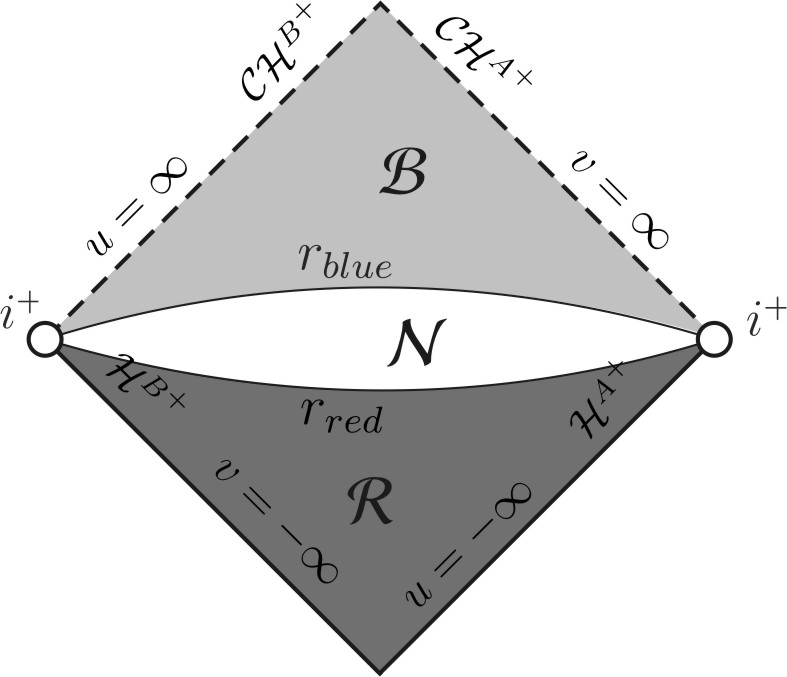}
\caption[]{Region $II$ with distinction into redshift $\cR$, noshift $\cN$ and blueshift $\cB$ regions.}
\label{IIbnr}\end{figure}}

In the redshift region $\cR$ we make use of the fact that the surface gravity $\kappa_+$ of the event horizon is positive. The region is then characterized by the fact that there exists a vector field $N$ such that its associated current $J^N_{\mu}n^{\mu}_{v=const}$ on a $v=const$ hypersurface can be controlled by the related bulk term $K^N$, cf.~ Proposition \ref{mi}. This positivity of the bulk term $K^N$ is only possible sufficiently close to $\cH^+$. In particular we shall define 
\bea
\lb{rred}
r_{red}=r_+-\epsilon,
\eea 
with $\epsilon>0$ and small enough such that Proposition \ref{mi} is applicable. (Furthermore, note that the quantity $M-\frac{e^2}{r}$ is always positive in $\cR$.)

As defined in \eqref{noshiftdef} the $r$ coordinate in the noshift region $\cN$ ranges between $r_{red}$ defined by \eqref{rred} and $r_{blue}$, defined below, strictly bigger than $r_-$. 
In $\cN$ we exploit the fact that $J^{-\partial_r}$ and $K^{-\partial_r}$ are invariant under translations along $\partial_t$. For that reason we can uniformly control the bulk by the current along a constant $r$ hypersurface. This will be explained further in Section \ref{zweite_region}.

The blueshift region $\cB$ is characterized by the fact that the bulk term $K^{S_0}$ associated to the vector field $S_0$ to be defined in \eqref{Nstar} is positive. We define 
\bea
\lb{rblue}
r_{blue}=r_-+\tilde{\epsilon},
\eea
with $\tilde{\epsilon}>0$ for an $\tilde{\epsilon}$ such that $M-\frac{e^2}{r}$ carries a negative sign
and such that (for convenience) 
\bea
\lb{rchoice}
r^{\star}(r_{blue})>0.
\eea
In particular, in view of \eqref{u-neg} and \eqref{v-neg} 
for $\tilde{\epsilon}$ sufficiently small the following lower bound holds in $\cB$
\bea
\lb{lowerboundu}
0<{\beta} &\leq&  -\frac{\partial_u \Omega}{\Omega},\\
\lb{lowerboundv}
0<{\beta} &\leq& -\frac{\partial_v \Omega}{\Omega},
\eea
with $\beta$ a positive constant.

\subsection{Notation}
\lb{nota}
We will describe certain regions derived from the hypersurfaces $r=r_{red}$, $r=r_{blue}$ and in addition the hypersurface $\gamma$ which will be defined in Section \ref{gamma_curve}.
For example given the hypersurface $r=r_{red}$ and the hypersurface $u=\tilde{u}$ we define the $v$ value at which these two hypersurfaces intersect by a function $v_{red}(\tilde{u})$ evaluated for $\tilde{u}$. Let us therefore introduce the following notation:
\bea
\lb{notation_neu}
v_{red}(\tilde{u}) \quad &\mbox{is determined by}& \quad r(v_{red}(\tilde{u}), \tilde{u})=r_{red},\nonumber\\
v_{\gamma}(\tilde{u}) \quad &\mbox{is determined by}& \quad (v_{\gamma}(\tilde{u}), \tilde{u}) \in \gamma,\nonumber\\
v_{blue}(\tilde{u}) \quad &\mbox{is determined by}& \quad r(v_{blue}(\tilde{u}), \tilde{u})=r_{blue},\nonumber\\
\mbox{and similarly we will also use}&&\nonumber\\
u_{red}(\tilde{v}) \quad &\mbox{is determined by}& \quad r(u_{red}(\tilde{v}), \tilde{v})=r_{red},\nonumber\\
u_{\gamma}(\tilde{v}) \quad &\mbox{is determined by}& \quad (u_{\gamma}(\tilde{v}), \tilde{v}) \in \gamma,\nonumber\\
u_{blue}(\tilde{v}) \quad &\mbox{is determined by}& \quad r(u_{blue}(\tilde{v}), \tilde{v})=r_{blue}.
\eea
For a better understanding the reader may also refer to Figure \ref{integralbild} and Figure \ref{integralbild2}.
{\begin{figure}[!ht]
\centering
\includegraphics[width=0.5\textwidth]{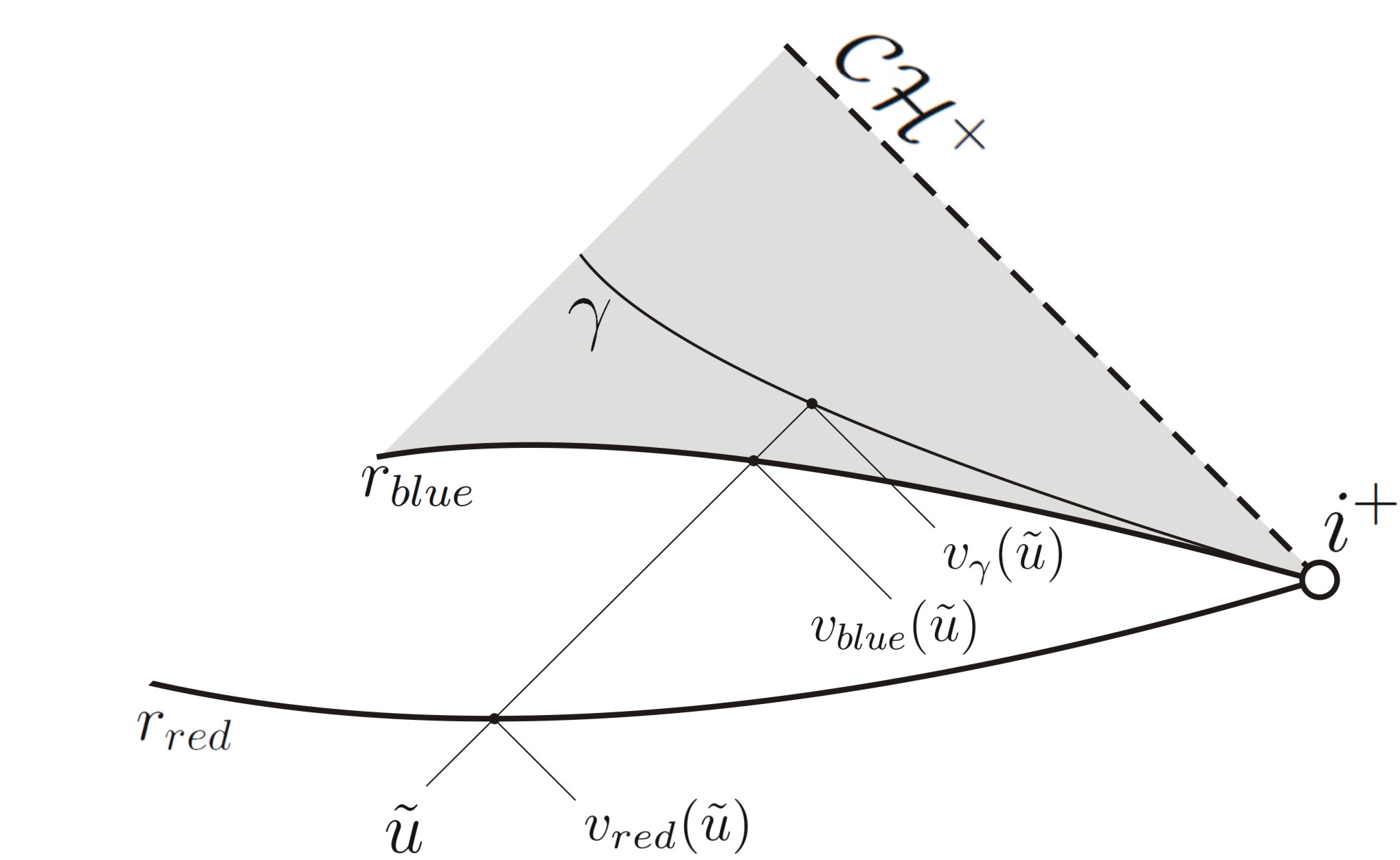}
\caption[]{Sketch of blueshift region $\cB$ with quantities depicted dependent on $\tilde{u}$.}
\label{integralbild}\end{figure}}

{\begin{figure}[!ht]
\centering
\includegraphics[width=0.5\textwidth]{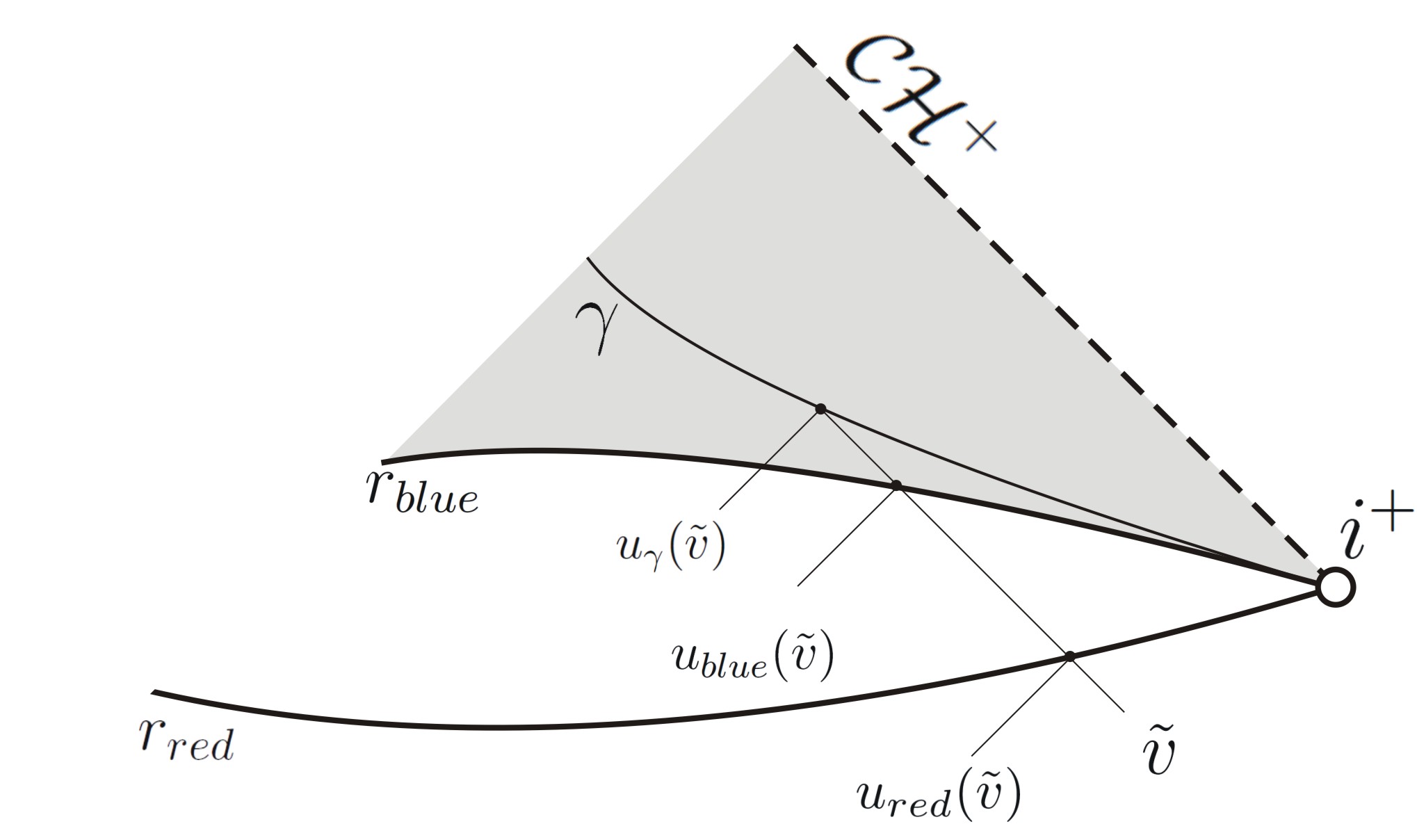}
\caption[]{Sketch of blueshift region $\cB$ with quantities depicted dependent on $\tilde{v}$.}
\label{integralbild2}\end{figure}}

Note that the above functions are well defined since $r=r_{red}$, $r=r_{blue}$ and $\gamma$ are spacelike hypersurfaces terminating at $i^+$.


\section{Horizon estimates and Cauchy stability}
\lb{horizon_estimates}

Our starting point will be previously proven decay bounds for $\phi$ and its derivatives in the black hole {\it exterior} up to and including the event horizon; in particular we can state:

\begin{thm}
\lb{anfang}
Let $\phi$ be a solution of the wave equation \eqref{wave} on a subextremal Reissner-Nordstr\"om background $(\cM,g)$, with mass $M$ and charge $e$ and $M>|e|\neq 0$, arising from smooth compactly supported initial data on an arbitrary Cauchy hypersurface $\Sigma$, cf.~ Figure \ref{RN2}. Then, there exists $\delta>0$ such that
\bea
\lb{thpur}
\int\limits_{\bbS^2}\int\limits^{v+1}_v \left[(\partial_v \phi)^2 (-\infty, v)+|\nabb \phi|^2(-\infty, v)\right]r^2\md v\md \sigma_{\mathbb S^2}&\leq& C_{0}v^{-2-2\delta},\\
\lb{thpur1}
\int\limits_{\bbS^2}\int\limits^{v+1}_v \left[(\partial_v \leo\phi)^2(-\infty, v) +|\nabb \leo\phi|^2(-\infty, v)\right]r^2\md v\md \sigma_{\mathbb S^2}&\leq& C_{1}v^{-2-2\delta},\\
\lb{thpur2}
\int\limits_{\bbS^2}\int\limits^{v+1}_v\left[ (\partial_v \leo^2\phi)^2(-\infty, v) +|\nabb \leo^2\phi|^2(-\infty, v)\right]r^2\md v\md \sigma_{\mathbb S^2}&\leq& C_{2}v^{-2-2\delta},
\eea
on ${\cH_A}^+$, for all $v$ and some positive constants $C_{0}$, $C_{1}$ and $C_{2}$ depending on the initial data.\footnote{The notation $\leo$ and $\leo^2$ is explained in Section \ref{leonotationsec} and simply denotes summation over angular momentum operators $\leo_i$ and $\leo_i\leo_j$.}
\end{thm}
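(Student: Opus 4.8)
The plan is to assemble the estimates \eqref{thpur}--\eqref{thpur2} from now-standard exterior decay theory for \eqref{wave}, since the statement concerns only the black hole exterior up to and including $\cH_A^+$. The argument proceeds in three layers. First I would establish uniform boundedness of the \emph{non-degenerate} energy flux through the leaves of a suitable foliation $\Sigma_\tau$ and through $\cH^+$, using the redshift multiplier vector field $N$ of Dafermos--Rodnianski: near $\cH_A^+$ the positivity of the surface gravity $\kappa_+>0$ renders the bulk term $K^N$ controllable, so that the horizon flux is dominated by the initial energy. Second, I would invoke the integrated local energy decay (Morawetz) estimate for Reissner--Nordstr\"om established by Blue--Soffer, cf.~\cite{blue}, controlling a spacetime energy integral by the data; the delicate point is the degeneration of this estimate at the photon sphere due to trapped null geodesics, handled by a carefully chosen Morawetz current together with commutation by the Killing field $T$. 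Third, I would feed boundedness and integrated decay into the $r^p$-weighted energy hierarchy of Dafermos--Rodnianski, cf.~\cite{m_price}, in a neighbourhood of $\cI^+$: the hierarchy converts finiteness of $r^p$-weighted norms of the data into polynomial-in-$\tau$ decay of the flux through $\Sigma_\tau$, and hence, via the redshift, through $\cH_A^+$.

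To reach the precise rate, note that for smooth compactly supported data every $r^p$-weighted norm of the data is finite, so the hierarchy may be iterated and combined with commutation by $T$ and with Schlue's refinement of the weighted method, cf.~\cite{volker}, to obtain decay of the horizon energy flux strictly faster than $\tau^{-2}$; writing this as $\tau^{-2-2\delta}$ for a suitable $\delta>0$, and recording that the foliation parameter $\tau$ agrees with the advanced time $v$ along $\cH_A^+$ up to an additive constant, yields the stated power. The restriction to the dyadic segment $[v,v+1]$ then follows immediately, since the flux over $[v,v+1]$ is bounded by the total future flux from $v$ onward, whose decay is precisely what the hierarchy supplies.

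For the higher-order estimates \eqref{thpur1}--\eqref{thpur2} I would use that the angular momentum operators $\leo_i$ are Killing and satisfy $[\Box_g,\leo_i]=0$, as recorded in Section \ref{en_cur}. Consequently $\leo_i\phi$ and $\leo_i\leo_j\phi$ are themselves solutions of \eqref{wave}, arising from smooth compactly supported data controlled by angular derivatives of the original data, so the three-layer argument applies verbatim and yields the same $v^{-2-2\delta}$ horizon decay; summing over $i$ (respectively $i,j$) and using \eqref{sum1}--\eqref{sum2} to rewrite the angular sums as $|\nabb\phi|^2$-type quantities produces \eqref{thpur1} and \eqref{thpur2}. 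Since the $\leo_i$ are tangent to the orbits of spherical symmetry, commuting with them introduces no new trapping or horizon difficulty.

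The main obstacle I expect is the trapping/Morawetz step rather than the hierarchy. In Reissner--Nordstr\"om the charge term $e^2/r^2$ shifts the effective potential and hence the location of the photon sphere relative to Schwarzschild, so the multiplier producing a non-negative bulk with the correct degeneration at the trapped set must be recomputed for $M>|e|\neq0$; ensuring that the degeneracy is confined to the photon sphere and is recoverable after one commutation by $T$ is the technically sharpest point. A secondary subtlety is the uniformity in the angular frequency, so that the commuted fluxes in \eqref{thpur1}--\eqref{thpur2} decay at exactly the same rate as \eqref{thpur}; this is guaranteed by carrying out all estimates through the physical-space vector field method rather than frequency by frequency.
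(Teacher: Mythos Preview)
Your proposal is correct and follows essentially the same route as the paper, which simply cites the combination of Blue--Soffer \cite{blue} for integrated local energy decay, Dafermos--Rodnianski for the redshift, and Schlue \cite{volker} for the improved decay via the $r^p$ method; your three-layer outline is precisely this synthesis, with the commutation by $\leo_i$ handled exactly as you describe. One minor bibliographic slip: the $r^p$-weighted hierarchy is \cite{m_new}, not \cite{m_price} (the latter is the Price's law/redshift paper), so your citations for the redshift and the hierarchy should be swapped.
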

\begin{proof}
The Theorem follows by putting together work of P.~ Blue and A.~ Soffer \cite{blue} on integrated local energy decay, M.~ Dafermos and I.~ Rodnianski \cite{m_price} on the redshift and V.~ Schlue \cite{volker} on improved decay using the method of \cite{m_new} in the exterior region.
The assumption of smoothness and compact support can be weakened. Moreover, we can in fact take $\delta$ arbitrarily close to $\frac12$, but $\delta>0$ is sufficient for our purposes and allows in principle for a larger class of data on $\Sigma$. 
\end{proof}

On the other hand, trivially from Cauchy stability, boundedness of the energy along the second component of the past boundary of the characteristic rectangle $\Xi$, cf.~ Section \ref{firstlook}, which we have picked to be $v=1$, can be derived. More generally we can state the following proposition.
\begin{prop}
\lb{initialdataprop}
Let \mbox{$u_{\diamond}, v_{\diamond} \in (-\infty, \infty)$}.
Under the assumption of Theorem \ref{anfang}, the energy at advanced Eddington-Finkelstein coordinate \mbox{$\left\{v=v_{\diamond}\right\}\cap\left\{{-\infty}\leq u \leq {u_{\diamond}}\right\}$}
is bounded from the initial data 
\bea
\lb{proppur}
\int\limits_{\bbS^2}\int\limits_{-\infty}^{u_{\diamond}}\left[ \Omega^{-2}(\partial_u \phi)^2(u,v_{\diamond})+\frac{\Omega^{2}}{2}|\nabb \phi|^2(u,v_{\diamond})\right]r^2\md u\md \sigma_{\mathbb S^2} &\leq& D_{0}(u_{\diamond},v_{\diamond}),\\
\lb{proppur1}
\int\limits_{\bbS^2}\int\limits_{-\infty}^{u_{\diamond}}\left[ \Omega^{-2}(\partial_u \leo\phi)^2(u,v_{\diamond})+\frac{\Omega^{2}}{2}|\nabb  \leo\phi|^2(u,v_{\diamond})\right]r^2\md u\md \sigma_{\mathbb S^2}&\leq& D_{1}(u_{\diamond},v_{\diamond}),\\
\lb{proppur2}
\int\limits_{\bbS^2}\int\limits_{-\infty}^{u_{\diamond}} \left[\Omega^{-2}(\partial_u \leo^2\phi)^2(u,v_{\diamond})+\frac{\Omega^{2}}{2}|\nabb \leo^2\phi|^2(u,v_{\diamond})\right]r^2\md u\md \sigma_{\mathbb S^2}&\leq& D_{2}(u_{\diamond},v_{\diamond}),
\eea
and further
\bea
\lb{aufeins1}
\sup_{-\infty\leq u \leq {u_{\diamond}}}\int\limits_{\bbS^2} (\phi)^2(u,v_{\diamond})\md \sigma_{\mathbb S^2}&\leq&  D_{0}(u_{\diamond},v_{\diamond}),\\
\sup_{-\infty\leq u \leq {u_{\diamond}}}\int\limits_{\bbS^2} (\leo\phi)^2(u,v_{\diamond})\md \sigma_{\mathbb S^2}&\leq&  D_{1}(u_{\diamond},v_{\diamond}),\\
\lb{aufeins3}
\sup_{-\infty\leq u \leq {u_{\diamond}}}\int\limits_{\bbS^2} (\leo^2\phi)^2(u,v_{\diamond})\md \sigma_{\mathbb S^2}&\leq&  D_{2}(u_{\diamond},v_{\diamond}),
\eea
with $D_{0}(u_{\diamond},v_{\diamond})$, $D_{1}(u_{\diamond},v_{\diamond})$ and $D_{2}(u_{\diamond},v_{\diamond})$ positive constants depending on the initial data on $\Sigma$.
\end{prop}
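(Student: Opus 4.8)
The plan is to obtain the proposition from Cauchy stability together with the redshift structure near $\cH^+$, feeding in the horizon decay of Theorem \ref{anfang}. Since the data on $\Sigma$ is smooth and compactly supported, well-posedness of $\Box_g\phi=0$ guarantees that $\phi$ is smooth throughout the maximal development, so on any compact subset of $\mathring{\cM}|_{II}$ the solution and all its derivatives are finite and depend continuously on the data. The only boundary of the null segment $\{v=v_\diamond\}\cap\{-\infty\leq u\leq u_\diamond\}$ at which finiteness of \eqref{proppur} is not immediate is its past endpoint $u\to-\infty$ on $\cH_A^+$, where the weight $\Omega^{-2}$ degenerates. Controlling the flux in this limit is the crux, and it is precisely here that Theorem \ref{anfang} enters.

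First I would set up the energy identity. Integrating $\nabla^\mu J^V_\mu(\phi)=K^V(\phi)$ over the characteristic rectangle bounded to the past by a segment of $\cH_A^+$ and by an initial ingoing ray (canonically $\{v=1\}$, as chosen for $\Xi$), the divergence theorem expresses the flux through $\{v=v_\diamond\}$ in terms of the flux through the horizon segment, the flux through the initial ray, and the spacetime bulk $\int K^V$. In the redshift region $\cR$ adjacent to $\cH_A^+$ I would take $V$ to be the redshift vector field $N$ of Proposition \ref{mi}, so that the boundary flux $J^N n_{\{v=v_\diamond\}}$ is coercive and controls $\Omega^{-2}(\partial_u\phi)^2 r^2+\tfrac{\Omega^2}{2}|\nabb\phi|^2 r^2$ while $K^N$ carries a favorable sign; on the remaining portion $u\geq u_{red}(v_\diamond)$ of the segment, which lies in a compact set, $\phi$ is smooth and all bulk contributions are manifestly bounded. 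The horizon flux is finite because the dyadic decay $v^{-2-2\delta}$ supplied by \eqref{thpur} is summable, and the flux through the initial ray is bounded by the data, which yields \eqref{proppur}.

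For the commuted estimates \eqref{proppur1}--\eqref{proppur2} I would use that $T$ and the $\leo_i$ are Killing and commute with $\Box_g$, so that $\leo\phi$ and $\leo^2\phi$ again solve the wave equation and satisfy the horizon bounds \eqref{thpur1}--\eqref{thpur2}; the argument above then applies verbatim with $\phi$ replaced by $\leo\phi$ and $\leo^2\phi$. The pointwise bounds \eqref{aufeins1}--\eqref{aufeins3} would follow by writing $\int_{\bbS^2}\phi^2(u,v_\diamond)\md\sigma_{\bbS^2}=\int_{\bbS^2}\phi^2(-\infty,v_\diamond)\md\sigma_{\bbS^2}+2\int_{-\infty}^{u}\int_{\bbS^2}\phi\,\partial_{u'}\phi\,\md\sigma_{\bbS^2}\md u'$, estimating the cross term by Cauchy--Schwarz against the flux just established and closing by Gr\"onwall, the horizon value $\int_{\bbS^2}\phi^2(-\infty,v_\diamond)\md\sigma_{\bbS^2}$ being controlled through the summable decay of Theorem \ref{anfang}.

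The main obstacle, as indicated, is the degeneration of $\Omega^{-2}$ at $\cH_A^+$: a naive energy estimate does not see the horizon flux decay, and without the redshift multiplier the bulk term has the wrong sign there. The role of Proposition \ref{mi} together with Theorem \ref{anfang} is exactly to convert the $v$-decaying horizon flux into finiteness of the transverse flux in the limit $u\to-\infty$; away from the horizon the statement is genuinely an instance of Cauchy stability on a compact region.
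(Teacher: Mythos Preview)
Your argument would ultimately work, but it rests on a misreading of where the difficulty lies, and as a result is far more elaborate than the paper's one-line proof. You treat the weight $\Omega^{-2}$ as a \emph{degeneration} that must be tamed by the redshift mechanism and the summable horizon decay of Theorem~\ref{anfang}. In fact the opposite is true: the weights $\Omega^{-2}$ and $\Omega^{2}$ are precisely what render the expression \emph{regular} at $\cH_A^+$. Passing to the global coordinates $(U,V)$ of Section~\ref{RNsection}, one has $\partial_u=\tfrac{1}{f'(U)}\partial_U$ and $\Omega^2=\Upomega^2/(f'(U)h'(V))$, so that
\[
\Omega^{-2}(\partial_u\phi)^2\,r^2\,\md u \;=\; \frac{h'(V)}{\Upomega^2}\,(\partial_U\phi)^2\,r^2\,\md U,
\qquad
\frac{\Omega^2}{2}|\nabb\phi|^2\,r^2\,\md u \;=\; \frac{\Upomega^2}{2h'(V)}\,|\nabb\phi|^2\,r^2\,\md U,
\]
both of which are smooth on the compact segment $\{V=V_\diamond\}\cap\{0\leq U\leq U_\diamond\}\times\bbS^2$. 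The paper's proof is therefore literally ``local energy estimates in a compact spacetime region'': smoothness of $\phi$ on $\cM$ from Cauchy data on $\Sigma$ gives the bound immediately, and the \emph{conclusion} of Theorem~\ref{anfang} plays no role whatsoever (only its hypothesis on the data is invoked).

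What your approach buys is a quantitative dependence of the constants on the horizon flux, which is not asked for here; what it costs is the entire machinery of Proposition~\ref{mi} and a dyadic summation, together with some care in patching the redshift region to the rest of the segment. The conceptual point you should take away is that the $\Omega^{\pm2}$ weights in \eqref{proppur}--\eqref{proppur2} are not obstacles but coordinate artifacts: they encode the $N$-flux in a chart that is singular at $\cH_A^+$, and vanish once one works in regular coordinates.
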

\begin{proof}
This follows immediately from local energy estimates in a compact spacetime region. Note the $\Omega^{-2}$ and $\Omega^{2}$ weights which arise since $u$ is not regular at $\cH^+_A$.
\end{proof}

\section{Statement of the theorem and outline of the proof in the neighbourhood of $i^+$}
\lb{outline}
The most difficult result of this paper can now be stated in the following theorem.
\begin{thm}
\lb{dashier}
On subextremal Reissner-Nordstr\"om spacetime with $M>|e|\neq 0$, let $\phi$ be as in Theorem \ref{anfang}, then
\ben
|\phi|\leq C
\een
locally in the black hole interior up to
$\cC\cH^+$ in a ``small neighbourhood'' of timelike infinity $i^+$,
that is in \mbox{$(-\infty, u_{\schere}] \times [1, \infty)$} for some $u_{\schere}>-\infty$.
\end{thm}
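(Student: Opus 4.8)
The plan is to prove the pointwise bound $|\phi|\leq C$ in the neighbourhood of $i^+$ by first establishing the weighted energy bounds \eqref{energy1}--\eqref{energy1u} restricted to $\Xi$, and then extracting pointwise control via commutation and Sobolev embedding. The starting data are the horizon decay estimates of Theorem \ref{anfang} on $\cH_A^+$ together with the Cauchy-stability flux bounds of Proposition \ref{initialdataprop} on the slice $v=1$. First I would propagate this energy flux through the redshift region $\cR=\{r_{red}\leq r\leq r_+\}$: using the multiplier vector field $N$ whose current $J^N_\mu n^\mu$ is controlled by the bulk $K^N$ for $r$ close to $r_+$ (the positivity following from $\kappa_+>0$, cf.~ the forthcoming Proposition \ref{mi}), an application of the divergence theorem on a characteristic rectangle bounded by $\cH_A^+$ and $\{r=r_{red}\}$ yields a flux bound on $\{r=r_{red}\}$ that inherits the $v^{-2-2\delta}$ decay, and likewise for the $\leo\phi$, $\leo^2\phi$ commuted quantities.

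Next I would cross the noshift region $\cN=\{r_{blue}\leq r\leq r_{red}\}$. Here the key geometric fact is that $J^{-\partial_r}$ and its bulk $K^{-\partial_r}$ are invariant under the flow of the Killing field $T=\partial_t$; since $\cN$ is a compact-in-$r$ slab, the bulk is uniformly controlled by the flux and one loses only a bounded multiplicative constant in transporting the (decaying) flux from $\{r=r_{red}\}$ to $\{r=r_{blue}\}$. The genuinely delicate part is the blueshift region $\cB=\{r_-\leq r\leq r_{blue}\}$, where the naive energy with respect to $T$ grows exponentially ($e^{2|\kappa_-|r^\star}$) because of the negative surface gravity at $\cC\cH^+$. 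The whole point of the construction is the weighted vector field
\ben
S=|u|^p\partial_u+v^p\partial_v,\qquad p>1,
\een
whose deformation tensor is arranged so that the bulk $K^{S}$ (more precisely its model $S_0$, to be defined in \eqref{Nstar}, exploiting the lower bounds \eqref{lowerboundu}--\eqref{lowerboundv} on $-\partial_u\Omega/\Omega$ and $-\partial_v\Omega/\Omega$) has a favourable sign. The strategy in $\cB$ is to split it along a well-chosen spacelike curve $\gamma$: to the past, $J^-(\gamma)\cap\cB$, one proves a flux bound on $\gamma$ itself; to the future, $J^+(\gamma)\cap\cB$, one must first obtain pointwise control on $\Omega^2$, giving finiteness of the spacetime volume $\operatorname{Vol}(J^+(\gamma))<C$, which is exactly what lets the weighted energy identity close all the way up to $\cC\cH^+$.

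The main obstacle, and the analytic heart of the argument, is controlling the sign and size of the bulk term $K^{S}$ in $\cB$: the weights $|u|^p$ and $v^p$ must grow fast enough to beat the blueshift amplification yet be compatible with the degeneracy of $\Omega^2$ as $r\to r_-$, and the resulting error terms (in particular the cross term coupling $\partial_u\phi\,\partial_v\phi$ and the angular term $|\nabb\phi|^2$) must be absorbed using the pointwise decay of $\Omega^2$ and the lower bounds \eqref{lowerboundu}--\eqref{lowerboundv}. Once the weighted fluxes \eqref{energy1}--\eqref{energy1u} are established up to $\cC\cH^+$ in $\Xi$, the passage to the pointwise statement is comparatively routine: I would commute \eqref{wave} with the angular momentum operators $\leo_i$ (legitimate since $[\Box_g,\leo_i]=0$), obtain the same weighted energy bounds for $\leo\phi$ and $\leo^2\phi$ via \eqref{sum1}--\eqref{sum2}, and then apply Sobolev embedding on $\bbS^2$ together with a one-dimensional fundamental-theorem-of-calculus estimate in $v$ (integrating $\partial_v\phi$ against the weight and using $p>1$ to ensure the tail is summable) to convert the two-sphere $H^2$ control into the uniform pointwise bound $|\phi|\leq C$ throughout $(-\infty,u_{\schere}]\times[1,\infty)$ up to and including $\cC\cH^+$.
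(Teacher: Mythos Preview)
Your proposal is correct and follows essentially the same route as the paper: redshift via $N$, noshift via $-\partial_r$, then a split of $\cB$ along $\gamma$ with pointwise $\Omega^2$-decay in $J^+(\gamma)$, followed by commutation with $\leo_i$ and Sobolev embedding on $\bbS^2$. The one imprecision worth flagging is that $S_0=r^q(\partial_u+\partial_v)$ is not a ``model'' for $S$ but a genuinely different multiplier used in $J^-(\gamma)\cap\cB$ (where its bulk $K^{S_0}$ is fully positive for $q\geq 2$), whereas $S=|u|^p\partial_u+v^p\partial_v$ is used only in $J^+(\gamma)\cap\cB$, where the angular part of $K^S$ is positive but the cross term $\tilde K^S$ is \emph{not} sign-definite and must be absorbed via the $\Omega^2$-smallness; you describe this absorption correctly, but the division of labor between the two vector fields should be stated more sharply.
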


{\em Remark.} We will see that $C$ depends only on the initial data. 
{\begin{figure}[ht]
\centering
\includegraphics[width=0.5\textwidth]{RN.jpg}
\caption[]{Maximal development of Cauchy hypersurface $\Sigma$ in Reissner-Nordstr\"om spacetime $(\cM,g)$.}
\label{RN2}\end{figure}}

We will consider a characteristic rectangle $\Xi$ extending from ${\cH_A}^+$ as shown in Figure \ref{RN_character}. 
We pick the characteristic rectangle to be defined by \mbox{$\Xi=\left\{(-\infty\leq u\leq u_{\schere}), (1\leq v <\infty)\right\}$}, where $u_{\schere}$ is sufficiently close to $-\infty$ for reasons that will become clear later on, cf.~ Proposition \ref{kastle}.
As described in Section \ref{horizon_estimates}, from bounds of data on $\Sigma$ 
bounds on the solution on the lower segments follow according to Theorem \ref{anfang} and Proposition \ref{initialdataprop}.
{\begin{figure}[ht]
\centering
\includegraphics[width=0.6\textwidth]{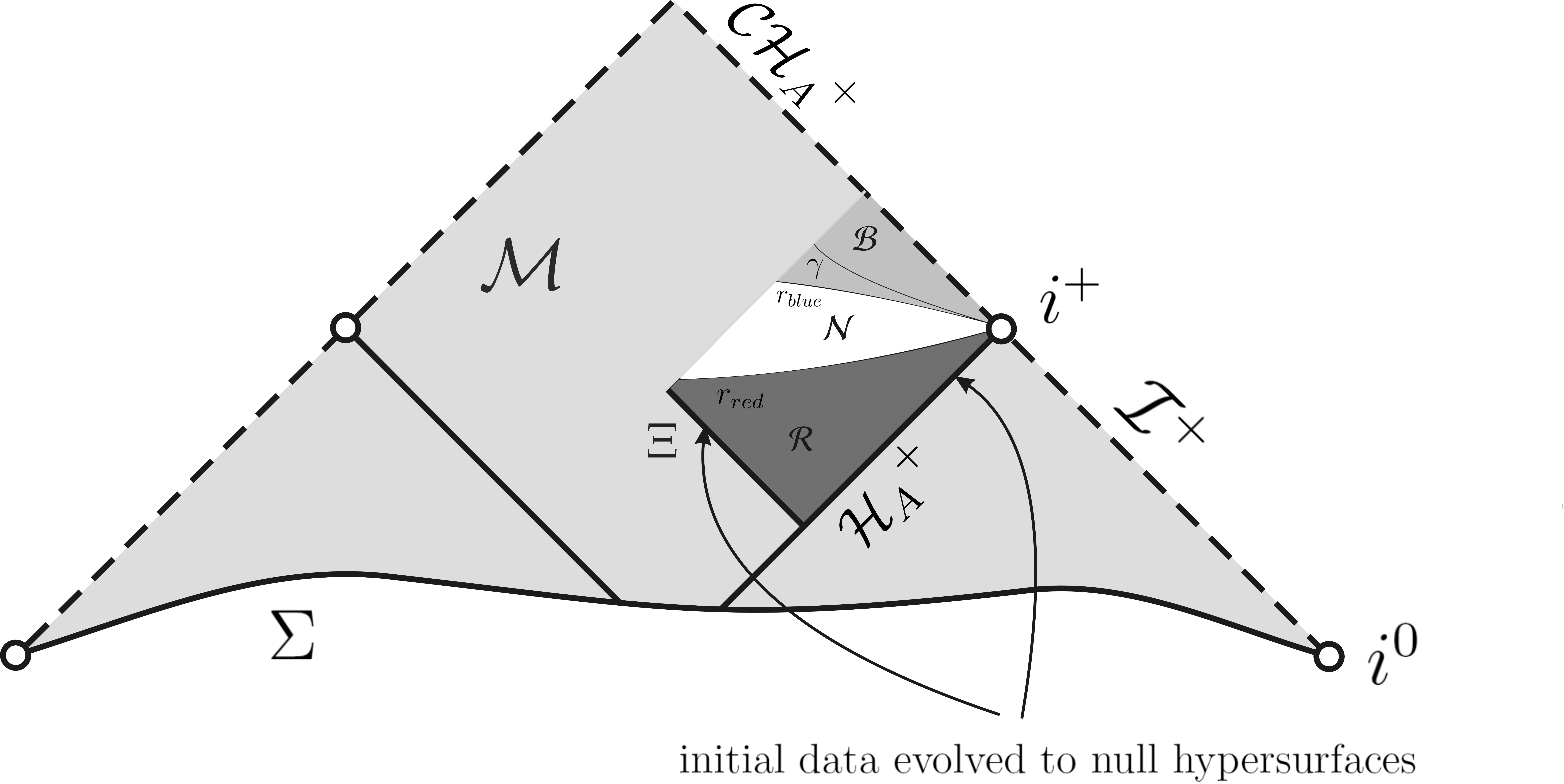}
\caption[]{Characteristic rectangle $\Xi$ in the interior of Reissner-Nordstr\"om spacetime $(\cM,g)$, for $\Xi$ zoomed in see Figure \ref{bnr}.}
\label{RN_character}\end{figure}} 

In order to prove Theorem \ref{dashier} 
we distinguish the redshift $\cR$, the noshift $\cN$ and the blueshift $\cB$ region, with the properties as explained in Section \ref{bnrsection}, 
cf.~ Figure \ref{bnr}.
{\begin{figure}[ht]
\centering
\includegraphics[width=0.4\textwidth]{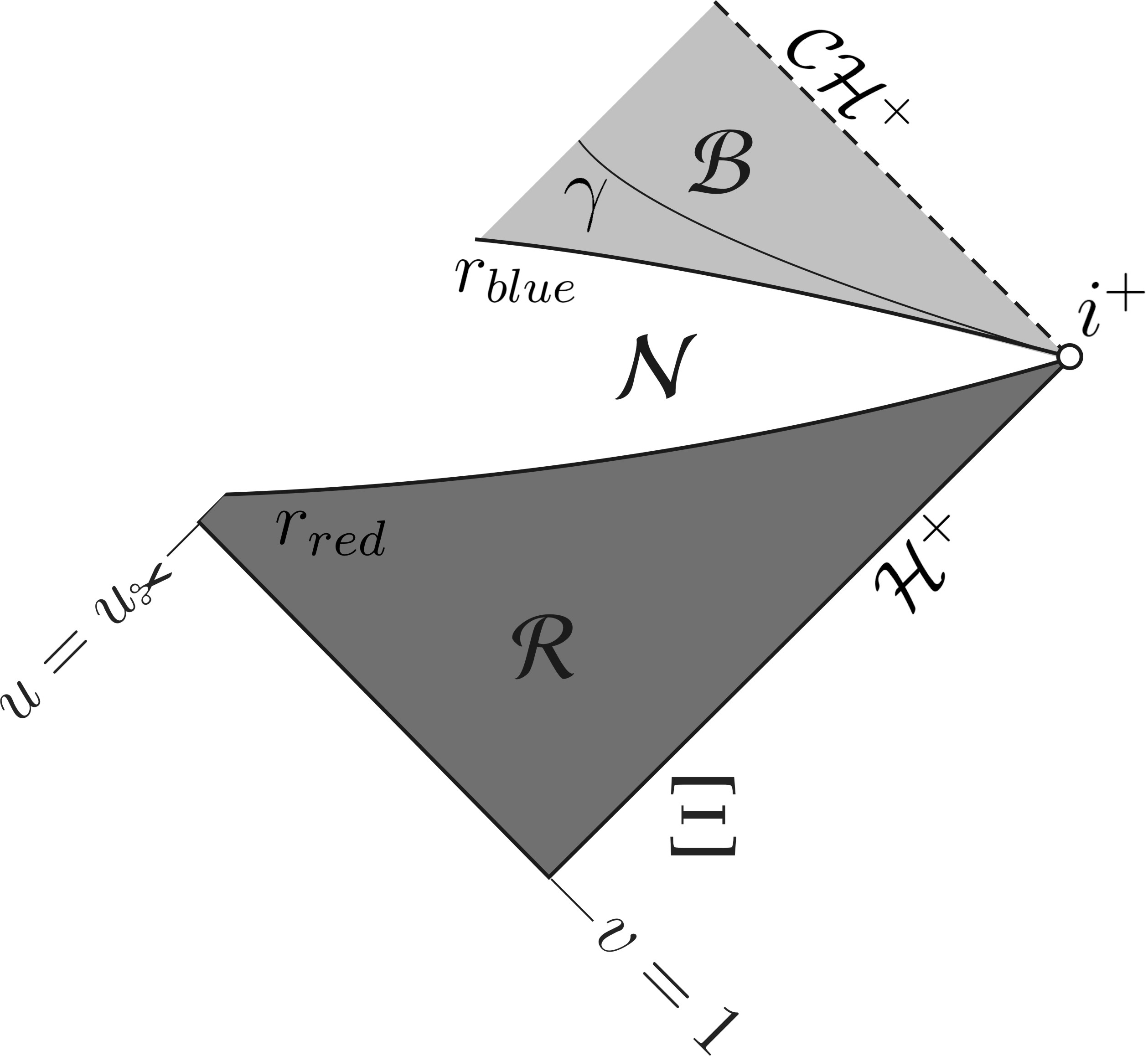}
\caption[]{Characteristic rectangle $\Xi$ with redshift $\cR$, noshift $\cN$ and blueshift $\cB$ regions.}
\label{bnr}\end{figure}}
This distinction is made since different vector fields have to be employed in the different regions\footnote{The reader may wonder why the noshift region $\cN$ is introduced instead of just separating the red- and the blueshift regions along the $r$ hypersurface whose value renders the quantity $M-\frac{e^2}{r}$ equal zero. This was to ensure strict positivity/negativity of the quantity in the redshift/blueshift region.}. 

In the redshift region $\cR$ we will make use of the redshift vector field $N$ of \cite{m_lec} on which we will elaborate more in Section \ref{first_section}.
Proposition \ref{mi} gives the positivity of the bulk $K^N$ which thus bounds the current $J^N_{\mu}n^{\mu}_{v=const}$ from above.
Applying the divergence theorem, decay up to $r=r_{red}$ will be proven.

In the noshift region $\cN$ we can simply appeal to the fact that the future directed timelike vector field $-\partial_r$ is invariant under the flow of the spacelike Killing vector field $\partial_t$.
It is for that reason that the bulk term $K^{-\partial_r}$
can be uniformly controlled by the energy flux $J_{\mu}^{-\partial_r}n^{\mu}_{r=\bar{r}}$ through the $r=\bar{r}$ hypersurface.
Decay up to $r=r_{blue}$ will be proven by making use of this together with the uniform boundedness of the $v$ length of $\cN$.

To understand the blueshift region $\cB$, we will partition it by the hypersurface $\gamma$ admitting logarithmic distance in $v$ from $r=r_{blue}$, cf.~ Section \ref{gamma_curve}. We will then separately consider the region to the past of $\gamma$, \mbox{$J^-(\gamma)\cap \cB$} and the region to the future of $\gamma$, \mbox{$J^+(\gamma)\cap \cB$}.
The region to the future of $\gamma$ is characterized by good decay bounds on $\Omega^2$
(implying for instance that the
spacetime volume is finite, $\operatorname{Vol}(J^+(\gamma))<C$).

In \mbox{$J^-(\gamma)\cap \cB$} we use a vector field 
\bea
\lb{Nstar}
{S_0}=r^q\partial_{r^{\star}}=r^q(\partial_u+\partial_v),
\eea
where $q$ is sufficiently large, cf.~ Section \ref{blueshift1}. We will see that for the right choice of $q$ we can render the associated bulk term $K^{{S_0}}$ positive which is the ``good'' sign when using the divergence theorem.

In order to complete the proof, we consider finally the region \mbox{$J^+(\gamma)\cap \cB$} and propagate the decay further from the hypersurface $\gamma$ up to the Cauchy horizon in a neighbourhood of $i^+$. For this, we introduce a new timelike vector field ${S}$ defined by
\bea
\lb{N1}
{S}=|u|^p\partial_u+v^p\partial_v,
\eea
for an arbitrary $p$ such that
\bea
\lb{waspist}
1<p\leq 1+2\delta,
\eea
where $\delta$ is as in Theorem \ref{anfang}.
We use pointwise estimates on $\Omega^2$ in $J^+(\gamma)$ as a crucial step, cf.~ Section \ref{finiteness}. 

Putting everything together, in view of the geometry and the weights of $S$, we finally obtain for all $v_*\geq 1$
\bea
\lb{fluxscetch}
\int\limits_{\bbS^2} \int\limits_1^{v_*} v^p(\partial_v \phi)^2 r^2\md v \md \sigma_{\mathbb S^2}\leq \mbox{Data},
\eea
for the weighted flux. 
Using the above, the uniform boundedness for $\phi$ stated in Theorem \ref{dashier} then follows from an argument that can be sketched as follows. 

Let us first see how we get an integrated bound on the spheres of symmetry.
By the fundamental theorem of calculus and the Cauchy-Schwarz inequality one obtains
\bea
\int\limits_{\bbS^2} \phi^2(u, v_*, \theta, \varphi)\md \sigma_{\mathbb S^2}
&\leq& C\int\limits_{\bbS^2}\left(\int\limits_1^{v_*} v^p(\partial_v \phi)^2\md v\right)\left(\int\limits_1^{v_*} v^{-p}\md v\right)r^2\md \sigma_{\mathbb S^2}+\mbox{data},\nonumber
\eea
where the first factor of the first term is controlled by \eqref{fluxscetch}.
Therefore, we further get
\bea
\lb{fundcauchy11}
\int\limits_{\bbS^2} \phi^2\md \sigma_{\mathbb S^2}&\stackrel{\eqref{fluxscetch}}{\leq}&\mbox{Data}\int\limits_{\bbS^2}\int\limits_1^{v_*} v^{-p}\md v\md \sigma_{\mathbb S^2}+\mbox{data}\nonumber\\
&\leq&\mbox{Data}+\mbox{data},
\eea 
where we have used \mbox{$\int\limits^{\infty}_{1} v^{-p}\md v< \infty$} which followed from the first inequality of \eqref{waspist}.

Obtaining a pointwise statement from the above will be achieved by commuting \eqref{wave} with symmetries as well as applying Sobolev embedding.
As outlined in Section \ref{en_cur} in Reissner-Nordstr\"om geometry we have \mbox{$\Box_g \leo_i \phi=0$}, where $\leo_i$ with $i=1,2,3$ are the 3 spacelike Killing vector fields resulting from the spherical symmetry. 
Thus one obtains the analogue of \eqref{fundcauchy11} but with $\leo_i \phi$ and \mbox{$\leo_i \leo_j \phi$} in place of $\phi$.
Using Sobolev embedding on $\bbS^2$ thus leads immediately to the desired bounds. See Section \ref{uni_bounded}.
This will close the proof of Theorem \ref{dashier}.\\

\section{Propagating through $\cR$ from $\cH^+$ to $r=r_{red}$}
\lb{first_section}

The estimates in this and the following section are motivated by work of Luk \cite{luk}.
He proves that any polynomial decay estimate that holds along the event horizon of Schwarzschild black holes can be propagated to any constant $r$ hypersurface in the black hole interior.
This followed a previous spherically symmetric argument of \cite{m_interior}. See also Dyatlov \cite{dyatlov}.

As outlined in Section \ref{horizon_estimates}, we will first propagate energy decay from $\cH^+$ up to the $r=r_{red}$ hypersurface. 

The rough idea can be understood with the help of Figure \ref{tilde_r_decay}.  
\begin{figure}[ht]
\centering
\includegraphics[width=0.6\textwidth]{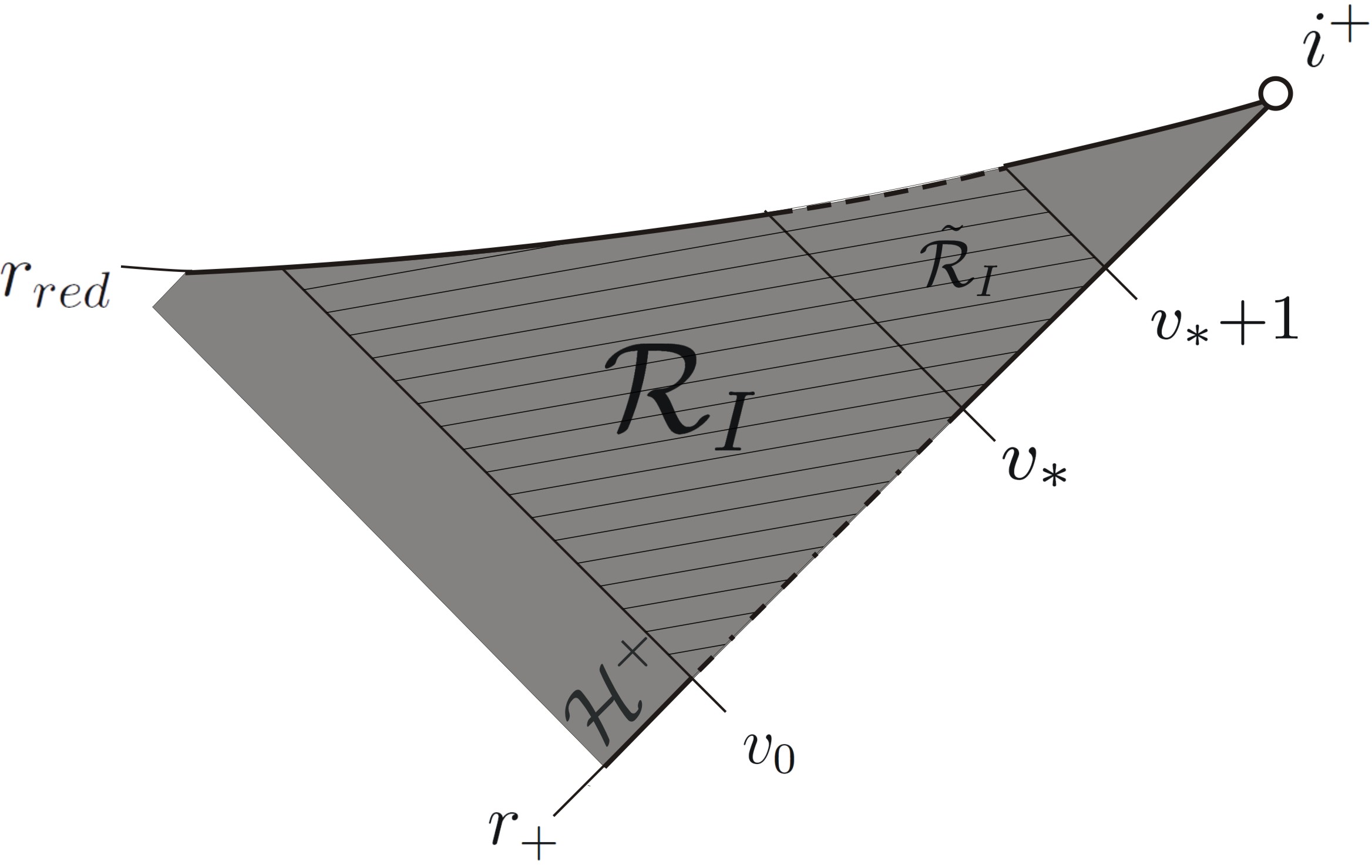}
\caption[Text der im Bilderverzeichnis auftaucht]{Regions $\cR_I$ and $\tilde{\cR_I}$.}
\label{tilde_r_decay}\end{figure}
By Theorem \ref{anfang} we are given energy decay on the event horizon $\cH^+$, see dash-dotted line.
By using the energy identity for the vector field $N$ in region \mbox{$\cR_I=\left\{r_{red}\leq r \leq r_+\right\}\cap \left\{1<v\leq v_*\right\}$}, the coarea formula etc., we obtain decay of the flux through constant $v$ hypersurfaces throughout the entire region. Using this result and considering the energy identity once again in region \mbox{$\tilde{\cR}_I=\left\{r_{red}\leq r \leq r_+\right\}\cap \left\{v_*\leq v\leq v_*+1\right\}$} we eventually obtain decay on the $r=r_{red}$ hypersurface, note the dashed line.

The redshift vector field was already introduced by Dafermos and Rodnianski in \cite{m_red} and elaborated on again in \cite{m_lec}. 
The existence of such a vector field in the neighbourhood of a Killing horizon $\cH^+$ depends only on the positivity of the surface gravity, in this case $\kappa_+$. Thus by \eqref{kappa} the following proposition follows by Theorem 7.1 of \cite{m_lec}.
\begin{prop}
\lb{mi}
(M. Dafermos and I. Rodnianski) For $r_{red}$ sufficiently close to $r_+$ there exists a $\varphi_{\tau}$-invariant\footnote{cf.~ Section \ref{geometry}} smooth future directed timelike vector field $N$ on \mbox{$\left\{r_{red}\leq r \leq r_+\right\}\cap \left\{v\geq 1\right\}$} and a positive constant $b_1$ such that 
\bea
\lb{energy_controll}
b_1J_{\mu}^N(\phi) n^{\mu}_{{v}} \leq K^N(\phi) ,
\eea
for all solutions $\phi$ of $\Box_g \phi=0$.
\end{prop}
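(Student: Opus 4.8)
The plan is to recognize Proposition \ref{mi} as an instance of the general redshift construction of Dafermos and Rodnianski, so that the real content is to verify the one hypothesis on which it rests and to understand the mechanism behind the positivity. That hypothesis is that $\cH^+$ is a \emph{non-degenerate} Killing horizon for $T$, i.e.\ that its surface gravity is strictly positive, which is immediate from subextremality: by \eqref{kappa}, $\kappa_+=\frac{r_+-r_-}{2r_+^2}>0$ since $M>|e|$ forces $r_+>r_-$. Concretely, by \eqref{u-neg}--\eqref{v-neg} the connection coefficients $\frac{\partial_u\Omega}{\Omega}$ and $\frac{\partial_v\Omega}{\Omega}$ are strictly positive throughout a neighbourhood of $r=r_+$ (their common value at $r_+$ being a positive multiple of $\kappa_+$); this positive sign -- to be contrasted with the negative sign recorded near $r_-$ in \eqref{lowerboundu}--\eqref{lowerboundv} -- is exactly the redshift character of the event horizon, and it is the structural fact the argument exploits.

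First I would construct the candidate $N$. Note that, unlike in the exterior, $T=\partial_v-\partial_u$ is \emph{spacelike} in the interior, so $N$ must be an independent timelike field; I take $N=N^u(r)\partial_u+N^v(r)\partial_v$ with $N^u,N^v>0$, which is future directed and timelike, and whose coefficients depend on $r$ alone, so that $N$ is automatically $\varphi_\tau$-invariant since $r$ is preserved by the flow of $T$. I would then compute the bulk $K^N=(\pi^N)^{\mu\nu}T_{\mu\nu}$ in $(u,v)$ coordinates using \eqref{def_l_n}, \eqref{u-neg} and \eqref{v-neg}; the outcome of this routine computation is a quadratic form in $(\partial_u\phi,\partial_v\phi,\nabb\phi)$ whose coefficients are built from $\partial_r N^u$, $\partial_r N^v$ and the connection terms $\frac{\partial_u\Omega}{\Omega}$, $\frac{\partial_v\Omega}{\Omega}$. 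The decisive feature is that the coefficient of $(\partial_u\phi)^2$ contains a contribution proportional to $N^u\,\frac{\partial_u\Omega}{\Omega}$, which is strictly positive near $r_+$ by positivity of $\kappa_+$: this is the redshift term.

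The heart of the matter, and the step I expect to be the main obstacle, is to arrange that the full quadratic form $K^N-b_1\,J^N_\mu n^\mu_{v}$ is pointwise non-negative in the fibre variables $(\partial_u\phi,\partial_v\phi,\nabb\phi)$. Since the flux through a $v=\mathrm{const}$ null hypersurface involves only $(\partial_u\phi)^2$ and $|\nabb\phi|^2$, the dangerous directions are precisely those controlled by the redshift term, and I would exploit the freedom in the weights: taking $N^u$ large relative to $N^v$ makes the redshift contribution $\propto\kappa_+ N^u(\partial_u\phi)^2$ dominate the indefinite cross terms and renders the form positive definite on $\cH^+$ for a suitable $b_1>0$. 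Because $K^N$ and $J^N_\mu n^\mu_v$ depend continuously on $r$ and the required inequality is a strict open condition at $r=r_+$, it survives on a one-sided neighbourhood $\{r_+-\epsilon\le r\le r_+\}$; this is exactly what fixes $r_{red}=r_+-\epsilon$ as in \eqref{rred}. Finally, the $\varphi_\tau$-invariance of $N$, and hence of $K^N$ and of the induced geometry on $v=\mathrm{const}$, guarantees that $b_1$ can be chosen independent of $v$, delivering the uniform estimate for all $v\ge1$ claimed in the statement.
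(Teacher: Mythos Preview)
Your proposal is correct and aligns with the paper's approach: the paper does not give an independent proof but simply observes that positivity of the surface gravity $\kappa_+$ (immediate from \eqref{kappa} and subextremality) places us in the setting of Theorem~7.1 of \cite{m_lec}, whence the proposition follows. You go further by sketching the mechanism behind that theorem---the explicit ansatz $N=N^u(r)\partial_u+N^v(r)\partial_v$, the redshift coefficient $\propto\kappa_+ N^u$ in front of $(\partial_u\phi)^2$, and the continuity argument fixing $r_{red}$---which is a welcome elaboration but not something the paper itself carries out.
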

The decay bound along $r=r_{red}$ can now be stated in the following proposition.
\begin{prop}
\label{rtildedecay}
Let $\phi$ be as in Theorem \ref{anfang}. 
Then, for all $\tilde{r} \in [r_{red}, r_+)$, with $r_{red}$ as in Proposition \ref{mi} and for all $v_*>1$,
\ben
\lb{decay_*}
\int\limits_ {\left\lbrace  v_* \leq v \leq v_*+1\right\rbrace } J_{\mu}^N(\phi) n^{\mu}_{r=\tilde{r}} \dV_{r=\tilde{r}} \leq C v_*^{-2-2\delta} , 
\een
with $C$ depending on $C_{0}$ of Theorem \ref{anfang} and $D_{0}(u_{\diamond}, 1)$ of Proposition \ref{initialdataprop}, where $u_{\diamond}$ is defined by $r_{red}=r(u_{\diamond},1)$.
\end{prop}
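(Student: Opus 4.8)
The plan is to propagate the horizon decay of Theorem \ref{anfang} into the redshift region $\cR$ in the two stages indicated by Figure \ref{tilde_r_decay}, the engine throughout being the energy identity $\nabla^\mu J^N_\mu(\phi)=K^N(\phi)$ for the redshift vector field $N$. I integrate this identity over the relevant subregion and keep every boundary flux with its future-directed normal. There are exactly two inputs on the past boundary: the decaying horizon flux $\int_{\cH_A^+\cap\{v\leq v'\leq v+1\}}J^N_\mu n^\mu\lesssim v^{-2-2\delta}$, coming from \eqref{thpur}, and the merely \emph{bounded} ingoing flux on $\{v=1\}$ supplied by Proposition \ref{initialdataprop}. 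The one structural fact that drives the whole argument is Proposition \ref{mi}: the redshift bulk is coercive, $b_1 J^N_\mu n^\mu_v\leq K^N$, so that in the divergence theorem the spacetime integral $\int K^N$ appears with the good sign and controls a positive multiple of the energy in the interior. Its $\varphi_\tau$-invariance makes $b_1$, and hence all constants produced, uniform in $v_*$.

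First I would carry out the step on $\cR_I=\{r_{red}\leq r\leq r_+\}\cap\{1<v\leq v_*\}$, whose goal is decay of the flux $F(v):=\int_{\underline{C}_v\cap\cR}J^N_\mu n^\mu_v$ through the ingoing (constant-$v$) null hypersurfaces $\underline{C}_v=\{v\}\cap\cR$. Rewriting the bulk $\int_{\cR_I}K^N$ by the coarea formula as an iterated integral over the constant-$v$ foliation turns the energy identity, via the coercivity of $K^N$, into a Gr\"onwall-type inequality for $F$ driven by the horizon flux. The effect of the coercive term is that the contribution of the $\{v=1\}$ data is suppressed while the horizon forcing is localized near $v$, so that $F(v)$ inherits the horizon rate $v^{-2-2\delta}$ throughout $\cR_I$; the uniformity of $b_1$ under $\varphi_\tau$ keeps the implied constant independent of $v_*$.

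Second, I transfer this to the spacelike hypersurfaces $r=\tilde r$ using the same identity on the thin strip $\tilde{\cR}_I=\{r_{red}\leq r\leq r_+\}\cap\{v_*\leq v\leq v_*+1\}$. Its past boundary consists of the ingoing segment $\underline{C}_{v_*}\cap\cR$, whose flux is $\lesssim v_*^{-2-2\delta}$ by the first step, together with the horizon segment over $[v_*,v_*+1]$, whose flux is $\lesssim v_*^{-2-2\delta}$ by Theorem \ref{anfang}. Since the strip has unit $v$-length and $K^N$ again carries the favourable sign, the outgoing flux $\int_{\{r=\tilde r\}\cap\{v_*\leq v\leq v_*+1\}}J^N_\mu n^\mu_{r=\tilde r}\,\dV_{r=\tilde r}$ is bounded by the sum of the two past fluxes. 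This is precisely the claimed $Cv_*^{-2-2\delta}$, with $C$ depending on $C_0$ of Theorem \ref{anfang} and on $D_0(u_\diamond,1)$ of Proposition \ref{initialdataprop}.

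The main obstacle is the first step, and specifically the extraction of the sharp rate rather than mere boundedness. A naive Killing ($T$-energy) estimate would only bound $F(v_*)$ by the total accumulated horizon flux $\sum_{k\leq v_*}k^{-2-2\delta}$, which is finite but does not tend to zero. Genuine decay requires exploiting the coercivity of the redshift bulk to suppress the influence of the data at $v=1$ and to localize the horizon forcing to $v\approx v_*$; making this quantitative, with all constants uniform in $v_*$ through the $\varphi_\tau$-invariance of $N$, is where the real work lies. A secondary technical point is the degeneration of the normals and induced measures on the null parts of the boundary, namely the horizon at $u=-\infty$ and the segments $\{v=1\}$ and $\{v=v_*\}$, which is exactly why the $\Omega^{\pm2}$ weights already appear in Proposition \ref{initialdataprop} and must be tracked carefully.
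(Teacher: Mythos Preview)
Your outline is correct and is precisely the paper's argument: energy identity for $N$ in $\cR_I$ with the redshift coercivity of Proposition~\ref{mi} and coarea to get decay of the ingoing flux $F(v)$, then the same identity on the unit strip $\tilde{\cR}_I$ to read off the flux through $r=\tilde r$. The one point worth sharpening is the step you rightly flag as the real work: the integral inequality one actually obtains is $F(v_*)+\tilde b_1\int_{v_0}^{v_*}F\leq F(v_0)+C_0(v_*-v_0+1)v_0^{-2-2\delta}$ for all $1\leq v_0\leq v_*$, and the paper extracts $F(v_*)\leq \tilde C v_*^{-2-2\delta}$ from this not by a Gr\"onwall argument but by a pigeonhole-plus-bootstrap lemma (Lemma~\ref{decay_lemma2}): assume the bound with constant $2\tilde C$, apply the inequality over a window $[v_*-L,v_*]$ of fixed length $L\gg 1/\tilde b_1$, use pigeonhole on the integral term to find a good time $t_{in}$ in the window, and reapply the inequality from $t_{in}$ to recover the constant $\tilde C$.
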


{\em Remark 1.} The decay in Proposition \ref{rtildedecay} matches the decay on $\cH^+$ of Theorem \ref{anfang}.

{\em Remark 2.} 
$n^{\mu}_{r=r_{red}}$ denotes the normal to the $r=r_{red}$ hypersurface oriented according to Lorentzian geometry convention. $\dV$ denotes the volume element over the entire spacetime region and $\dV_{r=r_{red}}$ denotes the volume element on the $r=r_{red}$ hypersurface. Similarly for all other subscripts.\footnote{Refer to Appendix \ref{Jcurrents} for further discussion of the volume elements.}

\begin{proof}
 Applying the divergence theorem, see e.g.~ \cite{m_lec} or \cite{taylor},
in region \mbox{$\cR_I=\left\{r_{red}\leq r \leq r_+\right\}\cap \left\{v_0\leq v\leq v_*\right\}$}, with $v_0\geq 1$, we obtain
\ben
& &\int\limits_ {\cR_I}  K^N(\phi) \dV
+ \int\limits_ {\left\lbrace  v_0 \leq v \leq v_*\right\rbrace }  J_{\mu}^N(\phi) n^{\mu}_{r=r_{red}} \dV_{r=r_{red}}
+ \int\limits_ {\left\lbrace  r_{red}\leq r \leq r_+\right\rbrace }  J_{\mu}^N(\phi) n^{\mu}_{v=v_*} \dV_{v=v_*} \\
&=&
 \int\limits_ {\left\lbrace  r_{red}\leq r \leq r_+\right\rbrace }  J_{\mu}^N(\phi) n^{\mu}_{v=v_0} \dV_{v=v_0}
+\int\limits_ {\left\lbrace  v_0 \leq v \leq v_*\right\rbrace }  J_{\mu}^N(\phi) n^{\mu}_{\cH^+} \dV_{\cH^+}.
\een
We immediately see that the second term on the left hand side is positive since $r=r_{red}$ is a spacelike hypersurface and $N$ is a timelike vector field. Therefore, we write
\bea
\lb{R_I}
& & \int\limits_ {\cR_I}  K^N(\phi) \dV
+ \int\limits_ {\left\lbrace  r_{red}\leq r \leq r_+\right\rbrace }  J_{\mu}^N(\phi) n^{\mu}_{v=v_*} \dV_{v=v_*} \nonumber \\
&\leq&
 \int\limits_ {\left\lbrace  r_{red}\leq r \leq r_+\right\rbrace }  J_{\mu}^N(\phi) n^{\mu}_{v=v_0} \dV_{v=v_0}
+\int\limits_ {\left\lbrace  v_0 \leq v \leq v_*\right\rbrace }  J_{\mu}^N(\phi) n^{\mu}_{\cH^+} \dV_{\cH^+}. 
\eea
By Theorem \ref{anfang} we have
\ben
\int\limits_ {\left\lbrace  v_0 \leq v \leq v_*\right\rbrace }   J_{\mu}^N(\phi) n^{\mu}_{\cH^+} \dV_{\cH^+} \leq C_{0} \mbox{max}\left\lbrace v_*-v_0,1\right\rbrace v_0^{-2-2\delta}.
\een
Using that the energy current associated to the timelike vector field $N$ is controlled by the deformation $K^N$ as shown in \eqref{energy_controll}
and substituting 
\bea
\lb{def_E}
E(\phi;\tilde{v})=\int\limits_ {\left\lbrace  r_{red}\leq r \leq r_+\right\rbrace }  J_{\mu}^N(\phi) n^{\mu}_{v=\tilde{v}} \dV_{v=\tilde{v}}
\eea
into \eqref{R_I} as well as using the coarea formula
\bea
\lb{coaerea2}
\int\limits_ {{\cR_I} }  J_{\mu}^N(\phi) n^{\mu}_{v=\tilde{v}} \dV
\sim \int\limits_{v_0}^{v_*} \int\limits_ {\left\lbrace  r_{red}\leq r \leq r_+\right\rbrace }  J_{\mu}^N(\phi) n^{\mu}_{v=\bar{v}} \dV_{v=\bar{v}} \md {\bar{v}},
\eea 
for the bulk term,\footnote{where $f\sim g$ means that there exist constants $0<b<B$ with $bf<g<Bf$} we obtain for all $v_0\geq1$ and $v_*>v_0$, the relation
\bea
\lb{decay_statement}
E(\phi;v_*) + \tilde{b}_1\int\limits_{v_0}^{v_*} E(\phi;{\bar{v}})\md {\bar{v}} \leq  E(\phi;v_0)+C_{0} \mbox{max}\left\lbrace v_*-v_0,1\right\rbrace v_0^{-2-2\delta}.
\eea
Note by Proposition \ref{initialdataprop}, applied to $u_{\diamond}$ defined through the relation $r_{red}=r(u_{\diamond},1)$, we have
\bea
\lb{estineu}
E(\phi;1) \leq C D_0(u_{\diamond},1),
\eea
since the vector field $N$ is regular at $\cH^+$ and thus $E(\phi;1)$ is comparable to the left hand side of \eqref{proppur}.
In order to obtain estimates from \eqref{decay_statement} we appeal to the following lemma.
\begin{lem}
Let $f:[1, \infty)\rightarrow \bbR^+$, 
\lb{decay_lemma2}
\bea
\lb{starter3}
f(t) + b\int\limits_{\tilde{t}}^{t} f(\bar{t})\md {\bar{t}} \leq  f(\tilde{t}) +C_0(t-\tilde{t}+1)\tilde{t}^{-\tilde{p}},
\eea
for all $\tilde{t}\geq 1$, where $C_0$, $\tilde{p}$ are positive constants.
Then for any $t\geq 1$ we have
\bea
\lb{ftwant}
f(t) \leq \tilde{C}t^{-\tilde{p}},
\eea
where $\tilde{C}$ depends only on $f(1)$, $b$ and $C_0$.
\end{lem}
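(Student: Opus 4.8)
The plan is a discrete Gr\"onwall/contraction argument carried out on unit-length intervals rather than on dyadic scales. The first and most important observation is that the error term $C_0(t-\tilde t+1)\tilde t^{-\tilde p}$ grows linearly in the interval length $t-\tilde t$; consequently, iterating \eqref{starter3} over intervals of length comparable to $\tilde t$ (e.g.\ dyadic intervals) would lose a full power of $\tilde t$ and could not reproduce the sharp rate $t^{-\tilde p}$. I would therefore work exclusively on intervals $[k,k+1]$ with $k\in\bbN$. As a preliminary, since $f\geq 0$, I discard the positive bulk integral in \eqref{starter3} to obtain the almost-monotonicity (``no-growth'') bound $f(t)\leq f(\tilde t)+C_0(t-\tilde t+1)\tilde t^{-\tilde p}$ for all $1\leq \tilde t\leq t$; on a unit interval this costs only a harmless bounded factor.

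The heart of the argument is to convert \eqref{starter3} into a genuine contraction at unit steps, and here the dissipative coefficient $b$ is essential. For $s\in[k,k+1]$ the no-growth bound run forward from $s$ to $k+1$ gives $f(k+1)\leq f(s)+2C_0 k^{-\tilde p}$, hence $f(s)\geq f(k+1)-2C_0 k^{-\tilde p}$; integrating in $s$ over $[k,k+1]$ yields the lower bound $\int_k^{k+1} f\geq f(k+1)-2C_0 k^{-\tilde p}$. Feeding this into \eqref{starter3} with $\tilde t=k$, $t=k+1$ and rearranging produces
\[
(1+b)\,f(k+1)\leq f(k)+2C_0(1+b)\,k^{-\tilde p},
\]
that is, $f(k+1)\leq \lambda f(k)+2C_0 k^{-\tilde p}$ with $\lambda=\tfrac{1}{1+b}\in(0,1)$. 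Thus each unit step contracts the ``memory'' $f(k)$ by a fixed factor $\lambda<1$ while injecting an error of size $k^{-\tilde p}$.

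It then remains to solve this linear recursion and pass back to arbitrary $t$. Iterating gives $f(k)\leq \lambda^{k-1}f(1)+2C_0\sum_{j=1}^{k-1}\lambda^{k-1-j} j^{-\tilde p}$. The homogeneous term decays exponentially, hence is $\leq C\,k^{-\tilde p}$; for the convolution I would split the sum at $j=k/2$: the terms with $j\geq k/2$ are handled by $j^{-\tilde p}\leq 2^{\tilde p}k^{-\tilde p}$ together with $\sum\lambda^{k-1-j}\leq (1-\lambda)^{-1}$, while for $j<k/2$ the factor $\lambda^{k-1-j}\leq\lambda^{k/2-1}$ decays exponentially and dominates any polynomial growth of $\sum_{j<k/2} j^{-\tilde p}$. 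This gives $f(k)\leq \tilde C k^{-\tilde p}$ at integers, with $\tilde C$ depending only on $f(1)$, $b$, $C_0$ (and the fixed exponent $\tilde p$). Finally, for $t\in[k,k+1]$ the no-growth bound $f(t)\leq f(k)+2C_0 k^{-\tilde p}\leq \tilde C' k^{-\tilde p}$ combined with $k^{-\tilde p}\leq 2^{\tilde p} t^{-\tilde p}$ upgrades the estimate to all real $t\geq 1$, yielding \eqref{ftwant}.

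The main obstacle is precisely the linear-in-length growth of the error term, which rules out the usual dyadic pigeonhole and forces the unit-interval scheme; the one genuinely delicate step is establishing the lower bound $\int_k^{k+1} f\geq f(k+1)-2C_0 k^{-\tilde p}$ in the correct direction, since it is this inequality---powered by the coefficient $b$---that turns integrated decay into pointwise decay.
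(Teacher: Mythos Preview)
Your argument is correct and genuinely different from the paper's. Both proofs share the same crucial insight---namely that one must iterate over intervals of \emph{fixed} length so that the error $C_0(t-\tilde t+1)\tilde t^{-\tilde p}$ stays comparable to $\tilde t^{-\tilde p}$---but the mechanisms diverge from there. The paper runs a bootstrap/continuity argument on intervals of a fixed length $L$ chosen so that $4/(bL)<1$: assuming $f\leq 2\tilde C\,\cdot^{-\tilde p}$ up to time $t$, it applies \eqref{starter3} on $[t-L,t]$, uses the pigeonhole principle to locate an interior time $t_{in}$ with small $f(t_{in})$, and then reapplies \eqref{starter3} from $t_{in}$ to $t$ to close the bootstrap with the improved constant $\tilde C$. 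Your route instead extracts a genuine one-step contraction $f(k+1)\leq \lambda f(k)+2C_0 k^{-\tilde p}$ with $\lambda=(1+b)^{-1}$ by combining \eqref{starter3} with the clever lower bound $\int_k^{k+1}f\geq f(k+1)-2C_0 k^{-\tilde p}$ (obtained from the no-growth estimate run \emph{forward} to the right endpoint), and then solves the linear recursion by a standard convolution-splitting. Your approach is arguably more elementary and yields explicit constants without a continuity argument; the paper's pigeonhole/bootstrap method is closer to the style one often sees in energy-decay arguments and generalises readily when the contraction is not available in one step but only over sufficiently long intervals. Either way, the essential content---that the dissipative term $b\int f$ converts integrated control into pointwise decay at the sharp rate---is the same.
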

\begin{proof}
For $t>t_0$, we will show \eqref{ftwant} by a continuity argument. It suffices to show that 
\bea
\lb{ft}
f(\tilde{t})\leq 2\tilde{C} \tilde{t}^{-\tilde{p}}, \quad \mbox{for $\tilde{t}\leq t$},
\eea
leads to 
\bea
\lb{wantedft}
\Rightarrow f(\tilde{t}) \leq \tilde{C}\tilde{t}^{-\tilde{p}}, \quad \mbox{for $\tilde{t}\leq t$},
\eea
for some large enough constant $\tilde{C}$.

We note first that given any $t_0$, from \eqref{starter3} we obtain, \mbox{$\forall  \, 1\leq t\leq t_0$}
\bea
\lb{startert0}
f(t) &\leq& f(1) +C_0\, t ,\nonumber\\
 &\leq& \left[ f(1){t_0}^{\tilde{p}} +C_0\, {t_0}^{+\tilde{p}+1}\right]{t}^{-\tilde{p}}.
\eea
Given $t\geq t_0$, choose $\tilde{t}=t-L$ for an $L$ to be determined later.
Moreover, $t_0$ will have to be chosen large enough so that $\forall \, t \geq t_0$,
\bea
\lb{tverhaeltnis}
(t-L)^{-\tilde{p}}=\tilde{t}^{-\tilde{p}}<2t^{-\tilde{p}}.
\eea

Given a $t$ satisfying \eqref{ft} applying \eqref{starter3} yields
\bea
\lb{starter5}
f(t) + b\int\limits_{\tilde{t}}^{t} f(\bar{t})\md {\bar{t}} &\leq&  \left[2\tilde{C}  +C_0(L+1)\right]\tilde{t}^{-\tilde{p}}\nonumber\\
&\stackrel{\eqref{tverhaeltnis}}{\leq}&\left[4\tilde{C}  +2C_0(L+1)\right]t^{-\tilde{p}}.
\eea

Further, by the pigeonhole principle, there exists $t_{in} \in \left[\tilde{t},t\right]$ such that
\bea
\lb{pigeon1}
f(t_{in})\leq \frac{1}{L} \int\limits_{\tilde{t}}^{t} f(\bar{t})\md {\bar{t}}.  
\eea
Since $f(t)$ is a positive function \eqref{starter5} also leads to
\bea
\lb{starter4}
b\int\limits_{\tilde{t}}^{t} f(\bar{t})\md {\bar{t}} \leq \left[4\tilde{C}  +2C_0(L+1)\right]t^{-\tilde{p}}.
\eea
Thus, \eqref{pigeon1} and \eqref{starter4} yield
\bea
\lb{vier}
f(t_{in})\leq \frac{1}{bL}\left[4\tilde{C}  +2C_0(L+1)\right]t^{-\tilde{p}}.
\eea

Now let $\tilde{t}=t_{in}$ and use \eqref{vier} in \eqref{starter3}, then
\bea
\lb{schluss}
f(t)\leq f(t) + b\int\limits_{t_{in}}^{t} f(\bar{t})\md {\bar{t}} &\leq&  \left(\frac{1}{bL}\left[2\tilde{C}  +C_0(L+1)\right]+C_0(L+1)\right)t_{in}^{-\tilde{p}},\nonumber\\
&\stackrel{\eqref{tverhaeltnis}}{\leq}& \left[\frac{4\tilde{C}}{bL}+\frac{2C_0(L+1)}{bL}+2C_0(L+1)\right]t^{-\tilde{p}}.
\eea

If $1-\frac{4}{bL}>0$
and
\bea
\tilde{C}\geq \left(1-\frac{4}{bL}\right)^{-1}\left[\frac{2C_0(L+1)}{bL}+2C_0(L+1)\right]
\eea
then \eqref{wantedft} follows.

Thus picking first $L$ such that $1-\frac{4}{bL}>0$, and then $t_0$ such that $t_0\geq L+1$ and satisfying \eqref{tverhaeltnis}, and finally choosing $\tilde{C}$ as \mbox{$\tilde{C}=\mbox{max}\left\{\left[ f(1) +C_0{t_0}^{-1-2\delta}\right],\left(1-\frac{4}{bL}\right)^{-1}\left[\frac{2C_0(L+1)}{bL}+2C_0(L+1)\right]\right\}$} \eqref{wantedft} and thus
\eqref{ftwant} follows by continuity.
\end{proof}

By Lemma \ref{decay_lemma2} we obtain from \eqref{decay_statement} together with \eqref{estineu}

\bea
\lb{decay_1}
E(\phi; v_*)=\int\limits_ {\left\lbrace  r_{red}\leq r \leq r_+\right\rbrace }  J_{\mu}^N(\phi) n^{\mu}_{v=v_*} \dV_{v=v_*}
\leq \tilde{\tilde{C}} {v_*}^{-2-2\delta},
\eea
with $\tilde{\tilde{C}}$ depending on $\tilde{b}_1$ and $D_{0}(u_{\diamond}, 1)$.

Finally, in order to close the proof of Proposition \ref{rtildedecay} we perform again the divergence theorem but for region \mbox{$\tilde{\cR}_I=\left\{r_{red}\leq r \leq r_+\right\}\cap \left\{v_*\leq v\leq v_*+1\right\}$}:
\bea
\lb{nochmaldiv}
& &\int\limits_{\tilde{\cR_I}}  K^N(\phi) \dV
+ \int\limits_ {\left\lbrace  v_* \leq v \leq v_*+1\right\rbrace }  J_{\mu}^N(\phi) n^{\mu}_{r=r_{red}} \dV_{r=r_{red}}
+ \int\limits_ {\left\lbrace  r_{red}\leq r \leq r_+\right\rbrace }  J_{\mu}^N(\phi) n^{\mu}_{v=v_*
+1} \dV_{v=v_*+1}\nonumber\\
&=&
 \int\limits_ {\left\lbrace  r_{red}\leq r \leq r_+\right\rbrace }  J_{\mu}^N(\phi) n^{\mu}_{v=v_*} \dV_{v=v_*}
+\int\limits_ {\left\lbrace  v_* \leq v \leq v_*+1\right\rbrace }  J_{\mu}^N(\phi) n^{\mu}_{\cH^+} \dV_{\cH^+}.
\eea
In view of the signs we obtain
\ben
\Rightarrow &&  \int\limits_ {\left\lbrace  v_* \leq v \leq v_*+1\right\rbrace }  J_{\mu}^N(\phi) n^{\mu}_{r=r_{red}} \dV_{r=r_{red}}
\leq
 \int\limits_ {\left\lbrace  r_{red}\leq r \leq r_+\right\rbrace }  J_{\mu}^N(\phi) n^{\mu}_{v=v_*} \dV_{v=v_*}
+\int\limits_ {\left\lbrace  v_* \leq v \leq v_*+1\right\rbrace }  J_{\mu}^N(\phi) n^{\mu}_{\cH^+} \dV_{\cH^+}.
\een
Due to \eqref{decay_1} and Theorem \ref{anfang} we are left with the conclusion of Proposition \ref{rtildedecay}.
\end{proof}
Note that the above also implies the following statement.
\begin{cor}
Let $\phi$ be as in Theorem \ref{anfang} and for $r_{red}$ as in Proposition \ref{mi}. Then, for all $v_*\geq 1$, $v_*+1\leq v_{red}(\tilde{u})$ and for all $\tilde{u}$ such that $r(\tilde{u},v_*+1) \in [r_{red}, r_+)$, we have
\lb{cor5.2}
\bea
\int\limits_ {\left\lbrace  v_* \leq v \leq v_*+1\right\rbrace } J^N_\mu(\phi)n^\mu_{u=\tilde{u}}dVol_{u=\tilde{u}} \le C v_*^{-2-2\delta},
\eea
with $C$ depending on $C_{0}$ of Theorem \ref{anfang} and $D_{0}(u_{\diamond}, 1)$ of Proposition \ref{initialdataprop}, where $u_{\diamond}$ is defined by $r_{red}=r(u_{\diamond},1)$ and $v_{red}(\tilde{u})$ as in \eqref{notation_neu}.
\end{cor}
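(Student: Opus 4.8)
The plan is to run the divergence theorem for $J^N_\mu(\phi)$ once more, exactly as in the proof of Proposition \ref{rtildedecay}, but now on the coordinate rectangle
\[
\Omega=\left\{-\infty\leq u\leq \tilde{u}\right\}\cap\left\{v_*\leq v\leq v_*+1\right\},
\]
whose future-right null boundary is precisely the segment $\{u=\tilde{u}\}\cap\{v_*\leq v\leq v_*+1\}$ appearing in the corollary. The first point to check is that $\Omega\subseteq\cR$, so that Proposition \ref{mi} is available and $K^N(\phi)\geq 0$ throughout: since $r$ decreases in both $u$ and $v$, its minimum over $\Omega$ is attained at the future corner $(\tilde{u},v_*+1)$, where $r(\tilde{u},v_*+1)\geq r_{red}$ by hypothesis, while its maximum $r_+$ is attained on $\cH^+$ at $u=-\infty$. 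This is equivalent to the stated condition $v_*+1\leq v_{red}(\tilde{u})$.

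Next I would identify the causal boundaries of $\Omega$: the past null boundaries are $\cH^+$ (at $u=-\infty$) and $\{v=v_*\}$, while the future null boundaries are $\{u=\tilde{u}\}$ and $\{v=v_*+1\}$. Applying the divergence theorem with the sign conventions of Proposition \ref{rtildedecay} and then discarding the non-negative bulk term $\int_\Omega K^N(\phi)\,\dV$ together with the non-negative future flux through $\{v=v_*+1\}$ (the energy flux of the future-directed timelike $N$ through a null hypersurface being non-negative), I obtain
\ben
\int\limits_{\{v_*\leq v\leq v_*+1\}} J^N_\mu(\phi)\, n^\mu_{u=\tilde{u}}\,\dV_{u=\tilde{u}}
&\leq& \int\limits_{\{v_*\leq v\leq v_*+1\}} J^N_\mu(\phi)\, n^\mu_{\cH^+}\,\dV_{\cH^+}\\
&&{}+ \int\limits_{\{u\leq \tilde{u}\}} J^N_\mu(\phi)\, n^\mu_{v=v_*}\,\dV_{v=v_*}.
\een

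It then remains to bound the two terms on the right, and for this the genuinely hard estimate is already in hand. The horizon term is controlled directly by Theorem \ref{anfang}, giving $\leq C_0\,v_*^{-2-2\delta}$ over the unit $v$-interval. For the $v=v_*$ term, the monotonicity of $r$ shows that $\{v=v_*\}\cap\{u\leq\tilde{u}\}\subseteq\{v=v_*\}\cap\cR$: indeed $r(\tilde{u},v_*)\geq r(\tilde{u},v_*+1)\geq r_{red}$ forces $\tilde{u}\leq u_{red}(v_*)$. Since the integrand $J^N_\mu(\phi)n^\mu_{v=v_*}$ is non-negative, this restricted flux is at most $E(\phi;v_*)$, which is bounded by $\tilde{\tilde{C}}\,v_*^{-2-2\delta}$ by \eqref{decay_1}. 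Adding the two contributions yields the claimed bound, with $C$ depending on $C_0$ and on $D_0(u_{\diamond},1)$ through $\tilde{\tilde{C}}$. The only delicate points are the orientation bookkeeping in the divergence theorem and the geometric containment $\Omega\subseteq\cR$ (hence the positivity of $K^N$); there is no new analytic difficulty here.
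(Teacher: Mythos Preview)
Your proof is correct and follows precisely the approach the paper indicates: apply the divergence theorem once more (now on the rectangle $\{-\infty\leq u\leq\tilde u\}\cap\{v_*\leq v\leq v_*+1\}$), drop the nonnegative bulk $K^N$ and the future $v=v_*+1$ flux, and bound the two past-boundary terms using Theorem~\ref{anfang} on $\cH^+$ and \eqref{decay_1} for $E(\phi;v_*)$. The paper's own proof is a one-line pointer to exactly this; your write-up simply makes the containment $\Omega\subseteq\cR$ and the sign checks explicit.
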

\begin{proof}
The conclusion of the statement follows by applying again the divergence theorem 
and using the results of the proof of Proposition \ref{rtildedecay}. 
\end{proof}

\section{Propagating through $\cN$ from $r=r_{red}$ to $r=r_{blue}$}
\lb{zweite_region}
Now that we have obtained a decay bound along the $r=r_{red}$ hypersurface in the previous section, we propagate the estimate further inside the black hole through the noshift region $\cN$ up to the $r=r_{blue}$ hypersurface. 
In order to do that we will use the future directed timelike vector field 
\bea
\lb{partial_r}
-\partial_r=\frac{1}{{\Omega^2}}(\partial_u+\partial_v).
\eea

Using \eqref{partial_r} in \eqref{Kplug} of Appendix \ref{Kcurrents} we obtain
\bea
\lb{partialrbulk}
K^{-\partial_r}&=&\frac{4}{\Omega^3}\left[\frac{\partial_u \Omega}{\Omega}(\partial_v\phi)^2 +\frac{\partial_v \Omega}{\Omega}(\partial_u\phi)^2 \right]\nonumber\\
&&- \frac{4}{r\sqrt{\Omega^2}}(\partial_u \phi \partial_v \phi)\nonumber\\
&&-\left(\frac{\partial_u \Omega}{\Omega^2}+\frac{\partial_v \Omega}{\Omega^2}\right)\left(\frac{1}{\Omega}-1\right)|\nabb \phi|^2,
\eea
for the bulk current. It has the property that it can be estimated by
\bea
\lb{controll_bulk_partial_r}
 |K^{-\partial_r}(\phi)| \leq B_1 J_{\mu}^{-\partial_r}(\phi) n^{\mu}_{r={\bar{r}}},
\eea
where $B_1$ is independent of $v_*$.
Validity of the estimate can in fact be seen without computation from the fact that timelike currents, such as $J_{\mu}^{-\partial_r}(\phi) n^{\mu}_{r={\bar{r}}}$ contain all derivatives. 
The uniformity of $B_1$ is given by the fact that $K^{-\partial_r}$ and $J^{-\partial_r}$ are invariant under translations along $\partial_t$, cf.~ Section \ref{rtcoords} for definition of the $t$ coordinate. Therefore, we can just look at the maximal deformation on a compact \mbox{$\left\{t=const \right\} \cap \left\{r_{blue} \leq r \leq r_{red}\right\}$} hypersurface and get an estimate for the deformation everywhere.

\begin{prop}
\label{r_{red}}
Let $\phi$ be as in Theorem \ref{anfang}, $r_{blue}$ as in \eqref{rblue} and $r_{red}$ as in Proposition \ref{mi}. Then, for all $v_*>1$ and $\tilde{r} \in [r_{blue},r_{red})$, we have 
\ben
\int\limits_ {\left\lbrace  v_* \leq v \leq {v_*}+1\right\rbrace } J_{\mu}^{-\partial_r}(\phi) n^{\mu}_{r=\tilde{r}} \dV_{r=\tilde{r}}
\leq
C{v_*}^{-2-2\delta}, 
\een
with $C$ depending on the initial data or more precisely depending on $C_{0}$ of Theorem \ref{anfang} and $D_{0}(u_{\diamond}, 1)$ of Proposition \ref{initialdataprop}, where $u_{\diamond}$ is defined by $r_{red}=r(u_{\diamond},1)$.
\end{prop}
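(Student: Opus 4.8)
The plan is to propagate the decay already available on $\{r=r_{red}\}$ through $\cN$ to an arbitrary $\tilde r\in[r_{blue},r_{red})$ by a weighted energy estimate for the future directed timelike field $-\partial_r$, using the uniform bulk bound \eqref{controll_bulk_partial_r} together with the boundedness of the $r^\star$-width (equivalently the $v$-length) of $\cN$. Recall that on $\cN$ the surfaces $r=\mathrm{const}$ are the lines $u+v=2r^\star$, and since $r^\star$ is a smooth strictly decreasing function of $r$ on the compact range $[r_{blue},r_{red}]$, the quantity $\Delta:=2\big(r^\star(\tilde r)-r^\star(r_{red})\big)$ satisfies $0<\Delta\le L_0:=2\big(r^\star(r_{blue})-r^\star(r_{red})\big)$, a constant independent of $\tilde r$ and $v_*$.

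The first step is to choose a spacetime region whose \emph{only} past boundary is a segment of $\{r=r_{red}\}$. The naive slab $\{\tilde r\le r\le r_{red}\}\cap\{v_*\le v\le v_*+1\}$ fails, because it has an incoming null boundary $\{v=v_*\}$ whose flux is not directly controlled; bounding such an incoming flux is the essential difficulty. To sidestep it entirely I would instead work inside the future domain of dependence of a \emph{widened} segment $\mathcal A\subset\{r=r_{red}\}$ spanning $v\in[v_*-\Delta,\,v_*+1]$. Setting $\cW:=\cD^+(\mathcal A)\cap\{r\ge\tilde r\}$, the two null portions of $\partial\cW$ (a constant-$u$ and a constant-$v$ segment issuing from the endpoints of $\mathcal A$) are then \emph{future} boundaries, the bottom $\mathcal A$ is the only past boundary, and $\{r=\tilde r\}\cap\cW$ is a future spacelike boundary which, by the choice of $\Delta$, contains the target segment $\{r=\tilde r,\ v_*\le v\le v_*+1\}$.

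Since $-\partial_r$ is future directed timelike, every boundary flux $J^{-\partial_r}_\mu n^\mu$ is nonnegative. Applying the divergence theorem on $\cW\cap\{r\ge\rho\}$ for $\rho\in[\tilde r,r_{red}]$, discarding the two favorable future null fluxes, and estimating the indefinite bulk by \eqref{controll_bulk_partial_r} and the coarea formula \eqref{coaerea2}, I would obtain, with $f(\rho):=\int_{\{r=\rho\}\cap\cW}J^{-\partial_r}_\mu(\phi)\,n^\mu_{r=\rho}\,\dV_{r=\rho}$,
\[
f(\rho)\ \le\ f(r_{red})\ +\ B\!\int_{\rho}^{r_{red}}f(\rho')\,\md\rho',
\]
for a constant $B$ depending on $B_1$ of \eqref{controll_bulk_partial_r} and the (bounded) coarea Jacobian on $\cN$. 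The base term $f(r_{red})$ is the flux of $J^{-\partial_r}$ through $\mathcal A$; since $-\partial_r$ and $N$ are comparable timelike fields on the compact slab $\cN$, Proposition \ref{rtildedecay} applied over the $O(L_0)$ unit $v$-subintervals covering $[v_*-\Delta,v_*+1]$ yields $f(r_{red})\le C(v_*-\Delta)^{-2-2\delta}\le C'v_*^{-2-2\delta}$ once $v_*\ge 2L_0$.

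Gr\"onwall's inequality then gives $f(\tilde r)\le f(r_{red})\,e^{B(r_{red}-\tilde r)}\le f(r_{red})\,e^{B(r_{red}-r_{blue})}$, where the exponential is a \emph{finite} constant precisely because $\cN$ has bounded $r$-width; restricting the nonnegative integrand from $\{r=\tilde r\}\cap\cW$ to the target segment then delivers the asserted bound $\le C v_*^{-2-2\delta}$. The complementary regime $1<v_*<2L_0$ is compact and handled directly by the local energy boundedness of Proposition \ref{initialdataprop} (on which $v_*^{-2-2\delta}$ is bounded below), which also supplies the control near the $v\ge1$ endpoint when $v_*-\Delta$ approaches $1$. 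The only genuine obstacle is thus the geometric one of arranging the region so that no incoming null flux occurs; once $\cW=\cD^+(\mathcal A)\cap\{r\ge\tilde r\}$ is used, the remainder is a bounded-width Gr\"onwall estimate with uniform constants, which is exactly where the boundedness of the $v$-length of $\cN$ enters.
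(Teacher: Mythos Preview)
Your proposal is correct and follows essentially the same approach as the paper: both construct a domain-of-dependence region off a widened segment of $\{r=r_{red}\}$ so that the only past boundary lies on $\{r=r_{red}\}$, then use the bulk bound \eqref{controll_bulk_partial_r} with the coarea formula and Gr\"onwall in $r$, and finally invoke the bounded $v$-width of $\cN$ to compare $v_1$ (or your $v_*-\Delta$) with $v_*$. The only cosmetic difference is that the paper widens the base segment all the way to $r_{blue}$ (using $u_{blue}(v_*)$ to fix $v_1$ with $v_*+1-v_1=k$ independent of $\tilde r$), whereas you widen by the $\tilde r$-dependent amount $\Delta=2(r^\star(\tilde r)-r^\star(r_{red}))\le L_0$; both choices lead to the same Gr\"onwall estimate with uniform constants.
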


\begin{proof}
Given $v_*$, we define regions $\cR_{II}$ and $\tilde{\cR}_{II}$ as in Figure \ref{r_{red}_decay}, where we use \eqref{notation_neu} and
\bea
\lb{notation_v_*}
v(\tilde{r}, v_*) \quad &\mbox{is determined by}& \quad r(u_{blue}(v_*), v(\tilde{r}, v_*))=\tilde{r}.
\eea
Thus the depicted regions are given by \mbox{$\cR_{II}\cup \tilde{\cR}_{II}= \cD^+(\left\lbrace v_1 \leq v \leq v_*+1\right\rbrace \cap  \left\lbrace r=r_{red}\right\rbrace )\cap \cN $}, where region \mbox{$\cR_{II}$} is given by \mbox{$\cR_{II}= \cD^+(\left\lbrace v_1 \leq v \leq v_*\right\rbrace \cap \left\lbrace r=r_{red}\right\rbrace)  $}.

{\begin{figure}[ht]
\centering
\includegraphics[width=0.6\textwidth]{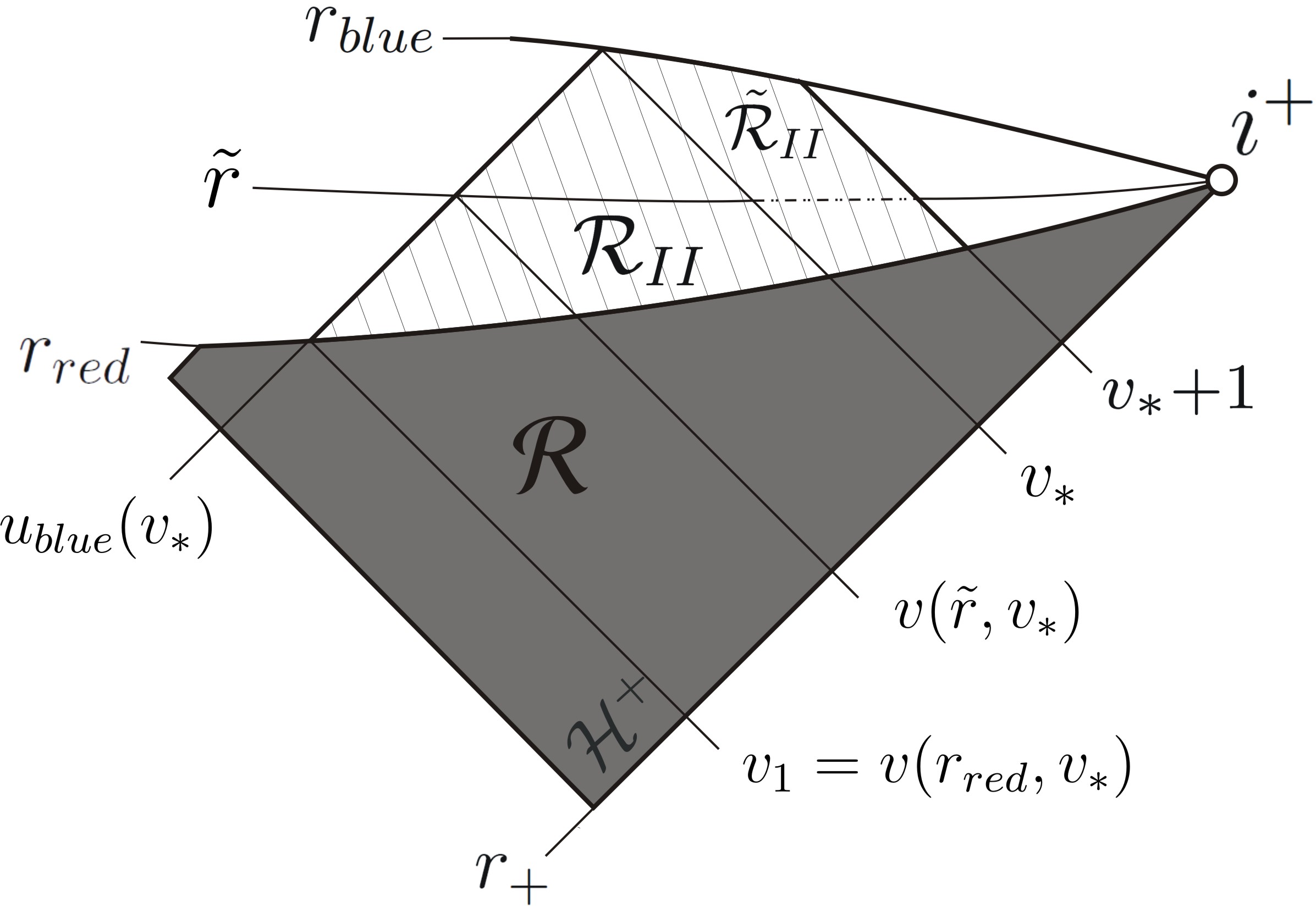}
\caption[Text der im Bilderverzeichnis auftaucht]{Region $\cR_{II}\cup \tilde{\cR}_{II}$ represented as the hatched area.}
\label{r_{red}_decay}\end{figure}}
In the following we will apply the divergence theorem in region $\cR_{II}\cup \tilde{\cR}_{II}$ to obtain decay on an arbitrary $r=\tilde{r}$ hypersurface, dash-dotted line, for $\tilde{r} \in [r_{blue},r_{red})$, from the derived decay on the $r=r_{red}$ hypersurface.

\ben
\int\limits_ {\cR_{II}\cup \tilde{\cR}_{II}}  K^{-\partial_r}(\phi) \dV
&+& \int\limits_ {\left\lbrace  r_{blue}\leq r \leq r_{red}\right\rbrace }  J_{\mu}^{-\partial_r}(\phi) n^{\mu}_{u=u_{blue}(v_*)} \dV_{u=u_{blue}(v_*)}\\
+ \int\limits_ {\left\lbrace  v_* \leq v \leq v_*+1\right\rbrace }  J_{\mu}^{-\partial_r}(\phi) n^{\mu}_{r=r_{blue}} \dV_{r=r_{blue}}
&+& \int\limits_ {\left\lbrace  r_{blue}\leq r \leq r_{red}\right\rbrace }  J_{\mu}^{-\partial_r}(\phi) n^{\mu}_{v=v_*+1} \dV_{v=v_*+1} \\
=
\int\limits_ {\left\lbrace  v_1 \leq v \leq v_*+1\right\rbrace }  J_{\mu}^{-\partial_r}(\phi) n^{\mu}_{r=r_{red}} \dV_{r=r_{red}}.
\een
The second integral of the left hand side represents the current through the $u=u_{blue}(v_*)$ hypersurface, defined by \eqref{notation_neu}.
%
As $u=u_{blue}(v_*)$ is a null hypersurface and $-\partial_r$ is timelike, the positivity of that second term is immediate. Similarly, the fourth term of the left hand side of our equation is positive and we obtain
\ben
\int\limits_ {\left\lbrace  v_* \leq v \leq v_*+1\right\rbrace }  J_{\mu}^{-\partial_r}(\phi) n^{\mu}_{r=r_{blue}} \dV_{r=r_{blue}}
\leq
\int\limits_ {\cR_{II}\cup \tilde{\cR}_{II}}  |K^{-\partial_r}(\phi)| \dV
+\int\limits_ {\left\lbrace  v_1 \leq v \leq v_*+1\right\rbrace }  J_{\mu}^{-\partial_r}(\phi) n^{\mu}_{r=r_{red}} \dV_{r=r_{red}}.
\een

Further, we use that the deformation $K^{-\partial_r}$ is controlled by the energy associated to the timelike vector field $-\partial_r$ as stated in \eqref{controll_bulk_partial_r}.
Thus we obtain
\bea
\lb{oberes}
\int\limits_ {\left\lbrace  v_* \leq v \leq v_*+1\right\rbrace }  J_{\mu}^{-\partial_r}(\phi) n^{\mu}_{r=r_{blue}} \dV_{r=r_{blue}}
&\leq&
B_1 \int\limits_{\cR_{II}\cup{\tilde{\cR}}_{II}}  J_{\mu}^{-\partial_r}(\phi) n^{\mu}_{r=\bar{r}} \dV\nonumber\\
&&+\int\limits_ {\left\lbrace  v_1 \leq v \leq v_*+1\right\rbrace }  J_{\mu}^{-\partial_r}(\phi) n^{\mu}_{r=r_{red}} \dV_{r=r_{red}}.
\eea
By the coarea formula we obtain
\bea
\lb{94}
\int\limits_ {\left\lbrace  v_* \leq v \leq v_*+1\right\rbrace }  J_{\mu}^{-\partial_r}(\phi) n^{\mu}_{r=r_{blue}} \dV_{r=r_{blue}}
&\leq&
\tilde{B}_1 \int\limits_ {r_{blue}}^{r_{red}}\int\limits_{\left\lbrace  v(\bar{r},v_*) \leq v \leq v_*+1\right\rbrace }  J_{\mu}^{-\partial_r}(\phi) n^{\mu}_{r=\bar{r}} \dV_{r=\bar{r}}\md \bar{r}\nonumber\\
&&+\int\limits_ {\left\lbrace  v_1 \leq v \leq v_*+1\right\rbrace }  J_{\mu}^{-\partial_r}(\phi) n^{\mu}_{r=r_{red}} \dV_{r=r_{red}}.
\eea

Now let 
\bea
E(\phi;\tilde{r},\tilde{v})=\int\limits_ {\left\lbrace  \tilde{v} \leq v \leq v_*+1\right\rbrace }  J_{\mu}^{-\partial_r}(\phi) n^{\mu}_{r=\tilde{r}} \dV_{r=\tilde{r}}, 
\eea
with $\tilde{r} \in [r_{blue},r_{red})$. 
Replacing $r_{blue}$ with $\tilde{r}$ in the above, considering the future domain of dependence of $\left\lbrace v_1 \leq v \leq v_*+1\right\rbrace \cap \left\lbrace r=r_{red}\right\rbrace$ up to the $r=\tilde{r}$ hypersurface we obtain similarly to \eqref{94}
\be
\lb{E_energy2}
E(\phi;\tilde{r},v(\tilde{r}, v_*))\leq \tilde{B}_1\int\limits_ {\tilde{r}}^{r_{red}}E(\phi;\bar{r},v(\bar{r}, v_*))\md \bar{r} +E(\phi;r_{red},v_1).
\ee

Using Gr\"onwall's inequality in \eqref{E_energy2} yields
\bea
E(\phi;\tilde{r},v(\tilde{r}, v_*))&\leq& E(\phi;r_{red},v_1)\left[ 1+\tilde{B}_1(r_{red}-\tilde{r})e^{\tilde{B}_1(r_{red}-\tilde{r})}\right]\nonumber \\
\Rightarrow E(\phi;\tilde{r},v(\tilde{r}, v_*))&\leq& \tilde{C}E(\phi;r_{red},v_1).
\eea
Finally, note that 
\bea
\lb{vconstk}
\left[v_*+1\right]-v(r_{red}, v_*) =\left[v_*+1\right]-v_1= k< \infty,
\eea
where $k=2\left[r^{\star}(r_{blue})-r^{\star}(r_{red})\right]+1$.
This can be seen since \eqref{def_l_n} and \eqref{rstar1} yields
\ben
&&\frac{\partial_v r}{\Omega^2}=-\partial_v r^{\star}=-\frac12\\
\Rightarrow&&r^{\star}(u_{blue}(v_*), v_*)-r^{\star}(u_{blue}(v_*), v_1)=r^{\star}(r_{blue})-r^{\star}(r_{red})\stackrel{\eqref{regge}}{=}\frac12 (v_*-v_1).
\een
Further, by using the conclusion of Proposition \ref{rtildedecay} and \eqref{vconstk} we have
\bea
\lb{v_1_decay}
E(\phi; r_{red}, v_1)=\int\limits_ {\left\lbrace  v_1 \leq v \leq v_*+1\right\rbrace } J_{\mu}^{-\partial_r}(\phi) n^{\mu}_{r=r_{red}} \dV_{r=r_{red}}
&=&\int\limits_ {\left\lbrace  v_1 \leq v \leq v_1+k\right\rbrace }  J_{\mu}^{-\partial_r}(\phi) n^{\mu}_{r=r_{red}} \dV_{r=r_{red}}\nonumber\\
&\leq&
 C\mbox{max}\left\lbrace {k,1}\right\rbrace {v_1}^{-2-2\delta}\sim  C{v_*}^{-2-2\delta}.
\eea

We thus infer
\ben
\int\limits_ {\left\lbrace  v_* \leq v \leq v_*+1\right\rbrace }  J_{\mu}^{-\partial_r}(\phi) n^{\mu}_{r=\tilde{r}} \dV_{r=\tilde{r}}
\leq
\tilde{C}\int\limits_ {\left\lbrace  v_1 \leq v \leq v_*+1\right\rbrace }  J_{\mu}^{-\partial_r}(\phi) n^{\mu}_{r=r_{red}} \dV_{r=r_{red}}
\leq
\tilde{C} C{v_*}^{-2-2\delta}.
\een
\end{proof}
The above now also implies the following statement.
\begin{cor}
\lb{cor6.1}
Let $\phi$ be as in Theorem \ref{anfang}, $r_{blue}$ as in \eqref{rblue} and $r_{red}$ as in Proposition \ref{mi}. Then, for all $v_*>1$ and all $\tilde{u}$ such that $r(\tilde{u}, v_*) \in [r_{blue}, r_{red})$
\bea
\int\limits_ {\left\lbrace v_{red}(\tilde{u}) \leq v \leq v_{blue}(\tilde{u})\right\rbrace } J^N_\mu(\phi)n^\mu_{u=\tilde{u}}dVol_{u=\tilde{u}} \le C v_*^{-2-2\delta}, 
\eea
with $C$ depending on $C_{0}$ of Theorem \ref{anfang} and $D_{0}(u_{\diamond}, 1)$ of Proposition \ref{initialdataprop}, where $u_{\diamond}$ is defined by $r_{red}=r(u_{\diamond},1)$ and $v_{red}(\tilde{u})$, $v_{blue}(\tilde{u})$ are as in \eqref{notation_neu}.
\end{cor}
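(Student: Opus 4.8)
The plan is to imitate the derivation of Corollary \ref{cor5.2}, now carried out inside the noshift region $\cN$ rather than the redshift region. Since the redshift field $N$ is only available for $r\ge r_{red}$ (cf.\ Proposition \ref{mi}), I would run the whole argument with the timelike field $-\partial_r$ of \eqref{partial_r}, which is defined throughout $\cN$ and is exactly the multiplier controlled in Proposition \ref{r_{red}}, and pass from $J^{-\partial_r}$ to $J^N$ only at the very end, using that the flux of \emph{any} future directed timelike vector field through the null hypersurface $u=\tilde{u}$ controls the same tangential quantities $(\partial_v\phi)^2 r^2$ and $|\nabb\phi|^2 r^2$ (up to positive constants), so the two fluxes are comparable on $\{u=\tilde{u}\}$.

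First I would fix $v_*>1$ and an admissible $\tilde{u}$, i.e.\ one with $r(\tilde{u},v_*)\in[r_{blue},r_{red})$, so that $v_{red}(\tilde{u})\le v_*\le v_{blue}(\tilde{u})$, and build a characteristic triangle $\cD\subset\cN$ bounded by three segments: the null segment $L=\{u=\tilde{u}\}\cap\{v_{red}(\tilde{u})\le v\le v_{blue}(\tilde{u})\}$ that is to be estimated; the null segment $\{v=v_{blue}(\tilde{u})\}$ issuing from the future endpoint of $L$ (where $r=r_{blue}$) back to the point $(u_{red}(v_{blue}(\tilde{u})),v_{blue}(\tilde{u}))$ at which it meets $r=r_{red}$; and the spacelike segment of $r=r_{red}$ closing the triangle. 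Since $\partial_u r<0$ and $\partial_v r<0$ in $\cN$, the radius decreases from $r_{red}$ to $r_{blue}$ along $L$ and increases back from $r_{blue}$ to $r_{red}$ along $\{v=v_{blue}(\tilde{u})\}$, so $\cD$ lies in $\cN$, with $r=r_{red}$ as its past boundary and the meeting point of the two null sides (at $r=r_{blue}$) as its single future vertex. The divergence theorem for $J^{-\partial_r}$ then reads
\[
\int_{\cD}K^{-\partial_r}(\phi)\,\dV+\int_{L}J_{\mu}^{-\partial_r}(\phi)n^\mu_{u=\tilde{u}}\,\dV_{u=\tilde{u}}+\int_{\{v=v_{blue}(\tilde{u})\}}J_{\mu}^{-\partial_r}(\phi)n^\mu_{v=v_{blue}}\,\dV_{v=v_{blue}}=\int_{r=r_{red}}J_{\mu}^{-\partial_r}(\phi)n^\mu_{r=r_{red}}\,\dV_{r=r_{red}},
\]
and, discarding the nonnegative flux through $\{v=v_{blue}(\tilde{u})\}$ and moving the bulk to the right,
\[
\int_{L}J_{\mu}^{-\partial_r}(\phi)n^\mu_{u=\tilde{u}}\,\dV_{u=\tilde{u}}\le\int_{r=r_{red}}J_{\mu}^{-\partial_r}(\phi)n^\mu_{r=r_{red}}\,\dV_{r=r_{red}}+\int_{\cD}|K^{-\partial_r}(\phi)|\,\dV.
\]

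It remains to bound the two terms on the right by $Cv_*^{-2-2\delta}$. The boundary term is a flux on $r=r_{red}$ over the $v$-interval $[v_{red}(\tilde{u}),v_{blue}(\tilde{u})]$, whose length equals the fixed constant $2[r^{\star}(r_{blue})-r^{\star}(r_{red})]$; covering it by finitely many unit $v$-segments and applying Proposition \ref{rtildedecay} (together with the comparability of $J^{-\partial_r}$ and $J^N$ on the spacelike surface $r=r_{red}$) gives the desired bound, all relevant $v$ being within bounded distance of $v_*$. For the bulk I would use \eqref{controll_bulk_partial_r}, $|K^{-\partial_r}|\le B_1\,J_{\mu}^{-\partial_r}n^\mu_{r=\bar{r}}$, followed by the coarea formula as in the proof of Proposition \ref{r_{red}}, which turns $\int_{\cD}|K^{-\partial_r}|\,\dV$ into $\int_{r_{blue}}^{r_{red}}$ of constant-$r$ fluxes over bounded $v$-segments; each such flux is $\le Cv_*^{-2-2\delta}$ by Proposition \ref{r_{red}}, and integrating over the bounded interval $[r_{blue},r_{red}]$ preserves the rate. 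Because every $v$ entering these estimates is comparable to $v_*$, no Gr\"onwall iteration is needed here --- in contrast to the proof of Proposition \ref{r_{red}} --- since the constant-$r$ fluxes are already known. Converting the left-hand side from $J^{-\partial_r}$ to $J^N$ on $L$ as explained above yields the claim. The only delicate point is that the bulk must not degrade the decay rate; this is guaranteed by the uniformly bounded $v$-width of $\cN$, which keeps all $v$-values comparable to $v_*$ and lets the coarea estimate reproduce the same power of $v_*$.
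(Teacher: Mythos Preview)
Your argument is correct and is exactly the approach the paper sketches: apply the divergence theorem for $J^{-\partial_r}$ in the triangular region $J^-(x)\cap\cN$ with $x=(\tilde{u},v_{blue}(\tilde{u}))$, drop the nonnegative flux through $\{v=v_{blue}(\tilde{u})\}$, bound the $r=r_{red}$ boundary term and the bulk using Propositions~\ref{rtildedecay} and~\ref{r_{red}} together with the bounded $v$-width of $\cN$, and note $v_{red}(\tilde{u})\sim v_{blue}(\tilde{u})\sim v_*$. Your observation that $N$ is only constructed on $\cR$ and that the passage to $J^N$ on $L\subset\cN$ relies on comparability of null fluxes for timelike multipliers (with $\Omega^2$ bounded above and below in $\cN$) is exactly right; the paper's use of $J^N$ here should be read in that sense.
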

\begin{proof}
The conclusion of the statement follows by considering the divergence theorem for a triangular region \mbox{$J^-(x)\cap \cN$} with \mbox{$x=(\tilde{u}, v_{blue}(\tilde{u}))$}, $x \in J^-(r=r_{blue})$ and using the results of Proposition \ref{r_{red}}. Note that \mbox{$v_*\sim v_{blue}(\tilde{u})\sim v_{red}(\tilde{u})$}.
\end{proof}

By the previous proposition we have successfully propagated the energy estimate further inside the black hole, up to $r=r_{blue}$.
To go even further will be more difficult and we will address this in the next section.

\section{Propagating through $\cB$ from $r=r_{blue}$ to the hypersurface $\gamma$}
\lb{blueshift1}

In the following we want to propagate the estimates from the $r=r_{blue}$ hypersurface further into the blueshift region to a hypersurface $\gamma$ which is located a logarithmic distance in $v$ from the $r=r_{blue}$ hypersurface, cf.~ Figure \ref{region3_neu}. 
{\begin{figure}[ht]
\centering
\includegraphics[width=0.8\textwidth]{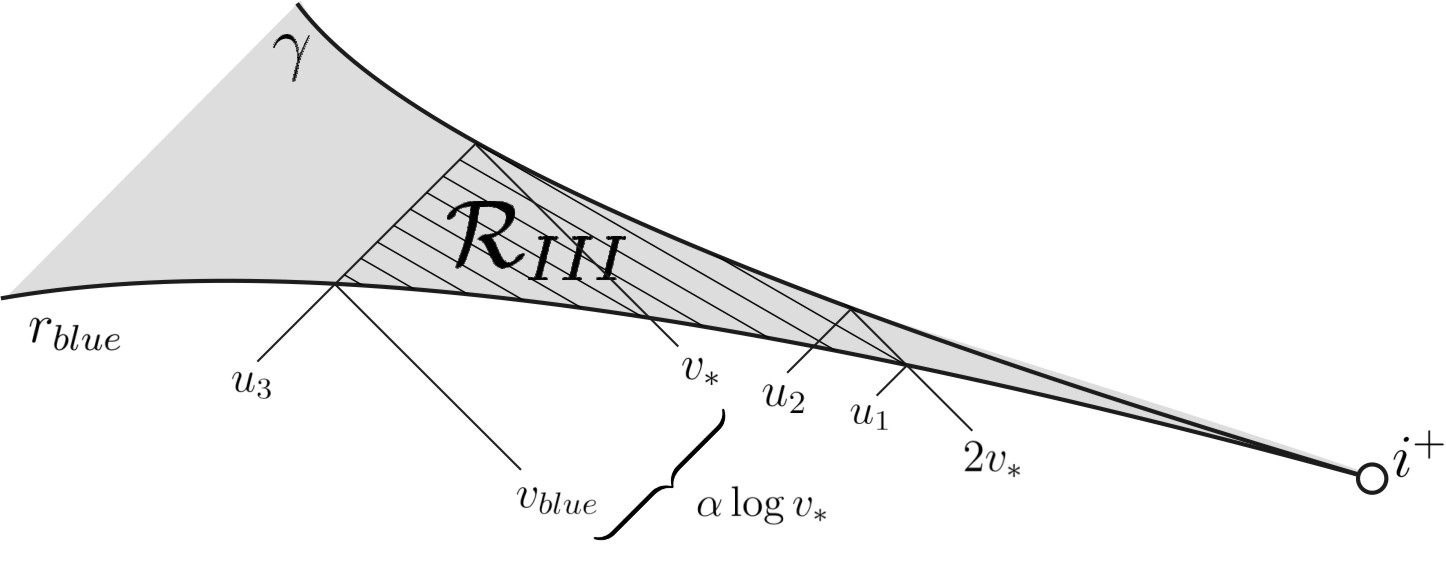}
\caption[Text der im Bilderverzeichnis auftaucht]{Logarithmic distance of hypersurface $r=r_{blue}$ and hypersurface $\gamma$ depicted in a Penrose diagram.}
\label{region3_neu}\end{figure}}
We will define the hypersurface $\gamma$ and its most basic properties in Section \ref{gamma_curve} and propagate the decay bound to $\gamma$ in Section \ref{togamma}.

\subsection{The hypersurface $\gamma$}
\lb{gamma_curve}
The idea of the hypersurface $\gamma$ was already entertained in \cite{m_interior} by Dafermos and basically locates $\gamma$ a logarithmic distance in $v$ from a constant $r$ hypersurface living in the blueshift region. 

Let $\alpha$ be a fixed constant satisfying 
\bea
\lb{alpha}
\alpha>\frac{p+1}{\beta},
\eea 
with $\beta$ as in \eqref{lowerboundu} and \eqref{lowerboundv}.
(The significance of the bound \eqref{alpha} will become clear later.) Let us for convenience also assume that
\bea
\alpha&>&1,
\eea
and
\bea
\alpha(2-\log 2\alpha) &>&2r^{\star}_{blue}+1.
\eea
We define the function $H(u,v)$ by
\bea
\lb{gross_h}
H(u,v)=u+v-\alpha \log v-2r^{\star}(r_{blue})=u+v-\alpha \log v-2r^{\star}_{blue},
\eea
were $r^{\star}(r_{blue})=r^{\star}_{blue}$ is the $r^{\star}$ value evaluated at $r_{blue}$ according to \eqref{rstar2}, and $r^{\star}_{blue}>0$ according to the choice \eqref{rchoice}.
We then define the hypersurface $\gamma$ as the levelset 
\bea
\lb{gammadefine}
\gamma=\left\{H(u,v)=0\right\}\cap \{v> 2\alpha\}.
\eea 
Since
\bea
\lb{ableit}
\frac{\partial H}{\partial u}=1, \qquad \frac{\partial H}{\partial v}=1-\frac{\alpha}{v},
\eea
we see that $\gamma$ is a spacelike hypersurface and terminates at $i^+$, cf.~ Appendix \ref{Jcurrents}. (In the notation \eqref{notation_neu}, \mbox{$u_{\gamma}(v)\rightarrow -\infty$} as \mbox{$v\rightarrow \infty$}.)
Note that by our choices $u<-1$ and $v>|u|$ in $\cD^+(\gamma)$.

Recall that in Section \ref{rtcoords} we have defined the Regge-Wheeler tortoise coordinate $r^{\star}$ depending on $u$, $v$ by \eqref{regge}. Using this for $r^{\star}_{blue}$ we have
\bea
r^{\star}(r_{blue})=\frac{v_{blue}(u)+u}{2},
\eea
with $v_{blue}(u)$ as in \eqref{notation_neu}. Plugging this into \eqref{gammadefine} recalling $v_{\gamma}(u)$ defined in \eqref{notation_neu}, we obtain
the relation
\bea
\lb{gamma}
v_{\gamma}(u)-v_{blue}(u)=\alpha \log v_{\gamma}(u).
\eea
As we shall see in Section \ref{finiteness} the above properties of $\gamma$ will allow us to derive pointwise estimates of $\Omega^2$ in $J^+(\gamma)\cap \cB$. We first turn however to the region $J^-(\gamma)\cap \cB$.

\subsection{Energy estimates from $r=r_{blue}$ to the hypersurface $\gamma$}
\lb{togamma}
Now we are ready to propagate the energy estimates further into the blueshift region $\cB$ up to the hypersurface $\gamma$.
We will in this part of the proof use the vector field 
\ben
{S_0}=r^q\partial_{r^{\star}}=r^q(\partial_u+\partial_v),
\een 
which we have defined in \eqref{Nstar}.
Let us now consider the bulk term and derive positivity properties which are needed later on.
Plugging \eqref{Nstar} in \eqref{Kplug} of Appendix \ref{Kcurrents} yields
\bea
\lb{KNstar}
K^{{S_0}}=&+& qr^{q-1}\left[(\partial_v \phi)^2+(\partial_u \phi)^2\right]\nonumber\\
&-&\left[ \frac{qr^{q-1}}{2}\left[\partial_v r +\partial_u r\right]+{r^q}\left(\frac{\partial_u \Omega}{\Omega}+\frac{ \partial_v \Omega}{\Omega}\right)\right] |\nabb \phi|^2\nonumber\\
&-&4r^{q-1}(\partial_u \phi\partial_v \phi).
\eea
Our aim is to show that $K^{{S_0}}$ is positive. All terms multiplied by the angular derivatives are manifestly positive in $\cB$, cf.~ \eqref{lowerboundu}, \eqref{lowerboundv} together with \eqref{def_l_n} to \eqref{v-neg}. 
Therefore, it is only left to show that the first term on the right hand side dominates the last term.
Since by the Cauchy-Schwarz inequality
\bea
\lb{positivityproperty}
2qr^{q-1}(\partial_u \phi\partial_v \phi)\leq qr^{q-1}\left[(\partial_v \phi)^2+(\partial_u \phi)^2\right],
\eea
$K^{{S_0}}$ is positive in $\cB$ for all $q \geq 2$.

We show now that at the expense of one polynomial power, we can extend the local energy estimate on $r=r_{blue}$ to an energy estimate along $\gamma$ which is valid for a dyadic length. 
\begin{prop}
\lb{to_gamma}
Let $\phi$ be as in Theorem \ref{anfang}.
Then, for all $v_*> 2\alpha$ 
\bea
\lb{sieben}
\int\limits_ {\left\lbrace  v_* \leq v \leq 2v_* \right\rbrace } J_{\mu}^{{S_0}}(\phi) n^{\mu}_{\gamma} \dV_{\gamma}
\leq
C{v_*}^{-1-2\delta}, 
\eea
on the hypersurface $\gamma$, with $C$ depending on $C_{0}$ of Theorem \ref{anfang} and $D_{0}(u_{\diamond}, 1)$ of Proposition \ref{initialdataprop}, where $u_{\diamond}$ is defined by $r_{red}=r(u_{\diamond},1)$.
\end{prop}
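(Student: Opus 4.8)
The plan is to exploit the positivity of the bulk term $K^{S_0}$, established in \eqref{KNstar}--\eqref{positivityproperty} (valid in $\cB$ for $q\geq 2$), together with the divergence theorem, in complete analogy with Propositions \ref{rtildedecay} and \ref{r_{red}}, so that the $S_0$-energy flux through the future boundary $\gamma$ is controlled by the flux through the past boundary $r=r_{blue}$, on which we already have decay from Proposition \ref{r_{red}} and Corollary \ref{cor6.1}. Concretely, I would apply the divergence theorem to $J^{S_0}_\mu$ in the spacetime region $\cD\subset\cB$ whose future boundary is the dyadic piece $\gamma\cap\{v_*\leq v\leq 2v_*\}$, whose past boundary is a segment of $r=r_{blue}$, and whose two remaining boundary components are the null segments obtained by shooting the ray $v=v_*$ from the $v=v_*$ endpoint of the $\gamma$-piece and the ray $u=u_{\gamma}(2v_*)$ from its $v=2v_*$ endpoint down to $r=r_{blue}$.

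Since $K^{S_0}\geq 0$ in $\cB$ and $S_0$ is future-directed timelike while $\gamma$ is spacelike, the flux through $\gamma$ is nonnegative and, discarding the favourable bulk term, the divergence theorem yields
\be
\int_{\gamma\cap\{v_*\leq v\leq 2v_*\}}J^{S_0}_\mu(\phi)n^\mu_\gamma\,\dV_\gamma\;\leq\;\int_{r=r_{blue}}J^{S_0}_\mu(\phi)n^\mu\,\dV\;+\;\mathcal{F}_{v=v_*}\;+\;\mathcal{F}_{u=u_{\gamma}(2v_*)},
\ee
where $\mathcal{F}$ denote the fluxes through the two null segments and all terms on the right are nonnegative. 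On $r=r_{blue}$ the coordinate $r$, and hence $\Omega^2$ and $r^q$, are constant, so $S_0$ and $-\partial_r$ are proportional there by a fixed constant; thus the first term is comparable to the $-\partial_r$-flux controlled in Proposition \ref{r_{red}} and Corollary \ref{cor6.1}. By the geometry of $\gamma$, cf.~\eqref{gamma}, the relevant $r=r_{blue}$ segment has $v$-extent of length $\sim v_*$, so partitioning it into unit $v$-intervals and summing the decay $v^{-2-2\delta}$ gives a bound $\sim v_*\cdot v_*^{-2-2\delta}=v_*^{-1-2\delta}$; this is precisely the announced loss of one polynomial power.

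The main obstacle is to control the two null fluxes $\mathcal{F}$, since a priori we have no estimate on a null ray interior to $\cB$. I would handle each by a second, auxiliary application of the divergence theorem, again for $J^{S_0}$, in the small triangle bounded by the null segment in question, a short segment of $r=r_{blue}$, and a complementary null ray through the corresponding $\gamma$-endpoint. In that triangle the null segment is a \emph{future} boundary, so the positivity of $K^{S_0}$ bounds $\mathcal{F}$ by the $S_0$-flux through the associated $r=r_{blue}$ segment, whose $v$-extent is only $\sim\log v_*$ because of the logarithmic separation \eqref{gamma} of $\gamma$ from $r=r_{blue}$. Summing the decay over this much shorter range gives $\mathcal{F}\lesssim (\log v_*)\,v_*^{-2-2\delta}$, which is lower order compared with $v_*^{-1-2\delta}$.

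Collecting the three contributions yields the claimed bound $Cv_*^{-1-2\delta}$, with $C$ depending only on the constants of Proposition \ref{r_{red}}, hence ultimately on $C_0$ of Theorem \ref{anfang} and $D_0(u_{\diamond},1)$ of Proposition \ref{initialdataprop}. The only remaining points requiring care are the orientation and sign bookkeeping for the null boundaries, which I have arranged so that the interior null fluxes always appear as future boundaries of the auxiliary triangles where positivity is usable, and the verification that the $\Omega^2$-weights remain controlled throughout $J^-(\gamma)\cap\cB$, which is immediate since $\Omega^2$ is a bounded function of $r$ on the compact range $[r_-,r_{blue}]$.
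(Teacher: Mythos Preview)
Your argument is correct, and the auxiliary-triangle step you worry about does go through exactly as you describe: in each triangle the interior null segment becomes a \emph{future} boundary, $K^{S_0}\ge 0$ lets you drop the bulk, and the short $r=r_{blue}$ base of length $\alpha\log v_*$ gives the lower-order contribution $(\log v_*)\,v_*^{-2-2\delta}$.

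The paper, however, avoids the auxiliary triangles altogether by a different choice of null sides. Instead of closing the region with the \emph{past} null segments $v=v_*$ (from $(u_\gamma(v_*),v_*)$ down to $r=r_{blue}$) and $u=u_\gamma(2v_*)$ (from $(u_\gamma(2v_*),2v_*)$ down to $r=r_{blue}$), it uses the \emph{future} null segments $u=u_\gamma(v_*)$ (from $r=r_{blue}$ up to $(u_\gamma(v_*),v_*)\in\gamma$) and $v=2v_*$ (from $r=r_{blue}$ up to $(u_\gamma(2v_*),2v_*)\in\gamma$). With that choice the region---called $\cR_{III}$ in the paper---has \emph{only} future null sides, whose fluxes are nonnegative and can be dropped immediately; a single application of the divergence theorem then bounds the $\gamma$-flux directly by the $r=r_{blue}$-flux over $v\in[v_*-\alpha\log v_*,\,2v_*]$, which has length $\sim v_*$ and gives $Cv_*^{-1-2\delta}$ straight away. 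In fact your region $\cD$ together with your two auxiliary triangles is exactly $\cR_{III}$, so you are computing the same thing, just in three pieces rather than one. The paper's choice buys a one-line proof; your decomposition buys nothing extra here, but the idea of isolating short triangles near the corners is a standard and useful manoeuvre in other situations where one does not have the freedom to reposition the null sides.
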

{\em Remark.} $n^{\mu}_{\gamma}$ denotes the normal vector on the hypersurface $\gamma$ which is a levelset $\gamma=\left\{H(u,v)=0\right\}$ of the function $H(u,v)$ defined in
\eqref{gross_h}. For calculation of $n^{\mu}_{\gamma}$ and $J_{\mu}^{{S_0}}(\phi) n^{\mu}_{\gamma}$ refer to \eqref{ngamma} and \eqref{jgamma} of Appendix \ref{Jcurrents}.

\begin{proof}
In the following we will again make use of notation \eqref{notation_neu}.

Let $v_*> 2\alpha$, such that $\gamma$ is spacelike for $v>v_*$, cf.~ Section \ref{gamma_curve}.
Define $u_3$ by \mbox{$(u_3, v_*) \in \gamma$}, i.e.~ \mbox{$u_{\gamma}(v_*)=u_3$} and define $v_{blue}$ as the intersection of $u_3$ with $r_{blue}$, i.e.~ $v_{blue}(u_3)=v_{blue}$.
And similarly the hypersurfaces $u=u_1$ and $u=u_2$ as shown in Figure \ref{region3_neu} are given by $u_{blue}(2v_*)=u_1$ and $u_{\gamma}(2v_*)=u_2$.
Having defined the relations between all these quantities we can now carry out the divergence theorem for region ${\cR}_{III}$:

\bea
\lb{div_region3}
&&\int\limits_ {{\cR}_{III}}  K^{{S_0}}(\phi) \dV
+ \int\limits_ {\left\lbrace  v_{blue} \leq v \leq v_{*}\right\rbrace }  J_{\mu}^{{S_0}}(\phi) n^{\mu}_{u=\tilde{u}} \dV_{u=\tilde{u}}
+ \int\limits_ {\left\lbrace  v_* \leq v \leq 2v_*\right\rbrace }  J_{\mu}^{{S_0}}(\phi) n^{\mu}_{\gamma} \dV_{\gamma}\nonumber\\
&+& \int\limits_ {\left\lbrace u_1 \leq u \leq u_2 \right\rbrace } J_{\mu}^{{S_0}}(\phi) n^{\mu}_{v=2v_*} \dV_{v=2v_*}
= \int\limits_ {\left\lbrace  v_{blue} \leq v \leq 2v_*\right\rbrace } J_{\mu}^{{S_0}}(\phi) n^{\mu}_{r=r_{blue}} \dV_{r=r_{blue}}.
\eea

Positivity of the flux along the $u=u_3$ segment and the flux along the $v=2v_*$ segment, as well as positivity of $K^{{S_0}}$ for the choice $q\geq 2$, which was derived in \eqref{KNstar} and \eqref{positivityproperty}, leads to
\bea
\lb{r3esti}
\int\limits_ {\left\lbrace  v_* \leq v \leq 2v_*\right\rbrace }  J_{\mu}^{{S_0}}(\phi) n^{\mu}_{\gamma} \dV_{\gamma}
&\leq& \int\limits_ {\left\lbrace  v_{blue} \leq v \leq 2v_*\right\rbrace } J_{\mu}^{{S_0}}(\phi) n^{\mu}_{r=r_{blue}} \dV_{r=r_{blue}},\nonumber\\
&{\leq}& C \mbox{max}\left\{2v_*-v_{blue},1 \right\} v_{blue}^{-2-2\delta},\nonumber\\
&\stackrel{\eqref{gamma}}{\leq}& C \left(v_*+\alpha\log{v_*}\right) v_{blue}^{-2-2\delta},\nonumber\\
&{\leq}& \tilde{C}C v_{*}^{-1-2\delta},
\eea
where the second step is implied by Proposition \ref{r_{red}} and the last step follows from the inequality $v_*\leq C v_{blue}$ which is implied by \eqref{gamma}.
\end{proof}

We have already mentioned in the introduction that we will use the vector field ${S}$, cf.~ \eqref{N1} in the region $J^+(\gamma)\cap\cB$. 
To control the initial flux term of $S$ we require a weighted energy estimate along the hypersurface $\gamma$. 
\begin{cor}
\lb{cor2}
Let $\phi$ be as in Theorem \ref{anfang}.
Then, for all $v_*>2\alpha$ 
\bea
\lb{Nstargammaesti}
\int\limits_ {\left\lbrace  v_* \leq v <\infty \right\rbrace }v^p J_{\mu}^{{S_0}}(\phi) n^{\mu}_{\gamma} \dV_{\gamma}
&\leq& Cv_*^{-1-2\delta+p}, 
\eea
on the hypersurface $\gamma$, with $C$ depending on $C_{0}$ of Theorem \ref{anfang} and $D_{0}(u_{\diamond}, 1)$ of Proposition \ref{initialdataprop}, where $u_{\diamond}$ is defined by $r_{red}=r(u_{\diamond},1)$ and $p$ as in \eqref{waspist}.
\end{cor}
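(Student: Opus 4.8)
The plan is to derive the weighted estimate \eqref{Nstargammaesti} from the dyadic estimate of Proposition \ref{to_gamma} by a standard dyadic decomposition and summation in $v$. Since Proposition \ref{to_gamma} already controls the \emph{unweighted} flux through $\gamma$ on each dyadic block $\{v_{**}\leq v\leq 2v_{**}\}$ by $Cv_{**}^{-1-2\delta}$, and since the weight $v^p$ is essentially constant across each such block, the weighted flux over the whole half-line $[v_*,\infty)$ should collapse to a convergent geometric series. All the genuine analytic work — positivity of $K^{{S_0}}$ for $q\geq2$, and the propagation of the $v^{-2-2\delta}$ decay from $r=r_{blue}$ up to $\gamma$ — has already been carried out, so the corollary is a purely bookkeeping upgrade that trades one power of $v$ for the insertion of the weight $v^p$.

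Concretely, I would first split the half-line into dyadic pieces,
\begin{equation}
[v_*,\infty)=\bigcup_{n=0}^{\infty}\left[2^n v_*,\,2^{n+1}v_*\right],
\end{equation}
so that
\begin{equation}
\int\limits_{\{v_*\leq v<\infty\}}v^p\,J_{\mu}^{{S_0}}(\phi)n^{\mu}_{\gamma}\,\dV_{\gamma}
=\sum_{n=0}^{\infty}\int\limits_{\{2^n v_*\leq v\leq 2^{n+1}v_*\}}v^p\,J_{\mu}^{{S_0}}(\phi)n^{\mu}_{\gamma}\,\dV_{\gamma}.
\end{equation}
On the $n$-th block I bound the weight pointwise by its maximal value, $v^p\leq(2^{n+1}v_*)^p=2^p(2^nv_*)^p$, and pull this constant out of the integral. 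Because $2^nv_*\geq v_*>2\alpha$, Proposition \ref{to_gamma} applies verbatim with $v_*$ replaced by $2^nv_*$ and gives
\begin{equation}
\int\limits_{\{2^n v_*\leq v\leq 2^{n+1}v_*\}}J_{\mu}^{{S_0}}(\phi)n^{\mu}_{\gamma}\,\dV_{\gamma}\leq C\,(2^nv_*)^{-1-2\delta}.
\end{equation}
Combining these two bounds on each block and summing yields
\begin{equation}
\int\limits_{\{v_*\leq v<\infty\}}v^p\,J_{\mu}^{{S_0}}(\phi)n^{\mu}_{\gamma}\,\dV_{\gamma}
\leq C\,2^p\,v_*^{\,p-1-2\delta}\sum_{n=0}^{\infty}\left(2^{\,p-1-2\delta}\right)^{n}.
\end{equation}

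The only point demanding attention — and the sole place where the precise admissible range of $p$ enters — is the convergence of this geometric series, which holds exactly when $p-1-2\delta<0$. This is precisely the content of the upper bound in \eqref{waspist}; since, by the remark following Theorem \ref{anfang}, $\delta$ may be taken as close to $\tfrac12$ as desired, one may always arrange the inequality to be strict, whereupon the series sums to the finite value $(1-2^{\,p-1-2\delta})^{-1}$. Absorbing $2^p$ and this sum into the constant $C$ leaves exactly $C\,v_*^{\,-1-2\delta+p}$, which is \eqref{Nstargammaesti}. I anticipate no real obstacle in this argument: the dyadic pigeonholing is routine, and the loss of one power of $v$ relative to the interior decay rate is built into the tradeoff already visible in \eqref{r3esti}.
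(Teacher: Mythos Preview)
Your proof is correct and is exactly the argument the paper has in mind: its own proof reads in full ``This follows by weighting \eqref{sieben} with $v_*^p$ and summing dyadically,'' which you have simply written out. Your care about the borderline case $p=1+2\delta$ (where the geometric ratio equals $1$) goes beyond what the paper says explicitly; invoking the remark after Theorem~\ref{anfang} to pass to a slightly larger $\delta$ is a legitimate fix, though in practice one simply fixes $p$ strictly inside the interval since the later applications only require $p>1$.
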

\begin{proof}
This follows by weighting \eqref{sieben} with $v_*^p$ and summing dyadically.
\end{proof}

Further, we can state the following.
\begin{cor}
\lb{cor7.1}
Let $\phi$ be as in Theorem \ref{anfang}, $r_{blue}$ as in \eqref{rblue} and $\gamma$ as in \eqref{gammadefine}. Then, for all $v_*>2\alpha$  and for all $\tilde{u} \in [u_{blue}(v_*), u_{\gamma}(v_*))$
\bea
\int\limits_ {\left\lbrace  v_{blue}(\tilde{u}) \leq v \leq v_{\gamma}(\tilde{u}) \right\rbrace } v^pJ^{S_0}_\mu(\phi)n^\mu_{u=\tilde{u}}dVol_{u=\tilde{u}} \le C v_*^{-1-2\delta+p},
\eea
with $C$ depending on $C_{0}$ of Theorem \ref{anfang} and $D_{0}(u_{\diamond}, 1)$ of Proposition \ref{initialdataprop}, where $u_{\diamond}$ is defined by $r_{red}=r(u_{\diamond},1)$ and $v_{\gamma}(\tilde{u})$, $v_{blue}(\tilde{u})$ as in \eqref{notation_neu}.
\end{cor}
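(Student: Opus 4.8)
The plan is to prove Corollary \ref{cor7.1} in exactly the same spirit as Corollary \ref{cor6.1}: by a single application of the divergence theorem for the current $J^{{S_0}}$ in a backward characteristic triangle whose future vertex lies on $\gamma$, inserting the weight $v^p$ only at the very end. Concretely, I would fix $\tilde u\in[u_{blue}(v_*),u_{\gamma}(v_*))$, let $x$ be the point $\{u=\tilde u\}\cap\gamma$ (so $x$ has $v$-coordinate $v_{\gamma}(\tilde u)$), and work in the region $\cD=J^-(x)\cap\cB$. Its boundary consists of three pieces: the segment of $\{u=\tilde u\}$ with $v_{blue}(\tilde u)\le v\le v_{\gamma}(\tilde u)$ (the segment in the statement), the null segment of $\{v=v_{\gamma}(\tilde u)\}$ running from $\{u=\tilde u\}$ back to $\{r=r_{blue}\}$, and the spacelike segment of $\{r=r_{blue}\}$ joining the feet of these two null rays. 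Since every point of $J^-(x)$ has $r\ge r(x)>r_-$ while $\cD\subset\{r\le r_{blue}\}$, the whole triangle lies in $\cB$, so the positivity of $K^{{S_0}}$ for $q\ge 2$ established in \eqref{KNstar}--\eqref{positivityproperty} holds throughout $\cD$.

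Next I would run the divergence theorem in $\cD$ precisely as in \eqref{div_region3}. With $\{r=r_{blue}\}$ as the spacelike past boundary and the two null segments as the future boundary, the identity reads bulk $+$ (flux through $\{u=\tilde u\}$) $+$ (flux through $\{v=v_{\gamma}(\tilde u)\}$) $=$ (flux through $\{r=r_{blue}\}$). The bulk $K^{{S_0}}\ge 0$, and the null flux through $\{v=v_{\gamma}(\tilde u)\}$ is nonnegative because ${S_0}$ is future-directed timelike, so both may be dropped to give
\[ \int_{\{u=\tilde u,\ v_{blue}(\tilde u)\le v\le v_{\gamma}(\tilde u)\}} J^{{S_0}}_{\mu}(\phi)\, n^{\mu}_{u=\tilde u}\, \dV_{u=\tilde u} \ \le\ \int_{\{r=r_{blue}\}} J^{{S_0}}_{\mu}(\phi)\, n^{\mu}_{r=r_{blue}}\, \dV_{r=r_{blue}}. \]
On the spacelike surface $\{r=r_{blue}\}$ the integrand $J^{{S_0}}_{\mu}n^{\mu}$ is a positive-definite quadratic form in $(\partial_u\phi)^2,(\partial_v\phi)^2,|\nabb\phi|^2$, comparable (with $r$ bounded in $\cB$) to $J^{-\partial_r}_{\mu}n^{\mu}$, so Proposition \ref{r_{red}} applies to the right-hand side. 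The relevant $v$-range is $[v_{blue}(\tilde u),v_{\gamma}(\tilde u)]$, whose length is $\alpha\log v_{\gamma}(\tilde u)$ by \eqref{gamma}; summing the unit-interval bound of Proposition \ref{r_{red}} over this range controls the right-hand side by $C\log v_*\cdot v_*^{-2-2\delta}$.

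Finally I would restore the weight. For $\tilde u\in[u_{blue}(v_*),u_{\gamma}(v_*))$ one has $v_{blue}(\tilde u)\le v_*<v_{\gamma}(\tilde u)$ with $v_{\gamma}(\tilde u)-v_{blue}(\tilde u)=\alpha\log v_{\gamma}(\tilde u)$ by \eqref{gamma}, so $v_{blue}(\tilde u)\sim v_{\gamma}(\tilde u)\sim v_*$ and the weight obeys $v^p\le C v_*^p$ on the whole segment. Pulling it out of the integral gives
\[ \int_{\{u=\tilde u\}} v^p J^{{S_0}}_{\mu}(\phi)\, n^{\mu}_{u=\tilde u}\, \dV_{u=\tilde u} \ \le\ C v_*^{p}\,\log v_*\; v_*^{-2-2\delta} \ \le\ C v_*^{p-1-2\delta}, \]
which is the assertion. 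In fact there is a full power of $v_*$ to spare, so the logarithmic length of the range is harmless; the stated exponent $p-1-2\delta$ is presumably chosen only to match the weighted $\gamma$-flux of Corollary \ref{cor2} for later use.

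I expect no serious obstacle: the genuinely analytic input, positivity of $K^{{S_0}}$, is already in hand from Proposition \ref{to_gamma}, and unlike the estimates to the future of $\gamma$ this corollary uses no information about $\Omega^2$ and does not even invoke the $\gamma$-flux, being driven entirely by the decaying $\{r=r_{blue}\}$ data propagated from the past. The only points requiring care are the bookkeeping ones: verifying $\cD\subset\cB$ so that $K^{{S_0}}\ge0$ is applicable, checking the comparability of $J^{{S_0}}$ and $J^{-\partial_r}$ on $\{r=r_{blue}\}$ so that Proposition \ref{r_{red}} transfers, and tracking the relations $v_{blue}(\tilde u)\sim v_{\gamma}(\tilde u)\sim v_*$ that allow the weight to be treated as the constant $v_*^p$.
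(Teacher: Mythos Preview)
Your proposal is correct and takes essentially the same approach as the paper: both apply the divergence theorem for $J^{{S_0}}$ in the triangular region $J^-(x)\cap\cB$ with $x=(\tilde u,v_{\gamma}(\tilde u))$, drop the nonnegative bulk $K^{{S_0}}$ and the null flux on $\{v=v_{\gamma}(\tilde u)\}$, and then control the $\{r=r_{blue}\}$ flux via Proposition~\ref{r_{red}} over a $v$-interval of length $\alpha\log v_{\gamma}(\tilde u)$. Your final step of treating the weight $v^p$ as the constant $v_*^p$ on the short segment is exactly the bookkeeping the paper leaves implicit when it says ``using the results of the proof of Proposition~\ref{to_gamma}.''
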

\begin{proof}
The proof is similar to the proof of Corollary \ref{cor6.1} by considering the divergence theorem for a triangular region \mbox{$J^-(x)\cap \cB$} with \mbox{$x=(\tilde{u}, v_{\gamma}(\tilde{u}))$}, $x \in J^-(\gamma)$ and using the results of the proof of Proposition \ref{to_gamma}.
\end{proof}

\section{Propagating through $\cB$ from the hypersurface $\gamma$ to $\cC\cH^+$ in the neighbourhood of $i^+$}
\lb{innerhorizon}
In order to prove our Theorem \ref{dashier} and close our estimates up to the Cauchy horizon in the neighbourhood of $i^+$ we are interested in considering a region ${\cR_{IV}}$ within the trapped region whose boundaries are made up of the hypersurface $\gamma$, a constant $u$ and a constant $v$ segment, which can reach up to the Cauchy horizon, cf.~ Figure \ref{RN_mit_u_v}.
{\begin{figure}[ht]
\centering
\includegraphics[width=0.5\textwidth]{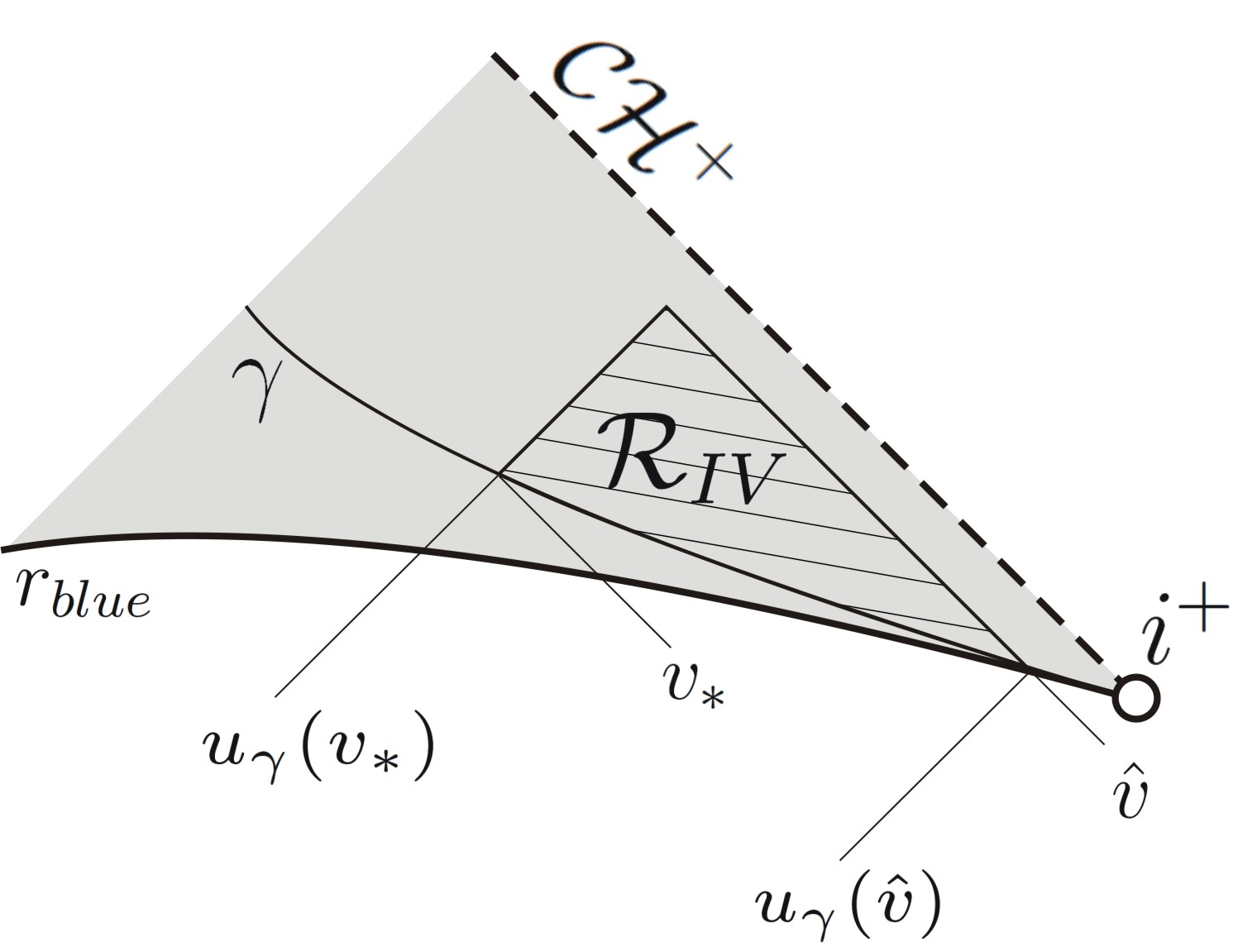}
\caption[]{Blueshift region of Reissner-Nordstr\"om spacetime from hypersurface $\gamma$ onwards.}
\label{RN_mit_u_v}\end{figure}}
Let $v_*>2\alpha$ and let $\hat{v}>v_*$.
We may write \mbox{$\cR_{IV}=J^+(\gamma)\cap J^-(x)$} with 
\mbox{$x=(u_{\gamma}(v_*), \hat{v})$}, $x \in J^+(\gamma)\cap\cB$.
Note that $\cR_{IV}$ lies entirely in the blueshift region, which was characterized by the fact that the quantity $M-\frac{e^2}{r}$ takes the negative sign, cf.~ \eqref{lowerboundu}, \eqref{lowerboundv} and \eqref{def_l_n} to \eqref{v-neg}.

In Section \ref{finiteness} we will derive pointwise estimates for $\Omega^2$ in the future of the hypersurface $\gamma$. 
With this estimate, the bulk term will be bounded in terms of the currents through the null hypersurfaces. Consequently, we will be able to absorb the bulk term and to show that the currents through the null hypersurfaces can be bounded by the current along the hypersurface $\gamma$, cf.~ Section \ref{cauchy}.

\subsection{Pointwise estimates on $\Omega^2$ in $J^+(\gamma)$}
\lb{finiteness}
In the following we will derive pointwise estimates on $\Omega^2$ in $J^+(\gamma)$. We note that these will imply that the
spacetime volume to the future of the hypersurface $\gamma$ is finite, $\operatorname{Vol}(J^+(\gamma))<C$.

We first derive a future decay bound along a constant $u$ hypersurface for the function $\Omega^2(u,v)$ for $(u, v) \in \cB$.
Let \mbox{$x=(u_{fix}, v_{fix})$, $x \in \cB$}, 
then, from \eqref{v-neg} we can immediately see that  
\bea
\lb{lambda_comp}
\left.\log\left({\Omega^2(u_{fix},{v})}\right)\right|^{\bar{v}}_{v_{fix}}&=&\int\limits^{\bar{v}}_{v_{fix}}  \frac{1}{2r^2}\left(M-\frac{e^2}{r}\right) \md {v},\nonumber\\
&\stackrel{\eqref{lowerboundv}}{\leq}&- \beta[\bar{v}-v_{fix}].
\eea
It then immediately follows that 
\bea
\lb{omegaufix}
\Omega^2(\bar{u},\bar{v}){\leq} \Omega^2(\bar{u}, v_{fix}) e^{-\beta[\bar{v}-v_{fix}]}, \quad \mbox{for all $(\bar{u}, v_{fix}) \in \cB$ and $\bar{v}>v_{fix}$}.
\eea
Analogously, we obtain
\bea
\lb{omegavfix}
\Omega^2(\bar{u},v_{fix}){\leq} \Omega^2(u_{fix}, v_{fix}) e^{-\beta[\bar{u}-u_{fix}]}, \quad \mbox{for all $(\bar{u},v_{fix}) \in J^+(x)$},
\eea
and plugging \eqref{omegaufix} into \eqref{omegavfix} it yields
\bea
\lb{omegafix}
\Omega^2(\bar{u},\bar{v}){\leq} \Omega^2(u_{fix}, v_{fix}) e^{-\beta[\bar{u}-u_{fix}+\bar{v}-v_{fix}]}, \quad \mbox{for all $(\bar{u},\bar{v})  \in J^+(x)$}.
\eea

From \eqref{omegaufix} and \eqref{gamma} we obtain a relation for $\Omega^2(u,v)$ on the hypersurface $\gamma$ as follows
\bea
\lb{lambdalog}
\Omega^2(\bar{u}, \bar{v}) \leq \Omega^2(\bar{u}, v_{blue}(\bar{u})) e^{-\beta\alpha \log v_{\gamma}(\bar{u})}=\Omega^2(\bar{u}, v_{blue}(\bar{u})) {v_{\gamma}(\bar{u})}^{-\beta \alpha},\quad \mbox{for $(\bar{u}, \bar{v}) \in \gamma$}.
\eea
For $J^+(\gamma)$, using \eqref{omegaufix} we further get
\bea
\lb{spacevolumedecay}
\Omega^2(\bar{u}, \bar{v}) \leq C{v_{\gamma}(\bar{u})}^{-\beta \alpha} e^{-\beta\left[\bar{v}-v_{\gamma}(\bar{u})\right]}, \quad \mbox{for $(\bar{u}, \bar{v}) \in J^+(\gamma)$},
\eea
where we have used $\Omega^2(\bar{u}, v_{blue}(\bar{u}))\leq C$.
Moreover, we may think of a parameter $\bar{v}$ which determines the associated $u$ value via intersection with $\gamma$, we denote this value by the evaluation the function $u_{\gamma}(\bar{v})$ which was introduced in \eqref{notation_neu}, cf.~ Figure \ref{integralbild2}. 

Moreover, by \eqref{u-neg} we can also state 
\bea
\lb{spacevolumedecay_u}
\Omega^2(\bar{u}, \bar{v}) \leq C {|u_{\gamma}(\bar{v})|}^{-\beta \alpha} e^{\beta\left[u_{\gamma}(\bar{v})-\bar{u}\right]} \quad \mbox{for $(\bar{u}, \bar{v}) \in J^+(\gamma)$}.
\eea
Note that the choice \eqref{alpha} of $\alpha$ implies that $\beta \alpha>1$. From \eqref{spacevolumedecay_u}, the fact that 
$|u_{\gamma}(\bar{v})| \sim \bar{v}$, and the extra exponential factor, finiteness of the spacetime volume to the future of $\gamma$ follows, 
\bea
\lb{finitevol}
\operatorname{Vol}(J^+(\gamma))<C.
\eea
See also \cite{kommemi}.

\subsection{Bounding the bulk term $K^{S}$}
\lb{knsec}
To derive energy estimates in ${\cR_{IV}}$ we use the timelike vector field multiplier
\ben
\lb{{S}}
{S}=|u|^p\partial_u+v^p\partial_v,
\een
which we have given before in \eqref{N1}. 
The weights of ${S}$ are chosen such that they will allow us to derive pointwise estimates from energy estimates; see Section \ref{uni_bounded}.

In order to obtain our desired estimates
first of all we need a bound on the scalar current $K^{S}$, in terms of $J_{\mu}^{S}(\phi) n^{\mu}_{v=\bar{v}}$ and $J_{\mu}^{S}(\phi) n^{\mu}_{u=\bar{u}}$.
In the following we will bound the occurring $(u,v)$-dependent weight functions by functions that depend on either $u$ or $v$, respectively.
Plugging the vector field ${S}$, cf.~ \eqref{N1}, into \eqref{Kplug} of Appendix \ref{Kcurrents} we obtain
\bea
\lb{KN}
K^{S}=&-&\frac{2}{r}\left[ v^p +|u|^p\right] (\partial_u \phi \partial_v \phi)\nonumber\\
&-&\left[ \frac{\partial_u \Omega}{\Omega}|u|^p+\frac{ \partial_v \Omega}{\Omega}v^p+\frac{p}{2}(v^{p-1}+|u|^{p-1})\right] |\nabb \phi|^2.
\eea

Recall \eqref{lowerboundu} and \eqref{lowerboundv}. For large absolute values of $v$ and $u$ the first two terms multiplying the angular derivatives of $\phi$ dominate the last two terms, so in total the term multiplying the angular derivatives is always positive in $\cD^+(\gamma)$. Consequently we will be able to use this property to derive an inequality by using the divergence theorem in the proof of Proposition \ref{kastle}.
Let us therefore define
\bea
\lb{KNtilde}
\tilde{K}^{S}=&-&\frac{2}{r}\left[ v^p +|u|^p\right] (\partial_u \phi \partial_v \phi),
\eea
and state
\bea
\lb{knrelation}
-{K}^{S}\leq |\tilde{K}^{S}| \quad \mbox{for $v>\frac{p}{2\beta}$ and $|u|>\frac{p}{2\beta}$, cf.~ \eqref{lowerboundv}, \eqref{lowerboundu}}.
\eea
(Note that $\tilde{K}^{S}$ coincides with the bulk term for spherically symmetric $\phi$.) We have the following
\begin{lem}
\label{K_spher}
Let $\phi$ be an arbitrary function. Then, for all $v_*>2\alpha$ 
and all $\hat{v}>v_*$,
the integral over region \mbox{$\cR_{IV}=J^+(\gamma)\cap J^-(x)$} with \mbox{$x=(u_{\gamma}(v_*), \hat{v})$}, $x \in \cB$, cf.~ Figure \ref{RN_mit_u_v}, of the current $\tilde{K}^{S}$, defined by \eqref{KNtilde}, can be estimated by
\bea
\lb{Prop6.1}
\int\limits_{\cR_{IV}} |\tilde{K}^{S}| \dV \leq &\delta_1& \sup_{u_{\gamma}(\hat{v})\leq \bar{u}\leq u_{\gamma}(v_*)}\int\limits_{\left\lbrace  v_{\gamma}(\bar{u}) \leq v \leq \hat{v}\right\rbrace }  J_{\mu}^{S}(\phi) n^{\mu}_{u=\bar{u}}\dV_{u=\bar{u}}\nonumber \\
+ &\delta_2& \sup_{v_*\leq \bar{v} \leq \hat{v}}\int\limits_{\left\lbrace  u_{\gamma}(\hat{v}) \leq u \leq u_{\gamma}(\bar{v}) \right\rbrace }
J_{\mu}^{S}(\phi) n^{\mu}_{v=\bar{v}} \dV_{v=\bar{v}},
\eea
where $\delta_1$ and $\delta_2$ are positive constants, with $\delta_1\rightarrow 0$ and $\delta_2\rightarrow 0$ as $v_*\rightarrow \infty$.
\end{lem}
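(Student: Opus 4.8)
The plan is to bound $|\tilde{K}^{S}|$ pointwise by its cross term, split that by an (unweighted) arithmetic--geometric mean inequality, and then absorb each of the two resulting pieces into a single constant-$u$, respectively constant-$v$, flux, drawing all of the smallness from the pointwise $\Omega^2$-decay of Section \ref{finiteness}. Note that the equation is not used: this is a pointwise-times-geometry estimate valid for arbitrary $\phi$. From \eqref{KNtilde} and $\dV=\tfrac{\Omega^2}{2}r^2\,\md u\,\md v\,\md\sigma_{\mathbb S^2}$ one has $|\tilde{K}^{S}|\,\dV\le (v^p+|u|^p)\,|\partial_u\phi||\partial_v\phi|\,\Omega^2 r\,\md u\,\md v\,\md\sigma_{\mathbb S^2}$. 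Since $v>|u|$ throughout $\cD^+(\gamma)\supset\cR_{IV}$ I would bound $v^p+|u|^p\le 2v^p$, and $|\partial_u\phi||\partial_v\phi|\le\tfrac12[(\partial_u\phi)^2+(\partial_v\phi)^2]$ then reduces the estimate to controlling $A:=\int_{\cR_{IV}}v^p(\partial_v\phi)^2\,\Omega^2 r$ and $B:=\int_{\cR_{IV}}v^p(\partial_u\phi)^2\,\Omega^2 r$ separately. The two building blocks are that the flux $\int J_\mu^{S}n^\mu_{u=\bar u}\dV_{u=\bar u}$ controls $\int v^p(\partial_v\phi)^2 r^2\,\md v\,\md\sigma_{\mathbb S^2}$ and $\int J_\mu^{S}n^\mu_{v=\bar v}\dV_{v=\bar v}$ controls $\int |u|^p(\partial_u\phi)^2 r^2\,\md u\,\md\sigma_{\mathbb S^2}$, their angular parts being nonnegative on $\cD^+(\gamma)$, so it suffices to absorb $A$ and $B$ into these.

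For $A$ I would apply Fubini integrating in $v$ at fixed $\bar u$. Writing $\Omega^2 r=r^2(\Omega^2/r)$, using $r\ge r_-$ and the monotonicity \eqref{omegaufix} of $\Omega^2$ in $v$ (so that $\sup_{v\ge v_\gamma(\bar u)}\Omega^2(\bar u,v)=\Omega^2(\bar u,v_\gamma(\bar u))$), the inner integral is bounded by $r_-^{-1}\Omega^2(\bar u,v_\gamma(\bar u))\int_{v_\gamma(\bar u)}^{\hat v}v^p(\partial_v\phi)^2 r^2\,\md v\,\md\sigma_{\mathbb S^2}$, i.e.\ by $r_-^{-1}\Omega^2(\bar u,v_\gamma(\bar u))$ times the constant-$u$ flux. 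Factoring out the supremum over $\bar u$ leaves the residual $\delta_1=r_-^{-1}\int_{u_\gamma(\hat v)}^{u_\gamma(v_*)}\Omega^2(\bar u,v_\gamma(\bar u))\,\md\bar u$, which by \eqref{lambdalog} and $v_\gamma(\bar u)\sim|\bar u|$ is $\lesssim\int^{u_\gamma(v_*)}|\bar u|^{-\beta\alpha}\,\md\bar u\sim v_*^{1-\beta\alpha}$; this converges and tends to $0$ because $\beta\alpha>1$.

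The genuine obstacle is $B$, because the weight here is $v^p$ while the available flux carries $|u|^p$, and $v^p\ge|u|^p$ in $\cR_{IV}$, so the smallness cannot come from the weights and must be wrung out of $\Omega^2$. For this term I would integrate in $u$ at fixed $\bar v$ and write $v^p=(\bar v^p/|u|^p)\,|u|^p$ to reproduce the constant-$v$ flux, so that the surviving factor is $\sup_u(\bar v^p/|u|^p)(\Omega^2/r)$ over the slice of $\cR_{IV}$ at height $\bar v$. On that slice $|u|\ge|u_\gamma(v_*)|\sim v_*$, while \eqref{spacevolumedecay_u} together with $\bar u\ge u_\gamma(\bar v)$ gives $\Omega^2(\bar u,\bar v)\le C|u_\gamma(\bar v)|^{-\beta\alpha}\sim C\bar v^{-\beta\alpha}$, so this factor is $\lesssim r_-^{-1}v_*^{-p}\,\bar v^{\,p-\beta\alpha}$. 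After taking the supremum over $\bar v$ the residual is $\delta_2\sim v_*^{-p}\int_{v_*}^{\infty}\bar v^{\,p-\beta\alpha}\,\md\bar v\sim v_*^{1-\beta\alpha}$, uniformly in $\hat v$. The $\bar v$-integral converges and the exponent is negative precisely because $\beta\alpha>p+1$, i.e.\ by the choice \eqref{alpha} of $\alpha$ --- this is exactly the step where that hypothesis is needed --- whence $\delta_2\to 0$ as $v_*\to\infty$. Adding the bounds for $A$ and $B$ yields \eqref{Prop6.1} with $\delta_1,\delta_2\sim v_*^{1-\beta\alpha}\to 0$.
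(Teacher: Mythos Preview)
Your proof is correct and follows essentially the same route as the paper's. Both arguments apply the arithmetic--geometric mean inequality $2|\partial_u\phi\,\partial_v\phi|\le(\partial_u\phi)^2+(\partial_v\phi)^2$, foliate $\cR_{IV}$ by constant-$u$ (resp.\ constant-$v$) slices to factor out the fluxes, and then show the residual $\Omega^2$-weighted integrals are small using the pointwise bounds \eqref{lambdalog}--\eqref{spacevolumedecay_u} together with $\alpha\beta>p+1$. Your only departure is to simplify $v^p+|u|^p\le 2v^p$ at the outset (valid since $v>|u|$ on $J^+(\gamma)$), whereas the paper retains the factor $(1+|u|^p/v^p)$ in the $\delta_1$ integrand and $(1+v^p/|u|^p)$ in the $\delta_2$ integrand; this makes your $\delta_1$ estimate marginally cleaner (you need only $\alpha\beta>1$ there) but is otherwise cosmetic, and both arguments arrive at $\delta_1,\delta_2\lesssim v_*^{1-\beta\alpha}$.
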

{\em Remark.} In the proof of Proposition \ref{kastle} we will see that the above proposition determines $u_{\schere}$ of Theorem \ref{dashier}, depicted in Figure \ref{bnr}. We have to choose $u_{\schere}=u_{\gamma}(v_*)$, with $v_*$ such that $\delta_1$ is small.

\begin{proof}
Using the Cauchy-Schwarz inequality twice for the remaining part of the bulk term we obtain
\bea
|\tilde{K}^{S}|&\leq&\frac{1}{r}\left[ \left(1 +\frac{|u|^p}{v^p}\right){v^p}(\partial_v \phi)^2+ \left(1 +\frac{v^p}{|u|^p}\right){|u|^p}(\partial_u \phi)^2\right],
\eea
with the related volume element  
\bea
\dV&=&r^2{\frac{\Omega^2}{2}} \md u\md v\md \sigma^2_{\bbS}.
\eea
Note that the currents related to the vector field $S$ with their related volume elements are given by 
\bea
J_{\mu}^{S}(\phi) n^{\mu}_{v=\bar{v}}  &=&{\frac{2}{\Omega^2}}\left[|u|^p(\partial_u \phi)^2+ \frac{\Omega^2}{4}\bar{v}^p|\nabb \phi|^2\right], \quad \dV_{v=\bar{v}}=r^2{\frac{\Omega^2}{2}} \md \sigma_{\mathbb S^2}\md u,\\
J_{\mu}^{S}(\phi) n^{\mu}_{u=\bar{u}}  &=&{\frac{2}{\Omega^2}}\left[v^p(\partial_v \phi)^2+ \frac{\Omega^2}{4}|\bar{u}|^p|\nabb \phi|^2\right],\quad \dV_{u=\bar{u}}=r^2{\frac{\Omega^2}{2}} \md \sigma_{\mathbb S^2}\md v,
\eea
cf.~ Appendix \ref{Jcurrents}.
Taking the integral over the spacetime region yields
\bea
\lb{knboundhier}
\int\limits_ {\cR_{IV}}  |\tilde{K}^{S}(\phi)| \dV &\leq& \int\limits^{u_{\gamma}(v_*)}_{u_{\gamma}(\bar{v})}\int\limits_{\left\lbrace  v_* \leq v \leq \hat{v}\right\rbrace }\frac{ \Omega^2(\bar{u},\bar{v})}{2r}\left(1 +\frac{|{\bar{u}}|^p}{{\bar{v}}^p}\right)
J_{\mu}^{S}(\phi) n^{\mu}_{u=\bar{u}}\dV_{u=\bar{u}} \md \bar{u}\nonumber \\
&&+ \int\limits_{\bar{v}}^{\hat{v}}\int\limits_{\left\lbrace  u_{\gamma}(\hat{v}) \leq u \leq u_{\gamma}(v_*)\right\rbrace } \frac{ \Omega^2(\bar{u},\bar{v})}{2r}
\left(1 +\frac{{\bar{v}}^p}{|{\bar{u}}|^p}\right) J_{\mu}^{S}(\phi) n^{\mu}_{v=\bar{v}}\dV_{v=\bar{v}}\md \bar{v},
\eea
with $u_{\gamma}(v)$ in the integration limits as defined in \eqref{notation_neu}.

Note the following general relation for positive functions $f(\bar{u},\bar{v})$ and $g(\bar{u},\bar{v})$
\bea
\lb{uintegral}
\int\limits^{u_{\gamma}(v_*)}_{u_{\gamma}(\hat{v})}\int\limits_{\bar{v}}^{\hat{v}} f(\bar{u},\bar{v})g(\bar{u},\bar{v})\md \bar{v}\md \bar{u}
&\leq& \int\limits^{u_{\gamma}(v_*)}_{u_{\gamma}(\hat{v})}\int\limits_{\bar{v}}^{\hat{v}} \left[\sup_{v_{\gamma}(\bar{u})\leq \bar{v} \leq \hat{v}}f(\bar{u},\bar{v})\right]g(\bar{u},\bar{v})\md \bar{v}\md \bar{u}\nonumber\\
&\leq& \int\limits^{u_{\gamma}(v_*)}_{u_{\gamma}(\hat{v})}\left[ \sup_{v_{\gamma}(\bar{u})\leq \bar{v} \leq \hat{v}}f(\bar{u},\bar{v})\right]\int\limits_{\bar{v}}^{\hat{v}}g(\bar{u},\bar{v})\md \bar{v}\md \bar{u}\nonumber\\
&\leq& \int\limits^{u_{\gamma}(v_*)}_{u_{\gamma}(\hat{v})} \sup_{v_{\gamma}(\bar{u})\leq \bar{v} \leq \hat{v}}f(\bar{u},\bar{v})\md \bar{u}\sup_{u_{\gamma}(\hat{v})\leq \bar{u}\leq u_{\gamma}(v_*)}\int\limits_{\bar{v}}^{\hat{v}}g(\bar{u},\bar{v})\md \bar{v}\nonumber\\
&\leq& \int\limits^{u_{\gamma}(v_*)}_{u_{\gamma}(\hat{v})} \sup_{v_{\gamma}(\bar{u})\leq \bar{v} \leq \hat{v}}f(\bar{u},\bar{v})\md \bar{u}\sup_{u_{\gamma}(\hat{v})\leq \bar{u}\leq u_{\gamma}(v_*)}\int\limits_{v_*}^{\hat{v}}g(\bar{u},\bar{v})\md \bar{v}.
\eea
Similarly, it immediately follows that
\bea
\lb{vintegral}
\int\limits^{\hat{v}}_{v_*}\int\limits^{u_{\gamma}(v_*)}_{u_{\gamma}(\bar{v})} f(\bar{u},\bar{v})g(\bar{u},\bar{v})\md \bar{u}\md \bar{v}
&\leq&\int\limits^{\hat{v}}_{v_*}\sup_{u_{\gamma}(\bar{v}) \leq \bar{u}\leq u_{\gamma}(v_*)}f(\bar{u},\bar{v})\md \bar{v}\sup_{v_*\leq \bar{v} \leq \hat{v}}\int\limits^{u_{\gamma}(v_*)}_{u_{\gamma}(\bar{v})}g(\bar{u},\bar{v})\md \bar{u}.
\eea

Using \eqref{uintegral} and \eqref{vintegral} in \eqref{knboundhier} we obtain
\bea
\lb{131}
\int\limits_ {\cR_{IV}}  |\tilde{K}^{S}(\phi)| \dV&\leq& \int\limits^{u_{\gamma}(v_*)}_{u_{\gamma}(\hat{v})} \sup_{v_{\gamma}(\bar{u})\leq \bar{v} \leq \hat{v}}
\left[\frac{ \Omega^2(\bar{u},\bar{v})}{2r}\left(1+ \frac{|{\bar{u}}|^p}{{\bar{v}}^p}\right)\right]\md \bar{u} \sup_{u_{\gamma}(\hat{v})\leq \bar{u}\leq u_{\gamma}(v_*)}\int\limits_{\left\lbrace  v_* \leq v \leq \hat{v}\right\rbrace }J_{\mu}^{S}(\phi) n^{\mu}_{u=\bar{u}}\dV_{u=\bar{u}}
\nonumber\\
&&+  \int\limits^{\hat{v}}_{v_*}\sup_{u_{\gamma}(\bar{v}) \leq \bar{u}\leq u_{\gamma}(v_*)}\left[\frac{ \Omega^2(\bar{u},\bar{v})}{2r}
\left( 1 + \frac{{\bar{v}}^p}{|{\bar{u}}|^p}\right) \right]\md \bar{v}\sup_{v_*\leq \bar{v} \leq \hat{v}}\int\limits_{\left\lbrace  u_{\gamma}(\bar{v}) \leq u \leq u_{\gamma}(v_*)\right\rbrace } J_{\mu}^{S}(\phi) n^{\mu}_{v=\bar{v}}\dV_{v=\bar{v}}.\nonumber\\
&&
\eea
It remains to show finiteness and smallness of \mbox{$\int\limits^{u_{\gamma}(v_*)}_{u_{\gamma}(\hat{v})} \sup_{v_{\gamma}(\bar{u})\leq \bar{v} \leq \hat{v}}
\left[\frac{ \Omega^2(\bar{u},\bar{v})}{2r}\left(1+ \frac{|{\bar{u}}|^p}{{\bar{v}}^p}\right)\right]\md \bar{u} $} and \mbox{$\int\limits^{\hat{v}}_{v_{\gamma}(\bar{u})}\sup_{u_{\gamma}(\bar{v}) \leq \bar{u}\leq u_{\gamma}(v_*)}\left[\frac{ \Omega^2(\bar{u},\bar{v})}{2r}
\left( 1 + \frac{{\bar{v}}^p}{|{\bar{u}}|^p}\right) \right]\md \bar{v}$}.
Earlier we obtained the relation \eqref{spacevolumedecay_u} for $\Omega^2$ in region $\cR_{IV}$. Therefore, we can write
\bea
 \int\limits_{u_{\gamma}(\hat{v})}^{u_{\gamma}(v_*)}
\sup_{v_{\gamma}(\bar{u})\leq \bar{v} \leq \hat{v}}\left[\frac{ \Omega^2(\bar{u},\bar{v})}{2r}
\left( 1 + \frac{|{\bar{u}}|^p}{{\bar{v}}^p}\right) \right]\md \bar{u}
&\leq& C \int\limits_{u_{\gamma}(\hat{v})}^{u_{\gamma}(v_*)}\sup_{v_{\gamma}(\bar{u})\leq \bar{v} \leq \hat{v}}\left[{|u_{\gamma}(\bar{v})|}^{-\beta \alpha} e^{\beta\left[u_{\gamma}(\bar{v})-\bar{u}\right]}\left( 1 + \frac{|{\bar{u}}|^p}{{\bar{v}}^p}\right)\right] \md \bar{u}\nonumber\\
&\leq& \tilde{C} \int\limits_{u_{\gamma}(\hat{v})}^{u_{\gamma}(v_*)}{|\bar{u}|}^{-\beta \alpha} \left( 1 + \frac{{|\bar{u}|}^p}{v_*^p}\right) \md \bar{u}\nonumber\\
\lb{integral}
&\leq& \tilde{\tilde{C}} \int\limits_{u_{\gamma}(\hat{v})}^{u_{\gamma}(v_*)}{|\bar{u}|}^{-\beta \alpha+p}\md \bar{u}\nonumber\\
&\leq& \frac{\tilde{\tilde{C}}}{|-\beta\alpha+p+1|}\left[{|\bar{u}|}^{-\beta\alpha+p+1}\right]_{u_{\gamma}(\hat{v})}^{u_{\gamma}(v_*)}\nonumber\\
&\leq& \delta_1, 
\eea
where $\delta_1\rightarrow 0$ for $|u_{\gamma}(v_*)|\rightarrow -\infty$ and thus for $v_*\rightarrow \infty$. Note that we have here used \eqref{alpha}.

For finiteness of the second term in \eqref{131} we follow the same strategy and use \eqref{spacevolumedecay} for the second term to obtain
\bea
 \int\limits^{\hat{v}}_{v_*}\sup_{u_{\gamma}(\bar{v}) \leq \bar{u}\leq u_{\gamma}(v_*)}\left[\frac{ \Omega^2(\bar{u},\bar{v})}{2r}
\left( 1 + \frac{{\bar{v}}^p}{|{\bar{u}}|^p}\right) \right]\md \bar{v}
&\leq& C \int\limits^{\hat{v}}_{v_*}\sup_{u_{\gamma}(\bar{v}) \leq \bar{u}\leq u_{\gamma}(v_*)}
\left[\bar{v}^{-\beta\alpha}\left( 1 + \frac{{\bar{v}}^p}{|{\bar{u}}|^p}\right)\right]
\md \bar{v}\nonumber\\
&\leq& \tilde{C}\int\limits^{\hat{v}}_{v_*}{\bar{v}}^{-\beta\alpha}\left( 1 + \frac{{\bar{v}}^p}{|u_{\gamma}(v_*)|^p}\right)\md \bar{v}\nonumber\\
&\leq& \frac{\tilde{\tilde{C}}}{|-\beta\alpha+p+1|}\left[{|\bar{v}|}^{-\beta\alpha+p+1}\right]^{\hat{v}}_{v_*}\nonumber\\
&\leq& \delta_2, 
\eea
where $\delta_2 \rightarrow 0$ for $v_* \rightarrow \infty$.
Therefore, we obtain the statement of Lemma \ref{K_spher}.
\end{proof}

\subsection{Energy estimates from $\gamma$ up to $\cC\cH^+$ in the neighbourhood of $i^+$}
\lb{cauchy}
Now we come to the actual proof of weighted energy boundedness up to the Cauchy horizon.

\begin{prop}
\label{kastle}
Let $\phi$ be as in Theorem \ref{anfang} and $p$ as in \eqref{waspist}. Then, for $u_{\schere}$ sufficiently close to $-\infty$, for all $v_*\geq v_{\gamma}(u_{\schere})$ and $\hat{v}>v_*$ 
\bea
\lb{propgamma}
\int\limits_ {\left\lbrace u_{\gamma}(\hat{v})\leq u \leq u_{\gamma}(v_*)\right\rbrace }  J_{\mu}^{S}(\phi) n^{\mu}_{v=\hat{v}} \dV_{v=\hat{v}}
&+&\int\limits_ {\left\lbrace  v_* \leq v \leq \hat{v}\right\rbrace }  J_{\mu}^{S}(\phi) n^{\mu}_{u=u_{\gamma}(v_*)} \dV_{u=u_{\gamma}(v_*)}\leq C {v_*}^{-1-2\delta+p},  
\eea
where $C$ is a positive constant depending on $C_{0}$ of Theorem \ref{anfang} and $D_{0}(u_{\diamond}, 1)$ of Proposition \ref{initialdataprop}, where $u_{\diamond}$ is defined by $r_{red}=r(u_{\diamond},1)$.
\end{prop}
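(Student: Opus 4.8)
The plan is to run the energy identity for the vector field $S$ of \eqref{N1} over $\cR_{IV}=J^+(\gamma)\cap J^-(x)$, absorb the bulk term using the smallness supplied by Lemma \ref{K_spher}, and insert the weighted $\gamma$-flux of Corollary \ref{cor2} as the data term. First I would note that $S$ is future directed timelike throughout $\cD^+(\gamma)$, so the fluxes $J^S_\mu n^\mu_{u=\mathrm{const}}$ and $J^S_\mu n^\mu_{v=\mathrm{const}}$ through the null boundary segments are nonnegative, and that by \eqref{KN} the coefficient of $|\nabb\phi|^2$ in $K^S$ has the good sign on $\cD^+(\gamma)$ once $v,|u|>\tfrac{p}{2\beta}$, so that \eqref{knrelation} gives $-K^S\le|\tilde K^S|$. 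Applying the divergence theorem in $\cR_{IV}$ and discarding the nonnegative angular part of the bulk then yields
\[
\int_{v=\hat v}\! J^S_\mu n^\mu_{v=\hat v}\,\dV_{v=\hat v}+\int_{u=u_\gamma(v_*)}\! J^S_\mu n^\mu_{u=u_\gamma(v_*)}\,\dV_{u=u_\gamma(v_*)}\le\int_\gamma J^S_\mu n^\mu_\gamma\,\dV_\gamma+\int_{\cR_{IV}}|\tilde K^S|\,\dV,
\]
the left-hand side being exactly the two terms appearing in \eqref{propgamma}.

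The core of the argument is the absorption of $\int_{\cR_{IV}}|\tilde K^S|\,\dV$. For this I would run the same identity not only on $\cR_{IV}$ but on every subregion $J^+(\gamma)\cap J^-(y)$, $y\in\cR_{IV}$, and introduce the supremal null fluxes
\[
A=\sup_{\bar u}\int J^S_\mu n^\mu_{u=\bar u}\,\dV_{u=\bar u},\qquad B=\sup_{\bar v}\int J^S_\mu n^\mu_{v=\bar v}\,\dV_{v=\bar v},
\]
the suprema being taken over the $u=\mathrm{const}$ and $v=\mathrm{const}$ boundary slices of these subregions. For finite $\hat v$ the region $\cR_{IV}$ is relatively compact in the interior, so $A$ and $B$ are a priori finite and absorption is legitimate. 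Since each subregion lies in $\cR_{IV}$, Lemma \ref{K_spher} bounds every subregion bulk by $\delta_1 A+\delta_2 B$; running the identity up to an arbitrary slice, discarding the remaining positive boundary fluxes, and taking suprema gives $A\le\int_\gamma J^S_\mu n^\mu_\gamma\,\dV_\gamma+\delta_1 A+\delta_2 B$ together with the analogous inequality for $B$. Adding the two and choosing $u_{\schere}$, hence $v_*$, close enough to $-\infty$ that $\delta_1+\delta_2<1$ — possible precisely because $\delta_1,\delta_2\to0$ as $v_*\to\infty$ by Lemma \ref{K_spher} — lets me absorb, obtaining $A+B\le C\int_\gamma J^S_\mu n^\mu_\gamma\,\dV_\gamma$ with a constant uniform in $\hat v$.

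It remains to control the data term on $\gamma$. There $u+v=\alpha\log v+2r^\star_{blue}$, so $|u|\sim v$ and hence $S=|u|^p\partial_u+v^p\partial_v\sim v^p(\partial_u+\partial_v)=v^p r^{-q}S_0$; since $r$ stays in the compact range $[r_-,r_{blue}]$ on $\cB$ the factor $r^{-q}$ is bounded, and therefore $J^S_\mu n^\mu_\gamma\le C\,v^p J^{S_0}_\mu n^\mu_\gamma$ pointwise on $\gamma$. Integrating and invoking Corollary \ref{cor2}, whose applicability rests on $1<p\le1+2\delta$ from \eqref{waspist}, bounds $\int_\gamma J^S_\mu n^\mu_\gamma\,\dV_\gamma\le C v_*^{-1-2\delta+p}$. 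Since the two fluxes on the left of \eqref{propgamma} are particular members of the families defining $A$ and $B$, combining this with the absorption bound yields \eqref{propgamma}.

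I expect the main obstacle to be exactly the absorption step. One must run the energy identity over the whole monotone family of subregions, verify the a priori finiteness of $A$ and $B$ so that the small terms may legitimately be moved to the left, and — most importantly — check that the resulting bound is uniform as $\hat v\to\infty$, since this uniformity is what allows the estimate to be propagated all the way to $\cC\cH^+$. That uniformity hinges on $\delta_1,\delta_2$ depending only on $v_*$, which in turn rests on the pointwise decay of $\Omega^2$ in $J^+(\gamma)$ from Section \ref{finiteness} and on the choice $\beta\alpha>1$ secured by \eqref{alpha}.
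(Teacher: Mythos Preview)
Your proposal is correct and follows essentially the same route as the paper: divergence theorem in $\cR_{IV}$, the sign observation \eqref{knrelation}, absorption of the bulk via Lemma \ref{K_spher} after passing to suprema over subregions, and then comparison of $J^S$ with $v^p J^{S_0}$ on $\gamma$ (the paper uses the cruder $|u|^p\le v^p$ on $\cD^+(\gamma)$ rather than your $|u|\sim v$) together with Corollary \ref{cor2}. Your explicit remarks on the a priori finiteness of $A,B$ for finite $\hat v$ and on the uniformity of the absorption constant in $\hat v$ are points the paper leaves implicit.
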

{\em Remark.} Refer to \eqref{notation_neu} for the definition of $u_{\gamma}(v)$ and see Figure \ref{integralbild2} for further clarification.

\begin{proof}
In Section \ref{blueshift1}, Corollary \ref{cor2}, we have obtained the global estimate \eqref{Nstargammaesti} for the weighted ${S_0}$ current which follows from Proposition \ref{to_gamma}. Recall that in $\cD^+(\gamma)$ we have $|u|^p\leq v^p$, cf.~ Section \ref{gamma_curve}, which immediately leads to
\bea
\lb{vergleichN_Nstar}
\int\limits_ {\left\lbrace  v_* \leq v \leq \hat{v} \right\rbrace } J_{\mu}^{{S}}(\phi) n^{\mu}_{\gamma} \dV_{\gamma}\leq \tilde{C}\int\limits_ {\left\lbrace  v_* \leq v \leq \hat{v} \right\rbrace }v^p J_{\mu}^{{S_0}}(\phi) n^{\mu}_{\gamma} \dV_{\gamma},
\eea
cf.~ Appendix \ref{Jcurrents} for explicit expressions of $J_{\mu}^{{S_0}}(\phi) n^{\mu}_{\gamma}$ and $J_{\mu}^{{S}}(\phi) n^{\mu}_{\gamma} $.
From \eqref{vergleichN_Nstar} we see that 
\bea
\lb{hypo}
\int\limits_ {\left\lbrace  v_* \leq v \leq \hat{v} \right\rbrace } J_{\mu}^{S}(\phi) n^{\mu}_{\gamma} \dV_{\gamma} \leq C {v_*}^{-1-2\delta+p} \quad \mbox{for all $v_*>\alpha$ and $p$ as in \eqref{waspist},}
\eea
is implied by Corollary \ref{cor2}.

Let $v_*>2\alpha$ and $\hat{v}>v_*$.
In order to obtain \eqref{propgamma} we consider a region \mbox{$\cR_{IV}=J^+(\gamma)\cap J^-(x)$} with \mbox{$x=(u_{\gamma}(v_*), \hat{v})$}, $x \in \cB$, as shown in
Figure \ref{RN_mit_u_v}. Applying the divergence theorem we obtain
\bea
&&\int\limits_ {\cR_{IV}}  K^{S}(\phi) \dV
+\int\limits_ {\left\lbrace  u_{\gamma}(\hat{v})\leq u \leq u_{\gamma}(v_*)\right\rbrace }  J_{\mu}^{S}(\phi) n^{\mu}_{v=\hat{v}} \dV_{v=\hat{v}}
+\int\limits_ {\left\lbrace  v_* \leq v \leq \hat{v}\right\rbrace }  J_{\mu}^{S}(\phi) n^{\mu}_{u=u_{\gamma}(v_*)} \dV_{u=u_{\gamma}(v_*)}\nonumber\\
&=&
 \int\limits_{\left\lbrace  v_* \leq v \leq \hat{v}\right\rbrace }   J_{\mu}^{S}(\phi) n^{\mu}_{\gamma} \dV_{\gamma}.
\eea
In Section \ref{knsec} we found that the angular part of $K^{S}(\phi)$ is positive in $\cR_{IV}$ and we called the remaining part $\tilde{K}^{S}(\phi) $ given in \eqref{KNtilde}. Using \eqref{knrelation} we can therefore write
\bea
\lb{cons_law}
&&\int\limits_ {\left\lbrace  u_{\gamma}(\hat{v})\leq u \leq u_{\gamma}(v_*)\right\rbrace }  J_{\mu}^{S}(\phi) n^{\mu}_{v=\hat{v}} \dV_{v=\hat{v}}
+\int\limits_ {\left\lbrace  v_* \leq v \leq \hat{v}\right\rbrace }  J_{\mu}^{S}(\phi) n^{\mu}_{u=u_{\gamma}(v_*)} \dV_{u=u_{\gamma}(v_*)}\nonumber\\
&\leq& \int\limits_ {\cR_{IV}}  |\tilde{K}^{S}(\phi)| \dV
+ \int\limits_{\left\lbrace  v_* \leq v \leq \hat{v}\right\rbrace }   J_{\mu}^{S}(\phi) n^{\mu}_{\gamma} \dV_{\gamma}.
\eea
Using Lemma \ref{K_spher} we obtain
\bea
\lb{start}
&&\int\limits_ {\left\lbrace  u_{\gamma}(\hat{v})\leq u \leq u_{\gamma}(v_*)\right\rbrace }  J_{\mu}^{S}(\phi) n^{\mu}_{v=\hat{v}} \dV_{v=\hat{v}}
+\int\limits_ {\left\lbrace  v_* \leq v \leq \hat{v}\right\rbrace }  J_{\mu}^{S}(\phi) n^{\mu}_{u=u_{\gamma}(v_*)} \dV_{u=u_{\gamma}(v_*)}\nonumber\\
&\leq&\delta_1 \sup_{ u_{\gamma}(\hat{v})\leq \bar{u} \leq u_{\gamma}(v_*)}\int\limits_ {\left\lbrace  v_{\gamma}(\bar{u}) \leq v \leq \hat{v} \right\rbrace }  J_{\mu}^{S}(\phi) n^{\mu}_{u=\bar{u}}\dV_{u=\bar{u}}+  \delta_2 \sup_{v_*\leq \bar{v} \leq \hat{v}}\int\limits_ {\left\lbrace  u_{\gamma}(\hat{v}) \leq u \leq u_{\gamma}(\bar{v}) \right\rbrace } J_{\mu}^{S}(\phi) n^{\mu}_{v=\bar{v}} \dV_{v=\bar{v}}\nonumber\\
&&+ \int\limits_{\left\lbrace  v_* \leq v \leq \hat{v}\right\rbrace }   J_{\mu}^{S}(\phi) n^{\mu}_{\gamma} \dV_{\gamma}.
\eea
Repeating estimate \eqref{start} with $\bar{u}$, $\bar{v}$ in place of $u_{\gamma}(v_*)$, $\hat{v}$ and taking the supremum we have
\bea
\lb{140}
&&\sup_{u_{\gamma}(\hat{v})\leq \bar{u}\leq u_{\gamma}(v_*)}\int\limits_ {\left\lbrace  v_{\gamma}(\bar{u}) \leq v \leq \hat{v}\right\rbrace }  J_{\mu}^{S}(\phi) n^{\mu}_{u=\bar{u}}\dV_{u=\bar{u}}
+ \sup_{v_*\leq \bar{v} \leq \hat{v}}\int\limits_ {\left\lbrace  u_{\gamma}(\hat{v})\leq u \leq u_{\gamma}(\bar{v})\right\rbrace } J_{\mu}^{S}(\phi) n^{\mu}_{v=\bar{v}} \dV_{v=\bar{v}}\nonumber\\
&\leq&\delta_1 \sup_{ u_{\gamma}(\hat{v})\leq \bar{u} \leq u_{\gamma}(v_*)}\int\limits_ {\left\lbrace  v_{\gamma}(\bar{u}) \leq v \leq \hat{v}\right\rbrace }   J_{\mu}^{S}(\phi) n^{\mu}_{u=\bar{u}}\dV_{u=\bar{u}} +  \delta_2 \sup_{v_*\leq \bar{v} \leq \hat{v}}\int\limits_ {\left\lbrace  u_{\gamma}(\hat{v}) \leq u \leq u_{\gamma}(\bar{v}) \right\rbrace }J_{\mu}^{S}(\phi) n^{\mu}_{v=\bar{v}} \dV_{v=\bar{v}}\nonumber\\
&&+ \int\limits_{\left\lbrace  v_* \leq v \leq \hat{v}\right\rbrace }   J_{\mu}^{S}(\phi) n^{\mu}_{\gamma} \dV_{\gamma}.
\eea
Recalling $\delta_1\rightarrow 0$, $\delta_2\rightarrow 0$ as $v_*\rightarrow \infty$, choose $u_{\schere}$ sufficiently close to $-\infty$, such that for $v_*>v_{\gamma}(u_{\schere})$, say 
\bea
 \delta_1, \delta_2 \leq \frac12
\eea
holds.
The conclusion of Proposition \ref{kastle} then follows by absorbing the first two terms of the right hand side of \eqref{140} by the two terms on the left and estimating the third from \eqref{hypo}.
\end{proof}

\subsection{Energy estimates globally in the rectangle $\Xi$ up to $\cC\cH^+$ in the neighbourhood of $i^+$}
\lb{allxi}
In the previous Sections \ref{first_section} to \ref{cauchy} we have proven energy estimates for each region with specific properties separately. Putting all results together we can state the following proposition. 
\begin{prop}
\label{gesamtesti}
Let $\phi$ be as in Theorem \ref{anfang} and $p$ as in \eqref{waspist}. Then, for $u_{\schere}$ sufficiently close to $-\infty$, for all $v_*>1$, $\hat{v}>v_*$ and $\tilde{u} \in (-\infty, u_{\schere})$.
\bea
\lb{propgamma2}
\int\limits_ {\left\lbrace  v_* \leq v \leq \hat{v}\right\rbrace }  J_{\mu}^{S}(\phi) n^{\mu}_{u=\tilde{u}} \dV_{u=\tilde{u}}\leq {C} {v_*}^{-1-2\delta+p},
\eea
where ${C}$ is a positive constant depending on $C_0$ of Theorem \ref{anfang} and $D_{0}(u_{\diamond},1)$ of Proposition \ref{initialdataprop}, where $u_{\diamond}$ is defined by $r_{red}=r(u_{\diamond},1)$.
\end{prop}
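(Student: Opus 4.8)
The plan is to obtain Proposition \ref{gesamtesti} by gluing the per-region flux estimates of Sections \ref{first_section}--\ref{cauchy} along the fixed ingoing segment $\{u=\tilde u\}\cap\{v_*\le v\le\hat v\}$. I fix $\tilde u\in(-\infty,u_\schere)$ and $v_*>1$, $\hat v>v_*$. Since $r(\tilde u,v)$ decreases monotonically in $v$, this segment meets the interfaces $r=r_{red}$, $r=r_{blue}$ and $\gamma$ at the increasing values $v_{red}(\tilde u)<v_{blue}(\tilde u)<v_{\gamma}(\tilde u)$ of \eqref{notation_neu}. Accordingly I split $\int_{\{v_*\le v\le\hat v\}}J^S_\mu(\phi)n^\mu_{u=\tilde u}\dV_{u=\tilde u}$ into the four sub-integrals supported in $\cR$, $\cN$, $J^-(\gamma)\cap\cB$ and $J^+(\gamma)\cap\cB$, and estimate each by the matching result already proved: Corollary \ref{cor5.2} in $\cR$, Corollary \ref{cor6.1} in $\cN$, Corollary \ref{cor7.1} in $J^-(\gamma)\cap\cB$, and Proposition \ref{kastle} in $J^+(\gamma)\cap\cB$.

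The second step is to reconcile the different multipliers, since these results are stated for the currents $J^N$ (in $\cR,\cN$) and $v^p J^{S_0}$ (in $J^-(\gamma)\cap\cB$) rather than for $J^S$, whose flux on the slice reads $J^S_\mu(\phi)n^\mu_{u=\tilde u}=\tfrac{2}{\Omega^2}v^p(\partial_v\phi)^2+\tfrac{|\tilde u|^p}{2}|\nabb\phi|^2$. In $\cN$ and $J^-(\gamma)\cap\cB$ the quantities $v_{red}(\tilde u),v_{blue}(\tilde u),v_{\gamma}(\tilde u)$ are mutually comparable and $v\sim|u|$, so $r^q$ is comparable to a constant and $J^S$ is comparable to $v^p J^{S_0}$ (respectively to $v^p J^N$); Corollaries \ref{cor6.1} and \ref{cor7.1} then contribute $\lesssim v_*^{-2-2\delta+p}$ and $\lesssim v_*^{-1-2\delta+p}$. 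In $\cR$ the $v$-extent of the slice grows as $\tilde u\to-\infty$, so I would instead sum Corollary \ref{cor5.2} over consecutive unit $v$-intervals; passing from $J^N$ to $J^S$ costs a factor $v^p$, and the resulting series $\sum_{v_0\ge v_*}v_0^{\,p-2-2\delta}$ converges, with sum $\lesssim v_*^{-1-2\delta+p}$, precisely because $p\le 1+2\delta$ by \eqref{waspist}. This borderline region is what fixes the rate.

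For the piece in $J^+(\gamma)\cap\cB$ I invoke Proposition \ref{kastle} with base value $v_{\gamma}(\tilde u)$, so that $u_{\gamma}(v_{\gamma}(\tilde u))=\tilde u$; this controls $\int_{\{v_{\gamma}(\tilde u)\le v\le\hat v\}}J^S_\mu(\phi)n^\mu_{u=\tilde u}\dV_{u=\tilde u}$ by $C\,(v_{\gamma}(\tilde u))^{-1-2\delta+p}$ (the hypothesis $v_{\gamma}(\tilde u)>v_{\gamma}(u_\schere)$ holding since $\tilde u<u_\schere$). Whenever $(\tilde u,v_*)$ lies to the past of $\gamma$ one has $v_{\gamma}(\tilde u)\ge v_*$, and because the exponent $-1-2\delta+p\le0$ this contribution is dominated by the target $v_*^{-1-2\delta+p}$; the same monotonicity shows each of the four pieces is $\lesssim v_*^{-1-2\delta+p}$, using in every region that the local $v$-scale is at least $v_*$. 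Summing and recalling that all constants depend only on $C_0$ of Theorem \ref{anfang} and $D_0(u_\diamond,1)$ of Proposition \ref{initialdataprop} gives \eqref{propgamma2}.

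The main obstacle is the remaining configuration $v_*>v_{\gamma}(\tilde u)$, in which the whole slice already lies in $J^+(\gamma)$ so that the first three pieces are empty. Here bare monotonicity from $\gamma$ only yields the $\tilde u$-dependent constant $(v_{\gamma}(\tilde u))^{-1-2\delta+p}$, which is weaker than the claimed $v_*^{-1-2\delta+p}$. To recover the sharp weight I would re-run the coupled divergence-theorem estimate behind Proposition \ref{kastle} — the simultaneous supremum bound \eqref{140} on the two flux families — over the truncated diamond $\{u_{\gamma}(v_*)\le u\le\tilde u\}\cap\{v_*\le v\le\hat v\}\subset J^+(\gamma)$, taking the flux through the edge $u=u_{\gamma}(v_*)$ from Proposition \ref{kastle} with base $v_*$ (which does carry the sharp factor $v_*^{-1-2\delta+p}$) and absorbing the indefinite bulk $\tilde K^S$ by Lemma \ref{K_spher}. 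The delicate point, where I expect the argument to need the most care, is the flux through the bottom edge $v=v_*$, for which one must exploit the pointwise decay \eqref{spacevolumedecay}--\eqref{spacevolumedecay_u} of $\Omega^2$; for the mere boundedness of $\int_{v_{fix}}^{\infty}$ required in Theorem \ref{energythm} the cruder monotone bound already suffices.
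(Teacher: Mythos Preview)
Your decomposition into $\cR$, $\cN$, $J^-(\gamma)\cap\cB$ and $J^+(\gamma)\cap\cB$, together with the application of Corollaries \ref{cor5.2}, \ref{cor6.1}, \ref{cor7.1} and Proposition \ref{kastle} to the respective pieces, is exactly the paper's proof, which states this in a single sentence and adds only the parenthetical remark that summing Corollary \ref{cor5.2} costs one polynomial power. Your version is considerably more careful: you spell out the comparison of the multipliers $N$, $S_0$, $S$ on the slice, and you correctly flag the configuration $v_*>v_\gamma(\tilde u)$ as requiring separate attention --- the paper does not address it explicitly, and direct application of Proposition \ref{kastle} with base $v_\gamma(\tilde u)$ indeed yields only the $\tilde u$-dependent constant $(v_\gamma(\tilde u))^{-1-2\delta+p}$. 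Your closing observation, that for Corollary \ref{endeacht} and hence Theorem \ref{energythm} only uniform boundedness of the flux is actually used, so the cruder monotone bound from $\gamma$ already suffices, is the right resolution and is implicitly how the paper proceeds.

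One small correction: the series $\sum_{v_0\ge v_*} v_0^{\,p-2-2\delta}$ converges (with sum $\lesssim v_*^{-1-2\delta+p}$) only when $p<1+2\delta$ strictly; at the endpoint $p=1+2\delta$ it is the harmonic tail and diverges as the upper limit $v_{red}(\tilde u)\sim|\tilde u|\to\infty$. Since only $p>1$ is ever used downstream, you should simply take $p$ strictly below $1+2\delta$.
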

\begin{proof}
First of all we partition the integral of the statement into a sum of integrals of the different regions. That is to say
\ben
\int\limits_ {\left\lbrace  v_* \leq v \leq \hat{v}\right\rbrace }  J_{\mu}^{S}(\phi) n^{\mu}_{u=\tilde{u}} \dV_{u=\tilde{u}}&=&\int\limits_ {{\left\lbrace  v_* \leq v \leq \hat{v}\right\rbrace } \cap \cR} J_{\mu}^{S}(\phi) n^{\mu}_{u=\tilde{u}} \dV_{u=\tilde{u}}
+\int\limits_ {{\left\lbrace  v_* \leq v \leq \hat{v}\right\rbrace } \cap \cN} J_{\mu}^{S}(\phi) n^{\mu}_{u=\tilde{u}} \dV_{u=\tilde{u}}\\
&&+\int\limits_ {{\left\lbrace  v_* \leq v \leq \hat{v}\right\rbrace } \cap J^-(\gamma)\cap\cB} J_{\mu}^{S}(\phi) n^{\mu}_{u=\tilde{u}} \dV_{u=\tilde{u}}
+\int\limits_ {{\left\lbrace  v_* \leq v \leq \hat{v}\right\rbrace } \cap J^+(\gamma)\cap\cB} J_{\mu}^{S}(\phi) n^{\mu}_{u=\tilde{u}} \dV_{u=\tilde{u}}.
\een
For the integral in $\cR$ and the integral in $\cN$ we use Corollaries \ref{cor5.2} and \ref{cor6.1}. (Note that the former has to be summed resulting in the loss of one polynomial power.)
Further, for the integral in region $J^-(\gamma)\cap\cB$ we apply Corollary \ref{cor7.1} and for the integral in region $J^+(\gamma)\cap\cB$ we use Proposition \ref{kastle}. 
Putting all this together we arrive at the conclusion of Proposition \ref{gesamtesti}.
\end{proof}

In particular, we have
\begin{cor}
\label{endeacht}
Let $\phi$ be as in Theorem \ref{anfang} and $p$ as in \eqref{waspist}. Then, for $u_{\schere}$ sufficiently close to $-\infty$, for all \mbox{$v_{fix}\geq1$}, and $\tilde{u} \in (-\infty, u_{\schere})$,
\bea
\lb{propgamma2}
&&\int\limits_{\bbS^2}\int\limits^{\infty}_{v_{fix}}\left[ v^p (\partial_v \phi)^2(\tilde{u}, v) + |\nabb \phi|^2(\tilde{u}, v) \right]r^2\md v\md \sigma_{\mathbb S^2}\leq C,
\eea
where $C$ is a positive constant dependent on $C_{0}$ of Theorem \ref{anfang} and $D_{0}(u_{\diamond},1)$ of Proposition \ref{initialdataprop}, where $u_{\diamond}$ is defined by $r_{red}=r(u_{\diamond},1)$.
\end{cor}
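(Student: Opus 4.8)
The plan is to obtain Corollary \ref{endeacht} as the $\hat v\to\infty$, integrated version of the weighted flux bound already proven in Proposition \ref{gesamtesti}; note that \eqref{propgamma2} is exactly the restriction of \eqref{energy1} to the rectangle $\Xi$, expressed through the multiplier $S$. First I would fix $\tilde u\in(-\infty,u_{\schere})$ and $v_{fix}\geq 1$ and apply Proposition \ref{gesamtesti} with $v_*=v_{fix}$ to the family of truncated slabs $\{v_{fix}\leq v\leq\hat v\}$, which gives
\[
\int\limits_{\{v_{fix}\leq v\leq\hat v\}} J_\mu^{S}(\phi)\,n^\mu_{u=\tilde u}\,\dV_{u=\tilde u}\ \leq\ C\,v_{fix}^{-1-2\delta+p}
\]
for every $\hat v>v_{fix}$, with the right-hand side independent of $\hat v$.

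The second step is to pass to the limit $\hat v\to\infty$. What makes this legitimate is that the integrand is pointwise nonnegative: since the weights $|u|^p$ and $v^p$ are positive, $S$ is future-directed timelike throughout $\cD^+(\gamma)$ (indeed $g(S,S)=-\Omega^2|u|^pv^p<0$), and $\{u=\tilde u\}$ is null, so $J_\mu^{S}(\phi)\,n^\mu_{u=\tilde u}\geq 0$. Equivalently, the explicit expression recorded in the proof of Lemma \ref{K_spher} (see also Appendix \ref{Jcurrents}),
\[
J_\mu^{S}(\phi)\,n^\mu_{u=\tilde u}\,\dV_{u=\tilde u}=\left[v^p(\partial_v\phi)^2+\frac{\Omega^2}{4}|\tilde u|^p|\nabb\phi|^2\right]r^2\,\md v\,\md\sigma_{\bbS^2},
\]
is manifestly a sum of squares. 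Hence the truncated integrals increase monotonically in $\hat v$ while staying uniformly bounded, so by monotone convergence the limit over $\{v_{fix}\leq v<\infty\}$ exists and inherits the bound $C\,v_{fix}^{-1-2\delta+p}$. Invoking then $v_{fix}\geq 1$ together with $p\leq 1+2\delta$ from \eqref{waspist}, the exponent $-1-2\delta+p$ is nonpositive, whence $v_{fix}^{-1-2\delta+p}\leq 1$ and the right-hand side is controlled by a data-dependent constant $C$. This already produces the weighted $v^p(\partial_v\phi)^2\,r^2$ contribution in the stated form; the $v_{fix}=1$ endpoint, excluded from the hypothesis $v_*>1$ of Proposition \ref{gesamtesti}, is recovered by letting $v_*\downarrow 1$ and using the same monotonicity.

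The delicate point I expect to be the genuine obstacle is the angular term. The null flux controls $\frac{\Omega^2}{4}|\tilde u|^p|\nabb\phi|^2\,r^2$ rather than $|\nabb\phi|^2\,r^2$ outright, and $\Omega^2$ degenerates toward $\cC\cH^+$ by \eqref{spacevolumedecay}. I would therefore split the ray $\{u=\tilde u\}$ at $\gamma$: on the portion lying in $\cR\cup\cN\cup(J^-(\gamma)\cap\cB)$ the conformal factor $\Omega^2$ is bounded below by a positive constant and $|\tilde u|^p\geq 1$, so the weight is comparable to $1$ and the flux bound controls $|\nabb\phi|^2\,r^2$ directly there. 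I would flag that reconciling the clean weight in the remaining part of the range is the subtle issue; crucially, however, only the weighted $v^p(\partial_v\phi)^2\,r^2$ term is actually needed downstream, since the pointwise-boundedness argument sketched around \eqref{fluxscetch} in Section \ref{outline} uses solely that term (after commuting \eqref{wave} with the $\leo_i$ and applying Sobolev embedding on $\bbS^2$). The corollary then follows, the angular statement being understood through this region-by-region control.
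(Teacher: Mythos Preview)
Your argument is precisely what the paper does: its proof is the single sentence ``follows immediately examining the weights in Proposition \ref{gesamtesti}'', and you have spelled this out (take $\hat v\to\infty$ by monotonicity of the nonnegative integrand, then use $p\le 1+2\delta$ so that $v_{fix}^{-1-2\delta+p}\le 1$). You also go further than the paper by flagging the angular-weight discrepancy, which the paper's one-line proof does not address.

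One correction to your proposed fix for that discrepancy: your claim that $\Omega^2$ is bounded below on the pre-$\gamma$ portion of $\{u=\tilde u\}$ fails to be \emph{uniform} in $\tilde u$ within $\cR$, since $r(\tilde u,1)\to r_+$ and hence $\Omega^2(\tilde u,1)\to 0$ as $\tilde u\to -\infty$. The honest route to unweighted angular control in $\cR$ is to invoke Corollary \ref{cor5.2} directly with the regular vector field $N$ (whose $u$-component scales like $\Omega^{-2}$ near $\cH^+$, cancelling the degeneration in $\dV_{u=\tilde u}$), rather than to pass through $J^S$. Your final observation---that only the $v^p(\partial_v\phi)^2$ term is actually used in Section \ref{uni_bounded}---is correct and is the point that matters for the downstream argument.
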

\begin{proof}
The conclusion of the proposition follows immediately examining the weights in Proposition \ref{gesamtesti}.
\end{proof}

\section{Pointwise estimates from higher order energies}
\lb{nineten}
\subsection{The $\leo$ notation and Sobolev inequality on spheres}
\lb{leonotationsec}
Recall that we had stated the expressions for the generators of spherical symmetry $\leo_i$, $i=1,2,3$, in Section \ref{angular}. They were explicitly given by \eqref{angmom1} to \eqref{angmom2}. Further, having expressions \eqref{sum1} and \eqref{sum2} in mind we introduce the following notation
\bea
\lb{leonotation}
 \sum_{k=0}^{2} \left(\leo^k \phi\right)^2=|\phi|^2+\sum_{i=1}^{3} \left(\leo_i \phi\right)^2+\sum_{i=1}^{3} \sum_{j=1}^{3}\left(\leo_i \leo_j\phi\right)^2,
\eea
where $k$ has to be understood as the order of an exponent and not as an index.
By Sobolev embedding on the standard spheres we have in this notation 
\bea
\lb{sobo_embed}
\sup_{\theta,\varphi \in \bbS^2}|\phi(u,v,\theta,\varphi)|^2\leq \tilde{C} \sum_{k=0}^{2} \int\limits_{\bbS^2} \left(\leo^k \phi\right)^2(u,v,\theta,\varphi)\md \sigma_{\mathbb S^2}, 
\eea
which means that we can derive a pointwise estimate from an estimate of the integrals on the spheres, see e.g.~ \cite{m_red}. 
More generally, in the following we will also use the notation 
\bea
J_\mu^X(\leo\phi) = \sum_{i=1}^{3} J_\mu^X(\leo_i\phi),
\eea
for any $J$-current related to an arbitrary vector field $X$, and similarly for other quadratic expressions, e.g.~ \eqref{energy3_hier}, \eqref{energy3_hier2}. 

\subsection{Higher order energy estimates in the neighbourhood of $i^+$}
\lb{higherorder_neu}
We will need the following extension of Corollary \ref{endeacht} for higher order energies.
\begin{thm}
\lb{energythm2}
On subextremal Reissner-Nordstr\"om spacetime $(\cM,g)$, with mass $M$ and charge $e$ and $M>|e|\neq 0$, let $\phi$ be 
a solution of the wave equation $\Box_g \phi=0$ arising from
sufficiently regular Cauchy data on a two-ended asymptotically flat Cauchy surface $\Sigma$. Then, for $v_{fix} \geq 1$ and $u_{fix} > -\infty$
\bea
\lb{energy2}
&&\int\limits_{\bbS^2}\int\limits^{\infty}_{v_{fix}}\left[v^p (\partial_v \phi)^2(u_{fix}, v, \theta, \varphi) + |\nabb \phi|^2(u_{fix}, v, \theta, \varphi) \right]r^2\md v\md \sigma_{\mathbb S^2}\leq E_{0},\\
\lb{energy3_hier}
&&\int\limits_{\bbS^2}\int\limits^{\infty}_{v_{fix}} \left[v^p (\partial_v \leo\phi)^2(u_{fix}, v, \theta, \varphi) + |\nabb \leo\phi|^2(u_{fix}, v, \theta, \varphi) \right]r^2\md v\md \sigma_{\mathbb S^2}\leq E_{1},\\
\lb{energy3_hier2}
&&\int\limits_{\bbS^2}\int\limits^{\infty}_{v_{fix}}\left[ v^p (\partial_v \leo^2\phi)^2 (u_{fix}, v, \theta, \varphi)+ |\nabb \leo^2\phi|^2(u_{fix}, v, \theta, \varphi) \right]r^2\md v\md \sigma_{\mathbb S^2}\leq E_{2},
\eea
where $p$ is as in \eqref{waspist}. 
\end{thm}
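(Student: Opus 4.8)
The plan is to exploit the fact that in Reissner-Nordstr\"om spacetime the angular momentum operators $\leo_i$ are Killing, so that $[\Box_g, \leo_i] = 0$, as recorded in Section \ref{en_cur}. Consequently, whenever $\phi$ solves $\Box_g\phi = 0$, so do $\leo_i\phi$ and $\leo_i\leo_j\phi$ for every $i,j \in \{1,2,3\}$ (applying the commutation relation once, respectively twice). The estimate \eqref{energy2} for $\phi$ itself is nothing other than the conclusion of Corollary \ref{endeacht}; hence only \eqref{energy3_hier} and \eqref{energy3_hier2} require new input, and that input is entirely structural rather than analytic.

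The key observation is that the whole chain of estimates culminating in Corollary \ref{endeacht}---namely Propositions \ref{rtildedecay}, \ref{r_{red}}, \ref{to_gamma} and \ref{kastle} together with the intervening corollaries---uses only three ingredients: that the relevant field solves the wave equation; the horizon decay bounds of Theorem \ref{anfang}; and the initial data bounds of Proposition \ref{initialdataprop} on the segment $v=1$. Crucially, the multiplier vector fields employed ($N$, $-\partial_r$, $S_0$ and $S$) are all built from the $(u,v)$ quotient geometry and are manifestly invariant under the spherical symmetry; they therefore commute with each $\leo_i$, and the divergence-theorem arguments, the positivity of the bulk terms, and the pointwise $\Omega^2$ decay estimates of Section \ref{finiteness} are all insensitive to replacing $\phi$ by a commuted quantity.

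First I would observe that Theorem \ref{anfang} already furnishes the required horizon bounds for the commuted fields: equations \eqref{thpur1} and \eqref{thpur2} are precisely the analogues of \eqref{thpur} with $\leo\phi$ and $\leo^2\phi$ in place of $\phi$, carrying the identical $v^{-2-2\delta}$ decay rate, now with constants $C_1$ and $C_2$. Likewise, Proposition \ref{initialdataprop} supplies the initial data bounds \eqref{proppur1} and \eqref{proppur2} for $\leo\phi$ and $\leo^2\phi$ on the segment $v=1$. With these in hand, I would apply Corollary \ref{endeacht} verbatim to each solution $\leo_i\phi$ and to each solution $\leo_i\leo_j\phi$, obtaining for every such field the weighted energy bound of the stated form. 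Summing over $i$ (respectively over the pairs $i,j$) and invoking the summation identities \eqref{sum1} and \eqref{sum2} in the notation \eqref{leonotation} then assembles these into \eqref{energy3_hier} and \eqref{energy3_hier2}, with $E_1$ depending on the data through $C_1$ and $D_1(u_{\diamond},1)$, and $E_2$ through $C_2$ and $D_2(u_{\diamond},1)$.

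I do not anticipate a genuine analytic obstacle, since the proof is a transplantation of the already-established first-order estimate onto the commuted fields. The one point requiring care is bookkeeping: one must confirm that $\leo_i\leo_j\phi$ is again an honest solution of $\Box_g\phi=0$, and that the current-summation conventions $J_\mu^S(\leo\phi) = \sum_{i} J_\mu^S(\leo_i\phi)$ introduced in Section \ref{leonotationsec} are respected at each stage, so that the individually summed inequalities recombine correctly into the stated integrated form without spurious cross terms. This is routine, given that each summand is controlled separately by the argument of Corollary \ref{endeacht}.
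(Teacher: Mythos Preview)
Your proposal is correct and follows precisely the same route as the paper: the paper's proof simply observes that \eqref{energy2} is the content of Corollary \ref{endeacht}, recalls that $\leo_i\phi$ and $\leo_i\leo_j\phi$ also solve the wave equation by the commutation property of Section \ref{en_cur}, and then sums over the angular momentum operators to obtain \eqref{energy3_hier} and \eqref{energy3_hier2}. Your write-up is in fact more detailed than the paper's, spelling out explicitly that the commuted horizon bounds \eqref{thpur1}, \eqref{thpur2} and data bounds \eqref{proppur1}, \eqref{proppur2} feed the argument, which is a helpful elaboration.
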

\begin{proof}
Statement \eqref{energy2} was already derived in Corollary \ref{endeacht}.
Recall that $\leo_i\phi$, $\leo_i\leo_j\phi$ also satisfy the massless scalar wave equation, cf.~ Section \ref{en_cur}. Summing over all angular momentum operators, keeping in mind notation \eqref{leonotation}, etc., we therefore obtain \eqref{energy3_hier} and \eqref{energy3_hier2}.
\end{proof}

\subsection{Pointwise boundedness in the neighbourhood of $i^+$}
\lb{uni_bounded}
We turn the discussion to the derivation of pointwise boundedness from energy estimates. In particular we prove Theorem \ref{dashier} from Theorem \ref{energythm2}.

By the fundamental theorem of calculus it follows for all $v_*>1$, $\hat{v} >v_*$ and $u \in (-\infty, u_{\schere})$ that
\bea
\phi(u,\hat{v}, \theta, \varphi)&=& \int\limits_{v_*}^{\hat{v}} \left(\partial_v \phi\right)(u, v, \theta, \varphi) \md v +\phi(u, v_*, \theta, \varphi),\nonumber\\
&\leq& \int\limits_{v_*}^{\hat{v}} (\partial_v \phi)(u, v, \theta, \varphi)v^{\frac{p}{2}}v^{-\frac{p}{2}}\md v+\phi(u, v_*, \theta, \varphi),\nonumber\\
&\leq& \left(\int\limits_{v_*}^{\hat{v}} v^p(\partial_v \phi)^2(u, v, \theta, \varphi)\md v\right)^{\frac12}\left(\int\limits_{v_*}^{\hat{v}} v^{-{p}}\md v\right)^{\frac12}+\phi(u, v_*, \theta, \varphi),\nonumber\\
\eea
where we have used the Cauchy-Schwarz inequality in the last step.
Squaring the entire expression, using Cauchy-Schwarz again and integrating over ${\mathbb S}^2$ we obtain the expression that we had sketched in Section \ref{outline} already
\bea
\int\limits_{\bbS^2} \phi^2(u,\hat{v})\md \sigma_{\mathbb S^2}
&\leq& \tilde{C}\left[\int\limits_{\bbS^2}\left(\int\limits_{v_*}^{\hat{v}} v^p(\partial_v \phi)^2(u,v)\md v\int\limits_{v_*}^{\hat{v}} v^{-{p}}\md v\right)r^2\md \sigma_{\mathbb S^2}+\int\limits_{\bbS^2} \phi^2(u,v_*)\md \sigma_{\mathbb S^2}\right],
\eea
with $p$ as in \eqref{waspist} and the first term on the right hand side controlled by the flux for which we derived boundedness in Section \ref{innerhorizon}.
Therefore, by using Theorem \ref{energythm2} we obtain
\bea
\lb{fundcauchy}
\int\limits_{\bbS^2} \phi^2(u,\hat{v})\md \sigma_{\mathbb S^2}&\leq&\tilde{C}\left[ E_{0}\int\limits_{\bbS^2}\int\limits_{v_*}^{\hat{v}} v^{-p}\md v\md \sigma_{\mathbb S^2}+\int\limits_{\bbS^2} \phi^2(u,v_*)\md \sigma_{\mathbb S^2}\right]\nonumber\\
&\leq&\tilde{C}\left[\tilde{\tilde{C}} E_{0}+\int\limits_{\bbS^2} \phi^2(u,v_*)\md \sigma_{\mathbb S^2}\right].
\eea 
It is here that we have used the requirement $p>1$ of \eqref{waspist}.
Applying all our estimates to $\leo_i \phi$, $\leo_i\leo_j \phi$ and summing, we obtain in the notation of Section \ref{leonotationsec} the following:
\bea
\lb{fundcauchy1}
\int\limits_{\bbS^2} (\leo \phi) ^2(u,\hat{v})\md \sigma_{\mathbb S^2} 
&\leq&\tilde{C}\left[\tilde{\tilde{C}} E_{1}+\int\limits_{\bbS^2} (\leo \phi) ^2(u,v_*)\md \sigma_{\mathbb S^2}\right],\\
\lb{fundcauchy2}
\int\limits_{\bbS^2} ({\leo}^2 \phi)^2(u,\hat{v})\md \sigma_{\mathbb S^2}
&\leq&\tilde{C}\left[\tilde{\tilde{C}}  E_{2}+\int\limits_{\bbS^2} ({\leo}^2 \phi) ^2(u,v_*)\md \sigma_{\mathbb S^2}\right].
\eea 
Let us now use \eqref{aufeins1} to \eqref{aufeins3} of Proposition \ref{initialdataprop} to estimate the right hand sides of \eqref{fundcauchy} to \eqref{fundcauchy2} with $v_*=1$.
Adding all equations up we derive pointwise boundedness according to \eqref{sobo_embed}
\bea
\lb{supr}
\sup_{\bbS^2}|\phi(u,\hat{v},\theta,\varphi)|^2&\leq& \tilde{C} \left[\int\limits_{\bbS^2} ( \phi) ^2(u,\hat{v})\md \sigma_{\mathbb S^2} +\int\limits_{\bbS^2} (\leo \phi) ^2(u,\hat{v})\md \sigma_{\mathbb S^2} +\int\limits_{\bbS^2} (\leo^2 \phi) ^2(u,\hat{v})\md \sigma_{\mathbb S^2} \right],\\
\lb{supr2}
&\leq&\tilde{C}\left[\tilde{\tilde{C}}\left( E_{0}+  E_{1}+ E_{2}\right)+D_{0}(u_{\diamond}, 1)+D_{1}(u_{\diamond}, 1)+D_{2}(u_{\diamond}, 1)\right],\\
\lb{supr3}
&\leq&C,
\eea
with $C$ depending on the initial data on $\Sigma$. 
We therefore arrive at the statement given in Theorem \ref{dashier}. 
\begin{trivlist}
\item[\hskip \labelsep ]\qed\end{trivlist}

\section{{\it Left} neighbourhood of $i^+$}
\lb{leftinterior}

We now turn to establish boundedness in the neighbourhood of the {\em left} timelike infinity $i^+$. 
For this we simply repeat the entire proof carried out in the characteristic rectangle $\Xi$ in Section \ref{first_section} to Section \ref{innerhorizon} but this time for the region $\tilde{\Xi}$ at the {\it left end} of $\cQ|_{II}$, cf.~ \eqref{QII}, as shown in the Penrose diagram \ref{RN_character_linkes}.
{\begin{figure}[ht]
\centering
\includegraphics[width=0.6\textwidth]{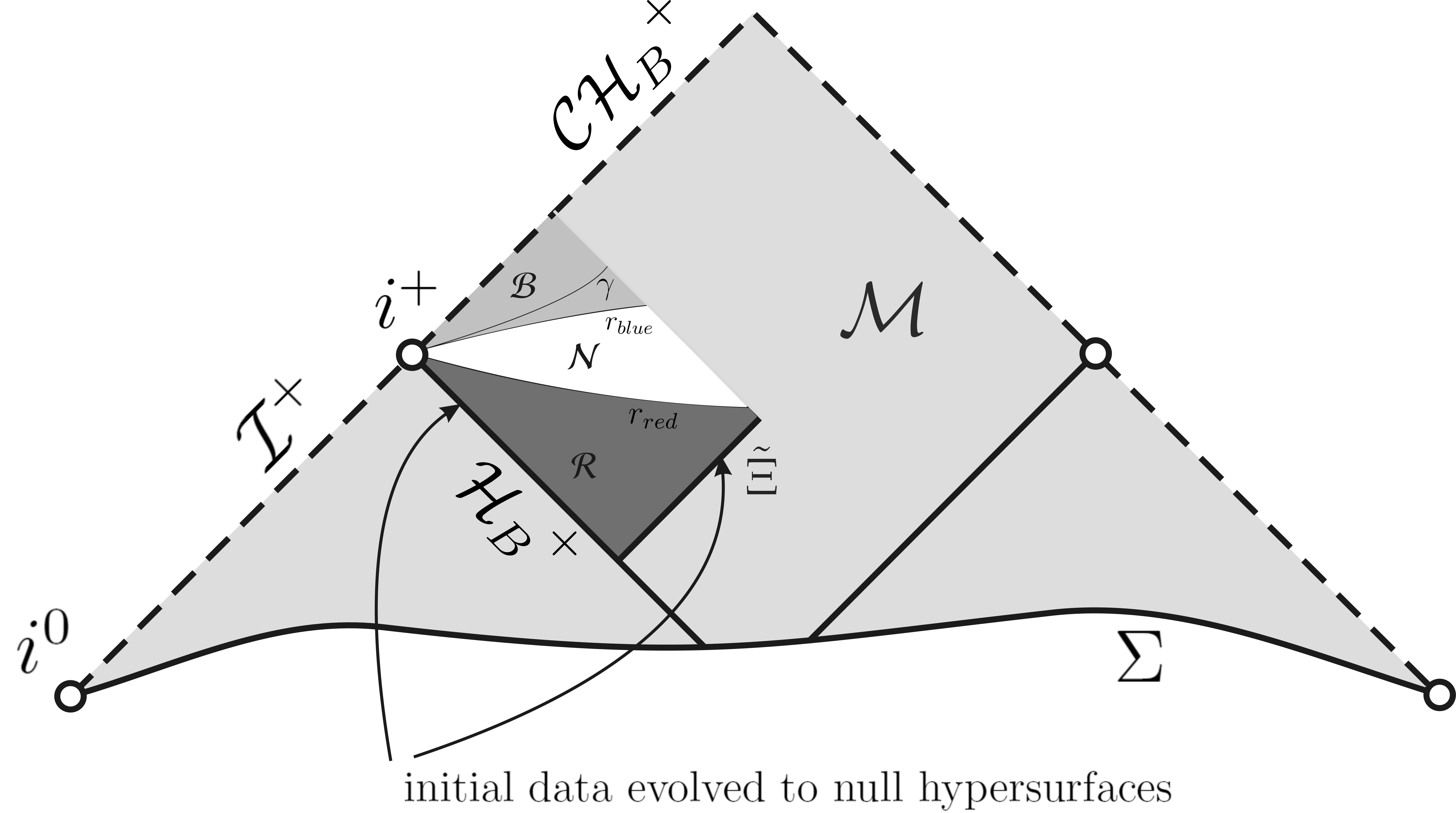}
\caption[]{Penrose diagram with characteristic rectangle $\tilde{\Xi}$ depicted on the {\it left} side.}
\label{RN_character_linkes}\end{figure}}

Since everything is completely analogous to the derivation for the {\it right} side, we will merely state the main Theorems and Propositions and remind the reader that $u$ and $v$ are interchanged now. Recall that $u\rightarrow \infty$ on $\cC\cH_B^+$ and $v\rightarrow -\infty$ on $\cH_B^+$ for the {\it left end}.
The rectangle under consideration is now $\tilde{\Xi}=\left\{(1\leq u< \infty), (-\infty\leq v\leq v_{\schere}) \right\}$
\begin{thm}
\lb{anfangl}
Let $\phi$ be a solution of the wave equation \eqref{wave} on a subextremal Reissner-Nordstr\"om background $(\cM,g)$, with mass $M$ and charge $e$ and $M>|e|\neq 0$, arising from smooth compactly supported initial data on an arbitrary Cauchy hypersurface $\Sigma$, cf.~ Figure \ref{RN_character_linkes}. Then, there exists $\delta>0$ such that
\bea
\lb{thpurl}
\int\limits_{\bbS^2}\int\limits^{u+1}_u \left[(\partial_u \phi)^2(u, -\infty)+|\nabb \phi|^2(u, -\infty)\right]r^2 \md u\md \sigma_{\mathbb S^2}&\leq& {C_{0}}u^{-2-2\delta},\\
\lb{thpur1l}
\int\limits_{\bbS^2}\int\limits^{u+1}_u \left[(\partial_u \leo\phi)^2(u, -\infty) +|\nabb \leo\phi|^2(u, -\infty)\right]r^2\md u\md \sigma_{\mathbb S^2}&\leq& {C_{1}}u^{-2-2\delta},\\
\lb{thpur2l}
\int\limits_{\bbS^2}\int\limits^{u+1}_u \left[(\partial_u \leo^2\phi)^2(u, -\infty) +|\nabb \leo^2\phi|^2(u, -\infty)\right]r^2\md u\md \sigma_{\mathbb S^2}&\leq& {C_{2}}u^{-2-2\delta},
\eea
on ${\cH_B}^+$, for all $u$ and some positive constants ${C_{0}}$, ${C_{1}}$ and ${C_{2}}$ depending on the initial data.
\end{thm}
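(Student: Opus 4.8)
The plan is to obtain Theorem \ref{anfangl} as the exact mirror image of Theorem \ref{anfang}, exploiting the discrete isometry of the two-ended Reissner-Nordstr\"om solution which interchanges its two asymptotically flat ends. Recall that $\cM = \cQ \times \bbS^2$ with $\cQ = (-1,1)\times(-1,1)$, and consider the map $\Phi:\cM\to\cM$ given in global double null coordinates by $(U,V,\theta,\varphi)\mapsto(V,U,\theta,\varphi)$. First I would check that $\Phi$ is an isometry of $(\cM,g)$: the metric \eqref{metric} is manifestly symmetric under $U\leftrightarrow V$, the gauge conditions \eqref{UVrange} are symmetric, and the Hessian and $\Upomega$-equations \eqref{uvr}, \eqref{omegalogrelation}, together with the symmetric data prescribed on $\cH_A^+$ and $\cH_B^+$, determine $\Upomega^2$ and $\mathpzc{r}$ as symmetric functions of $(U,V)$; hence $\Phi^*g=g$. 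In Eddington-Finkelstein coordinates $\Phi$ simply interchanges $u\leftrightarrow v$, and it maps $\cH_A^+=\{U=0\}$ to $\cH_B^+=\{V=0\}$.

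Since $\Phi$ is an isometry, $\Box_g$ is $\Phi$-invariant, so $\phi\circ\Phi$ solves \eqref{wave} whenever $\phi$ does, and $\Phi$ carries Cauchy data on one end of $\Sigma$ to data on the other. The estimates \eqref{thpur}--\eqref{thpur2} on $\cH_A^+$, which are integrals in $v$ at $u=-\infty$, therefore transform under $\Phi$ into the very same integrals in $u$ at $v=-\infty$, that is into \eqref{thpurl}--\eqref{thpur2l} on $\cH_B^+$. Equivalently -- and this is really the same observation -- I would simply reapply the exterior decay machinery quoted in the proof of Theorem \ref{anfang}, namely Blue-Soffer \cite{blue}, Dafermos-Rodnianski \cite{m_price} and Schlue \cite{volker} (via the method of \cite{m_new}), to the second asymptotically flat end: each end, together with its exterior domain of dependence $J^-(\cI^+)$, constitutes a standard subextremal Reissner-Nordstr\"om exterior to which those results apply verbatim, and they yield decay along the corresponding future event horizon $\cH_B^+$. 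The angular momentum operators $\leo_i$ of \eqref{angmom1}--\eqref{angmom2} act only on the $\bbS^2$ factor and so commute with $\Phi$; combined with $\Box_g\leo_i\phi=0$ this propagates the argument to the commuted quantities $\leo\phi$ and $\leo^2\phi$, giving \eqref{thpur1l}, \eqref{thpur2l} exactly as before.

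I expect no genuinely new analytic obstacle here: the entire content is the symmetry of the background under exchange of its two ends, together with the invariance of both the wave operator and the commutation structure. The only points demanding care are bookkeeping -- tracking the coordinate correspondence $u\leftrightarrow v$ and noting that the relevant asymptotic limit is now $v\to-\infty$ on $\cH_B^+$ rather than $u\to-\infty$ on $\cH_A^+$ -- and the observation that the constants $C_0$, $C_1$, $C_2$ are now fixed by the data induced on the \emph{left} end of $\Sigma$, which in general differ from their right-hand counterparts but share the same structure and initial-data dependence.
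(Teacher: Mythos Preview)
Your proposal is correct and takes essentially the same approach as the paper: the paper does not supply a separate proof for Theorem \ref{anfangl} at all, but simply declares in Section \ref{leftinterior} that ``everything is completely analogous to the derivation for the {\it right} side'' with $u$ and $v$ interchanged, which is precisely the symmetry argument you spell out. Your explicit construction of the isometry $\Phi:(U,V)\mapsto(V,U)$ makes the underlying reason for this analogy transparent, but the content is identical to the paper's appeal to the left exterior region and the same cited results \cite{blue, m_price, volker}.
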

\begin{prop}
\lb{initialdatapropl}
Let \mbox{$u_{\diamond}, v_{\diamond} \in (-\infty, \infty)$}.
Under the assumption of Theorem \ref{anfangl}, the energy at retarded Eddington-Finkelstein coordinate \mbox{$\left\{u=u_{\diamond}\right\}\cap\left\{{-\infty}\leq v \leq {v_{\diamond}}\right\}$} is bounded from the initial data 
\bea
\lb{proppurl}
\int\limits_{\bbS^2}\int\limits_{-\infty}^{v_{\diamond}}\left[ \Omega^{-2}(\partial_v \phi)^2(u_{\diamond},v)+\frac{\Omega^{2}}{2}|\nabb \phi|^2(u_{\diamond},v)\right]r^2\md v\md \sigma_{\mathbb S^2}&\leq& {D_{0}(u_{\diamond}, v_{\diamond})},\\
\lb{proppur1l}
\int\limits_{\bbS^2}\int\limits_{-\infty}^{v_{\diamond}}\left[\Omega^{-2}(\partial_v \leo\phi)^2(u_{\diamond},v)+\frac{\Omega^{2}}{2}|\nabb  \leo\phi|^2(u_{\diamond},v)\right]r^2\md v\md \sigma_{\mathbb S^2}&\leq& {D_{1}(u_{\diamond}, v_{\diamond})},\\
\lb{proppur2l}
\int\limits_{\bbS^2}\int\limits_{-\infty}^{v_{\diamond}}\left[\Omega^{-2}(\partial_v \leo^2\phi)^2(u_{\diamond},v)+\frac{\Omega^{2}}{2}|\nabb \leo^2\phi|^2(u_{\diamond},v)\right]r^2\md v\md \sigma_{\mathbb S^2}&\leq& {D_{2}(u_{\diamond}, v_{\diamond})},
\eea
and further
\bea
\lb{aufeins1l}
\sup_{-\infty\leq v \leq {v_{\diamond}}}\int\limits_{\bbS^2} (\phi)^2(u_{\diamond},v)\md \sigma_{\mathbb S^2}&\leq&  {D_{0}(u_{\diamond}, v_{\diamond})},\\
\sup_{-\infty\leq v \leq {v_{\diamond}}}\int\limits_{\bbS^2} (\leo\phi)^2(u_{\diamond},v)\md \sigma_{\mathbb S^2}&\leq&  {D_{1}(u_{\diamond}, v_{\diamond})},\\
\lb{aufeins3l}
\sup_{-\infty\leq v \leq {v_{\diamond}}}\int\limits_{\bbS^2} (\leo^2\phi)^2(u_{\diamond},v)\md \sigma_{\mathbb S^2}&\leq&  {D_{2}(u_{\diamond}, v_{\diamond})},
\eea
with ${D_{0}(u_{\diamond}, v_{\diamond})}$, $D_{1}(u_{\diamond}, v_{\diamond})$ and $D_{2}(u_{\diamond}, v_{\diamond})$ positive constants depending on the initial data.
\end{prop}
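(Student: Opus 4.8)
The plan is to obtain Proposition \ref{initialdatapropl} as the exact mirror image of Proposition \ref{initialdataprop}, exploiting the discrete reflection symmetry of Reissner-Nordstr\"om spacetime that interchanges the two ends of $\cM|_{II}$. In our gauge this isometry is simply $U\leftrightarrow V$, and hence $u\leftrightarrow v$ in Eddington-Finkelstein coordinates: the defining data \eqref{UVrange} and the boundary values ${\mathpzc r}|_{\cH_A^+}={\mathpzc r}|_{\cH_B^+}=r_+$ are symmetric under this exchange, and the field equations \eqref{uvr}--\eqref{omegalogrelation} are likewise invariant, so $\Upomega^2(U,V)=\Upomega^2(V,U)$ and ${\mathpzc r}(U,V)={\mathpzc r}(V,U)$. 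Under $u\leftrightarrow v$ the horizon $\cH_A^+=\{u=-\infty\}$ maps to $\cH_B^+=\{v=-\infty\}$, and the horizon decay of Theorem \ref{anfang} becomes exactly Theorem \ref{anfangl}, while the constant-$v$ bounds of Proposition \ref{initialdataprop} become the constant-$u$ bounds asserted here. The cleanest route is therefore to transport the proof of Proposition \ref{initialdataprop} through this reflection.

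To make the weights transparent I would nonetheless run the elementary local estimate directly. Theorem \ref{anfangl} controls the flux of $\phi$, $\leo\phi$ and $\leo^2\phi$ on a compact segment of $\cH_B^+$. Applying the divergence theorem with a future-directed timelike multiplier in the compact region bounded by that segment and the hypersurface $\{u=u_\diamond\}\cap\{-\infty\le v\le v_\diamond\}$, one bounds the flux through $\{u=u_\diamond\}$ by the horizon flux plus a bulk term; since the region is compact all metric coefficients are bounded there, so the bulk is harmlessly absorbed and every resulting constant $D_i(u_\diamond,v_\diamond)$ is finite and depends only on the data on $\Sigma$. This yields \eqref{proppurl}--\eqref{proppur2l}. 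The factors $\Omega^{-2}$ and $\frac{\Omega^2}{2}$ arise for the same reason as in the remark after Proposition \ref{initialdataprop}: on the left end it is the advanced coordinate $v$, rather than $u$, that degenerates at $\cH_B^+$, and these are precisely the weights carried by the flux $J^X_\mu n^\mu$ through a constant-$u$ surface once written in $(u,v)$ coordinates.

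For the pointwise statements \eqref{aufeins1l}--\eqref{aufeins3l} I would combine the fundamental theorem of calculus along $v$ with Cauchy-Schwarz, integrating over $\bbS^2$. Writing $\phi(u_\diamond,v)=\phi(u_\diamond,-\infty)+\int(\partial_v\phi)\,\md v$ and squaring, the integral term is controlled by the energy flux \eqref{proppurl} just established (after pairing the $\Omega^{-2}$ weight on $(\partial_v\phi)^2$ against the $\Omega^2$ in the measure), while the boundary value on $\cH_B^+$ is controlled by Theorem \ref{anfangl}. Taking the supremum over $v\in[-\infty,v_\diamond]$ then produces the claimed uniform $L^2(\bbS^2)$ bound, and the identical argument applied to $\leo\phi$ and $\leo^2\phi$ gives the higher-order versions.

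I expect no substantive analytic obstacle: the content is genuinely that of a Cauchy-stability estimate on a compact set. The only point requiring care is the bookkeeping of which null coordinate degenerates at which horizon, so that the $\Omega^{\pm2}$ weights are placed on the correct derivative, and a check that compactness of the region really renders all constants finite and data-dependent rather than blowing up as $u_\diamond$ or $v_\diamond$ approaches the horizon. Once the $u\leftrightarrow v$ dictionary is fixed, each step is the verbatim reflection of the right-end argument.
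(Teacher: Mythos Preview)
Your proposal is correct and matches the paper's approach: the paper states this proposition in Section~\ref{leftinterior} with no separate proof, merely remarking that everything is analogous to the right side with $u$ and $v$ interchanged, and the right-side Proposition~\ref{initialdataprop} is itself proved in one line as ``local energy estimates in a compact spacetime region''. Your identification of the $U\leftrightarrow V$ reflection and the Cauchy-stability argument on a compact set is exactly this, just spelled out in more detail than the paper bothers with.
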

Note the $\Omega^{-2}$ weights which arise since $v$ is not regular at $\cH^+_B$.
Analogous to the Proposition \ref{gesamtesti} obtained for the {\it right} side we can state the following for the {\it left}.
\begin{prop}
\label{gesamtestil}
Let $\phi$ be as in Theorem \ref{anfangl} and $p$ as in \eqref{waspist}. Then, for $v_{\schere}$ sufficiently close to $-\infty$, for $u_*>1$, $\hat{u}>u_*$ and $\tilde{v} \in (-\infty, v_{\schere})$. 
\bea
\lb{propgamma2l}
\int\limits_ {\left\lbrace  u_* \leq u \leq \hat{u}\right\rbrace }  J_{\mu}^{S}(\phi) n^{\mu}_{v=\tilde{v}} \dV_{v=\tilde{v}}\leq {C} {u_*}^{-1-2\delta+p},
\eea
where ${C}$ is a positive constant depending on $C_0$ of Theorem \ref{anfangl} and $D_{0}(u_{\diamond},1)$ of Proposition \ref{initialdatapropl}, where $v_{\diamond}$ is defined by $r_{red}=r(1, v_{\diamond})$.
\end{prop}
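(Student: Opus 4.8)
The plan is to mirror, under the reflection interchanging the two null coordinates $u \leftrightarrow v$, the entire argument developed for the right rectangle $\Xi$ in Sections \ref{first_section}--\ref{innerhorizon}. The key observation is that the Reissner-Nordstr\"om geometry is symmetric under this reflection: the metric \eqref{metric}, the normalization \eqref{def_l_n}, the identities \eqref{u-neg}--\eqref{v-neg}, the Raychaudhuri equations \eqref{ray1}--\eqref{ray2}, and the lower bounds \eqref{lowerboundu}--\eqref{lowerboundv} are all invariant under $u \leftrightarrow v$. Hence every structural ingredient of the right-hand analysis---the redshift vector field $N$ of Proposition \ref{mi}, the field $-\partial_r$, the field $S_0 = r^q \partial_{r^{\star}}$ of \eqref{Nstar}, and the weighted field $S$ of \eqref{N1}---carries over verbatim, provided one now approaches $\cC\cH_B^+$ as $u \to \infty$ and $\cH_B^+$ as $v \to -\infty$. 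In particular the weighted field becomes $S = u^p \partial_u + |v|^p \partial_v$, and the separating hypersurface, call it $\tilde{\gamma}$, is placed a logarithmic distance in $u$ (rather than $v$) from $r = r_{blue}$, i.e.\ as the levelset of $u + v - \alpha \log u - 2 r^{\star}_{blue}$ in $\{u > 2\alpha\}$.

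First I would restate and reprove, with $u$ and $v$ interchanged, the chain of intermediate estimates: the analog of Proposition \ref{rtildedecay} propagating the horizon decay of Theorem \ref{anfangl} through $\cR$ to $r = r_{red}$; the analog of Proposition \ref{r_{red}} propagating it through $\cN$ to $r = r_{blue}$; the analogs of Proposition \ref{to_gamma} and Corollary \ref{cor2} carrying the weighted $S_0$-flux up to $\tilde{\gamma}$; the pointwise $\Omega^2$-decay of Section \ref{finiteness} and the bulk bound of Lemma \ref{K_spher}; and finally the analog of Proposition \ref{kastle} propagating the weighted $S$-flux from $\tilde{\gamma}$ up to $\cC\cH_B^+$ in a neighbourhood of $i^+$. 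Each of these is obtained from its right-hand counterpart by relabelling, and all sign conventions (positivity of the bulk terms $K^N$, $K^{S_0}$, and of the angular part of $K^S$) are preserved because they depend only on reflection-invariant quantities.

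Having the left analogs of Corollaries \ref{cor5.2}, \ref{cor6.1}, \ref{cor7.1} and of Proposition \ref{kastle}, I would then prove \eqref{propgamma2l} exactly as Proposition \ref{gesamtesti} was proved: partition the flux integral over $\{u_* \leq u \leq \hat{u}\}$ on the hypersurface $\{v = \tilde{v}\}$ into its intersections with $\cR$, $\cN$, $J^-(\tilde{\gamma})\cap\cB$ and $J^+(\tilde{\gamma})\cap\cB$, estimate each piece by the corresponding left corollary or proposition, and add. As on the right, the redshift contribution must be summed over the unit bands, costing one polynomial power, so that together with the weight $u^p$ carried by $S$ the four pieces combine to the stated rate $u_*^{-1-2\delta+p}$.

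I do not expect a genuinely new obstacle; the only thing requiring care is bookkeeping---verifying that the orientations of the null hypersurfaces, the signs of the boundary flux terms in each application of the divergence theorem, and the smallness of the constants $\delta_1,\delta_2$ of Lemma \ref{K_spher} (now achieved by taking $v_{\schere}$, rather than $u_{\schere}$, sufficiently close to $-\infty$) all transform correctly under the reflection. Since the two inputs that are not purely geometric, namely the horizon decay (Theorem \ref{anfangl}) and the Cauchy-stability bound (Proposition \ref{initialdatapropl}), have already been restated for the {\it left}, nothing beyond this relabelling is needed.
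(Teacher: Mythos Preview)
Your proposal is correct and follows exactly the approach of the paper, which proves this proposition in one line: ``The proof is analogous to the proof of Proposition \ref{gesamtesti} with $u$ and $v$ interchanged.'' Your detailed unpacking of that sentence---the reflection symmetry, the redefinition of $\tilde{\gamma}$ with logarithmic distance in $u$, the form $S = u^p\partial_u + |v|^p\partial_v$, and the partitioning of the flux integral into the four subregions---is precisely what that one line is shorthand for.
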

\begin{proof}
The proof is analogous to the proof of Proposition \ref{gesamtesti} with $u$ and $v$ interchanged.
\end{proof}
Having obtained Proposition \eqref{gesamtestil} we can derive higher order estimates analogous to Section \ref{higherorder_neu}. The pointwise estimate is then obtained via the same strategy as in Section \ref{uni_bounded} but integrated in $u$ and not in $v$, and can be stated as follows.
\begin{thm}
\lb{dashierl}
Let $\phi$ be as in Theorem \ref{anfangl}, then
\ben
|\phi|\leq C
\een
locally in the black hole interior up to
$\cC\cH^+$ in a ``small neighbourhood'' of {\em left}  timelike infinity $i^+$,
that is in \mbox{$[1, \infty)\times (-\infty, v_{\schere}] $} for some $v_{\schere}>-\infty$.
\end{thm}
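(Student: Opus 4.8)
The plan is to obtain Theorem \ref{dashierl} by mirroring, under the interchange $u \leftrightarrow v$, the entire argument already carried out for the \emph{right} rectangle $\Xi$ in Sections \ref{first_section}--\ref{uni_bounded}. This is legitimate because Reissner-Nordstr\"om geometry is symmetric in $u$ and $v$: the conformal factor \eqref{omeganull}, the normalizations \eqref{def_l_n}, and the logarithmic derivatives \eqref{u-neg}--\eqref{v-neg} are manifestly invariant under swapping $u$ and $v$, and both event horizons share the surface gravity $\kappa_+$. Consequently every multiplier used on the right---the redshift field $N$, the noshift field $-\partial_r$, the field $S_0=r^q\partial_{r^\star}$, and the timelike field $S=|u|^p\partial_u+v^p\partial_v$---together with the separating hypersurface $\gamma$, the constants $\alpha,\beta,q$ and $p$ (with $p$ as in \eqref{waspist}), and the lower bounds \eqref{lowerboundu}--\eqref{lowerboundv}, all transfer verbatim to $\tilde{\Xi}$, the new initial data being furnished by Theorem \ref{anfangl} and Proposition \ref{initialdatapropl} on $\cH_B^+$.

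First I would invoke Proposition \ref{gesamtestil}, the mirror of Proposition \ref{gesamtesti}, which already records the weighted flux bound
\be
\int\limits_{\{u_* \leq u \leq \hat{u}\}} J_\mu^{S}(\phi)\, n^\mu_{v=\tilde{v}}\, \dV_{v=\tilde{v}} \leq C\, u_*^{-1-2\delta+p}
\ee
through constant-$v$ hypersurfaces $v=\tilde{v}$ with $\tilde{v}<v_{\schere}$. Since $\leo_i\phi$ and $\leo_i\leo_j\phi$ again solve $\Box_g(\cdot)=0$, commuting with the angular momentum operators and summing as in Section \ref{leonotationsec} upgrades this to the higher-order analogue of Theorem \ref{energythm2}, now with the integration performed in $u$ rather than $v$; this yields bounds $E_0,E_1,E_2$ for the weighted fluxes of $\phi$, $\leo\phi$ and $\leo^2\phi$.

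Next I would run the pointwise argument of Section \ref{uni_bounded} with the roles of $u$ and $v$ exchanged. By the fundamental theorem of calculus $\phi(\hat{u},v,\theta,\varphi)=\int_{u_*}^{\hat{u}}(\partial_u\phi)\,\md u+\phi(u_*,v,\theta,\varphi)$; inserting the factor $u^{p/2}u^{-p/2}$, applying Cauchy-Schwarz, squaring, and integrating over $\bbS^2$ bounds $\int_{\bbS^2}\phi^2(\hat{u},v)\,\md\sigma_{\mathbb S^2}$ by the weighted flux (controlled by $E_0$) times $\int_{1}^{\infty} u^{-p}\,\md u$, which is finite precisely because $p>1$ in \eqref{waspist}, plus the data term at $u_*=1$ controlled by \eqref{aufeins1l}--\eqref{aufeins3l} of Proposition \ref{initialdatapropl}. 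Carrying out the same step for $\leo\phi$ and $\leo^2\phi$ and invoking Sobolev embedding \eqref{sobo_embed} on the spheres then gives $\sup_{\bbS^2}|\phi|^2\leq C$ throughout $[1,\infty)\times(-\infty,v_{\schere}]$, which is the assertion of Theorem \ref{dashierl}.

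The proof contains no genuinely new analytic difficulty, being a literal reflection of the right-hand case; the only point demanding care is purely organizational---verifying that the construction of the mirrored separating hypersurface $\gamma$ and the pointwise decay of $\Omega^2$ in its future (Section \ref{finiteness}) really are symmetric under $u\leftrightarrow v$. They are, because those estimates rest solely on the symmetric bounds \eqref{lowerboundu}--\eqref{lowerboundv} and on $\beta\alpha>1$, so the finiteness $\operatorname{Vol}(J^+(\gamma))<C$ and the smallness of $\delta_1,\delta_2$ persist with $u_{\schere}$ replaced by $v_{\schere}$ chosen sufficiently close to $-\infty$.
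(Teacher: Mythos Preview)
Your proposal is correct and follows essentially the same approach as the paper: both treat Theorem~\ref{dashierl} as the literal mirror of Theorem~\ref{dashier} under the interchange $u\leftrightarrow v$, invoking Proposition~\ref{gesamtestil}, commuting with the $\leo_i$ to obtain the higher-order fluxes, and then running the fundamental-theorem-of-calculus plus Cauchy--Schwarz plus Sobolev argument of Section~\ref{uni_bounded} in the $u$-direction. The paper itself gives no further detail beyond stating that the derivation is ``completely analogous'' with $u$ and $v$ interchanged, so your write-up is in fact more explicit than the original.
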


\section{Energy along the future boundaries of $\cR_{V}$} 
\lb{region5_proof}
Let $u_{\diamond}>u_{\schere}$ and $v_*\geq v_{\gamma}(u_{\schere})$. Define \mbox{$\cR_{V}=\left\{u_{\schere}\leq u \leq u_{\diamond}\right\}\cap\left\{ v_*\leq v\leq \hat{v}\right\}$}, cf.~ Figure \ref{region5}, and note that \mbox{$\cR_{V} \subset \cB$}. 
{\begin{figure}[ht]
\centering
\includegraphics[width=0.4\textwidth]{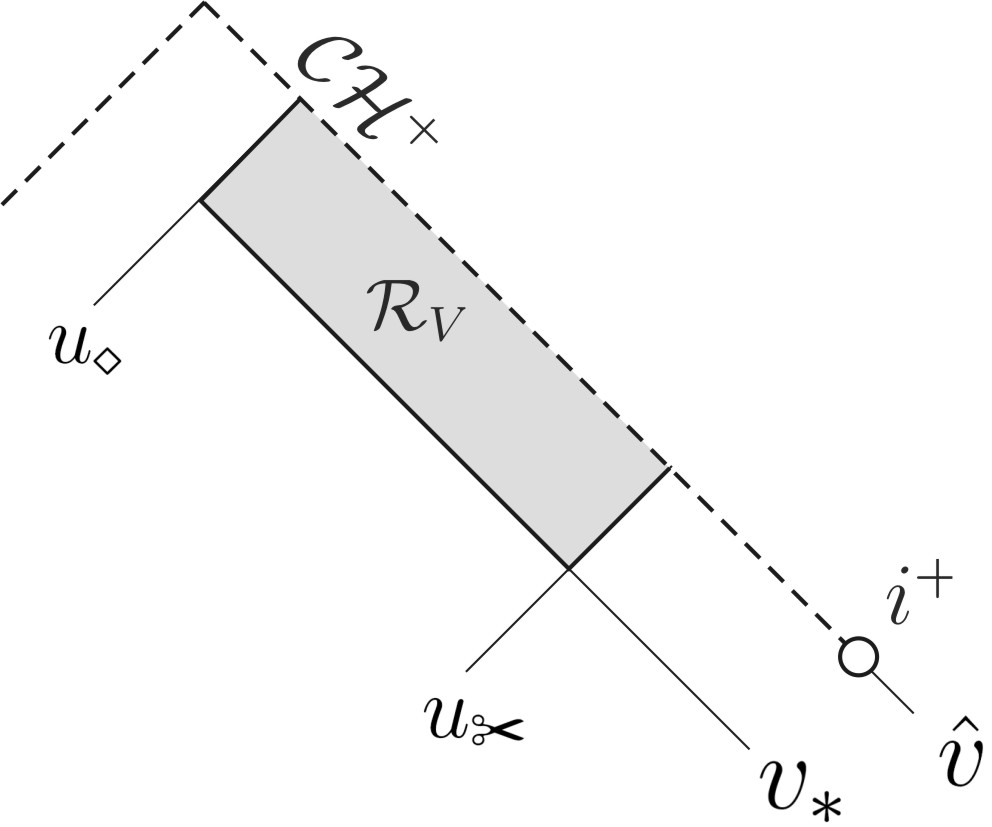}
\caption[]{Penrose diagram depicting region $\cR_{V}$.}
\label{region5}\end{figure}}
We will apply the vector field
\bea
\lb{V_feld}
W=v^p \partial_v+\partial_u
\eea
as a multiplier.
The bulk can be calculated as
\bea
\lb{kv}
K^{W}=&-&\frac{2}{r}[v^p+1](\partial_v\phi\partial_u\phi)\nonumber\\
&-&\left[\frac12 p v^{p-1}+\frac{\partial_v \Omega}{\Omega}v^p+\frac{\partial_u \Omega}{\Omega}\right]|\nabb \phi|^2.
\eea
Let us define
\bea
\lb{tildekv}
\tilde{K}^{W}=&-&\frac{2}{r}[v^p+1](\partial_v\phi\partial_u\phi),
\eea
and
\bea
\lb{Ksphaerisch}
K_{\nabb}^{W}=&-&\left[\frac12 p v^{p-1}+\frac{\partial_v \Omega}{\Omega}v^p+\frac{\partial_u \Omega}{\Omega}\right]|\nabb \phi|^2,
\eea
with $K_{\nabb}^{W}$ positive since the second term in \eqref{Ksphaerisch} dominates over the first for $v_*>2\alpha$ and $\frac{\partial_u \Omega}{\Omega}$, $\frac{\partial_v \Omega}{\Omega}$ are negative in the blueshift region.
We have therefore 
\bea
\lb{167}
-K^{W}\leq|\tilde{K}^{W}|
\eea
in $\cR_V$.
We aim for estimating it via the currents along $v=constant$ and $u=constant$ hypersurfaces.

\begin{lem}
\label{K_spherV}
Let $\phi$ be an arbitrary function. Then, 
for all \mbox{$v_*\geq v_{\gamma}(u_{\schere})$}, 
$\hat{v} >v_*$, 
for $u_{\diamond}\geq u_2>u_1\geq u_{\schere}$ and $\epsilon\geq u_2-u_1>0$
\bea
\lb{Propdelta}
\int\limits_{\cR_{V_1}} |\tilde{K}^{W}| \dV \leq &\delta_1& \sup_{u_{1}\leq \bar{u}\leq u_{2}}\int\limits_ {\left\lbrace  v_*\leq v \leq \hat{v}\right\rbrace } J_{\mu}^{W}(\phi) n^{\mu}_{u=\bar{u}}\dV_{u=\bar{u}}\nonumber \\
+ &\delta_2& \sup_{v_*\leq \bar{v} \leq \hat{v}}\int\limits_ {\left\lbrace  u_1\leq u \leq u_2\right\rbrace } J_{\mu}^{W}(\phi) n^{\mu}_{v=\bar{v}} \dV_{v=\bar{v}},
\eea
where $\cR_{V_1}=\left\{u_1\leq {u} \leq u_2\right\}\cap \cR_{V}$ and $\delta_1$, $\delta_2$ 
are positive constants, depending only on $v_*$ and $\epsilon$ such that $\delta_1\rightarrow 0$ for $\epsilon\rightarrow 0$ and $\delta_2\rightarrow 0$ as ${v}_*\rightarrow \infty$.
\end{lem}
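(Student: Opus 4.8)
The plan is to transcribe the proof of Lemma \ref{K_spher} essentially verbatim, with the vector field $W$ of \eqref{V_feld} in place of $S$ and the rectangle $\cR_{V_1}=\{u_1\leq u\leq u_2\}\cap\cR_V$ in place of $\cR_{IV}$. First I would split the cross term in $\tilde K^{W}$ of \eqref{tildekv} by the Cauchy-Schwarz inequality, arranging the weights so as to match the fluxes of $W$:
\[
|\tilde K^{W}|\leq\frac{1}{r}\left[\left(1+v^{-p}\right)v^{p}(\partial_v\phi)^2+\left(v^{p}+1\right)(\partial_u\phi)^2\right].
\]
Reading off from Appendix \ref{Jcurrents} the fluxes $J_\mu^{W}(\phi)n^\mu_{u=\bar u}$ and $J_\mu^{W}(\phi)n^\mu_{v=\bar v}$ — which contain $\tfrac{2}{\Omega^2}v^{p}(\partial_v\phi)^2$ and $\tfrac{2}{\Omega^2}(\partial_u\phi)^2$ respectively, the coefficient of $\partial_u$ in $W$ being $1$ rather than $|u|^{p}$ — and integrating against $\dV=r^2\tfrac{\Omega^2}{2}\md u\,\md v\,\md\sigma_{\mathbb S^2}$, the $r$ and $\Omega^2$ factors combine exactly as in \eqref{knboundhier}, so that the two pieces of $\int_{\cR_{V_1}}|\tilde K^{W}|\dV$ carry the weights $\tfrac{\Omega^2}{2r}(1+v^{-p})$ and $\tfrac{\Omega^2}{2r}(v^{p}+1)$.

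Next I would apply the elementary rearrangement inequalities \eqref{uintegral} and \eqref{vintegral} over $\cR_{V_1}$ to pull each weight out as a supremum times the associated flux integral, which is precisely the structure of the right-hand side of \eqref{Propdelta}. This identifies
\[
\delta_1=\int\limits_{u_1}^{u_2}\sup_{v_*\leq\bar v\leq\hat v}\frac{\Omega^2(\bar u,\bar v)}{2r}\left(1+\bar v^{-p}\right)\md\bar u,\qquad \delta_2=\int\limits_{v_*}^{\hat v}\sup_{u_1\leq\bar u\leq u_2}\frac{\Omega^2(\bar u,\bar v)}{2r}\left(\bar v^{p}+1\right)\md\bar v,
\]
so that the whole lemma reduces to establishing $\delta_1=O(\epsilon)$ and $\delta_2\to0$ as $v_*\to\infty$.

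The estimate for $\delta_2$ is identical to the one already carried out for the second constant in Lemma \ref{K_spher}: since $\cR_V\subset J^+(\gamma)$, the pointwise bound \eqref{spacevolumedecay} gives $\Omega^2\lesssim\bar v^{-\beta\alpha}$, the integrand is therefore $\lesssim\bar v^{\,p-\beta\alpha}$, and the choice \eqref{alpha} (which forces $\beta\alpha>p+1$) renders this integrable with $\delta_2\lesssim v_*^{\,p+1-\beta\alpha}\to0$. The genuinely new, and in fact easier, point is $\delta_1$: here the $u$-weight $(1+\bar v^{-p})$ is bounded by $2$ for $v_*>1$ and $\Omega^2$ is uniformly bounded on $J^+(\gamma)$, so $\delta_1\leq C\,(u_2-u_1)\leq C\epsilon$, whence $\delta_1\to0$ as $\epsilon\to0$ with no decay needed in the $u$-direction. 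This asymmetry — $\delta_1$ controlled by the short length $\epsilon$ of the $u$-interval, and $\delta_2$ by the $v$-decay of $\Omega^2$ weighed against the polynomial factor $v^{p}$ through \eqref{alpha} — is the only substantive departure from Lemma \ref{K_spher}.

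The one point that must be checked before any of the $\Omega^2$ estimates of Section \ref{finiteness} may be invoked is that the rectangle $\cR_V$ lies entirely in $J^+(\gamma)$; I expect this to be the main (if minor) obstacle, since it is what licenses the decay used for $\delta_2$. It follows from the defining condition $v_*\geq v_{\gamma}(u_{\schere})$ together with the fact that $v_{\gamma}$ is decreasing in $u$ — by \eqref{ableit}, along $\gamma$ one has $\tfrac{\md v}{\md u}=-(1-\alpha/v)^{-1}<0$ for $v>\alpha$ — so that $v_{\gamma}(u)\leq v_{\gamma}(u_{\schere})\leq v_*\leq v$ for every $(u,v)\in\cR_V$ with $u\geq u_{\schere}$, placing the whole rectangle to the future of $\gamma$. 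Granting this, every remaining inequality is a direct transcription of the corresponding step in the proof of Lemma \ref{K_spher}.
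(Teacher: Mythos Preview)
Your proposal is correct and follows essentially the same approach as the paper's proof: the same Cauchy--Schwarz splitting of $\tilde K^W$, the same rearrangement to pull out suprema, and the same $\Omega^2$-decay estimate \eqref{spacevolumedecay} for $\delta_2$. Your treatment of $\delta_1$ is marginally simpler than the paper's --- you invoke only the uniform boundedness of $\Omega^2$ and of $(1+\bar v^{-p})$, whereas the paper first bounds $\Omega^2(\bar u,\bar v)\leq C\,\Omega^2(u_{\schere},v_*)$ via \eqref{spacevolumedecay_u} and then uses $\Omega^2(u_{\schere},v_*)\lesssim |u_{\schere}|^{-\beta\alpha}$ --- but both arguments arrive at $\delta_1\lesssim |u_2-u_1|\leq\epsilon$, and your explicit verification that $\cR_V\subset J^+(\gamma)$ makes a step the paper leaves implicit.
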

\begin{proof}
Using the Cauchy-Schwarz inequality for equation \eqref{tildekv} we
obtain
\bea
|\tilde{K}^{W}|&\leq&\frac{1}{r}\left[ \left(1 +{v^{-p}}\right){v^p}(\partial_v \phi)^2+ \left(1 +{v^p}\right)(\partial_u \phi)^2\right],
\eea
with the related volume element  
\bea
\dV&=&r^2{\frac{\Omega^2}{2}} \md u\md v\md \sigma^2_{\bbS}.
\eea
Note that the currents related to the vector field $W$ with their related volume elements are given by 
\bea
J_{\mu}^{W}(\phi) n^{\mu}_{v=\bar{v}}  &=&{\frac{2}{\Omega^2}}\left[(\partial_u \phi)^2+ \frac{\Omega^2}{4}\bar{v}^p|\nabb \phi|^2\right], \quad \dV_{v=\bar{v}}=r^2{\frac{\Omega^2}{2}} \md \sigma_{\mathbb S^2}\md u,\\
J_{\mu}^{W}(\phi) n^{\mu}_{u=\bar{u}}  &=&{\frac{2}{\Omega^2}}\left[v^p(\partial_v \phi)^2+ \frac{\Omega^2}{4}|\nabb \phi|^2\right],\quad \dV_{u=\bar{u}}=r^2{\frac{\Omega^2}{2}} \md \sigma_{\mathbb S^2}\md v,
\eea
cf.~ Appendix \ref{Jcurrents}.
Taking the integral over the spacetime region therefore yields
\bea
\lb{knbound}
\int\limits_ {\cR_{V_1}}  |\tilde{K}^{W}(\phi)| \dV &\leq& \int\limits^{u_{2}}_{u_{1}}\int\limits_ {\left\lbrace  v_*\leq v \leq \hat{v}\right\rbrace }\frac{ \Omega^2(\bar{u},\bar{v})}{2r}\left(1 +{\bar{v}}^{-p}\right)
J_{\mu}^{W}(\phi) n^{\mu}_{u=\bar{u}}\dV_{u=\bar{u}} \md \bar{u}\nonumber \\
&&+ \int\limits^{\hat{v}}_{v_*} \int\limits_ {\left\lbrace  u_1\leq u \leq u_2\right\rbrace }\frac{ \Omega^2(\bar{u},\bar{v})}{2r}
\left(1 +{\bar{v}}^p\right) J_{\mu}^{W}(\phi) n^{\mu}_{v=\bar{v}}\dV_{v=\bar{v}}\md \bar{v},\nonumber \\
&\leq& \int\limits^{u_{2}}_{u_1} \sup_{v_*\leq \bar{v} \leq \hat{v}}
\left[\frac{ \Omega^2(\bar{u},\bar{v})}{2r}\left(1+ {{\bar{v}}^{-p}}\right)\right]\md \bar{u} \sup_{u_{1}\leq \bar{u}\leq u_{2}}\int\limits_ {\left\lbrace  v_*\leq v \leq \hat{v}\right\rbrace }J_{\mu}^{S}(\phi) n^{\mu}_{u=\bar{u}}\dV_{u=\bar{u}}
\nonumber\\
&&+  \int\limits^{\hat{v}}_{v_*}\sup_{u_{1}\leq \bar{u}\leq u_{2}}\left[\frac{ \Omega^2(\bar{u},\bar{v})}{2r}
\left( 1 + {\bar{v}}^{p}\right) \right]\md \bar{v}\sup_{v_*\leq \bar{v} \leq \hat{v}}\int\limits_ {\left\lbrace  u_1\leq u \leq u_2\right\rbrace }J_{\mu}^{S}(\phi) n^{\mu}_{v=\bar{v}}\dV_{v=\bar{v}}.\nonumber\\
&&
\eea
It remains to show finiteness and smallness of \mbox{$\int\limits^{u_{2}}_{u_{1}} \sup_{v_*\leq \bar{v} \leq \hat{v}}
\left[\frac{ \Omega^2(\bar{u},\bar{v})}{2r}\left(1+ {{\bar{v}}^{-p}}\right)\right]\md \bar{u} $} and \mbox{$\int\limits^{\hat{v}}_{v_*}\sup_{u_{1}\leq \bar{u}\leq u_{2}}\left[\frac{ \Omega^2(\bar{u},\bar{v})}{2r}
\left( 1 + {\bar{v}}^{p}\right) \right]\md \bar{v}$}.
Recall the properties of the hypersurface $\gamma$ shown in Section \ref{gamma_curve}. Since $v_*> v_{\gamma}(u_{\schere})$, \eqref{spacevolumedecay_u} implies that
\bea
\lb{zukunftu}
\Omega^2(\bar{u}, \bar{v})\leq C \Omega^2(u_{\schere}, v_*), \quad \mbox{for any $(\bar{u}, \bar{v}) \in J^+(x)$, with \mbox{$x=(u_{\schere}, v_*)$}, $x \in \cB$,}
\eea
so that we obtain
\bea
\int\limits^{u_{2}}_{u_1} \sup_{v_*\leq \bar{v} \leq \hat{v}}
\left[\frac{ \Omega^2(\bar{u},\bar{v})}{2r}\left(1+ {{\bar{v}}^{-p}}\right)\right]\md \bar{u} 
&\stackrel{\eqref{zukunftu}}{\leq}& 
C \int\limits^{u_{2}}_{u_1} \sup_{v_*\leq \bar{v} \leq \hat{v}}
\Omega^2(u_{\schere}, v_*)\left(1+ {{\bar{v}}^{-p}}\right)\md \bar{u},\nonumber\\
&\leq&  \tilde{C} \int\limits^{u_{2}}_{u_1}
|u_{\schere}|^{-\beta\alpha}\left(1+ {{v_*}^{-p}}\right)\md \bar{u},\nonumber\\
&\leq& \tilde{\tilde{C}}\left|{u_{2}}-{u_1}\right| \nonumber\\
&\leq&\delta_1,
\eea
and moreover $\delta_1 \rightarrow 0$ for $\epsilon \rightarrow 0$.

Further, in Section \ref{finiteness} we derived that similarly
\bea
\lb{zukunft}
\Omega^2(\bar{u}, \bar{v})\leq C \bar{v}^{-\beta\alpha}, \quad \mbox{for any $(\bar{u}, \bar{v}) \in J^+(x)$, with \mbox{$x=(u_{\schere}, v_*)$}, $x \in \cB$,}
\eea
where $v_*> v_{\gamma}(u_{\schere})$.
\bea
\int\limits^{\hat{v}}_{v_*}\sup_{u_{1}\leq \bar{u}\leq u_{2}}\left[\frac{ \Omega^2(\bar{u},\bar{v})}{2r}
\left( 1 + {\bar{v}}^{p}\right) \right]\md \bar{v}&\stackrel{\eqref{zukunft}}{\leq}& 
C\int\limits^{\hat{v}}_{v_*}\bar{v}^{-\beta\alpha}
\left( 1 + {\bar{v}}^{p}\right) \md \bar{v}\nonumber\\
&\leq& \frac{\tilde{C}}{|-\beta\alpha+p+1|}\left[\bar{v}^{-\beta\alpha+p+1}\right]^{\hat{v}}_{v_*}\nonumber\\
&\leq&\delta_2,
\eea
where $\delta_2 \rightarrow 0$ for $v_*\rightarrow \infty$. Thus the conclusion of Lemma \ref{K_spherV} is obtained.
\end{proof}
From the above we obtain
\begin{prop}
\lb{sequenceregion}
Let $\phi$ be as in Theorem \ref{anfang} and $p$ as in \eqref{waspist}. For all \mbox{$v_*>v_{\gamma}(u_{\schere})$} sufficiently large, 
\mbox{$\hat{v} \in (v_*, \infty)$}, 
for \mbox{$u_{\diamond}\geq u_2>u_1\geq u_{\schere}$} and $\epsilon\geq u_2-u_1>0$. Then for $\epsilon$ sufficiently small, the
following is true. If 
\bea
\lb{istart2}
\int\limits_ {\left\lbrace  v_* \leq v \leq \hat{v}\right\rbrace } J_{\mu}^{W}(\phi) n^{\mu}_{u=u_{1}} \dV_{u=u_{1}}&\leq& {{\tilde{C}}_1},
\eea
then
\bea
\lb{idiamond}
\int\limits_ {\left\lbrace  v_* \leq v \leq \hat{v}\right\rbrace } J_{\mu}^{W}(\phi) n^{\mu}_{u_{2}} \dV_{u_{\diamond}}
+\int\limits_ {\left\lbrace  u_{1} \leq u \leq u_{2}\right\rbrace } J_{\mu}^{W}(\phi) n^{\mu}_{v=\hat{v}} \dV_{v=\hat{v}}
&\leq& {{\tilde{C}}_{2}}(\tilde{C}_1, u_\diamond, v_*), 
\eea
where ${{\tilde{C}}_{2}}$ depends on $\tilde{C}_1$, ${C_{0}}$ of Theorem \ref{anfang} and ${D_{0}(u_{\diamond}, v_*)}$ of Proposition \ref{initialdataprop}. 
\end{prop}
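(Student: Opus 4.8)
The plan is to run the divergence-theorem-plus-absorption scheme of Proposition \ref{kastle}, now on the rectangle $\cR_{V_1}=\{u_1\leq u\leq u_2\}\cap\{v_*\leq v\leq\hat v\}$ rather than on $\cR_{IV}$. I would first apply the divergence theorem for the multiplier $W$ in $\cR_{V_1}$, whose null boundary consists of the two past segments $u=u_1$, $v=v_*$ and the two future segments $u=u_2$, $v=\hat v$. Since $\cR_{V_1}\subset\cB$ the angular part of the bulk is positive, so by \eqref{167} one has $-K^{W}\leq|\tilde K^{W}|$ throughout $\cR_{V_1}$. Discarding the positive angular bulk and invoking the hypothesis \eqref{istart2} for the incoming $u=u_1$ flux, the identity rearranges to
\[
\int\limits_{\{v_*\leq v\leq\hat v\}}J^{W}_\mu(\phi)n^\mu_{u=u_2}\dV_{u=u_2}+\int\limits_{\{u_1\leq u\leq u_2\}}J^{W}_\mu(\phi)n^\mu_{v=\hat v}\dV_{v=\hat v}\leq\int\limits_{\cR_{V_1}}|\tilde K^{W}|\dV+\tilde{C}_1+\int\limits_{\{u_1\leq u\leq u_2\}}J^{W}_\mu(\phi)n^\mu_{v=v_*}\dV_{v=v_*}.
\]

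Next I would dispose of the two non-bulk terms on the right. The hypothesis already bounds the $u=u_1$ flux by $\tilde{C}_1$. The flux through the past segment $v=v_*$, taken over $u\in[u_1,u_2]\subset[u_{\schere},u_\diamond]$, is controlled by the initial data via Proposition \ref{initialdataprop} applied with $v_\diamond=v_*$: the segment $\{v=v_*\}\cap\{u_{\schere}\leq u\leq u_\diamond\}$ is a compact piece of the open interior, so this flux is bounded by $C(u_\diamond,v_*)D_0(u_\diamond,v_*)$. For the bulk I would invoke Lemma \ref{K_spherV}, which estimates $\int_{\cR_{V_1}}|\tilde K^{W}|\dV$ by $\delta_1\sup_{\bar u}\int J^{W}_\mu n^\mu_{u=\bar u}+\delta_2\sup_{\bar v}\int J^{W}_\mu n^\mu_{v=\bar v}$, with $\delta_1\to0$ as $\epsilon\to0$ and $\delta_2\to0$ as $v_*\to\infty$.

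The crux, exactly as in the passage leading to \eqref{140}, is to close by absorption. Writing $\mathcal{F}_u=\sup_{u_1\leq\bar u\leq u_2}\int_{\{v_*\leq v\leq\hat v\}}J^{W}_\mu(\phi)n^\mu_{u=\bar u}\dV_{u=\bar u}$ and $\mathcal{F}_v=\sup_{v_*\leq\bar v\leq\hat v}\int_{\{u_1\leq u\leq u_2\}}J^{W}_\mu(\phi)n^\mu_{v=\bar v}\dV_{v=\bar v}$, I would repeat the divergence-theorem estimate on the sub-rectangles $\{u_1\leq u\leq\bar u\}\cap\{v_*\leq v\leq\hat v\}$ and $\{u_1\leq u\leq u_2\}\cap\{v_*\leq v\leq\bar v\}$. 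Since $|\tilde K^{W}|\geq0$, every sub-rectangle bulk is dominated by the full-strip bulk bounded above via Lemma \ref{K_spherV}, while the relevant past fluxes stay bounded by $\tilde{C}_1$ and $C(u_\diamond,v_*)D_0(u_\diamond,v_*)$ by positivity of the integrands. Taking suprema gives $\mathcal{F}_u\leq\delta_1\mathcal{F}_u+\delta_2\mathcal{F}_v+\tilde{C}_1+C(u_\diamond,v_*)D_0(u_\diamond,v_*)$ and the same bound for $\mathcal{F}_v$. Adding, and choosing first $v_*$ large so that $\delta_2\leq\frac14$ and then $\epsilon$ small so that $\delta_1\leq\frac14$ (both possible by Lemma \ref{K_spherV}), I can absorb the $\delta$-terms and obtain $\mathcal{F}_u+\mathcal{F}_v\leq 4\tilde{C}_1+4C(u_\diamond,v_*)D_0(u_\diamond,v_*)=:\tilde{C}_2$. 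As the two fluxes on the left of \eqref{idiamond} are bounded by $\mathcal{F}_u$ and $\mathcal{F}_v$ respectively, this establishes \eqref{idiamond}.

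The step I expect to require the most care is the absorption itself: moving $\delta_1\mathcal{F}_u+\delta_2\mathcal{F}_v$ to the left is legitimate only once $\mathcal{F}_u,\mathcal{F}_v$ are a priori finite. This holds because $\hat v<\infty$ makes $\cR_{V_1}$ a compact subset of the open interior on which $\phi$ is smooth; crucially, the resulting $\tilde{C}_2$ is independent of $\hat v$ (and of $\delta_1,\delta_2$), which is precisely the uniformity needed to later send $\hat v\to\infty$ and reach $\cC\cH^+$. The only other delicate point is the comparison of the $v=v_*$ flux with Proposition \ref{initialdataprop}: the weight $v_*^p$ and the factors $\Omega^{\pm2}$ in $J^{W}_\mu n^\mu_{v=v_*}$ differ from those in \eqref{proppur}, but on the compact segment $\{v=v_*\}\cap\{u_{\schere}\leq u\leq u_\diamond\}$, lying strictly between the two horizons, $\Omega^2$ is bounded above and below by positive constants, so the two are comparable up to a constant depending on $u_\diamond$ and $v_*$ — this is the origin of that dependence in $\tilde{C}_2$.
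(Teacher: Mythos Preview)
Your proposal is correct and follows essentially the same route as the paper: divergence theorem for $W$ on $\cR_{V_1}$, use \eqref{167} to reduce the bulk to $|\tilde K^{W}|$, bound the latter via Lemma \ref{K_spherV}, control the past null fluxes by the hypothesis \eqref{istart2} and by Proposition \ref{initialdataprop}, and absorb after taking suprema with $\delta_1,\delta_2$ small. Your additional remarks on a priori finiteness of $\mathcal F_u,\mathcal F_v$ and on the comparability of the $W$-flux at $v=v_*$ with the quantities in \eqref{proppur} (via boundedness of $\Omega^2$ on the compact segment) are valid and in fact make explicit points the paper leaves implicit.
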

{\em Remark.} Note already that the hypothesis \eqref{istart2} is implied by the conclusion of Proposition \ref{gesamtesti} for $u_1=u_{\schere}$. 

\begin{proof}
By the divergence theorem and \eqref{167} we can state
\bea
\lb{divv}
&&\int\limits_ {\left\lbrace  v_* \leq v \leq \hat{v}\right\rbrace }  J_{\mu}^{W}(\phi) n^{\mu}_{u=u_{2}} \dV_{u=u_{2}}
+ \int\limits_ {\left\lbrace  u_{1}\leq u \leq u_{2}\right\rbrace }  J_{\mu}^{W}(\phi) n^{\mu}_{v=\hat{v}} \dV_{v=\hat{v}}
\nonumber\\
&\leq&\int\limits_ {\cR_{V_1}} |\tilde{K}^{W}| \dV+\int\limits_ {\left\lbrace  u_{1}\leq u \leq u_{2}\right\rbrace }  J_{\mu}^{W}(\phi) n^{\mu}_{v=v_*} \dV_{v=v_*}
+\int\limits_ {\left\lbrace  v_* \leq v \leq \hat{v}\right\rbrace }  J_{\mu}^{W}(\phi) n^{\mu}_{u=u_{1}} \dV_{u=u_{1}}.
\eea
We can replace the hypersurfaces $u=u_2$ and $v=\hat{v}$ with  $u=\bar{u}$ and $v=\bar{v}$ hypersurfaces and therefore obtain
\bea
&&\sup_{u_{1}\leq \bar{u}\leq u_{2}}\int\limits_ {\left\lbrace  v_* \leq v \leq \hat{v}\right\rbrace }  J_{\mu}^{W}(\phi) n^{\mu}_{u=\bar{u}}\dV_{u=\bar{u}}
+\sup_{v_*\leq \bar{v} \leq \hat{v}}\int\limits_ {\left\lbrace  u_{1}\leq u \leq u_{2}\right\rbrace }J_{\mu}^{W}(\phi) n^{\mu}_{v=\bar{v}} \dV_{v=\bar{v}}\nonumber\\
&\leq&\int\limits_ {\cR_{V_i}} |\tilde{K}^{W}| \dV+\int\limits_ {\left\lbrace  u_{\text{\ding{34}}}\leq u \leq u_{\diamond}\right\rbrace }  J_{\mu}^{W}(\phi) n^{\mu}_{v=v_*} \dV_{v=v_*}
+\int\limits_ {\left\lbrace  v_* \leq v \leq \hat{v}\right\rbrace }  J_{\mu}^{W}(\phi) n^{\mu}_{u=u_{1}} \dV_{u=u_{1}},\nonumber\\
&\stackrel{\mbox{Lem.} \ref{K_spherV}}{\leq}&\delta_1\sup_{u_{1}\leq \bar{u}\leq u_{2}}\int\limits_ {\left\lbrace  v_* \leq v \leq \hat{v}\right\rbrace }  J_{\mu}^{W}(\phi) n^{\mu}_{u=\bar{u}}\dV_{u=\bar{u}}
+\delta_2\sup_{v_*\leq \bar{v} \leq \hat{v}}\int\limits_ {\left\lbrace  u_{1}\leq u \leq u_{2}\right\rbrace }J_{\mu}^{W}(\phi) n^{\mu}_{v=\bar{v}} \dV_{v=\bar{v}}\nonumber\\
&&+\int\limits_  {\left\lbrace  u_{\text{\ding{34}}}\leq u \leq u_{\diamond}\right\rbrace }J_{\mu}^{W}(\phi) n^{\mu}_{v=v_*} \dV_{v=v_*}
+\int\limits_ {\left\lbrace  v_* \leq v \leq \hat{v}\right\rbrace }  J_{\mu}^{W}(\phi) n^{\mu}_{u=u_{1}} \dV_{u=u_{1}}.
\eea
Thus, we have
\bea
\lb{deltamax}
&&\Rightarrow \sup_{u_{1}\leq \bar{u}\leq u_{2}}\int\limits_ {\left\lbrace  v_* \leq v \leq \hat{v}\right\rbrace }  J_{\mu}^{W}(\phi) n^{\mu}_{u=\bar{u}}\dV_{u=\bar{u}}
+\sup_{v_*\leq \bar{v} \leq \hat{v}}\int\limits_ {\left\lbrace  u_{1}\leq u \leq u_{2}\right\rbrace }J_{\mu}^{W}(\phi) n^{\mu}_{v=\bar{v}} \dV_{v=\bar{v}}\nonumber\\
&\leq& \frac{1}{1-\mbox{max}\left\{\delta_1,\delta_2\right\}}\left[\int\limits_ {\left\lbrace  u_{\text{\ding{34}}}\leq u \leq u_{\diamond}\right\rbrace } J_{\mu}^{W}(\phi) n^{\mu}_{v=v_*} \dV_{v=v_*}
+\int\limits_ {\left\lbrace  v_* \leq v \leq \hat{v}\right\rbrace }  J_{\mu}^{W}(\phi) n^{\mu}_{u=u_{1}} \dV_{u=u_{1}}\right].\nonumber\\
&\leq&\tilde{C}{D_{0}(u_{\diamond}, v_*)}+{{\tilde{C}}_1}, 
\eea
where the last step 
follows by statement \eqref{proppur} of Proposition \ref{initialdataprop} and by \eqref{istart2}, and where we have chosen $\epsilon$ sufficiently small and $v_*$ sufficiently close to $\infty$, such that
$\delta_1$ and $\delta_2$ satisfy say
\bea
\lb{deltacondition}
\delta_1, \delta_2\leq\frac12.
\eea
The conclusion of Proposition \ref{sequenceregion} is obtained. 
\end{proof}

We are now ready to make a statement for the entire region $\cR_V$.
\begin{prop}
\lb{rechtes}
Let $\phi$ be as in Theorem \ref{anfang} and $p$ as in \eqref{waspist}.
Then,
for all \mbox{$v_*> v_{\gamma}(u_{\schere})$} sufficiently large, 
$\hat{v}>v_*$, 
and \mbox{$u_{\diamond}>\hat{u}>u_{\schere}$},
\bea
\lb{diamond}
\int\limits_ {\left\lbrace  u_{\schere} \leq u \leq u_{\diamond}\right\rbrace } J_{\mu}^{W}(\phi) n^{\mu}_{v=\hat{v}} \dV_{v=\hat{v}}
+\int\limits_ {\left\lbrace  v_* \leq v \leq \hat{v}\right\rbrace } J_{\mu}^{W}(\phi) n^{\mu}_{u=\hat{u}} \dV_{u=\hat{u}}&\leq& {C(u_{\diamond}, v_*) }, 
\eea
where $C$ depends on ${C_{0}}$ of Theorem \ref{anfang} and ${D_{0}(u_{\diamond}, v_*)}$ of Proposition \ref{initialdataprop}. 
\end{prop}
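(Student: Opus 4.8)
The plan is to obtain Proposition \ref{rechtes} by iterating the one-step propagation estimate of Proposition \ref{sequenceregion} across the finite $u$-interval $[u_{\schere},u_{\diamond}]$. First I would fix $\epsilon>0$ small enough and $v_*>v_{\gamma}(u_{\schere})$ large enough that the constants $\delta_1,\delta_2$ of Lemma \ref{K_spherV} satisfy $\delta_1,\delta_2\leq\tfrac12$ on every sub-rectangle of $\cR_V$ of $u$-width at most $\epsilon$; this is precisely the regime in which Proposition \ref{sequenceregion} applies, the resulting constant obeying the linear bound read off from \eqref{deltamax}, namely
\[
\tilde{C}_2(\tilde{C}_1,u_{\diamond},v_*)\leq 2\bigl[\tilde{C}\,D_{0}(u_{\diamond},v_*)+\tilde{C}_1\bigr].
\]
I would then partition $[u_{\schere},u_{\diamond}]$ by points $u_{\schere}=u_0<u_1<\dots<u_N=u_{\diamond}$ with $u_{i+1}-u_i\leq\epsilon$, arranging that the given value $\hat{u}$ occurs among the $u_i$; since $u_{\diamond}-u_{\schere}<\infty$ the number of steps $N$ is finite.

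For the base of the induction I would invoke Proposition \ref{gesamtesti} at $\tilde{u}=u_{\schere}$. As in the Remark following Proposition \ref{sequenceregion}, on a constant-$u$ slice in $\cD^+(\gamma)$ one has $|u|>1$, so the two currents differ only through the weight $1$ versus $|u|^p$ multiplying the angular term and hence $J_{\mu}^{W}(\phi)n^{\mu}_{u}\leq J_{\mu}^{S}(\phi)n^{\mu}_{u}$; thus the $W$-flux through $u=u_{\schere}$ is controlled by the $S$-flux, establishing hypothesis \eqref{istart2} at $u_1=u_0=u_{\schere}$ with an initial constant $A_0<\infty$. The inductive step is then immediate: assuming $\int_{\{v_*\leq v\leq\hat{v}\}}J_{\mu}^{W}(\phi)n^{\mu}_{u=u_i}\dV_{u=u_i}\leq A_i$, Proposition \ref{sequenceregion} applied on $\{u_i\leq u\leq u_{i+1}\}\cap\cR_V$ gives
\[
\int_{\{v_*\leq v\leq\hat{v}\}}\!\!J_{\mu}^{W}(\phi)n^{\mu}_{u=u_{i+1}}\dV_{u=u_{i+1}}+\int_{\{u_i\leq u\leq u_{i+1}\}}\!\!J_{\mu}^{W}(\phi)n^{\mu}_{v=\hat{v}}\dV_{v=\hat{v}}\leq A_{i+1},
\]
with $A_{i+1}\leq 2A_i+2\tilde{C}\,D_{0}(u_{\diamond},v_*)$. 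The first term furnishes the input for the next step, while the second is stored.

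Finally I would sum the stored constant-$v$ contributions over $i=0,\dots,N-1$ to recover the full flux through $v=\hat{v}$ over $[u_{\schere},u_{\diamond}]$, and read off the constant-$u$ flux at $u=\hat{u}$ from the step that lands on $\hat{u}$. Solving the recursion yields $A_i\leq 2^{i}A_0+(2^{i+1}-2)\tilde{C}\,D_{0}(u_{\diamond},v_*)$, so every $A_i$ and the finite sum $\sum_{i=0}^{N-1}A_{i+1}$ are bounded by a single constant $C(u_{\diamond},v_*)$ depending only on $C_{0}$ of Theorem \ref{anfang} and $D_{0}(u_{\diamond},v_*)$ of Proposition \ref{initialdataprop}, as claimed. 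The only point demanding care is that the constant compounds by a factor of $2$ at each step and hence grows a priori like $2^{N}$; this is harmless because $N$ is finite for fixed $u_{\diamond}$, but it is exactly why the bound is \emph{not} uniform as $u_{\diamond}\to\infty$, i.e.~as one approaches $\cC\cH_B^+$. Surviving that limit is the genuinely delicate issue, and is what the separate treatments of $\tilde\cR_{V}$ and $\cR_{VI}$ in the following sections are arranged to address; for the present proposition, with $u_{\diamond}$ fixed, the finite induction suffices.
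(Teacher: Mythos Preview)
Your proposal is correct and takes essentially the same approach as the paper: partition $[u_{\schere},u_{\diamond}]$ into steps of width at most $\epsilon$ and iterate Proposition~\ref{sequenceregion}, using Proposition~\ref{gesamtesti} (via the Remark) to seed the base step. You have in fact supplied more detail than the paper's proof --- the explicit recursion $A_{i+1}\le 2A_i+2\tilde C\,D_0$, the summation of the stored $v=\hat v$ contributions, and the observation about non-uniformity as $u_{\diamond}\to\infty$ --- all of which are implicit but unstated there.
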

\begin{proof}
Let $\epsilon$ be as in Proposition \ref{sequenceregion}.
We choose a sequence \mbox{$u_{i+1}-u_i\leq\epsilon$} and $i=\left\{1,2,..,n\right\}$ such that $u_1=u_{\schere}$ and $u_n=\hat{u}$.
Denote \mbox{$\cR_{V_i}=\left\{u_{i}\leq u \leq u_{i+1}\right\}\cap\left\{ v_*\leq v\leq \hat{v}\right\}$}, cf.~ Figure \ref{region5_sequence}.
Iterating the conclusion of Proposition \ref{sequenceregion} from $u_1$ up to $u_n$ then completes the proof.
Note that $n$ depends only on the smallness condition on $\epsilon$ from Proposition \ref{sequenceregion},
since $n\lesssim  \frac{u_{\diamond}-u_{\schere}}{\epsilon}$.
{\begin{figure}[ht]
\centering
\includegraphics[width=0.4\textwidth]{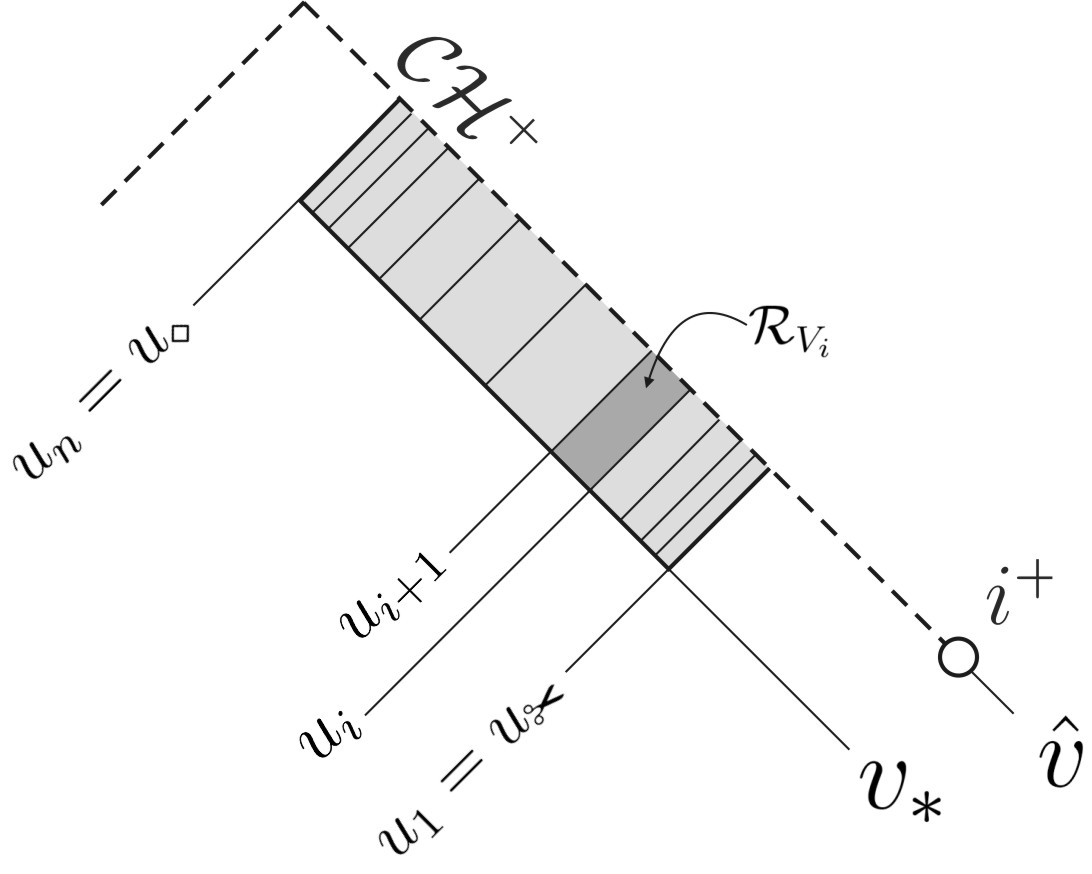}
\caption[]{Penrose diagram depicting regions $\cR_{V_i}$.}
\label{region5_sequence}\end{figure}}
\end{proof}

\section{Energy along the future boundaries of $\tilde{\cR}_{V}$}
\lb{region_tilde5_proof}
Again we also need the estimates on the {\it left} side, therefore, we repeat the derivation of Section \ref{region5_proof} for region
\mbox{$\tilde{\cR}_{V}=\left\{u_*\leq u\leq \hat{u}\right\}\cap\left\{ v_{\schere}\leq v \leq v_{\diamond} \right\}$}, according to Figure \ref{region5_links}, which is located in the blueshift region \mbox{$\tilde{\cR}_{V}\subset \cB$}.
{\begin{figure}[ht]
\centering
\includegraphics[width=0.4\textwidth]{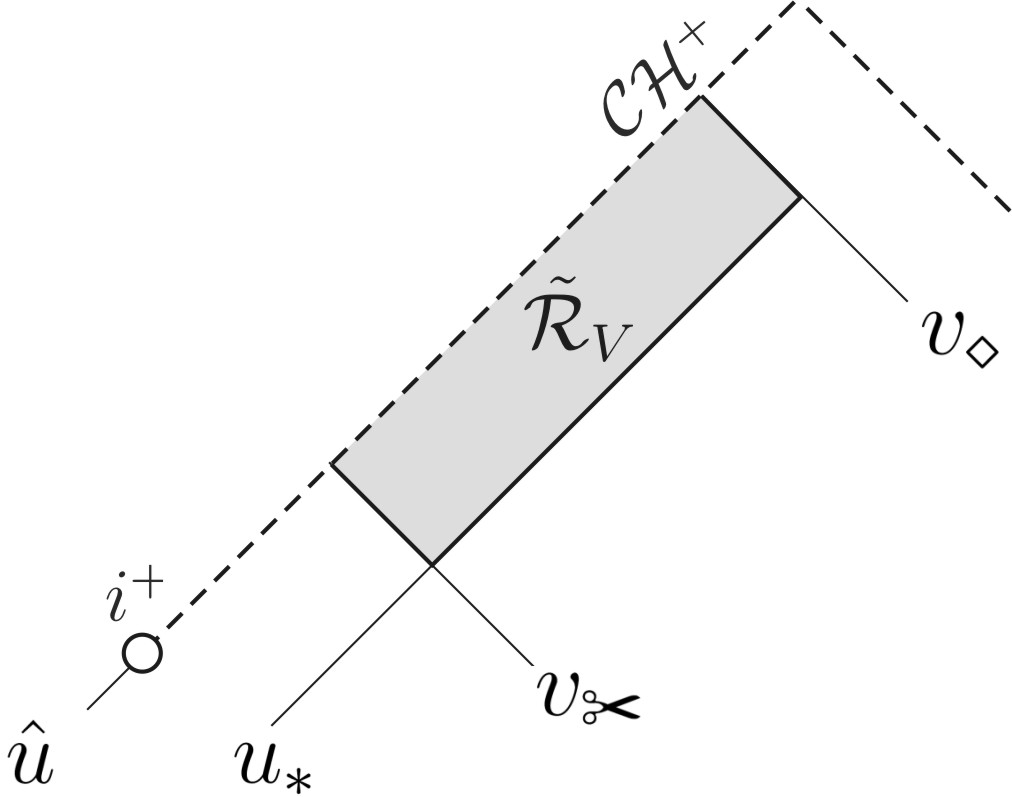}
\caption[]{Penrose diagram depicting region $\tilde{\cR}_{V}$.}
\label{region5_links}\end{figure}}
Note that for region $\tilde{\cR}_{V}$ not only do we have to interchange $u$ and $v$, we also have to use the vector field
\bea
\lb{U_feld}
Z=\partial_v+u^p\partial_u
\eea
instead of $W$, cf.~ \eqref{V_feld}. Therefore, we can immediately state the following proposition about the energy along the horizon away from $v_{\schere}$.
\begin{prop}
\lb{linkes}
Let $\phi$ be as in Theorem \ref{anfangl} and $p$ as in \eqref{waspist}.
Then,
for $v_{\schere}$ sufficiently close to $-\infty$, 
for all \mbox{$u_*> u_{\gamma}(v_{\schere})$} sufficiently large, 
$\hat{u} >u_*$, 
and for $v_{\diamond}>v_{\schere}$ 
\bea
\lb{diamondl}
\int\limits_ {\left\lbrace  v_{\schere} \leq v \leq v_{\diamond}\right\rbrace }  J_{\mu}^{Z}(\phi) n^{\mu}_{u=\hat{u}} \dV_{u=\hat{u}}
+\int\limits_ {\left\lbrace  u_* \leq u \leq \hat{u}\right\rbrace } J_{\mu}^{Z}(\phi) n^{\mu}_{v=v_{\diamond}} \dV_{v=v_{\diamond}}&\leq& {C}(u_*, v_{\diamond}), 
\eea
where $C$ depends on ${C_{0}}$ of Theorem \ref{anfangl} and ${D_{0}(u_*, v_{\diamond})}$ of Proposition \ref{initialdatapropl}.
\end{prop}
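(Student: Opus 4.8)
The plan is to mirror, line for line, the three-step argument of Section \ref{region5_proof}, carrying out the substitution $u \leftrightarrow v$ throughout and replacing the multiplier $W = v^p\partial_v + \partial_u$ of \eqref{V_feld} by $Z = \partial_v + u^p\partial_u$ of \eqref{U_feld}. The reason the weight $u^p$ must now sit on $\partial_u$ is that on the left end it is $u$ that tends to $+\infty$ along $\cC\cH_B^+$ and along which decay is available (cf.~Theorem \ref{anfangl}), so $u$ plays exactly the role $v$ played on the right.

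First I would compute $K^Z$ from \eqref{Kplug} — the exact analog of \eqref{kv} — and split it as $K^Z = \tilde{K}^Z + K^Z_{\nabb}$, where $\tilde{K}^Z = -\frac{2}{r}[1 + u^p](\partial_u\phi\,\partial_v\phi)$ is the cross term and $K^Z_{\nabb}$ multiplies $|\nabb\phi|^2$. As in \eqref{Ksphaerisch}, the angular part is positive in $\tilde{\cR}_V \subset \cB$: there $\frac{\partial_u\Omega}{\Omega}, \frac{\partial_v\Omega}{\Omega} < 0$ and, since $u_* > u_\gamma(v_{\schere})$ is taken large (so that $u > \tfrac{p}{2\beta}$), the term $-\frac{\partial_u\Omega}{\Omega}u^p$ dominates the differentiated weight $\tfrac{p}{2}u^{p-1}$. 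This gives $-K^Z \leq |\tilde{K}^Z|$ in $\tilde{\cR}_V$, the analog of \eqref{167}.

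Next I would establish the $Z$-analog of Lemma \ref{K_spherV}, bounding $\int_{\tilde{\cR}_{V_1}}|\tilde{K}^Z|\dV$ by $\delta_1 \sup \int J_\mu^Z n^\mu_{v=\bar v}\dV_{v=\bar v} + \delta_2 \sup \int J_\mu^Z n^\mu_{u=\bar u}\dV_{u=\bar u}$ on a thin $v$-slab $\tilde{\cR}_{V_1} = \{v_i \leq v \leq v_{i+1}\} \cap \tilde{\cR}_V$. The Cauchy--Schwarz splitting of $\tilde{K}^Z$ and the two elementary integral inequalities of Lemma \ref{K_spherV} go through unchanged; the decisive input is again the pointwise decay of $\Omega^2$ from Section \ref{finiteness}, now read with $u$ and $v$ interchanged, so that the $\Omega^2$-weighted integral across the slab is uniformly small as its width $\epsilon \to 0$ (yielding $\delta_1 \to 0$), while the $\Omega^2\,u^p$-weighted integral in $u$ converges precisely because $\beta\alpha > p+1$ (cf.~\eqref{alpha}) and tends to $0$ as $u_* \to \infty$ (yielding $\delta_2 \to 0$). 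With this, the one-step propagation statement (the analog of Proposition \ref{sequenceregion}) follows by applying the divergence theorem on $\tilde{\cR}_{V_1}$, using $-K^Z \leq |\tilde{K}^Z|$ and the bulk estimate to absorb the two $\delta$-terms into the left-hand fluxes once $\delta_1, \delta_2 \leq \tfrac12$, and feeding in the $u = u_*$ initial flux supplied by Proposition \ref{initialdatapropl}.

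Finally I would iterate exactly as in the proof of Proposition \ref{rechtes}: pick a sequence $v_1 = v_{\schere} < \cdots < v_n = v_{\diamond}$ with $v_{i+1} - v_i \leq \epsilon$, so that $n \lesssim (v_{\diamond} - v_{\schere})/\epsilon$ is finite, take as the base case the $v = v_{\schere}$ flux furnished by Proposition \ref{gesamtestil}, and propagate in $v$ from $v_{\schere}$ up to $v_{\diamond}$ to reach the asserted bound on the $v = v_{\diamond}$ and $u = \hat u$ fluxes. I do not expect a genuine obstacle, since every analytic ingredient was already produced on the right; the only point demanding care is the bookkeeping of the interchange — making sure the weight $u^p$ rides on $\partial_u$ so that $K^Z_{\nabb} \geq 0$, that the $\Omega^2$-decay is invoked in the $u$-direction, and that the flux ``data'' is now drawn from the $u = u_*$ slice (Proposition \ref{initialdatapropl}) and the base case from Proposition \ref{gesamtestil} rather than from their right-hand counterparts.
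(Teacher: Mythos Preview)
Your proposal is correct and is exactly what the paper does: the paper's proof consists of the single sentence ``The proof is analogous to the proof of Proposition \ref{rechtes},'' and your outline simply fills in the details of that analogy --- interchanging $u\leftrightarrow v$, replacing $W$ by $Z$, partitioning the $v$-interval $[v_{\schere},v_{\diamond}]$ into slabs of width $\epsilon$, and drawing the past-boundary data from Propositions \ref{initialdatapropl} and \ref{gesamtestil}. The identification of the smallness mechanisms ($\delta_1\to 0$ from slab width, $\delta_2\to 0$ from $u_*\to\infty$ via $\beta\alpha>p+1$) and the sign of $K^Z_{\nabb}$ in $\cB$ are all correct.
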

\begin{proof}
The proof is analogous to the proof of Proposition \ref{rechtes}.
\end{proof}

\section{Propagating the energy estimate up to the bifurcation sphere}
\lb{bifurcate}
In this section we will use both results from the right and left side on $\cC\cH^+$.
Fix $u_{\diamond}=v_{\diamond}$, such that moreover Proposition \ref{rechtes} holds with $v_{\diamond}=v_*$,
and such that Proposition \ref{linkes} holds with $u_{\diamond}=u_*$.
We will consider a region $\cR_{VI}=\left\{u_{\diamond}\leq u \leq \hat{u}, v_{\diamond}\leq v \leq \hat{v}\right\}$, with $\hat{u} \in (u_{\diamond}, \infty)$ and $\hat{v} \in (v_{\diamond}, \infty)$, cf.~ Figure \ref{diamant}.
{\begin{figure}[ht]
\centering
\includegraphics[width=0.6\textwidth]{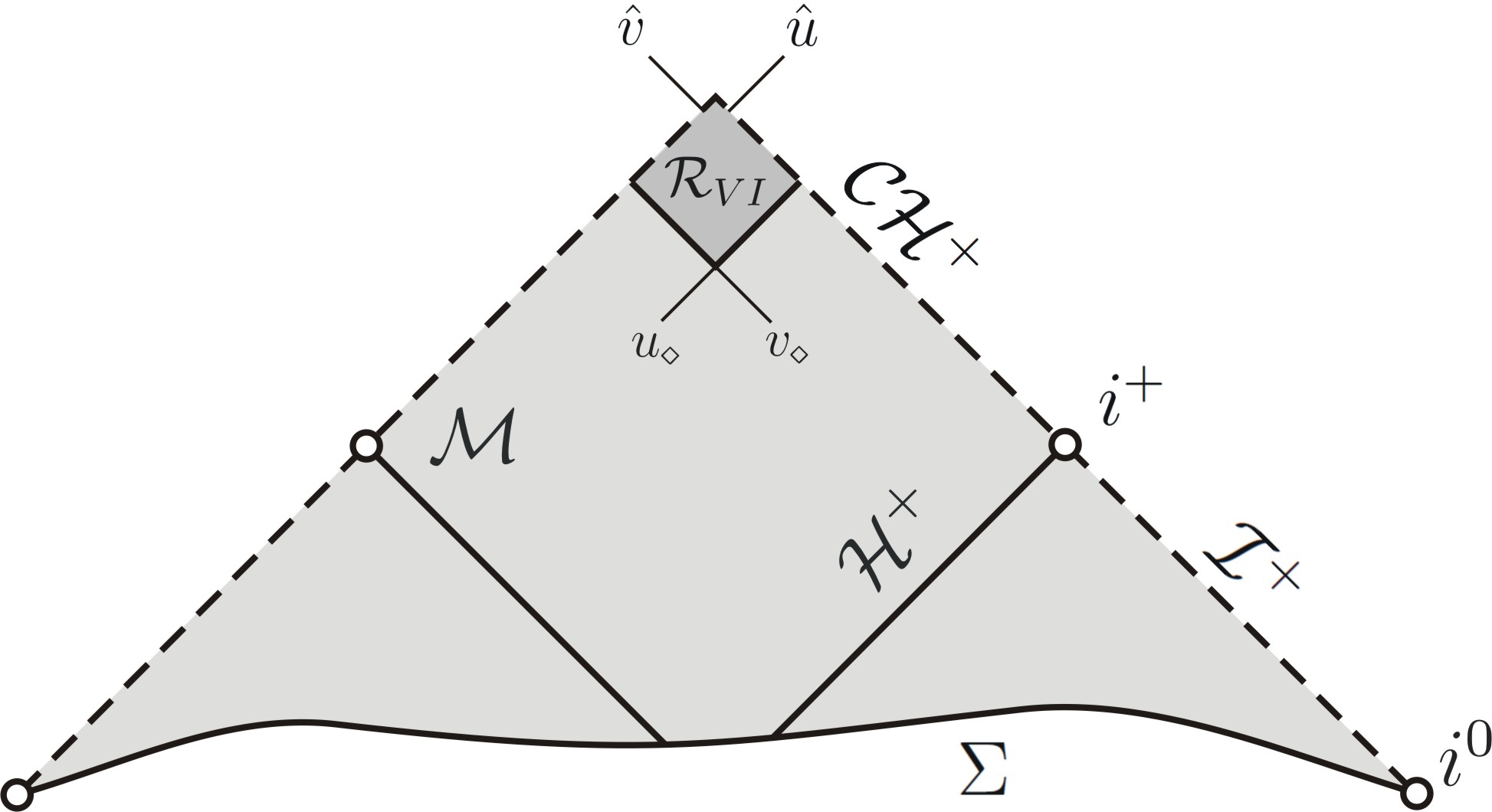}
\caption[]{Penrose diagram depicting region $\cR_{VI}$.}
\label{diamant}\end{figure}}
Recall, that in Section \ref{knsec} we have defined the weighted vector field\footnote{Since $u$ is always positive in the remaining region under consideration $\cR_{VI}$, we have omitted the absolute value in the $u$-weight.}
\bea
\lb{{S2}}
{S}=v^p\partial_v+u^p\partial_u,
\eea
which we are going to use again to obtain an energy estimate up to the bifurcate two-sphere.
Recall $K^{S}$ given in \eqref{KN},
where the terms multiplying the angular derivatives are positive since $\cR_{VI}$ is located in the blueshift region.
We further defined $\tilde{K}^{S}$ in \eqref{KNtilde}
and stated \eqref{knrelation} which will be useful to state the following proposition.
\begin{lem}
\label{K_spher_ende}
Let $\phi$ be an arbitrary function. Then, for all $(u_{\diamond},v_{\diamond})\in J^+(\gamma)\cap\cB$ and all $\hat{u}>u_{\diamond}$, all $\hat{v}>v_{\diamond}$,
the integral over \mbox{$\cR_{VI}$}, cf.~ Figure \ref{diamant} of the current $\tilde{K}^{S}$, defined by \eqref{KNtilde}, can be estimated by
\bea
\lb{deltaende}
\int\limits_{\cR_{VI}} |\tilde{K}^{S}| \dV \leq &\delta_1& \sup_{u_{\diamond}\leq \bar{u}\leq \hat{u}}\int\limits_ {\left\lbrace  v_{\diamond} \leq v \leq \hat{v}\right\rbrace }  J_{\mu}^{S}(\phi) n^{\mu}_{u=\bar{u}}\dV_{u=\bar{u}}\nonumber \\
+ &\delta_2& \sup_{v_{\diamond}\leq \bar{v} \leq \hat{v}}\int\limits_ {\left\lbrace  u_{\diamond} \leq u \leq \hat{u}\right\rbrace }J_{\mu}^{S}(\phi) n^{\mu}_{v=\bar{v}} \dV_{v=\bar{v}},
\eea
where $\delta_1$ and $\delta_2$ are positive constants, with $\delta_1\rightarrow 0$ as $u_{\diamond}\rightarrow \infty$ and $\delta_2\rightarrow 0$ as $v_{\diamond}\rightarrow \infty$.
\end{lem}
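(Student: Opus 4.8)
The plan is to follow the architecture of the proofs of Lemma~\ref{K_spher} and Lemma~\ref{K_spherV} essentially verbatim, the only structural difference being that in $\cR_{VI}$ both $u$ and $v$ are large and positive, so the decay of $\Omega^2$ is available \emph{symmetrically} in the two null directions. First I would apply the Cauchy--Schwarz inequality to the explicit expression \eqref{KNtilde} for $\tilde K^{S}$ exactly as before (with $|u|$ replaced by $u$, since $u>0$ throughout $\cR_{VI}$), obtaining
\begin{align*}
|\tilde K^{S}|\leq \frac{1}{r}\left[\left(1+\frac{u^{p}}{v^{p}}\right)v^{p}(\partial_v\phi)^{2}+\left(1+\frac{v^{p}}{u^{p}}\right)u^{p}(\partial_u\phi)^{2}\right].
\end{align*}
Recalling the volume form $\dV=r^{2}\frac{\Omega^{2}}{2}\,\md u\,\md v\,\md\sigma_{\bbS^2}$ together with the component fluxes $J_\mu^{S}(\phi)n^\mu_{v=\bar v}$ and $J_\mu^{S}(\phi)n^\mu_{u=\bar u}$ computed in the proof of Lemma~\ref{K_spher}, this rewrites $\int_{\cR_{VI}}|\tilde K^{S}|\,\dV$ as a double integral whose integrand splits into a weight $\frac{\Omega^{2}}{2r}(1+\frac{u^{p}}{v^{p}})$ times a $u=\mathrm{const}$ flux density, plus a weight $\frac{\Omega^{2}}{2r}(1+\frac{v^{p}}{u^{p}})$ times a $v=\mathrm{const}$ flux density.

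Next I would invoke the two elementary supremum/Fubini inequalities \eqref{uintegral} and \eqref{vintegral} to pull the fluxes out as suprema over the constant-$u$ and constant-$v$ slices of $\cR_{VI}$, which identifies the two constants to be controlled as
\begin{align*}
\delta_1=\int_{u_{\diamond}}^{\hat u}\sup_{v_{\diamond}\leq\bar v\leq\hat v}\left[\frac{\Omega^{2}(\bar u,\bar v)}{2r}\left(1+\frac{\bar u^{p}}{\bar v^{p}}\right)\right]\md\bar u,\qquad \delta_2=\int_{v_{\diamond}}^{\hat v}\sup_{u_{\diamond}\leq\bar u\leq\hat u}\left[\frac{\Omega^{2}(\bar u,\bar v)}{2r}\left(1+\frac{\bar v^{p}}{\bar u^{p}}\right)\right]\md\bar v.
\end{align*}
Since $(u_{\diamond},v_{\diamond})\in J^+(\gamma)\cap\cB$ and $\cR_{VI}\subset J^+\bigl((u_{\diamond},v_{\diamond})\bigr)$, the whole rectangle lies in $J^+(\gamma)\cap\cB$, so the pointwise decay of $\Omega^{2}$ established in Section~\ref{finiteness} applies; in particular, the exponential decay \eqref{omegafix} in $\cB$ dominates any power and yields both $\Omega^{2}(\bar u,\bar v)\leq C\bar u^{-\beta\alpha}$ and the mirror bound $\Omega^{2}(\bar u,\bar v)\leq C\bar v^{-\beta\alpha}$ throughout $\cR_{VI}$. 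Moreover $r\geq r_-$ bounds $\tfrac{1}{r}$ by a constant, and since $M-\frac{e^{2}}{r}<0$ in $\cB$, both $\Omega^{2}$ and the cross-weights are monotone decreasing in each null variable by \eqref{u-neg}--\eqref{v-neg}.

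The crux is then to show genuine smallness of $\delta_1$ and $\delta_2$. For $\delta_1$ I would use the monotonicity just noted to replace the supremum over $v_{\diamond}\leq\bar v\leq\hat v$ by its value at $\bar v=v_{\diamond}$ (both $\Omega^{2}(\bar u,\cdot)$ and $1+\bar u^{p}/(\cdot)^{p}$ decrease in $\bar v$), then insert $\Omega^{2}(\bar u,v_{\diamond})\leq C\bar u^{-\beta\alpha}$, so that, exactly as in estimate \eqref{integral},
\begin{align*}
\delta_1\leq\frac{C}{r_-}\int_{u_{\diamond}}^{\hat u}\bar u^{-\beta\alpha}\left(1+\frac{\bar u^{p}}{v_{\diamond}^{p}}\right)\md\bar u\leq\frac{\tilde C}{|{-\beta\alpha+p+1}|}\left[\,\bar u^{-\beta\alpha+p+1}\,\right]_{u_{\diamond}}^{\hat u}\leq C\,u_{\diamond}^{-\beta\alpha+p+1},
\end{align*}
which tends to $0$ as $u_{\diomond}\to\infty$ because the choice \eqref{alpha} forces $\beta\alpha>p+1$, i.e.\ $-\beta\alpha+p+1<0$. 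The bound for $\delta_2$ is obtained in the mirror-image way, replacing the supremum by its value at $\bar u=u_{\diamond}$, using $\Omega^{2}(u_{\diomond},\bar v)\leq C\bar v^{-\beta\alpha}$ and integrating $\bar v^{-\beta\alpha+p}$ to obtain $\delta_2\leq C\,v_{\diamond}^{-\beta\alpha+p+1}\to0$ as $v_{\diomond}\to\infty$. Feeding these two bounds back into the split bulk integral then gives \eqref{deltaende}. I expect the only delicate bookkeeping --- as in the earlier lemmas --- to be confirming that the polynomially growing cross-weight $\bar u^{p}/\bar v^{p}$ (respectively $\bar v^{p}/\bar u^{p}$) is uniformly dominated by the decay of $\Omega^{2}$; this is precisely what $\alpha>(p+1)/\beta$ guarantees, while the extra exponential factors coming from \eqref{spacevolumedecay}--\eqref{spacevolumedecay_u} only improve the estimate and may be discarded.
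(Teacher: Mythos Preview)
Your proof is correct and follows the same architecture as the paper: Cauchy--Schwarz on $\tilde K^S$, the supremum/Fubini inequalities \eqref{uintegral}--\eqref{vintegral}, and then smallness of the weight integrals via the decay of $\Omega^2$ in $\cB$. The only difference is cosmetic: you convert the exponential decay \eqref{omegafix} into the polynomial bound $\Omega^2\leq C\bar u^{-\beta\alpha}$ (resp.\ $C\bar v^{-\beta\alpha}$) and integrate exactly as in \eqref{integral}, whereas the paper keeps the exponential form $\Omega^2(\bar u,\bar v)\leq \Omega^2(u_\diamond,v_\diamond)e^{-\beta[\bar u-u_\diamond+\bar v-v_\diamond]}$ and obtains smallness from $\Omega^2(u_\diamond,v_\diamond)\to 0$ (using $u_\diamond=v_\diamond$ in the application). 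Both routes give $\delta_1,\delta_2\to 0$; yours is more parallel to Lemma~\ref{K_spher}, the paper's slightly more direct.
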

\begin{proof}
The proof is similar to the proof of Lemma \ref{K_spher} of Section \ref{knsec} and Lemma \ref{K_spherV} of Section \ref{region5_proof}.
We still need to show finiteness and smallness of \mbox{$\int\limits_{u_{\diamond}}^{\hat{u}} \sup_{v_{\diamond}\leq \bar{v} \leq \hat{v}}
\left[\frac{ \Omega^2(\bar{u},\bar{v})}{2r}\left(1+ \frac{{\bar{u}}^p}{{\bar{v}}^p}\right)\right]\md \bar{u} $} and \mbox{$\int\limits^{\hat{v}}_{v_{\diamond}}\sup_{u_{\diamond} \leq \bar{u}\leq \hat{u}}\left[\frac{ \Omega^2(\bar{u},\bar{v})}{2r}
\left( 1 + \frac{{\bar{v}}^p}{{\bar{u}}^p}\right) \right]\md \bar{v}$}.
In Section \ref{finiteness} we derived \eqref{omegafix} which we will use now 
for all $\bar{u},\bar{v} \in J^+(u_{\diamond}, v_{\diamond})$
Therefore, we can write
\bea
\int\limits_{u_{\diamond}}^{\hat{u}} \sup_{v_{\diamond}\leq \bar{v} \leq \hat{v}}\left[\frac{ \Omega^2(\bar{u},\bar{v})}{2r}
\left( 1 + \frac{{\bar{u}}^p}{{\bar{v}}^p}\right) \right]\md \bar{u}
&\leq& C\int\limits_{u_{\diamond}}^{\hat{u}} \sup_{v_{\diamond}\leq \bar{v} \leq \hat{v}}\left[\Omega^2(u_{\diamond}, v_{\diamond})e^{-\beta\left[\bar{v}-v_{\diamond}+\bar{u}-u_{\diamond}\right]}\left( 1 + \frac{{\bar{u}}^p}{{\bar{v}}^p}\right)\right] \md \bar{u},\nonumber\\
&\leq& \tilde{C}\int\limits_{u_{\diamond}}^{\hat{u}} \Omega^2(u_{\diamond}, v_{\diamond})e^{-\beta\left[\bar{u}-u_{\diamond}\right]}
\left( 1 + \frac{{\bar{u}}^p}{v_{\diamond}^p}\right) \md \bar{u},\nonumber\\
&\leq& \delta_1, 
\eea
where $\delta_1\rightarrow 0$ as $u_{\diamond}= v_{\diamond}\rightarrow \infty$ (since $\Omega^2(u_{\diamond}, v_{\diamond})\rightarrow 0$, cf.~ \eqref{omeganull}). 
Similarly, for finiteness of the second term we obtain
\bea
 \int\limits^{\hat{v}}_{v_{\diamond}}\sup_{u_{\diamond} \leq \bar{u}\leq \hat{u}}\left[\frac{ \Omega^2(\bar{u},\bar{v})}{2r}\left( 1 + \frac{{\bar{v}}^p}{{\bar{u}}^p}\right) \right]\md \bar{v}
&\leq& C\int\limits_{v_{\diamond}}^{\hat{v}} \sup_{u_{\diamond}\leq \bar{u} \leq \hat{u}}\left[\Omega^2(u_{\diamond}, v_{\diamond})e^{-\beta\left[\bar{v}-v_{\diamond}+\bar{u}-u_{\diamond}\right]}\left( 1 + \frac{{\bar{v}}^p}{{\bar{u}}^p}\right)\right] \md \bar{v},\nonumber\\
&\leq&  \tilde{\tilde{C}}\int\limits_{v_{\diamond}}^{\hat{v}} \Omega^2(u_{\diamond}, v_{\diamond})e^{-\beta\left[\bar{v}-v_{\diamond}\right]}
\left( 1 + \frac{{\bar{v}}^p}{u_{\diamond}^p}\right) \md \bar{v},\nonumber\\
&\leq& \delta_2, 
\eea
where $\delta_2\rightarrow 0$ as $u_{\diamond}= v_{\diamond}\rightarrow \infty$. 
Thus we obtain the statement of Lemma \ref{K_spher} by fixing $u_{\diamond}=v_{\diamond}$ sufficiently large.
\end{proof}

\begin{prop}
\lb{bifu}
Let $\phi$ be as in Theorem \ref{anfang} and Theorem \ref{anfangl}.
Then, for $u_{\diamond}=v_{\diamond}$ sufficiently close to $\infty$ and $\hat{u} >u_{\diamond}$, $\hat{v} >v_{\diamond}$
\bea
\int\limits_ {\left\lbrace  v_{\diamond} \leq v \leq \hat{v}\right\rbrace } J_{\mu}^{S}(\phi) n^{\mu}_{u=\tilde{u}} \dV_{u=\tilde{u}}+
\int\limits_ {\left\lbrace  u_{\diamond} \leq u \leq \hat{u}\right\rbrace } J_{\mu}^{S}(\phi) n^{\mu}_{v=\tilde{v}} \dV_{v=\tilde{v}}&\leq&  C(u_{\diamond}, v_{\diamond}),
\eea
where $C$ depends on ${C_{0}}$ of Theorems \ref{anfang}, \ref{anfangl} and ${D_{0}(u_{\diamond}, v_{\diamond})}$ of Propositions \ref{initialdataprop}, \ref{initialdatapropl}. 
\end{prop}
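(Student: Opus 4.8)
The plan is to run the same divergence-theorem scheme used in Proposition \ref{kastle}, now in the rectangle $\cR_{VI}$ and with the vector field $S$ of \eqref{N1} (with $u^p$ in place of $|u|^p$, since $u>0$ here), using Lemma \ref{K_spher_ende} in place of Lemma \ref{K_spher}. First I would apply the divergence theorem to $J^S$ over $\cR_{VI}$, whose past boundary consists of the two null segments $\{u=u_{\diamond}\}$ and $\{v=v_{\diamond}\}$ and whose future boundary consists of $\{u=\hat{u}\}$ and $\{v=\hat{v}\}$. Since $\cR_{VI}\subset\cB$, the terms of $K^S$ in \eqref{KN} multiplying $|\nabb\phi|^2$ carry the good sign, so after discarding them and invoking \eqref{knrelation} one obtains
\begin{align*}
&\int\limits_{\left\lbrace v_{\diamond}\leq v\leq\hat{v}\right\rbrace}J_{\mu}^{S}(\phi)n^{\mu}_{u=\hat{u}}\dV_{u=\hat{u}}+\int\limits_{\left\lbrace u_{\diamond}\leq u\leq\hat{u}\right\rbrace}J_{\mu}^{S}(\phi)n^{\mu}_{v=\hat{v}}\dV_{v=\hat{v}}\\
&\qquad\leq\int\limits_{\cR_{VI}}|\tilde{K}^{S}(\phi)|\dV+\int\limits_{\left\lbrace v_{\diamond}\leq v\leq\hat{v}\right\rbrace}J_{\mu}^{S}(\phi)n^{\mu}_{u=u_{\diamond}}\dV_{u=u_{\diamond}}+\int\limits_{\left\lbrace u_{\diamond}\leq u\leq\hat{u}\right\rbrace}J_{\mu}^{S}(\phi)n^{\mu}_{v=v_{\diamond}}\dV_{v=v_{\diamond}}.
\end{align*}

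Next I would repeat this on every subrectangle with upper corner $(\bar{u},\bar{v})$, take the supremum over $u_{\diamond}\leq\bar{u}\leq\hat{u}$ and $v_{\diamond}\leq\bar{v}\leq\hat{v}$, and apply Lemma \ref{K_spher_ende} to the bulk integral $\int_{\cR_{VI}}|\tilde{K}^{S}|\dV$. Fixing $u_{\diamond}=v_{\diamond}$ sufficiently large forces $\delta_1,\delta_2\leq\frac12$, so the two supremum fluxes produced by Lemma \ref{K_spher_ende} may be absorbed into the left-hand side exactly as in the passage from \eqref{start} to \eqref{140}.

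It then remains to bound the two boundary integrals over the past segments $\{u=u_{\diamond}\}$ and $\{v=v_{\diamond}\}$ by data, and this is where the one-sided results of Sections \ref{region5_proof} and \ref{region_tilde5_proof} enter. On the fixed hypersurface $\{u=u_{\diamond}\}$ the explicit current expressions of Appendix \ref{Jcurrents} give $J_{\mu}^{S}(\phi)n^{\mu}_{u=u_{\diamond}}\leq\max\{1,|u_{\diamond}|^{p}\}\,J_{\mu}^{W}(\phi)n^{\mu}_{u=u_{\diamond}}$, since the only difference between the two is the weight $|u_{\diamond}|^{p}$ versus $1$ on $|\nabb\phi|^2$; hence Proposition \ref{rechtes} (with $\hat{u}\to u_{\diamond}$) controls this term. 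Symmetrically, on $\{v=v_{\diamond}\}$ one has $J_{\mu}^{S}(\phi)n^{\mu}_{v=v_{\diamond}}\leq\max\{1,v_{\diamond}^{p}\}\,J_{\mu}^{Z}(\phi)n^{\mu}_{v=v_{\diamond}}$, which is controlled by Proposition \ref{linkes}. Because $u_{\diamond}=v_{\diamond}$ is held fixed, both comparison factors $|u_{\diamond}|^{p}$ and $v_{\diamond}^{p}$ are harmless constants, and combining these bounds with the absorbed inequality yields the claim with $C(u_{\diamond},v_{\diamond})$ depending on $C_0$ of Theorems \ref{anfang}, \ref{anfangl} and $D_0(u_{\diamond},v_{\diamond})$ of Propositions \ref{initialdataprop}, \ref{initialdatapropl}.

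The main obstacle is the matching of the two independent left and right constructions at the bifurcation: one must arrange the good sign of the angular part of $K^S$ together with the smallness of $\delta_1,\delta_2$ under the \emph{single} choice $u_{\diamond}=v_{\diamond}\to\infty$, while keeping the two past fluxes finite for this fixed value. This is consistent precisely because, as $r\to r_-$ near the bifurcation sphere, $\Omega^2(u_{\diamond},v_{\diamond})\to 0$ by \eqref{omeganull}, which is what drives $\delta_1,\delta_2\to 0$ in Lemma \ref{K_spher_ende}, whereas the comparison constants $|u_{\diamond}|^{p}$, $v_{\diamond}^{p}$ enter only through the fixed data fluxes and therefore do not interfere with the absorption.
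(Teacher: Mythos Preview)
Your proposal is correct and follows essentially the same approach as the paper's proof: apply the divergence theorem with $S$ in $\cR_{VI}$, absorb the bulk via Lemma \ref{K_spher_ende} for $u_{\diamond}=v_{\diamond}$ large, and bound the past boundary fluxes by Propositions \ref{rechtes} and \ref{linkes} using that the $S$-weights are comparable to the $W$- and $Z$-weights on the fixed segments $u=u_{\diamond}$ and $v=v_{\diamond}$. Your write-up simply makes explicit the weight comparison and the supremum/absorption step that the paper states in a single sentence.
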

\begin{proof}
The proof follows from applying the divergence theorem for the current $J_{\mu}^{S}(\phi)$ in the region $\cR_{VI}$. The past boundary terms are estimated by Proposition \ref{rechtes} and Proposition \ref{linkes}. Note that the weights of $J_{\mu}^{S}(\phi)$ are comparable to the weights of $J_{\mu}^{W}(\phi)$ for fixed $u_{\diamond}$, and similarly the weights of $J_{\mu}^{S}(\phi)$ are comparable to the weights of $J_{\mu}^{Z}(\phi)$ for fixed $v_{\diamond}$. The bulk term is absorbed by Lemma \ref{K_spher_ende}.
\end{proof}

Now that we have shown boundedness for different subregions of the interior we can state the following proposition for the entire interior region \mbox{$\mathring{\cM}|_{II}$}, cf.~ Section \ref{rtcoords}
\begin{cor}
\lb{letzteprop}
Let $\phi$ be as in Theorem \ref{anfang} and Theorem \ref{anfangl}.
Then 
\bea
\int\limits_{\bbS^2}\int\limits^{\infty}_{v_{fix}}\left[ (|v|+1)^p (\partial_v \phi)^2(u_{fix}, v, \theta, \varphi) +|\nabb \phi|^2(u_{fix}, v, \theta, \varphi) \right]r^2\md v\md \sigma_{\mathbb S^2}&\leq C, \quad \mbox{for  $v_{fix} \geq v_{\schere}$, $u_{fix} > -\infty$ }, \\
\int\limits_{\bbS^2}\int\limits^{\infty}_{u_{fix}} \left[(|u|+1)^p (\partial_u \phi)^2 (u, v_{fix}, \theta, \varphi)+|\nabb \phi|^2(u, v_{fix}, \theta, \varphi) \right]r^2\md u\md \sigma_{\mathbb S^2} &\leq C,\quad \mbox{for  $u_{fix} \geq u_{\schere}$, $v_{fix} > -\infty$}, 
\eea
where $p$ is as in \eqref{waspist} and 
$C$ depends on ${C_{0}}$ of Theorems \ref{anfang}, \ref{anfangl} and ${D_{0}(u_{\diamond}, v_{\diamond})}$ of Propositions \ref{initialdataprop}, \ref{initialdatapropl}, where $u_{\diamond}=v_{\diamond}$ is as in \ref{bifu}. 
\end{cor}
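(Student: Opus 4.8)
The plan is to obtain the global statement by patching together the regional flux estimates already established, exploiting that the interior $\mathring{\cM}|_{II}$ is exhausted by the finitely many regions $\Xi$, $\tilde{\Xi}$, $\cR_V$, $\tilde{\cR}_V$ and $\cR_{VI}$, with $u_{\diamond}=v_{\diamond}$ fixed once and for all as in Proposition \ref{bifu}. On each of these regions a weighted energy flux estimate on the relevant constant-$u$ and constant-$v$ null hypersurfaces is already available, so no new analysis is required; the content is the assembly.

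I would prove the first inequality (the $v$-integral) first; the second then follows by the symmetric argument with $u$ and $v$ interchanged. Fix $u_{fix}>-\infty$ and $v_{fix}\geq v_{\schere}$, and consider the null segment $\{u=u_{fix}\}\cap\{v_{fix}\leq v<\infty\}$ terminating at $\cC\cH_A^+$. The constraint $v_{fix}\geq v_{\schere}$ guarantees the segment never enters the left regions that would require data below $v_{\schere}$. Splitting according to the value of $u_{fix}$: for $u_{fix}\leq u_{\schere}$ the relevant flux is supplied by Corollary \ref{endeacht} (equivalently Proposition \ref{gesamtesti}); for $u_{\schere}<u_{fix}\leq u_{\diamond}$ by Proposition \ref{rechtes}; and for $u_{fix}>u_{\diamond}$ by Proposition \ref{linkes} on the bounded-$v$ part together with Proposition \ref{bifu} on the part reaching $\cC\cH^+$. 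The initial bounded-$v$ piece near $v=v_{fix}$, where needed, is absorbed using the Cauchy stability bounds of Propositions \ref{initialdataprop} and \ref{initialdatapropl}.

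The key technical step is then to check that in each region the current appearing in the cited estimate dominates the canonical integrand $(|v|+1)^p(\partial_v\phi)^2 r^2+|\nabb\phi|^2 r^2$ up to a fixed constant. Near $i^+$, i.e.\ for large $v$, the multipliers $S$ and $W$ used in $\Xi$, $\cR_V$ and $\cR_{VI}$ all carry a weight $v^p\sim(|v|+1)^p$ on $(\partial_v\phi)^2$ and a bounded weight on $|\nabb\phi|^2$; in the regions where $v$ ranges over a bounded interval (such as $\tilde{\cR}_V$, and the low-$v$ portions), the factor $(|v|+1)^p$ is comparable to a constant and the corresponding multiplier is uniformly timelike, so its flux controls all first derivatives of $\phi$. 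Summing the finitely many regional contributions then yields a single constant $C$, giving the first inequality.

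The main obstacle I anticipate is precisely this bookkeeping of comparabilities uniformly across the region boundaries: matching the $v^p$ weights (which would otherwise degenerate at small $v$) to the globally regular $(|v|+1)^p$, confirming that the covering uses only finitely many overlapping pieces, and tracking that all constants remain finite and depend only on the data through $C_0$ and $D_0$. Once the first inequality is in place, the second is obtained by the symmetric argument, replacing $\Xi$, $\cR_V$ by $\tilde{\Xi}$, $\tilde{\cR}_V$ and using $W\leftrightarrow Z$, i.e.\ Propositions \ref{gesamtestil}, \ref{linkes} and \ref{bifu} together with Proposition \ref{initialdatapropl}.
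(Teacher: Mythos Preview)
Your proposal is correct and follows essentially the same approach as the paper, which states only that the result ``follows by examining the weights in Propositions \ref{rechtes}, \ref{linkes} and \ref{bifu} together with Theorem \ref{energythm2} and its analog for the \textit{left} side.'' Your case-splitting according to whether $u_{fix}\leq u_{\schere}$, $u_{\schere}<u_{fix}\leq u_{\diamond}$, or $u_{fix}>u_{\diamond}$, and your identification of the relevant multipliers $S$, $W$, $Z$ in each region, is precisely the content of that one-line proof made explicit.
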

\begin{proof}
This follows by examining the weights in Propositions \ref{rechtes}, \ref{linkes} and \ref{bifu} together with Theorem \ref{energythm2} and its analog for the {\it left} side. 
\end{proof}

\section{Global higher order energy estimates and pointwise boundedness}
\lb{global}
To obtain pointwise bounds in analogy to Section \ref{nineten} we first have to extend Corollary \ref{letzteprop} to a higher order statement.
\begin{thm}
\lb{energythm3}
On subextremal Reissner-Nordstr\"om spacetime $(\cM,g)$, with mass $M$ and charge $e$ and $M>|e|\neq 0$, let $\phi$ be 
a solution of the wave equation $\Box_g \phi=0$ arising from
sufficiently regular Cauchy data on a two-ended asymptotically flat Cauchy surface $\Sigma$. Then, for $v_{fix}\geq v_{\schere}$, $u_{fix}>-\infty$
\bea
\lb{array1}
&&\int\limits_{\bbS^2}\int\limits^{\infty}_{v_{fix}}\left[ (|v|+1)^p (\partial_v \phi)^2(u_{fix}, v, \theta, \varphi) +|\nabb \phi|^2(u_{fix}, v, \theta, \varphi) \right]r^2\md v\md \sigma_{\mathbb S^2}\leq {E_{0}},\\
&&\int\limits_{\bbS^2}\int\limits^{\infty}_{v_{fix}} \left[(|v|+1)^p (\partial_v \leo\phi)^2(u_{fix}, v, \theta, \varphi) +|\nabb \leo\phi|^2(u_{fix}, v, \theta, \varphi) \right]r^2\md v\md \sigma_{\mathbb S^2}\leq {E_{1}},\\
\lb{array1c}
&&\int\limits_{\bbS^2}\int\limits^{\infty}_{v_{fix}}\left[ (|v|+1)^p (\partial_v \leo^2\phi)^2 (u_{fix}, v, \theta, \varphi)+|\nabb \leo^2\phi|^2(u_{fix}, v, \theta, \varphi) \right]r^2\md v\md \sigma_{\mathbb S^2}\leq {E_{2}};
	\eea
and for $u_{fix}\geq u_{\schere}$, $v_{fix}>-\infty$
\bea
\lb{array2}
&&\int\limits_{\bbS^2}\int\limits^{\infty}_{u_{fix}} \left[(|u|+1)^p (\partial_u \phi)^2 (u, v_{fix}, \theta, \varphi)+|\nabb \phi|^2(u, v_{fix}, \theta, \varphi) \right]r^2\md u\md \sigma_{\mathbb S^2} \leq {E_{0}},\\
&&\int\limits_{\bbS^2}\int\limits^{\infty}_{u_{fix}}\left[ (|u|+1)^p (\partial_u \leo\phi)^2 (u, v_{fix}, \theta, \varphi)+|\nabb \leo \phi|^2(u, v_{fix}, \theta, \varphi) \right]r^2\md u\md \sigma_{\mathbb S^2}\leq{E_{1}},\\
\lb{array2c}
&&\int\limits_{\bbS^2}\int\limits^{\infty}_{u_{fix}}\left[ (|u|+1)^p (\partial_u \leo^2\phi)^2(u, v_{fix}, \theta, \varphi) +|\nabb \leo^2\phi|^2(u, v_{fix}, \theta, \varphi) \right]r^2\md u\md \sigma_{\mathbb S^2}\leq {E_{2}}, 
\eea
where $p$ is as in \eqref{waspist}.
\end{thm}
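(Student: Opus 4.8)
The plan is to deduce Theorem \ref{energythm3} from Corollary \ref{letzteprop} by a \emph{commute and repeat} argument. The zeroth-order statements \eqref{array1} and \eqref{array2} are already exactly the content of Corollary \ref{letzteprop}, so nothing remains to be done for them. For the higher-order statements the decisive structural fact is that in Reissner-Nordstr\"om spacetime the generators of spherical symmetry $\leo_i$, $i=1,2,3$, are Killing, whence $[\Box_g,\leo_i]=0$, cf.~Section \ref{en_cur}. Consequently $\leo_i\phi$ and $\leo_i\leo_j\phi$ are again solutions of the wave equation $\Box_g\psi=0$, to which the entire preceding analysis applies unchanged.

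First I would record that the commuted quantities come equipped with exactly the initial bounds needed. The horizon decay estimates \eqref{thpur1}, \eqref{thpur2} of Theorem \ref{anfang} together with their left analogues \eqref{thpur1l}, \eqref{thpur2l} of Theorem \ref{anfangl}, and the Cauchy-stability bounds \eqref{proppur1}, \eqref{proppur2} of Proposition \ref{initialdataprop} together with \eqref{proppur1l}, \eqref{proppur2l} of Proposition \ref{initialdatapropl}, provide for $\leo\phi$ and $\leo^2\phi$ precisely the data hypotheses that drove the zeroth-order argument, now with constants $C_1,C_2$ and $D_1,D_2$ replacing $C_0,D_0$.

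Then I would observe that every estimate entering the proof of Corollary \ref{letzteprop} is linear in the solution and relies only on positivity properties of the fixed background: the redshift identity for $N$ in $\cR$, the $-\partial_r$ estimate in $\cN$, the $S_0$ and $S$ estimates in $\cB$ on either side of $\gamma$, and the $W$, $Z$, $S$ estimates in $\cR_V$, $\tilde{\cR}_V$, $\cR_{VI}$. None of these uses any feature of $\phi$ beyond its being a solution with the stated data, the key inputs being positivity of the relevant bulk terms, control of the bulk by timelike currents, and the pointwise decay of $\Omega^2$ in $J^+(\gamma)$ established in Section \ref{finiteness}. Hence the identical chain of propagation applies verbatim with $\phi$ replaced by $\leo_i\phi$ and by $\leo_i\leo_j\phi$, yielding the weighted flux bound for each fixed angular index. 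Summing over $i$ and over $i,j$ and invoking \eqref{sum1} and \eqref{sum2} to convert $\sum_i(\leo_i\psi)^2$ and $\sum_{i,j}(\leo_i\leo_j\psi)^2$ into the angular quantities $r^2|\nabb\psi|^2$ and $r^4|\nabb^2\psi|^2$ then produces \eqref{array1}--\eqref{array2c}.

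I do not expect a genuine analytic obstacle here, precisely because the $\leo_i$ commute \emph{exactly} with $\Box_g$, so commutation generates no error terms; this is what makes the present reduction clean, in sharp contrast to the Kerr setting of the companion paper where the analogous operators fail to commute and the higher-order estimates become correspondingly more delicate. The only point requiring care is the bookkeeping of the weights: one must employ the global weight $(|v|+1)^p$ (respectively $(|u|+1)^p$) rather than $v^p$, so that the weight stays strictly positive over the full coordinate range covered by Propositions \ref{rechtes}, \ref{linkes} and \ref{bifu}, exactly as in Corollary \ref{letzteprop}.
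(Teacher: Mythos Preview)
Your proposal is correct and matches the paper's approach exactly: the paper's proof is a single line, ``This follows immediately from Corollary \ref{letzteprop} by commutation,'' and you have simply spelled out what that sentence means. The one superfluous detail is the appeal to \eqref{sum1}--\eqref{sum2}: since the $\leo$ notation of Section \ref{leonotationsec} already denotes summation over the $\leo_i$, the bounds \eqref{array1}--\eqref{array2c} follow directly by applying Corollary \ref{letzteprop} to each $\leo_i\phi$ and $\leo_i\leo_j\phi$ and summing, without any conversion step.
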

\begin{proof}
This follows immediately from Corollary \ref{letzteprop} by commutation.
\end{proof}
Having proven Theorem \ref{energythm3}, the pointwise boundedness of $|\phi|$ in all of $\mathring{\cM}|_{II}$ follows analogously to Section \ref{uni_bounded}.
We estimate
\bea
\lb{uglobalfundcauchy}
\int\limits_{\bbS^2} \phi^2(\hat{u}, v)\md \sigma_{\mathbb S^2}
&\leq& \tilde{C}\left[\int\limits_{\bbS^2}\left(\int\limits_{u_*}^{\hat{u}} (|u|+1)^p(\partial_u \phi)^2(u,v)\md u\int\limits_{u_*}^{\hat{u}} (|u|+1)^{-{p}}\md v\right)r^2\md \sigma_{\mathbb S^2}+\int\limits_{\bbS^2} \phi^2(u_*,v)\md \sigma_{\mathbb S^2}\right],\nonumber \\
\int\limits_{\bbS^2} \phi^2(\hat{u},{v})\md \sigma_{\mathbb S^2}
&\leq&\tilde{C}\left[\tilde{\tilde{C}} E_{0}+\int\limits_{\bbS^2} \phi^2(u_*,v)\md \sigma_{\mathbb S^2}\right],
\eea 
where $u_*\geq u_{\schere}$, $\hat{u} \in (u_*, \infty)$ and $v \in (1, \infty)$. 
Commuting by angular momentum operators $\leo_i$ and summing over them we obtain
\bea
\lb{uglobalfundcauchy1}
\int\limits_{\bbS^2} (\leo \phi) ^2(\hat{u},{v})\md \sigma_{\mathbb S^2} 
&\leq&\tilde{C}\left[\tilde{\tilde{C}} E_{1}+\int\limits_{\bbS^2} (\leo \phi) ^2(u_*,v)\md \sigma_{\mathbb S^2}\right],\\
\lb{uglobalfundcauchy2}
\int\limits_{\bbS^2} ({\leo}^2 \phi)^2(\hat{u},{v})\md \sigma_{\mathbb S^2}
&\leq&\tilde{C}\left[\tilde{\tilde{C}}  E_{2}+\int\limits_{\bbS^2} ({\leo}^2 \phi) ^2(u_*,v)\md \sigma_{\mathbb S^2}\right].
\eea 
By using the result \eqref{supr2} in \eqref{uglobalfundcauchy} we derive pointwise boundedness according to \eqref{sobo_embed} 
\bea
\lb{usupr}
\sup_{\bbS^2}|\phi(\hat{u},{v},\theta,\varphi)|^2&\leq& \tilde{C} \left[\int\limits_{\bbS^2} ( \phi) ^2(\hat{u},{v})\md \sigma_{\mathbb S^2} +\int\limits_{\bbS^2} (\leo \phi) ^2(\hat{u},{v})\md \sigma_{\mathbb S^2} +\int\limits_{\bbS^2} (\leo^2 \phi) ^2(\hat{u},{v})\md \sigma_{\mathbb S^2} \right],\nonumber\\
&\leq&\tilde{C}\left[\tilde{\tilde{C}}\left( E_{0}+  E_{1}+ E_{2}\right)+C )\right],\nonumber\\
&\leq&C,
\eea
with $C$ depending on the initial data. 

Inequalities \eqref{usupr} and \eqref{supr3} gives the desired \eqref{maineq} for all $v\geq1$.
Interchanging the roles of $u$ and $v$,
likewise \eqref{maineq} follows for all $u\geq 1$. 
The remaining subset of the interior has compact closure in spacetime for which \eqref{maineq} thus follows by Cauchy stability. We have thus shown \eqref{maineq} globally in the interior.

As we will see in the next section, the continuity statement of Theorem \ref{main}
follows easily by revisiting the Sobolev estimates.

\section{Continuity statement of Theorem \ref{main}}
\lb{continuity}
In the previous section we have shown pointwise boundedness, $|\phi(u,v,\varphi,\theta)|\leq C$. In the following we prove 
that $\phi$ extends continuously to $\cC\cH^+$, that is to say
$\phi$ extends to \mbox{$\left\{\infty\right\}\times(-\infty, \infty] \cup (-\infty, \infty]\times\left\{\infty\right\}$} so that $\phi$ is continuous as a function on \mbox{$(-\infty, \infty] \times (-\infty, \infty] \times \bbS^2$}.
Showing the extension closes the proof of Theorem \ref{main}. 

In order to first show continuous extendibility of $\phi$ to \mbox{$(-\infty, \infty)\times\left\{\infty\right\}$}, it suffices to show:
Given $-\infty <u<\infty$ and  $\varphi, \theta \in \bbS^2$, \mbox{$\forall \epsilon>0 \quad \exists \delta, v_*$}, such that
\bea
\lb{entire_proof}
|\phi(u,v,\varphi, \theta)- \phi(\tilde{u},\tilde{v},\tilde{\varphi}, \tilde{\theta})|<4\epsilon,\quad \mbox{for all} \left\{\begin{aligned}
              &v>\tilde{v},\quad \tilde{v}\geq v_*& \\
							&u-\tilde{u}<\delta&\\
							&\varphi-\tilde{\varphi}<\delta&\\
							&\theta-\tilde{\theta}<\delta.&
      \end{aligned}    \right.
\eea

By the triangle inequality we obtain
\bea
\lb{triangle}
|\phi(u,v,\varphi, \theta)- \phi(\tilde{u},\tilde{v},\tilde{\varphi}, \tilde{\theta})| &\leq& |\phi(u,v,\varphi, \theta)- \phi(\tilde{u},{v},\varphi, \theta)|+|\phi(\tilde{u},{v},\varphi, \theta)- \phi(\tilde{u},\tilde{v},\varphi, \theta)|\nonumber\\
&+&|\phi(\tilde{u},\tilde{v},{\varphi}, \theta)- \phi(\tilde{u},\tilde{v},\tilde{\varphi}, \theta)|+|\phi(\tilde{u},\tilde{v},\tilde{\varphi}, \theta)- \phi(\tilde{u},\tilde{v},\tilde{\varphi}, \tilde{\theta})|.\nonumber\\
\eea 
We will show that each term can be bounded by $\epsilon$.

Considering first the $u$ direction by the fundamental theorem of calculus we have
\bea
\lb{fund_part}
\phi(u,v,\varphi, \theta)- \phi(\tilde{u},{v},\varphi, \theta)&=& \int\limits_{\tilde{u}}^{u} \partial_{\bar{u}}\phi(u,v,\varphi, \theta)\md \bar{u}.
\eea
Applying Cauchy Schwarz, we obtain for fixed $v, \varphi, \theta$
\bea
|\phi(u,v,\varphi, \theta)- \phi(\tilde{u}, {v},\varphi, \theta)|^2 &\leq& \left(\int\limits_{\tilde{u}}^{u}|\partial_{\bar{u}}\phi(\bar{u}, {v},\varphi, \theta)|\md \bar{u}\right)^2\nonumber\\
 &\leq&
\left(\int\limits_{\tilde{u}}^{u} (|{\bar{u}}|+1)^p\left(\partial_{\bar{u}}\phi(\bar{u}, {v},\varphi, \theta)\right)^2\md \bar{u}\right)\left(\int\limits_{\tilde{u}}^{u} (|{\bar{u}}|+1)^p\md \bar{u}\right)\nonumber\\
&\leq&
\left(\int\limits_{\tilde{u}}^{u}  (|{\bar{u}}|+1)^p\left(\partial_{\bar{u}}\phi(\bar{u}, {v},\varphi, \theta)\right)^2\md \bar{u}\right)
\nonumber\\
&&\times\left(\frac{1}{-p+1} (|{u}|+1)^{-p+1}-\frac{1}{-p+1} (|{\tilde{u}}|+1)^{-p+1}\right)\nonumber\\
\lb{sobo}
&\leq&
 \tilde{C}\left(\int\limits_{\tilde{u}}^{u} \sum_{k=0}^{2} \int\limits_{\bbS^2} (|{\bar{u}}|+1)^p\left(\leo^k \partial_{\bar{u}}\phi\right)^2\md \sigma_{\mathbb S^2}\md \bar{u}\right)\nonumber\\
&&\times\frac{1}{p-1}\left((|{\tilde{u}}|+1)^{-p+1}-{(|{u}|+1)}^{-p+1}\right)\\
\lb{fund_partu}
&\leq& {\epsilon}.
\eea
In the above $p$ is as in \eqref{waspist} and \eqref{sobo} follows from \eqref{sobo_embed} applied to $\partial_{\bar{u}}\phi$,
\bea
\lb{sobo_embedup}
\sup_{\theta,\varphi \in \bbS^2}|(|{\bar{u}}|+1)^p\partial_{\bar{u}}\phi(\bar{u},v,\theta,\varphi)|^2\leq \tilde{C} \sum_{k=0}^{2} \int\limits_{\bbS^2} (|{\bar{u}}|+1)^p\left(\leo^k \partial_{\bar{u}}\phi\right)^2(\bar{u},v,\theta,\varphi)\md \sigma_{\mathbb S^2}. 
\eea
Further, the last step, \eqref{fund_partu}, then follows from \eqref{array2}-\eqref{array2c} for a suitable chosen $\delta$ in \eqref{entire_proof}. 

For the second term in \eqref{triangle}, again by the fundamental theorem of calculus and the Cauchy Schwarz inequality we obtain
\bea
|\phi(\tilde{u},v,\varphi, \theta)- \phi(\tilde{u},\tilde{v},\varphi, \theta)|^2 &\leq& \left(\int\limits_{\tilde{v}}^{v}|\partial_{\bar{v}}\phi(\tilde{u},\bar{v},\varphi, \theta)|\md \bar{v}\right)^2\nonumber\\
 &\leq&
\left(\int\limits_{\tilde{v}}^{v} {\bar{v}}^{p}\left(\partial_{\bar{v}}\phi(u,\bar{v},\varphi, \theta)\right)^2\md \bar{v}\right)\left(\int\limits_{\tilde{v}}^{v} {\bar{v}}^{-p}\md \bar{v}\right)\nonumber\\
&\leq&
\lb{dritter}
 \tilde{C}\left(\int\limits_{\tilde{v}}^{v} \sum_{k=0}^{2} \int\limits_{\bbS^2} \bar{v}^p\left(\leo^k \partial_{\bar{v}}\phi\right)^2\md \sigma_{\mathbb S^2}\md \bar{v}\right)\frac{1}{p-1}\left(\tilde{v}^{-p+1}-{v}^{-p+1}\right)\\
\lb{dasda}
&\leq&
 \tilde{C}\left(\int\limits_{\tilde{v}}^{v} \sum_{k=0}^{2} \int\limits_{\bbS^2} \bar{v}^p\left(\leo^k \partial_{\bar{v}}\phi\right)^2\md \sigma_{\mathbb S^2}\md \bar{v}\right)\frac{v_*^{-p+1}}{p-1}\\
\lb{fund_partv}
&\leq& {\epsilon},
\eea 
where in the third step, \eqref{dritter}, we have used \eqref{sobo_embed} applied to $\partial_{\bar{v}}\phi$. Equation \eqref{dasda} follows since $v>\tilde{v}$, $\tilde{v}\geq v_*$ and the last step follows by using \eqref{array1}-\eqref{array1c} and for $v_*$ large enough.

In the $\varphi$ direction for fixed $\tilde{u},\tilde{v}$ and $\theta$ we can state
\bea
|\phi(\tilde{u},\tilde{v},\varphi, \theta)- \phi(\tilde{u},\tilde{v},\tilde{\varphi}, \theta)|^2 &\leq& \left(\int\limits_{\tilde{\varphi}}^{\varphi}|\partial_{\bar{\varphi}}\phi(\tilde{u},\tilde{v},\bar{\varphi}, \theta)|\md \bar{{\varphi}}\right)^2\nonumber\\
&\leq&
C\left(\int\limits_{\tilde{\varphi}}^{\varphi} \int\limits_{0}^{\pi}\left[ \left|\partial_{\bar{\varphi}}\phi(\tilde{u},\tilde{v},\bar{\varphi}, \theta)\right|+\left|\partial_{\theta}\partial_{\bar{\varphi}}\phi(\tilde{u},\tilde{v},\bar{\varphi}, \theta)\right|\right]\md \sigma_{\mathbb S^2}\right)^2\nonumber\\
&\leq&
\tilde{C}\left(\int\limits_{\bbS^2}\left[ \left|\leo_3\phi(\tilde{u},\tilde{v},\bar{\varphi}, \theta)\right|^2+\left|\sum_i a_i \leo_i \leo_3\phi(\tilde{u},\tilde{v},\bar{\varphi}, \theta)\right|^2\right]\md \sigma_{\mathbb S^2}\right)\left(\int\limits_{\tilde{\varphi}}^{\varphi}\int\limits_{0}^{\pi}\md \sigma_{\mathbb S^2}\right)\nonumber\\
\lb{fund_partphi}
&\leq& {\epsilon},
\eea
where in the second step we have used one dimensional Sobolev embedding.
In the third step we have used \eqref{angmom2} for $\partial_{\bar{\varphi}}$ and \eqref{angmom1} to \eqref{angmom2} for $\partial_{\theta}$. Further, we applied the Cauchy Schwarz inequality twice. The last step then follows by using \eqref{uglobalfundcauchy}-\eqref{uglobalfundcauchy2} for $\tilde{u}\geq u_{\schere}$ and \eqref{fundcauchy}-\eqref{fundcauchy2} for $\tilde{u}\leq u_{\schere}$ and since the second integral term is arbitrarily small by suitable choice of $\delta$.

Similarly, in $\theta$ direction for fixed $\tilde{u},\tilde{v}$ and $\tilde{\varphi}$ we obtain
\bea
\lb{fund_parttheta}
|\phi(\tilde{u},\tilde{v},\tilde{\varphi}, \theta)- \phi(\tilde{u},\tilde{v},\tilde{\varphi}, \tilde{\theta})|^2 &\leq& \left(\int\limits_{\tilde{\theta}}^{\theta}|\partial_{\bar{\theta}}\phi(\tilde{u},\tilde{v},\tilde{\varphi}, \bar{\theta})|\md \bar{{\theta}}\right)^2\nonumber\\
&\leq&
C\left(\int\limits_{\tilde{\theta}}^{\theta} \int\limits_{0}^{2\pi}\left[ \left|\partial_{\bar{\theta}}\phi(\tilde{u},\tilde{v},\tilde{\varphi}, \bar{\theta})\right|+\left|\partial_{\varphi}\partial_{\bar{\theta}}\phi(\tilde{u},\tilde{v},\tilde{\varphi}, \bar{\theta})\right|\right]\md \sigma_{\mathbb S^2}\right)^2\nonumber\\
&\leq&
\tilde{C}\left(\int\limits_{\bbS^2}\left[ \left|\sum_i a_i \leo_i\phi(\tilde{u},\tilde{v},\tilde{\varphi}, \bar{\theta})\right|^2+\left| \leo_3\sum_i a_i \leo_i\phi(\tilde{u},\tilde{v},\tilde{\varphi}, \bar{\theta})\right|^2\right]\md \sigma_{\mathbb S^2}\right)\left(\int\limits_{\tilde{\varphi}}^{\varphi}\int\limits_{0}^{2\pi}\md \sigma_{\mathbb S^2}\right)\nonumber\\&\leq& {\epsilon}.
\eea
The second step follows by one dimensional Sobolev embedding and the third from \eqref{angmom1}-\eqref{angmom2} and using the Cauchy Schwarz inequality twice.
In the last step we used \eqref{uglobalfundcauchy}-\eqref{uglobalfundcauchy2} for $\tilde{u}\geq u_{\schere}$ and \eqref{fundcauchy}-\eqref{fundcauchy2} for $\tilde{u}\leq u_{\schere}$  and
a suitable choice of $\delta$.

Using the above results \eqref{fund_partu}, \eqref{fund_partv}, \eqref{fund_partphi} and \eqref{fund_parttheta} in \eqref{triangle} yields the desired result \eqref{entire_proof}.

To show continuous extendibility of $\phi$ to \mbox{$\left\{\infty\right\}\times(-\infty, \infty)$}, it suffices to show:
Given $-\infty <v<\infty$ and  $\varphi, \theta \in \bbS^2$, \mbox{$\forall \epsilon>0 \quad \exists \delta, u_*$}, such that
\bea
\lb{entire_proofu}
|\phi(u,v,\varphi, \theta)- \phi(\tilde{u},\tilde{v},\tilde{\varphi}, \tilde{\theta})|<4\epsilon,\quad \mbox{for} \left\{\begin{aligned}
              &u>\tilde{u},\quad \tilde{u}\geq u_*&\\
               &v-\tilde{v}<\delta&\\
							&\varphi-\tilde{\varphi}<\delta&\\
							&\theta-\tilde{\theta}<\delta.&
      \end{aligned}    \right.
\eea

This can be proven by substituting $v$ with $u$ and $\tilde{v}$ with $\tilde{u}$ and repeating all above steps. 

Similarly, to show continuous extendibility to \mbox{$\left\{\infty\right\}\times\left\{\infty\right\}$}, it suffices to show:
Given $\varphi, \theta \in \bbS^2$, \mbox{$\forall \epsilon>0 \quad \exists \delta, u_*, v_*$}, such that
\bea
\lb{entire_proofuv}
|\phi(u,v,\varphi, \theta)- \phi(\tilde{u},\tilde{v},\tilde{\varphi}, \tilde{\theta})|<4\epsilon,\quad \mbox{for} \left\{\begin{aligned}
              &u>\tilde{u},\quad \tilde{u}\geq u_*&\\
              &v>\tilde{v},\quad \tilde{v}\geq v_*&\\
							&\varphi-\tilde{\varphi}<\delta&\\
							&\theta-\tilde{\theta}<\delta.&
      \end{aligned}    \right.
\eea
This follows as in \eqref{dasda} and completes the proof of Theorem \ref{main}.

\begin{trivlist}
\item[\hskip \labelsep ]\qed\end{trivlist}

\section{Outlook}
\lb{outlook}
In the following section we give a brief overview of further related open problems.

\subsection{Scalar waves on subextremal Kerr interior backgound}
All constructions of this paper have direct generalizations to the subextremal Kerr case.\footnote{This is in contrast to the exterior region for which the
analysis of the Kerr case is significantly harder than for spherically symmetric black hole spacetimes in view of the difficulties of superradiance and the more complicated trapping. See the references below.} 
The following theorem will appear in a subsequent companion paper, cf.~ \cite{ich}.
\begin{thm}
\lb{main2} 
\cite{ich}
On a fixed subextremal Kerr background $0\neq|a|<M$, with angular momentum per unit mass $a$ and mass $M$, let $\phi$ be a solution of the wave equation arising from sufficiently regular initial data on a Cauchy hypersurface $\Sigma$.
Then
\ben
|\phi|\leq C
\een
globally in the black hole interior up to 
$\cC\cH^+$, to which $\phi$ in fact extends continuously. 
\end{thm}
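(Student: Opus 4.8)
The plan is to mirror, region by region, the entire argument developed for Reissner-Nordstr\"om, since the causal structure of the subextremal Kerr interior is identical: a redshift region adjacent to $\cH^+$ governed by a positive surface gravity $\kappa_+>0$, a noshift region, and a blueshift region terminating at $\cC\cH^+$ where the relevant surface gravity $\kappa_-$ drives the instability. First I would fix Eddington--Finkelstein-type coordinates $(u,v)$ in the interior adapted to the two horizons, so that $\Omega^2$ again decays exponentially towards $\cC\cH^+$, and take as input the analog of Theorem \ref{anfang}: polynomial decay of the energy flux of $\phi$ and of its commuted versions along $\cH^+$, which is available from the exterior theory for Kerr. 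The footnote to Theorem \ref{main2} already asserts that the interior constructions carry over directly; the content of the present proposal is to verify this claim and to isolate the single genuinely new difficulty.

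Next I would propagate the horizon decay inward exactly as in Sections \ref{first_section}--\ref{innerhorizon}. The redshift vector field $N$ of Proposition \ref{mi} exists near any Killing horizon with positive surface gravity, so the estimate up to $r=r_{red}$ goes through essentially verbatim. In the noshift region one again uses a $\partial_t$-invariant timelike multiplier $-\partial_r$, whose deformation tensor is uniformly controlled on a compact $\{t=\mathrm{const}\}$ slab; the only change is that $\partial_t$ is the appropriate Killing field for the Kerr interior. In the blueshift region I would introduce the analog of the hypersurface $\gamma$ at logarithmic $v$-distance from a constant-$r$ surface, use $S_0=r^q\partial_{r^\star}$ with $q$ large to obtain positivity of $K^{S_0}$ to the past of $\gamma$, and the weighted field $S=|u|^p\partial_u+v^p\partial_v$ to the future. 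The pointwise estimates on $\Omega^2$ and the finiteness $\operatorname{Vol}(J^+(\gamma))<C$ of Section \ref{finiteness} depend only on the sign and lower bound \eqref{lowerboundu}--\eqref{lowerboundv} for $\partial_{u,v}\Omega/\Omega$, which persist for Kerr near $\cC\cH^+$, so Lemma \ref{K_spher} and Proposition \ref{kastle} reproduce the weighted flux bound \eqref{fluxscetch}.

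The main obstacle is the commutation step. In Reissner-Nordstr\"om one commutes $\Box_g\phi=0$ with the three Killing generators $\leo_i$ of spherical symmetry and then applies Sobolev embedding on the round spheres, crucially using $\Box_g\leo_i\phi=0$. Kerr has only the two Killing fields $T=\partial_t$ and $\Phi=\partial_\varphi$, so $\Box_g\leo_i\phi\neq 0$ in general. The remedy is to commute instead with the operators that genuinely commute with the Kerr wave operator: $T$, $\Phi$, and the second-order Carter operator associated to the hidden Killing tensor, which together with elliptic estimates on the orbit spheres recover control of all angular derivatives. The resulting higher-order identities produce error terms from the failure of $\leo_i$ to be Killing; these are of lower order and must be absorbed using the weighted fluxes already established together with the smallness furnished by the exponential decay of $\Omega^2$ in $J^+(\gamma)$. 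Once the analog of Theorem \ref{energythm3} is in hand, the pointwise bound $|\phi|\le C$ and the continuous extension to $\cC\cH^+$ follow by the fundamental-theorem-of-calculus plus Cauchy-Schwarz argument of Sections \ref{uni_bounded} and \ref{continuity}, with the spherical Sobolev inequality \eqref{sobo_embed} replaced by its Carter-operator counterpart.
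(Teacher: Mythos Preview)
The paper does not actually prove Theorem~\ref{main2}. It is stated in the Outlook (Section~\ref{outlook}) as a result to appear in the forthcoming companion paper~\cite{ich}; the only accompanying remark is the assertion that ``all constructions of this paper have direct generalizations to the subextremal Kerr case,'' together with the observation that the analog of Theorem~\ref{anfang} is available from~\cite{m_scalar, m_yakov}. There is therefore no proof in the present paper to compare your proposal against.

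That said, your outline is a sensible reading of what ``direct generalization'' should mean, and you have correctly isolated the one place where the argument cannot be copied verbatim: the commutation step of Section~\ref{nineten}. In Reissner--Nordstr\"om the three angular momentum operators $\leo_i$ are Killing, so $\Box_g\leo_i\phi=0$ and the higher-order energies of Theorem~\ref{energythm2} come for free; in Kerr only $\partial_t$ and $\partial_\varphi$ are Killing, and one must bring in the Carter tensor (or equivalently control the error terms from commuting with non-Killing angular derivatives) to close the Sobolev argument. Your proposal to absorb those errors using the exponential smallness of $\Omega^2$ in $J^+(\gamma)$ is the natural mechanism. Whether the details go through exactly as you sketch---in particular, whether the Carter-operator route or a direct treatment of the $[\Box_g,\leo_i]$ errors is cleaner---is precisely the content of~\cite{ich}, and the present paper takes no position on it beyond the footnote.
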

The above Theorem of course depends on the fact that the analog of Theorem \ref{anfang} has been proven in the full subextremal range $|a|<M$ on Kerr backgrounds by Dafermos, Rodnianski and Shlapentokh-Rothman, cf.~ \cite{m_scalar, m_yakov}; see also \cite{anderson, tataru, tataru2, luk2, m_I_kerr, m_lec, m_bound} for the $|a|\ll M$ case and \cite{bernard, yakov} for mode stability.

\subsection{Mass inflation}
In view of our stability result, what remains of the ``blueshift instability''?

The result of Theorem \ref{main} is still compatible with the expectation that the {\it transverse} derivatives (with respect to regular coordinates on $\cC\cH^+$) of $\phi$ will blow up along $\cC\cH^+$, cf.~ the work of Simpson and Penrose \cite{simpson}.
In fact given a lower bound, say 
\bea
\lb{lowerb}
|\partial_{v} \phi|\geq c{v}^{-4},
\eea
for some constant $c>0$ and all sufficiently large $v$
on $\cH^+$, Dafermos has shown for the {\it spherically symmetric} Einstein-Maxwell-scalar field model that transverse derivatives blow up, cf.~ \cite{m_interior}. 
This blow up result of \cite{m_bh} would also apply to our setting here {\it if the above lower bound \eqref{lowerb} is assumed on the spherical mean of $\phi$ on $\cH^+$}.
See also \cite{mc_i}.
Such a lower bound however has not been proven yet for solutions arising from generic data on $\Sigma$; see also Sbierski \cite{jan}.
One might therefore aim for proving the following conjecture.
\begin{con}
Let $\phi$ be as in Theorem \ref{main} or \ref{main2}.
For \underline{generic} data on $\Sigma$,
transverse derivatives of $\phi$ blow up on $\cC\cH^+$, in fact $\phi$ is not $H^1_{\rm loc}$.
\end{con}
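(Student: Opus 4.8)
The plan is to reduce the conjecture, as in Dafermos \cite{m_interior}, to the problem of establishing a \emph{lower} bound complementary to the upper bound of Theorem \ref{anfang}, and then to propagate this lower bound through the interior. The heuristic driving the blow-up is the blueshift: the Eddington-Finkelstein coordinate $v$ degenerates at $\cC\cH^+$ (which sits at $v=\infty$), and a coordinate $V$ regular across $\cC\cH^+$ satisfies $\md v/\md V\sim e^{\kappa_- v}$, with $\kappa_-$ the surface gravity of the Cauchy horizon, cf.~\eqref{kappa}. Hence even a polynomially small but non-vanishing $\partial_v\phi$ on approach to $\cC\cH^+$ produces a transverse derivative $\partial_V\phi\sim e^{\kappa_- v}\partial_v\phi$ that diverges. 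The whole difficulty is therefore to show that $\partial_v\phi$ does \emph{not} decay faster than polynomially on $\cC\cH^+$ for generic data.

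First I would establish sharp late-time asymptotics along $\cH^+$ for the spherical mean $\phi_0=\frac{1}{4\pi}\int_{\bbS^2}\phi\,\md\sigma_{\mathbb S^2}$. Since $\phi_0$ solves the spherically symmetric ($\ell=0$) wave equation on the fixed background, one expects Price-law behaviour $\partial_v\phi_0|_{\cH^+}=c\,v^{-4}(1+o(1))$, with a leading coefficient $c$ that is an explicit (e.g.\ affine) functional of the Cauchy data on $\Sigma$ --- essentially the associated Newman-Penrose-type constant. The genericity statement then amounts to the assertion that $\{c=0\}$ is a proper closed subspace of the relevant data space (in particular of empty interior), so that for data outside this set the lower bound \eqref{lowerb} holds on the spherical mean of $\phi$ on $\cH^+$. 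Obtaining such two-sided asymptotics --- as opposed to the one-sided decay of Theorem \ref{anfang} --- is the genuinely new analytic input, for which one would invoke the now-standard machinery for precise tails of the wave equation on Schwarzschild/Reissner-Nordstr\"om exteriors.

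Given the horizon lower bound, I would propagate it into the interior using the spherically symmetric wave equation in double-null form, schematically
\[
\partial_u\partial_v(r\phi_0)=\frac{\partial_u\partial_v r}{r}\,(r\phi_0),\qquad \partial_u\partial_v r=O(\Omega^2)\ \text{by \eqref{uvr}},
\]
exactly as in the conditional argument of \cite{m_interior,m_bh}. The point is that the coupling is proportional to $\Omega^2$, which by the pointwise estimates of Section \ref{finiteness} decays exponentially and is integrable in $J^+(\gamma)$ (this is precisely what underlies the finite-volume statement \eqref{finitevol}). Consequently $\partial_v(r\phi_0)$ is, along each outgoing ray, a small perturbation of its value on $\cH^+$; integrating the above from $\cH^+$ and absorbing the perturbation, one retains a lower bound $|\partial_v\phi|(u,v)\geq c'\,v^{-4}$ as $v\to\infty$, uniformly away from the bifurcation sphere. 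Passing to the regular coordinate $V$ gives $|\partial_V\phi|\gtrsim e^{\kappa_- v}v^{-4}\to\infty$, and changing variables in the energy,
\[
\int(\partial_V\phi)^2\,\md V\;\gtrsim\;\int^{\infty}e^{\kappa_- v}(\partial_v\phi)^2\,\md v\;\gtrsim\;\int^{\infty}e^{\kappa_- v}v^{-8}\,\md v=\infty,
\]
so that $\phi\notin H^1_{\rm loc}$ near $\cC\cH^+$, as claimed.

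The main obstacle is unambiguously the first step: the generic inverse-polynomial \emph{lower} bound on $\partial_v\phi$ along $\cH^+$. All the energy methods of this paper, and of the references therein, are tailored to upper bounds and cannot by themselves exclude anomalously fast decay; one needs instead a precise asymptotic expansion together with a proof that its leading coefficient is generically non-zero, a point made delicate by possible cancellations between the contributions of the two ends of $\Sigma$ and by the coupling of $\ell=0$ to higher modes if one works without the spherical-mean reduction. Once \eqref{lowerb} is available, the interior propagation and the $H^1$ computation above are comparatively routine adaptations of \cite{m_interior,m_bh}; see also \cite{jan,mc_i}.
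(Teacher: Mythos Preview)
The statement you are addressing is a \emph{conjecture}, not a theorem: the paper does not offer a proof. It appears in the Outlook section precisely as an open problem, preceded by the remark that the blow-up result of \cite{m_bh} would apply here \emph{if} the lower bound \eqref{lowerb} held on the spherical mean of $\phi$ along $\cH^+$, and that ``such a lower bound however has not been proven yet for solutions arising from generic data on $\Sigma$.'' So there is no proof in the paper to compare your proposal against.

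That said, your sketch is entirely in line with the paper's own discussion of what a proof would require. You correctly identify the reduction to a lower bound of the form \eqref{lowerb} on $\cH^+$ as the decisive missing step, and you correctly note that the interior propagation from $\cH^+$ to $\cC\cH^+$, once that lower bound is in hand, follows the conditional argument of \cite{m_interior,m_bh} using the integrability of $\Omega^2$ established in Section \ref{finiteness}. Your heuristic computation of the $H^1$ blow-up via the change of variables $\md v/\md V\sim e^{|\kappa_-| v}$ is also the expected mechanism. The honest acknowledgement in your final paragraph --- that the energy methods of the paper give only upper bounds and that the generic lower bound requires a genuinely different input (precise late-time asymptotics with a leading coefficient that is a non-trivial functional of the data) --- is exactly the point the paper makes when it flags this as a conjecture rather than a result.
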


\subsection{Extremal black holes}

For a complete geometric understanding of black hole interiors one must also consider extremal black holes.
Aretakis proved stability and instability properties for the evolution of a massless scalar field on a fixed extremal Reissner-Nordstr\"om exterior background. For data on a spacelike hypersurface $\Sigma$ intersecting the event horizon and extending to infinity he has proven decay of $\phi$ up to and including the horizon, cf.~ \cite{stef1}. 
In subsequent work \cite{stef2} Aretakis showed that first transverse derivatives of $\phi$ generically {\it do not decay} along the event horizon for late times. He further proved that higher derivatives {\it blow up} along the event horizon. 

The analysis of the evolution of the scalar wave in the region beyond the event horizon for the extreme case remains to be shown. Motivated by heuristics and numerics of Murata, 
Reall and Tanahashi \cite{reall3}, which suggest stronger stability results in the interior than in the subextremal case, we conjecture
\begin{con}
\lb{outlookext}
(See \cite{reall3}.)
In extremal Reissner-Nordstr\"om spacetime $(\cM,g)$, with mass $M$ and charge $e$ and $M=|e|\neq 0$, let $\phi$ be 
a solution of the wave equation $\Box_g \phi=0$ arising from
sufficiently regular Cauchy data on $\Sigma$. Then, for $V_-$ regular at $\cC\cH^+$ 
\ben
|\phi|\leq C, \qquad |\partial_{V_-} \phi| \leq C, 
\een
globally in the maximal domain of dependence $\cD^+(\Sigma)$ of the hypersurface $\Sigma$ as shown in Figure \ref{exRN}.
\end{con}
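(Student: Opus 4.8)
The plan is to mirror the architecture of the subextremal proof --- horizon decay input, propagation through the interior by a sequence of adapted vector field energy identities, and finally angular commutation together with Sobolev embedding on $\bbS^2$ --- but with the modifications forced by the degenerate geometry. The decisive structural difference is that in the extremal case $r_+=r_-=M$, so that $-\Omega^2=\left(1-\frac{M}{r}\right)^2\ge 0$ and the interior $\{0<r<M\}$ is in fact a static region containing no trapped surfaces; the Cauchy horizon $\cC\cH^+$ at $r=M$ is a \emph{degenerate} Killing horizon with $\kappa_-=0$. Consequently there is no exponential blueshift: a regular transverse coordinate $V_-$ near $\cC\cH^+$ is related to the Eddington-Finkelstein $v$ only polynomially, $V_-\sim -v^{-1}$, rather than through $V_-\sim -e^{-\kappa_- v}$ as in the subextremal case. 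This is precisely the mechanism that should upgrade the conclusion from mere boundedness of $\phi$ to boundedness of the transverse derivative $\partial_{V_-}\phi$, since $\partial_{V_-}=\frac{dv}{dV_-}\partial_v$ with $\frac{dv}{dV_-}$ growing only polynomially.

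First I would fix the horizon input. In place of Theorem \ref{anfang} one invokes the work of Aretakis \cite{stef1, stef2} on the extremal Reissner-Nordstr\"om exterior, which yields decay of $\phi$ and of its weighted energy flux along the degenerate event horizon $\cH^+$, but --- crucially --- exhibits the nondecay and eventual blow up of transverse derivatives of $\phi$ on $\cH^+$ governed by the conserved Aretakis charges. The propagation step then replaces the redshift vector field $N$ of Proposition \ref{mi}, which does not exist at a degenerate horizon, by an Aretakis-type hierarchy of $r$-weighted multiplier and commutator estimates adapted to $\kappa_-=0$. Since there is no blueshift region $\cB$, one needs neither the vector field $S_0=r^q\partial_{r^\star}$ nor the hypersurface $\gamma$; instead one propagates a weighted energy, analogous to that of Theorem \ref{energythm}, from $\cH^+$ up to $\cC\cH^+$ directly, the polynomial decay of $\Omega^2$ toward $r=M$ still guaranteeing integrability of the relevant bulk terms and finiteness of the spacetime volume.

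Second, and this is the genuinely new ingredient, I would commute the wave equation $\Box_g\phi=0$ with the regular transverse vector field $\partial_{V_-}$ and derive an energy estimate for $\partial_{V_-}\phi$. The commutator $[\Box_g,\partial_{V_-}]$ is nonzero --- $\partial_{V_-}$ is not Killing --- so one must absorb the resulting error terms; here the polynomial conversion factor between $\partial_v$ and $\partial_{V_-}$, combined with the decay of the lower-order flux established in the first step, is exactly what allows the estimate to close with a bounded right-hand side. Commuting in addition by the angular momentum operators $\leo_i$ (which remain Killing, so that $\Box_g\leo_i\phi=0$ as in Section \ref{en_cur}) and applying Sobolev embedding on $\bbS^2$ as in \eqref{sobo_embed} then converts the integrated bounds into the pointwise statements $|\phi|\le C$ and $|\partial_{V_-}\phi|\le C$. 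Globalization away from $i^+$ and up to the bifurcation sphere would proceed as in Sections \ref{region5_proof}--\ref{bifurcate}.

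The hard part will be the interaction of the Aretakis instability on $\cH^+$ with the interior. Because the transverse derivatives of the data themselves grow along the degenerate event horizon, one cannot simply feed in a regular characteristic datum; one must track the conserved Aretakis charges through the static interior and show that the horizon growth does \emph{not} seed a corresponding growth at $\cC\cH^+$. This requires establishing sharp enough decay rates for $\phi$ and its tangential derivatives along $\cH^+$ to dominate the polynomial amplification toward $\cC\cH^+$, and constructing the degenerate-horizon energy hierarchy with precisely matched weights so that the commuted transverse estimate of the previous paragraph closes. Verifying that the Murata--Reall--Tanahashi heuristics \cite{reall3} are borne out at this quantitative level --- in particular that the polynomial blueshift is genuinely too weak to overcome the decay --- is where the main analytic difficulty lies.
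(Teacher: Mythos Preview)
The statement you have attempted to prove is Conjecture~\ref{outlookext} in the paper's Outlook section; it is explicitly a \emph{conjecture}, not a theorem, and the paper offers no proof. Immediately after stating it the author writes ``See upcoming work of Gajic \cite{dejan},'' deferring the resolution to a separate paper. There is therefore no proof in the paper against which to compare your proposal.

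That said, your sketch is a reasonable heuristic outline and correctly identifies the key structural differences from the subextremal case treated in the body of the paper: the degeneracy $\kappa_+=\kappa_-=0$, the absence of a redshift vector field in the sense of Proposition~\ref{mi}, the need to replace Theorem~\ref{anfang} by the Aretakis exterior results \cite{stef1,stef2}, and the polynomial rather than exponential relation between $v$ and a regular transverse coordinate $V_-$ at $\cC\cH^+$. You are also right that the Aretakis horizon instability --- nondecay and growth of transverse derivatives along $\cH^+$ --- is the central obstruction that any genuine proof must confront. One geometric point to be careful about: in the extremal case there is no trapped region between $\cH^+$ and $\cC\cH^+$ (both sit at $r=M$), so the entire architecture of Sections~\ref{first_section}--\ref{innerhorizon} --- the $\cR$, $\cN$, $\cB$ decomposition, the hypersurface $\gamma$, the multiplier $S$ --- collapses rather than merely simplifies, and the replacement argument must be built essentially from scratch around the degenerate-horizon hierarchy. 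Your proposal acknowledges this in spirit but perhaps understates how different the actual construction would need to be.
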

{\begin{figure}[!ht]
\centering
\includegraphics[width=0.4\textwidth]{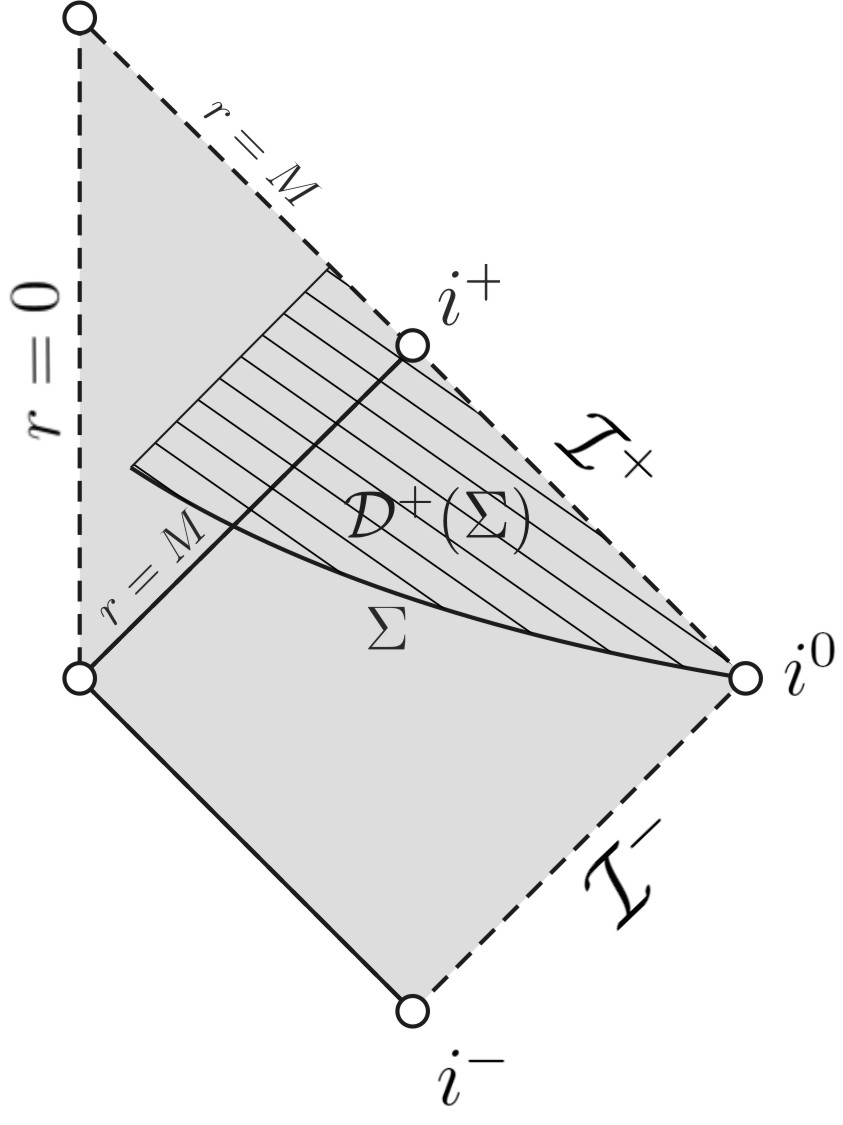}
\caption[]{Extremal Reissner-Nordstr\"om spacetime spacetime with Cauchy hypersurface ending in the non trapped interior region.}
\label{exRN}\end{figure}}
See upcoming work of Gajic \cite{dejan}.

Aretakis has further considered $\Box_g \phi$ on a fixed extremal Kerr background, c.f \cite{stef3}. Analogously to the extremal Reissner-Nordstr\"om case, he shows decay up to and including the event horizon for axisymmetric solutions $\phi$ and in \cite{stef4} shows instability properties when considering transverse derivatives. 
This has been further generalised in \cite{reall1, reall2}. 
In analogy with Conjecture \ref{outlookext}, we thus also make the following conjecture:
\begin{con}
In extremal Kerr spacetime $(\cM,g)$, with mass $M$ and angular momentum per unit mass $a$ and $M=|a|\neq 0$, let $\phi$ be 
an axisymmetric solution of the wave equation $\Box_g \phi=0$ arising from
sufficiently regular Cauchy data on $\Sigma$. Then, for $V_-$ regular at $\cC\cH^+$ 
\ben
|\phi|\leq C, \qquad |\partial_{V_-} \phi| \leq C
\een
globally in the maximal domain of dependence $\cD^+(\Sigma)$ of the hypersurface $\Sigma$.
\end{con}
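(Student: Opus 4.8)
The overall plan is to mimic the architecture of the subextremal proof carried out above --- horizon decay, interior propagation through a hierarchy of regions by the vector field method, and recovery of pointwise bounds by commutation and Sobolev embedding --- but to replace each ingredient by its extremal counterpart and to add a genuinely new argument for the transverse derivative bound $|\partial_{V_-}\phi|\leq C$, which has no analog in the subextremal case. Since this is a conjecture rather than a theorem, what follows is an attack strategy, with the decisive obstacle flagged at the end.

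First I would establish the extremal analog of Theorem \ref{anfang}, namely upper bounds for $\phi$ and its tangential derivatives along the degenerate event horizon $\cH^+$. For extremal Kerr these must be drawn from the work of Aretakis, cf.~\cite{stef3, stef4}, rather than from \cite{blue, m_price, volker}; it is precisely the availability of Aretakis decay for \emph{axisymmetric} solutions that motivates the axisymmetry hypothesis in the conjecture. The essential subtlety is that the Aretakis instability forces the transverse derivatives of $\phi$ to fail to decay, and higher transverse derivatives to grow, along $\cH^+$. Consequently the data fed into the interior is qualitatively weaker than in the subextremal case, and the decay rates in the advanced coordinate will be polynomially slower; I would carefully record exactly which weighted fluxes decay, since these dictate the admissible interior weights.

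The structurally most delicate step is the interior propagation. Here the fundamental new feature is the degeneracy of the horizon structure, so that the surface gravities in \eqref{kappa} degenerate, $\kappa_+=\kappa_-=0$. This disables two tools used above: the redshift vector field $N$ of Proposition \ref{mi} is unavailable because its construction requires $\kappa_+>0$, and --- crucially --- the exponential blueshift that drives the instability in the subextremal interior is now replaced by at most polynomial growth, cf.~the heuristics of \cite{reall3}. I would therefore replace the redshift region $\cR$ by an Aretakis-type degenerate-horizon estimate, retain a noshift-type region $\cN$ governed by the invariance of $-\partial_r$ under the flow of $\partial_t$ as in Section \ref{zweite_region}, and in the blueshift region construct a modified multiplier in the spirit of $S=|u|^p\partial_u+v^p\partial_v$ of \eqref{N1}, but with weights recalibrated to $\kappa_-=0$. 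The absence of the amplifying exponential in the analog of \eqref{spacevolumedecay} is exactly what should permit the bulk term $K^S$ to be absorbed while simultaneously retaining control of a transverse energy.

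The transverse derivative bound is where the real difficulty lies, and where the argument departs most sharply from the present paper, which bounds only $\phi$ and its tangential derivatives and in fact expects $\partial_{V_-}\phi$ to blow up in the subextremal case. I would commute $\Box_g\phi=0$ with a vector field transverse to $\cC\cH^+$, built from a coordinate $V_-$ regular at the Cauchy horizon, and propagate a weighted energy for $\partial_{V_-}\phi$ from $\gamma$ up to $\cC\cH^+$. The hoped-for mechanism is that, with $\kappa_-=0$, the commutator term no longer carries the amplifying exponential, so that the transverse energy stays finite. The main obstacle is precisely the closure of this estimate: one must show that the polynomial decay inherited from the degenerate event horizon --- already weakened by the Aretakis instability --- still dominates the polynomial blueshift at the degenerate Cauchy horizon. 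Finally, because Kerr lacks the full $SO(3)$ symmetry, the Sobolev step of Section \ref{uni_bounded}, which exploited all three $\leo_i$, would have to be replaced by the elliptic and commutation scheme of \cite{m_scalar, m_yakov} adapted to the interior, supplemented by the transverse commutations above, in order to upgrade the energy bounds to the pointwise estimates $|\phi|\leq C$ and $|\partial_{V_-}\phi|\leq C$.
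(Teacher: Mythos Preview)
The statement you were given is a \emph{conjecture}, not a theorem, and the paper offers no proof of it whatsoever. It appears in the Outlook section (Section~\ref{outlook}) as an open problem, stated by analogy with the extremal Reissner--Nordstr\"om Conjecture~\ref{outlookext} and motivated by the heuristics and numerics of \cite{reall3}; the paper's only further remark is that the non-axisymmetric case ``seems to be rather complicated and we thus do not venture a conjecture here.'' There is therefore nothing in the paper to compare your proposal against.

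You recognized this yourself and explicitly framed your write-up as an attack strategy rather than a proof, which is the correct posture. As a strategy it is sensible: you correctly identify the two structural departures from the subextremal argument --- the degeneracy $\kappa_+=\kappa_-=0$ disabling both the redshift multiplier of Proposition~\ref{mi} and the exponential blueshift of Section~\ref{finiteness} --- and you correctly locate the genuinely new content in the transverse derivative bound, which has no analog in this paper and indeed is conjectured to \emph{fail} in the subextremal case. Your flagged obstacle, namely whether the weakened Aretakis decay along the degenerate event horizon still suffices to close the transverse energy estimate at the degenerate Cauchy horizon, is exactly the heart of the matter. For the extremal Reissner--Nordstr\"om analog the paper points to forthcoming work of Gajic~\cite{dejan}; no corresponding reference is given for extremal Kerr, consistent with the statement being genuinely open at the time of writing.
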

The case of non axisymmetric $\phi$ seems to be rather complicated and we thus do not venture a conjecture here.

\subsection{Einstein vacuum equations}
\lb{EFvacuum}
We now return to the problem that originally motivated our work, namely the dynamics of the Einstein vacuum equations.
Our result and the forthcoming extension to subextremal Kerr backgrounds further support the expectation that the spherically symmetric toy models \cite{m_stab, m_interior, m_bh, amos, poisson, penrose}
are indeed indicative of what happens for the full nonlinear Einstein vacuum equations {\it without symmetry}. 
In particular our results support the following conjecture given in \cite{m_bh}.
\begin{con}
{\em (A. Ori)} Let $(\cM,g)$ be the maximal vacuum Cauchy development of sufficiently small perturbations of asymptotically flat two-ended Kerr data corresponding to parameters $0<|a|<M$. Then; 

(a) there exist both a future and past extension $(\tilde{\cM},\tilde{g})$ of $\cM$ with $C^0$ metric $\tilde{g}$ such that $\partial\cM$ is a bifurcate null cone in $\tilde{\cM}$ and {\bf all} future (past) incomplete geodesics in $\gamma$ pass into $\tilde{\cM}/\cM$.

(b) Moreover, for generic such perturbations, any $C^0$ extension $\tilde{\cM}$ will fail to have $L^2$ Christoffel symbols in a neighbourhood of any point of $\partial \cM$.
\end{con}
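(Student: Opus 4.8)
\textit{Proof proposal.} The statement is an open conjecture about the full nonlinear Einstein vacuum equations, so what follows is a strategy rather than a complete argument; it adapts the linear methods of the present paper to the quasilinear setting, in the spirit that the scalar model $\Box_g\phi=0$ faithfully captures the interior dynamics (cf.~Section \ref{EFvacuum}). I would work in a double-null gauge in the interior, writing $\tilde{g}$ in the form analogous to \eqref{metric} with an $\bbS^2$-valued cross-section, and take characteristic data on the two branches of the event horizon $\cH^+$ that converge to Kerr and satisfy quantitative (Price-law) decay toward $i^+$ — the nonlinear analogue of Theorem \ref{anfang}, imported from the exterior stability theory for Kerr. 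The proof of part (a) would then be a large bootstrap/continuity argument: assuming a priori that the geometry stays close to Kerr in a characteristic rectangle terminating at $\cC\cH^+$, I would re-derive improved estimates for the connection coefficients, the renormalized null curvature components, and the metric functions, and then extend the region.

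For part (a) the rectangle would be foliated exactly as here into redshift $\cR$, noshift $\cN$ and blueshift $\cB$ subregions according to the sign of the surface-gravity-type quantity $M-\tfrac{e^2}{r}$. In $\cR$ the positivity of the redshift, now acting on the Bianchi system, propagates decay inward; in $\cN$ the near-invariance under the Killing flow gives uniform control over a bounded affine length. The crux is $\cB$: here I would deploy the weighted multiplier $S=|u|^p\partial_u+v^p\partial_v$ of \eqref{N1}, applied to the curvature and to the derivatives of the connection, so that the blueshift amplification is absorbed at the level of weighted energies rather than pointwise, exactly as in Proposition \ref{kastle}. The pointwise $C^0$ control would then follow by commuting with the angular operators and Sobolev embedding on $\bbS^2$, as in Section \ref{uni_bounded}; in particular one shows the area radius $r$ and the metric coefficients $\Omega^2$ converge to continuous limits, producing the $C^0$ extension $(\tilde{\cM},\tilde{g})$ across the bifurcate null cone (the bifurcation-sphere region being handled as in Section \ref{bifurcate}), while the finiteness of the blueshift-weighted fluxes — as in the estimate $\operatorname{Vol}(J^+(\gamma))<C$ — shows that incomplete causal geodesics reach $\cC\cH^+$ at finite affine time and continue into $\tilde{\cM}$.

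Part (b) is of the opposite character: one must show that for generic data the extension is as weak as possible, and the mechanism is mass inflation. I would establish a \emph{lower} bound on the incoming radiation along $\cH^+$ — the gravitational analogue of \eqref{lowerb}, a polynomial lower bound on a suitable shear or curvature component — and show that the set of data violating it is non-generic in the relevant topology. Feeding this into the renormalized Raychaudhuri and Bianchi equations, one tracks a quasi-local Hawking-type mass and shows it diverges as $\partial\cM$ is approached, so that a curvature component, hence a Christoffel symbol, fails to lie in $L^2$ on any neighbourhood of any point of $\cC\cH^+$; this obstructs every $C^0$ extension with $L^2$ connection and yields the $H^1$-inextendibility demanded by Strong Cosmic Censorship in its Christodoulou formulation, cf.~\eqref{sccc}.

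The hardest step is unambiguously the nonlinear blueshift analysis in $\cB$ for part (a): unlike the linear equation treated here, the error terms in the null-structure and Bianchi equations are themselves amplified by the blueshift, so the delicate balance that makes $K^{S}$ controllable must survive the quasilinear feedback — one must verify that these errors are genuinely lower order with respect to the weights of $S$ and do not degrade the convergence of $r$ and $\Omega^2$, all without the $1+1$ reduction available in the spherically symmetric case. The second genuine obstacle, for part (b), is the generic lower bound on the incoming radiation, which requires sharp two-sided late-time asymptotics that are presently known only partially even in the linear theory.
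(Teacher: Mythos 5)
You should note at the outset that this statement is a \emph{conjecture} (attributed to Ori) recorded in the paper's Outlook: the paper contains no proof of it, so there is no internal argument to compare yours against. What the paper does record is that a proof of part (a) has been \emph{announced} by Dafermos and Luk, and only \emph{conditionally} on the still-conjectural nonlinear stability of the Kerr exterior (the vacuum analogue of Theorem \ref{anfang}), cf.~\cite{m_luk}; and that for part (b) only specific examples of vacuum spacetimes with null singularities have been constructed by Luk \cite{luk3} — genericity remains entirely open. Your proposal is honestly framed as a strategy rather than a proof, and it correctly identifies the two hardest points; but taken as an argument it has concrete gaps beyond the ones you flag.

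For part (a): your region decomposition is tied to the Reissner--Nordstr\"om quantity $M-\frac{e^2}{r}$, which does not exist in vacuum; in perturbed Kerr the red/blueshift structure is governed by the surface gravities at $r_{\pm}=M\pm\sqrt{M^2-a^2}$, and — more seriously — a generic perturbed interior has \emph{no} Killing field, so the noshift argument of Section \ref{zweite_region} (uniform control of the bulk by $\partial_t$-translation invariance) has no direct analogue; replacing it already requires quantitative control of the perturbed geometry, which is part of what is to be proven. At the crux in $\cB$, one cannot simply ``apply $S=|u|^p\partial_u+v^p\partial_v$ to the curvature'': the raw curvature fluxes of the Kerr interior are themselves \emph{infinite} at $\cC\cH^+$, so any bootstrap in your scheme has no finite norms to close unless one first identifies renormalized quantities (e.g.\ differences from their singular background values) to which weighted multipliers can be applied — this renormalization, absent from your sketch, is the essential new idea in the announced nonlinear work, not a routine adaptation. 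For part (b): the mass-inflation mechanism you describe is the expected one, but the generic lower bound on the incoming radiation along $\cH^+$ is open even for the \emph{linear} scalar field on a \emph{fixed} background — the paper itself only conjectures the corresponding blow-up of transverse derivatives and cites Sbierski \cite{jan} for the difficulty — so this step is not a technical gap but an open problem in its own right; and even granting it, upgrading divergence of a curvature component computed in one gauge to the statement that \emph{every} $C^0$ extension fails to have $L^2$ Christoffel symbols near \emph{every} point of $\partial\cM$ is a nontrivial geometric step your proposal does not address.
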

A proof of part (a) has recently been announced by Dafermos and Luk, {\it given the conjectured stability of Kerr exterior} (i.e.~ given the analog of Theorem \ref{anfang} for the full nonlinear Einstein vacuum equations). See the upcoming \cite{m_luk}.
Specific examples of vacuum spacetimes with null singularities as in (b) have been constructed by Luk \cite{luk3}. 
For a discussion of what all this means to Strong Cosmic Censorship see \cite{m_bh}.

\section*{Acknowledgements}
I am very grateful to Mihalis Dafermos for suggesting this problem to me and guiding me towards its completion.
Further, I would like to thank Gerard 't Hooft for his continuous encouragement and comments on the manuscript. 
Parts of this work were conducted at Cambridge University, ETH Z\"urich and Princeton University, whose hospitality is gratefully acknowledged.  
This work is part of the D-ITP consortium, a program of the Netherlands
Organisation for Scientific Research (NWO) that is funded by the Dutch
Ministry of Education, Culture and Science (OCW).

\begin{appendix}
\section{The $J$-currents and normal vectors}
\lb{Jcurrents}
In the following we will derive the $J$-currents on constant $r$, $u$ and $v$ hypersurfaces as well as the hypersurface $\gamma$, defined in \eqref{gross_h}.

We consider an arbitrary function
$F(u,v)$ independent of the angular coordinates. Let $\zeta$ be a levelset \mbox{$\zeta=\left\{F(u,v)=0\right\}$}. Then, the normal vector to the hypersurface $\zeta$ is given by
\bea
\lb{ngamma}
n^{\mu}_{\zeta}&=&\frac1{\sqrt{{\Omega^2}|{{\partial_u F \partial_v F}}}|}(\partial_v F\partial_u+\partial_u F\partial_v).
\eea
In particular, for the future directed normal vector of an $r(u,v)=const$ hypersurface we obtain
\bea
\lb{n_r}
n^{\mu}_{r=const}&=&\frac{1}{\sqrt{\Omega^2}}(\partial_u+\partial_v),
\eea
and on constant $u$ and $v$ null hypersurfaces with their related volume elements we have
\bea
\lb{n_u}
n^{\mu}_{u=const}&=&\frac{2}{\Omega^2}\partial_v,\quad \dV_{u=const}=r^2\frac{\Omega^2}{2}\md \sigma_{\mathbb S^2} \md v,\\
\lb{n_v}
n^{\mu}_{v=const}&=&\frac{2}{\Omega^2}\partial_u, \quad \dV_{v=const}=r^2\frac{\Omega^2}{2}\md \sigma_{\mathbb S^2} \md u.
\eea
For \eqref{n_u} and \eqref{n_v},
note that since vectors orthogonal to null hypersurfaces cannot be normalized, their proportionality has to be chosen consistent with an associated volume form in the application of the divergence theorem. Further, the volume element of the 4 dimensional spacetime is given by
\bea
\dV=r^2\frac{\Omega^2}{2}\md \sigma_{\mathbb S^2} \md u \md v.
\eea

According to \eqref{J}, using an arbitrary vector field \mbox{$X=X^u \partial_u+X^v \partial_v$} we then obtain: 
\bea
\lb{jgamma}
J_{\mu}^X(\phi) n^{\mu}_{\zeta}
&=&\frac{1}{\sqrt{\Omega^2}}\left[X^v\sqrt{\frac{\partial_u F}{\partial_v F}}
(\partial_v \phi)^2+X^u\sqrt{\frac{\partial_v F}{\partial_u F}}
(\partial_u \phi)^2\right]\nonumber\\
&&+\frac{\sqrt{\Omega^2}}{4}
\left[X^v\sqrt{\frac{\partial_v F}{\partial_u F}}+X^u\sqrt{\frac{\partial_u F}{\partial_v F}}\right]|\nabb \phi|^2,\\
\lb{jr}
J^{X}_{\mu}(\phi)n^{\mu}_{r=const}&=&\frac{1}{\sqrt{\Omega^2}}\left[X^v
(\partial_v \phi)^2+X^u
(\partial_u \phi)^2\right]
+\frac{\sqrt{\Omega^2}}{4}
\left[X^v+X^u\right]|\nabb \phi|^2, \\
\lb{jv}
J^{X}_{\mu}(\phi)n^{\mu}_{v=const}&=&\frac{2}{\Omega^2}X^u(\partial_u \phi)^2+\frac12 X^v|\nabb \phi|^2 ,\\
\lb{ju}
J^{X}_{\mu}(\phi)n^{\mu}_{u=const}&=&\frac{2}{\Omega^2}X^v(\partial_v \phi)^2+\frac12 X^u|\nabb \phi|^2 .
\eea

\section{The $K$-current}
\lb{Kcurrents}
In order to compute all scalar currents according to \eqref{K}
in $(u,v)$ coordinates we first derive the components of the deformation tensor which is given by
\bea
\lb{deftensor}
(\pi^X)^{\mu \nu}=\frac12(g^{\mu \lambda}\partial_{\lambda}X^{\nu}+g^{\nu \sigma}\partial_{\sigma}X^{\mu}+g^{\mu \lambda}g^{\nu \sigma}g_{\lambda \sigma, \delta}X^{\delta}),
\eea
where $X$ is an arbitrary vector field, \mbox{$X=X^u \partial_u+X^v \partial_v$} without angular components.\footnote{Recall that all our multipliers $N$, $-\partial_r$, ${S_0}$ and ${S}$ only contain $u$ and $v$ components.}
From this we obtain
\ben
(\pi^X)^{v v}&=&-\frac{2}{\Omega^2}\partial_u X^v,\\
(\pi^X)^{u u}&=&-\frac{2}{\Omega^2}\partial_v X^u,\\
(\pi^X)^{u v}&=&-\frac{1}{\Omega^2}(\partial_vX^v+\partial_uX^u)-\frac{2}{\Omega^2}\left( \frac{\partial_v \Omega }{\Omega}X^v+\frac{\partial_u \Omega }{\Omega}X^u\right) ,\\
(\pi^X)^{\theta \theta}&=&\frac1{r^2}\left( \frac{\partial_v r }{r}X^v+\frac{\partial_u r }{r}X^u\right) ,\\
(\pi^X)^{\phi \phi}&=&\frac1{r^2\sin^2 \theta}\left( \frac{\partial_v r }{r}X^v+\frac{\partial_u r }{r}X^u\right) .
\een

From \eqref{energymomentum} we calculate the components of the energy momentum tensor in $(u,v)$ coordinates as
\ben
T_{v v}&=&(\partial_v \phi)^2,\\
T_{u u}&=&(\partial_u \phi)^2,\\
T_{u v}&=&T_{v u}=\frac{\Omega^2}{4}|\nabb \phi|^2,\\
T_{\theta \theta}&=&(\partial_\theta \phi)^2+\frac{2 r^2}{\Omega^2} (\partial_u \phi \partial_v \phi)-\frac12r^2 |\nabb \phi|^2,\\
T_{\phi \phi}&=&(\partial_\phi \phi)^2+\frac{2 r^2 \sin^2 \theta}{\Omega^2} (\partial_u \phi \partial_v \phi) -\frac12r^2 \sin^2 \theta|\nabb \phi|^2.
\een

Multiplying the components according to \eqref{K} and using the relations \eqref{def_l_n} we obtain

\bea
\lb{Kplug}
K^X&=& -\frac{2}{\Omega^2}\left[\partial_u X^v (\partial_v \phi)^2+\partial_v X^u (\partial_u \phi)^2\right]\nonumber\\
&&-\frac{2}{r} \left[X^v+X^u\right](\partial_u \phi\partial_v \phi)\nonumber\\
&&-\left[\frac12 \left(\partial_v X^v+\partial_u X^u\right)+\left(\frac{\partial_v \Omega}{\Omega}X^v+\frac{\partial_u \Omega}{\Omega}X^u\right)\right]|\nabb \phi|^2.
\eea

\end{appendix}

 

\begin{thebibliography}{99}
\bibitem{anderson} Andersson, L.~ and Blue, P.~ (2009). \emph{Hidden symmetries and decay for the wave equation on the Kerr spacetime.} {\em arXiv:0908.2265}
\bibitem{stef1} Aretakis, S.~ (2011). Stability and Instability of Extreme Reissner-Nordstr\"om Black Hole Spacetimes for Linear Scalar Perturbations I. {\em Comm.~ Math.~ Phys.~ {\bf 307}, 17-16.} {\em arXiv:gr-qc/1110.2007v1} 
\bibitem{stef2} Aretakis, S.~ (2011). Stability and Instability of Extreme Reissner-Nordstr\"om Black Hole Spacetimes for Linear Scalar Perturbations II. {\em Annales Henri Poincar\'e {\bf 8}, 1491-1538.} {\em arXiv:gr-qc/1110.2009v1}
\bibitem{stef3} Aretakis, S.~ (2012). Decay of Axisymmetric Solutions of the Wave Equation on Extreme Kerr Backgrounds. {\em  J.~ Functional Analysis {\bf 263}, 2770-2831.} {\em arXiv:gr-qc/1110.2006v1}
\bibitem{stef4} Aretakis, S.~ (2012). \emph{Horizon Instability of Extremal Black Holes.} {\em arXiv:gr-qc/1206.6598v2}
\bibitem{blue} Blue, P.~ and Soffer, A.~ (2009) {Phase Space Analysis on some Black Hole Manifolds.} {\em Journal of Functional Analysis {\bf 256}, 1-90.} {\em  arXiv 0511281 [math.AP]}
\bibitem{christo_action} Christodoulou, D.~ (2000). {\em The Action Principle and Partial Differential Equations.} (Princeton University Press).
\bibitem{christo_sing} Christodoulou, D.~ (1999). On the global initial value problem and the issue of singularities. {\em Class.~ Quantum Grav.~ {\bf 16}, A23-A35.}
\bibitem{joao1} Costa, J.~ L., Gir\~{a}o, P.~ M.,  Natário, J.~ and Drumond Silva, J.~ (2014). \emph{On the global uniqueness for the Einstein-Maxwell-scalar field system with a cosmological constant. Part 1: Well posedness and breakdown criterion.} {\em arXiv:gr-qc/1406.7261} 
\bibitem{joao2} Costa, J.~ L., Gir\~{a}o, P.~ M.,  Natário, J.~ and Drumond Silva, J.~ (2014). \emph{On the global uniqueness for the Einstein-Maxwell-scalar field system with a cosmological constant. Part 2: Structure of the solutions and stability of the Cauchy horizon.} {\em arXiv:gr-qc/1406.7253} 
\bibitem{joao3} Costa, J.~ L., Gir\~{a}o, P.~ M.,  Natário, J.~ and Drumond Silva, J.~ (2014). \emph{On the global uniqueness for the Einstein-Maxwell-scalar field system with a cosmological constant. Part 3: Mass inflation and extendibility of the solutions.} {\em arXiv:gr-qc/1406.7245 } 
\bibitem{m_stab} Dafermos, M.~ (2003). Stability and Instability of the Cauchy Horizon for the Spherically Symmetric Einstein-Maxwell-Scalar Field Equations. {\em The Annals of Mathematics, Second Series {\bf 158, 3}, 875-928.}
\bibitem{m_interior} Dafermos, M.~ (2005). The Interior of Charged Black Holes and the Problem of Uniqueness in General Relativity. {\em Commun.~ on Pure Appl. Math., {\bf LVIII}, 0445-0504.}
\bibitem{m_bh} Dafermos, M.~ (2014). { Black holes without spacelike singularities.} {\em Commun.~ Math.~ Phys.~ {\bf 332}, 729-757.} {\em arXiv:gr-qc/1201.1797v1} 
\bibitem{m_luk} Dafermos, M.~ and Luk, J.~ {\em (forthcoming)}
\bibitem{m_price} Dafermos, M.~ and Rodnianski, I.~ (2005). A proof of Price's law for the collapse of a self-gravitating scalar field. {\em Invent.~ Math.~ {\bf 162}, 381-457.}
\bibitem{m_red} Dafermos, M.~ and Rodnianski, I.~ (2009). The redshift effect and radiation decay on black hole
spacetimes. {\em Commun.~ on Pure Appl. Math., {\bf 62}, no 7, 859--919}. {\em arXiv:gr-qc/0512119}
\bibitem{m_new} Dafermos, M.~ and Rodnianski, I.~ (2009). {A new physical-space approach to decay for the wave equation with applications to black hole spacetimes.} {\em XVIth International Congress on Mathematical Physics, P.~ Exner (ed.), World Scientific, London, pp.~ 421–-433}. {\em arXiv:0910.4957v1 [math.AP]} 
\bibitem{m_I_kerr} Dafermos, M.~ and Rodnianski, I.~ (2010). \emph{Decay for solutions of the wave equation on Kerr exterior spacetimes I-II: The cases $|a|\ll M$ or axisymmetry.} {\em arXiv:1010.5132v1} 
\bibitem{m_scalar} Dafermos, M.~ and Rodnianski, I.~ (2010). \emph{The black hole stability problem for linear scalar perturbations.}  in Proceedings of the Twelfth Marcel Grossmann Meeting on General Relativity, T.~ Damour et al (ed.), World Scientific, Singapore, 2011, pp.~ 132-189. {\em arXiv:1010.5137}
\bibitem{m_bound} Dafermos, M.~ and Rodnianski, I.~ (2011). A proof of the uniform boundedness of solutions
to the wave equation on slowly rotating Kerr
backgrounds. {\em Invent.~ Math.~ {\bf 185}, 467--559.} {\em arXiv:gr-qc/0805.4309v1}
\bibitem{m_lec} Dafermos, M.~ and Rodnianski, I.~ (2013). Lectures on black holes and linear waves. {\em Clay Mathematics Proceedings, Amer.~ Math.~ Soc.~ {\bf 17}, 97-205.} {\em arxiv:gr-qc/0811.0354}
\bibitem{m_yakov} Dafermos, M., Rodnianski, I.~ and Shlapentokh-Rothman, Y.~ (2014). \emph{Decay for solutions of the wave equation on Kerr exterior spacetimes III: The full subextremal case $|a|<M$.} {\em arXiv:1402.7034v1}
\bibitem{dyatlov} Dyatlov, S.~ (2011). Exponential energy decay for Kerr-de Sitter black holes beyond event horizons.  {\em Math.~ Res.~ Lett.~ {\bf 18}, no.~ 5, 1023–-1035.} 
\bibitem{ich} Franzen, A.~ \emph{Boundedness of massless scalar waves on Kerr interior backgrounds.} {\em (forthcoming)}
\bibitem{dejan} Gajic, D.~ \emph{Linear waves inside extremal black holes.} {\em (forthcoming)}
\bibitem{haw_ellis} Hawking, S.~ W.~ and Ellis, G.~ F.~ R.~ (1975) {\em The Large Scale Structure of Space-Time.} (Cambridge University Press).
\bibitem{sergiu} Klainerman, S.~ (2010) \emph{Brief history of the vector-field method.} 
\bibitem{kommemi} Kommemi, J.~ (2014) The global structure of spherically symmetric charged scalar field spacetimes. {\em PhD thesis}
\bibitem{reall2} Lucietti, J.,  Murata, K.~  and Reall, H.~ S.~ and Tanahashi, N.~ (2013) \emph{On the horizon instability of an extreme Reissner-Nordstr\"om black hole.}  {\em arXiv:gr-qc/1212.2557v2} 
\bibitem{reall1} Lucietti, J.~ and Reall, H.~ S.~ (2012). Gravitational instability of an extreme Kerr black hole. {\em Phys.~Rev.~D, {\bf 86}, 104030.}  {\em arXiv:gr-qc/1208.1437} 
\bibitem{luk} Luk, J.~ (2010). Improved decay for solutions to the linear wave equation on a Schwarzschild black hole. {\em Annales Henri Poincar\'e, {\bf 11}, 805-880.}
\bibitem{luk2} Luk, J.~ (2012). A vector field method approach to improved decay for solutions to the wave equation on a slowly rotating Kerr black hole. {\em Analysis
\& PDE, {\bf 5, 3}, 553–-625.} 
\bibitem{luk3} Luk, J.~ (2013). \emph{Weak null singularities in general relativity.} {\em arXiv:1311.4970v1 } 
\bibitem{mc_b} McNamara, J.~M.~ (1978) Behaviour of Scalar Perturbations of a Reissner-Nordstr\"om Black Hole Inside the Event Horizon. {\em Proc.~ R.~ Soc.~ Lon.~ A.~ {\bf 362}, 121-134.}
\bibitem{mc_i} McNamara, J.~M.~ (1978) Instability of Black Hole Inner Horizons. {\em Proc.~ R.~ Soc.~ Lon.~ A.~ {\bf 358}, 499-517.}
\bibitem{tataru2} Metcalfe, J., Tataru, D.~ and Tohaneanu, M. (2012). Price's law on nonstationary space-times. {\em Adv.~ Math.~ {\bf 230}, no. 3, 995--1028.}
\bibitem{reall3} Murata, K., Reall, H.~ S.~ and Tanahashi, N.~ (2013). \emph{What happens at the horizon(s) of an extreme black hole?} {\em arXiv:gr-qc/1307.6800v1} 
\bibitem{amos} Ori, A.~ (1991). Inner structure of a charged black hole: an exact mass-inflation solution. {\em Phys.~ Rev. Lett.~ {\bf 67}, 789–-792.} 
\bibitem{penrose} Penrose, R.~ (1979) Singularities and time-asymmetry. {\em General Relativity: An Einstein Centenary Survey}, S.~W.~ Hawking and W.~I.~ Israel (eds.~), (Cambridge University Press), pp. 581-638.
\bibitem{poisson} Poisson, E.~ and Israel, W.~ (1990). Internal structure of black holes. {\em Physical Review D,} {\bf 41}, 1796-1809.
\bibitem{jan} Sbierski, J.~ (2013). \emph{Characterisation of the Energy of Gaussian Beams on Lorentzian Manifolds - with Applications to Black Hole Spacetimes.} {\em arXiv:gr-qc/1311.2477}
\bibitem{volker} Schlue, V.~ (2010). Decay of linear waves on higher-dimensional Schwarzschild black holes. {\em Analysis
\& PDE, {\bf 6, 3}, 515-600.} {\em arXiv:gr-qc/1012.5963}
\bibitem{yakov} Shlapentokh-Rothman, Y.~ (2014). {Quantitative Mode Stability for the Wave Equation on the Kerr Spacetime.} {\em Annales Henri Poincar\'e, Online First.} {\em arXiv:gr-qc1302.6902} 
\bibitem{simpson} Simpson, M.~ and Penrose, R.~ (1973). Internal instability in a Reissner-Nordstr\"om black hole. {\em Internat.~ J.~ Theoret.~ Phys.~ {\bf 17}, no.~ 3, 183-197.}
\bibitem{tataru} Tataru, D.~ and Tohaneanu, M. (2011). Local energy estimate on Kerr black hole backgrounds. {\em IMRN, no.~ 2, 248–-292.}
\bibitem{taylor} Taylor, M.~ E.~ (2010). {\em Partial Differential Equations I, Basic Theory.} (Springer).
\bibitem{bernard} Whiting, B.~ F.~ (1989). Mode stability of the Kerr black hole. {\em J.~ Math.~ Phys.~{\bf 30}, no.~ 6.} 
\end{thebibliography}
\end{document}